\newtheorem{problem}[theorem]{Problem}\crefname{problem}{Problem}{Problems}
\newtheorem{sephypothesis}[theorem]{Separation Hypothesis}\crefname{sephypothesis}{Separation Hypothesis}{Separation Hypotheses}
\newtheorem{numhypothesis}[theorem]{Number-Theoretic Hypothesis}\crefname{numhypothesis}{Number-Theoretic Hypothesis}{Number-Theoretic Hypotheses}
\newtheorem{openproblem}[theorem]{Conjecture}\crefname{openproblem}{Conjecture}{Conjectures}
\DeclareMathOperator{\rk}{rk}
\DeclareMathOperator{\poly}{poly}
\DeclareMathOperator{\rad}{rad}
\DeclareMathOperator{\sep}{sep}
\DeclareMathOperator{\ii}{i}
\DeclareMathOperator{\Log}{Log}
\DeclareMathOperator{\Exp}{Exp}
\DeclareMathOperator{\dist}{dist}
\newcommand{\distLD}{\delta_{\log}}
\DeclareMathOperator{\Lie}{Lie}
\DeclareMathOperator{\diag}{diag}
\DeclareMathOperator{\Crit}{Crit}
\DeclareMathOperator{\ran}{im}
\DeclareMathOperator{\inter}{int}
\newcommand{\QQ}{\mathbb{Q}}
\newcommand{\CC}{\mathbb{C}}
\newcommand{\ZZ}{\mathbb{Z}}
\newcommand{\RR}{\mathbb{R}}
\newcommand{\NN}{\mathbb{Z}_{\geq 0}}
\newcommand{\OO}{\ensuremath{\mathcal{O}}}
\newcommand{\KK}{\ensuremath{\mathcal{C}}}
\newcommand{\NP}{\ensuremath{\mathsf{NP}}\xspace}
\newcommand{\eps}{\varepsilon}
\newcommand{\abc}{\texorpdfstring{$abc$}{abc}}
\newcommand{\Xquot}{X/\hspace{-1ex}\sim}
\DeclarePairedDelimiter\abs{\lvert}{\rvert}
\DeclarePairedDelimiter\norm{\lVert}{\rVert}
\numberwithin{equation}{section}
\title{Complexity of Robust Orbit Problems for Torus Actions and the \abc-conjecture}
\author{Peter B\"{u}rgisser}{Institute of Mathematics, Technische Universit\"{a}t Berlin, Germany.}{pbuerg@math.tu-berlin.de}{}{Supported by the ERC under the European Union's Horizon 2020 research and innovation programme (grant agreement no.~787840).}
\author{Mahmut Levent Do\u{g}an}{Institute of Mathematics, Technische Universit\"{a}t Berlin, Germany.}{dogan@math.tu-berlin.de}{}{Supported by the ERC under the European Union's Horizon 2020 research and innovation programme (grant agreement no.~787840).}
\author{Visu Makam}{Radix Trading, Amsterdam.}{visu@umich.edu}{}{Partially supported by NSF Grant CCF-1900460 and the University of Melbourne.}
\author{Michael Walter}{Faculty of Computer Science, Ruhr-Universit\"at Bochum, Germany.}{michael.walter@rub.de}{}{Supported by the European Union (ERC, SYMOPTIC, 101040907), by the Deutsche Forschungsgemeinschaft (DFG, German Research Foundation) under Germany's Excellence Strategy - EXC\ 2092\ CASA - 390781972, by the BMBF (QuBRA, 13N16135), and by the Dutch Research Council (NWO grant OCENW.KLEIN.267).}
\author{Avi Wigderson}{School of Mathematics, Institute for Advanced Study, Princeton.}{avi@ias.edu}{}{Supported by the NSF Grant CCF-1900460.}
\authorrunning{P. B\"{u}rgisser, M. L. Do\u{g}an, V. Makam, M. Walter, and A. Wigderson} 
\keywords{computational invariant theory, geometric complexity theory, orbit problems, \abc-conjecture, closest vector problem} 
\begin{document}

\maketitle

\begin{abstract}
When a group acts on a set, it naturally partitions it into orbits, giving rise to \emph{orbit problems}.
These are natural algorithmic problems, as symmetries are central in numerous questions and structures in physics, mathematics, computer science, optimization, and more.
Accordingly, it is of high interest to understand their computational complexity.
Recently,~\cite{torus} gave the first polynomial-time algorithms for orbit problems of \emph{torus actions}, that is, actions of commutative continuous groups on Euclidean space.
In this work, motivated by theoretical and practical applications, we study the computational complexity of \textit{robust} generalizations of these orbit problems, which amount to \emph{approximating} the distance of orbits in $\CC^n$ up to a factor~$\gamma \ge 1$.
In particular, this allows deciding whether two inputs are approximately in the same orbit or far from being so.
On the one hand, we prove the NP-hardness of this problem for~$\gamma = n^{\Omega(1/\log\log n)}$ by reducing the closest vector problem for lattices to it.
On the other hand, we describe algorithms for solving this problem for an approximation factor~$\gamma = \exp(\poly(n))$.
Our algorithms combine tools from invariant theory and algorithmic lattice theory, and they also provide group elements witnessing the proximity of the given orbits (in contrast to the algebraic algorithms of prior work).
We prove that they run in polynomial time if and only if a version of the famous number-theoretic $abc$-conjecture holds -- establishing
a new and surprising connection between computational complexity and number theory.
\end{abstract}

\section{Introduction}
\label{sec:intro}

Computational invariant theory was a central topic of 19th century algebra,
see the historical account in~\cite{kung-rota:84}.
In the second half of the 20th century, structural progress was made through Mumford's invention
of geometric invariant theory ~\cite{MFK:94} and computational progress came through the theory of Gr\"{o}bner bases, see~\cite{Sturmfels}.
More recently, it was realized that algorithmic questions in invariant and representation theory
are deeply connected with the core complexity questions of P vs.~NP  and VP vs.~VNP.

On the one hand,
Mulmuley and Sohoni's Geometric Complexity Theory (GCT)~\cite{mulmuley2001geometric}
highlights the inherent symmetries of complete problems of these complexity classes.
This was the starting point of several specific invariant theoretic and
representation theoretic attacks on the VP vs.~VNP questions, which has led
to many new questions, techniques, and much faster algorithms:
for example, see~\cite{GCTV,FS,BCMW:17,MW19}.

The other connection is through
the work of Impagliazzo and Kabanets~\cite{KI04},
which uses Valiant's completeness theory for VP and VNP
to construct efficient deterministic algorithms
for the basic PIT (Polynomial Identity Testing) problem.
This problem, again thanks to Valiant's completeness result, has natural symmetries
which resemble basic problems of invariant theory.
Major progress was recently achieved in this direction in~\cite{Gur04,GGOW16,IQS,IQS2,DM-poly}.
This work was followed by~\cite{DM-oc, hamada-hirai, ivanyos-qiao,GGOW:20,BGOWW,BFGOWW,BLNW:20,BFGOWW2};
we refer to~\cite{BFGOWW2} for a description of the state-of-art.

In addition to these fundamental connections to computational complexity, orbit problems appear in many other areas,
for example in physics, since the symmetries are often given by the actions of Lie groups,
see~\cite{audin2012torus, marsdenweinstein, symplectic-in-physics}.
Recently, a connection of algebraic statistics to orbit problems was discovered~\cite{AKRS:21b,amendola2021toric}:
the problem of finding maximum likelihood estimates in certain statistical models and the `flip flop' algorithm from
statistics is precisely related to the invariant theory and the scaling algorithms of~\cite{GGOW16,BGOWW}; see also~\cite{franks2021near,derksen2021maximum,derksen2022maximum}.
In particular, the setting of log-linear models in~\cite{amendola2021toric} relates precisely to the setting of this work.

\subsection{Background and high-level summary of results}\label{subsec:orbit problems}
In this paper, we study actions of the commutative group~$T\coloneqq(\CC^\times)^d$ and its subgroup~$K\coloneqq(S^1)^d$
on vector spaces~$V \coloneqq \CC^n$.%
\footnote{Here, $\CC^\times$ denotes the multiplicative group of the nonzero complex numbers and $S^1\coloneqq\{ z\in\CC\mid |z|=1 \}$.}
The group~$T$ is called an \emph{algebraic torus} of rank~$d$, and the subgroup~$K$ is known as a \emph{compact torus}.
The action of either group partitions the vector space~$V$ into \emph{orbits}.
That is, for a vector $v\in V$, we consider the set of all vectors reachable from $v$ by applying group elements:
\[
  \OO_v\coloneqq \{t\cdot v\mid t\in T\}, \qquad \KK_v \coloneqq \{ k\cdot v\mid k\in K \} .
\]
The sets $\OO_v$ and $\KK_v$ are called the \textit{$T$-orbit} and \textit{$K$-orbit} of~$v$, respectively.
As~$K$ is compact, so is~$\KK_v$.
However, the $T$-orbit~$\OO_v$ is in general not closed and hence it is natural to also consider its \emph{orbit closure}~$\overline{\OO_v}$.%
\footnote{We take the closure with respect to either the Euclidean or the Zariski topology, as they coincide here~\cite[I\ \S10]{mumford:red}.}
See \cref{fig:example-actions} for two illustrative examples.
While torus actions are very well-understood from a structural perspective, they give rise to interesting and challenging computational problems:

\begin{problem}\label{prob:orbitproblems}
  Given a torus action on a vector space~$V$ and $v,w\in V$:
  \begin{enumerate}
    \item\label{it:orbitproblems:eq} \textbf{Orbit equality:} Decide if $\OO_v=\OO_w$.
    \item\label{it:orbitproblems:oci} \textbf{Orbit closure intersection:} Decide if $\overline{\OO_v}\cap\overline{\OO_w}\ne\varnothing$.
    \item\label{it:orbitproblems:occ} \textbf{Orbit closure containment:} Decide if $\OO_w \subseteq \overline{\OO_v}$.
    \item\label{it:orbitproblems:compact eq} \textbf{Compact orbit equality:} Decide if $\KK_v=\KK_w$.
  \end{enumerate}
\end{problem}

These problems capture and relate to a broad class of ``isomorphism'', ``classification'', or ``transformation'' problems.
Their computational complexity was determined only very recently:
\cite{torus} found that all four problems can be decided in polynomial time.
This is perhaps surprising, because for actions of noncommutative groups these problems are believed to differ
in their computational complexity and, e.g., \cref{prob:orbitproblems}~(\ref{it:orbitproblems:occ})
is known to be hard for some actions~\cite{BILPS:20}.

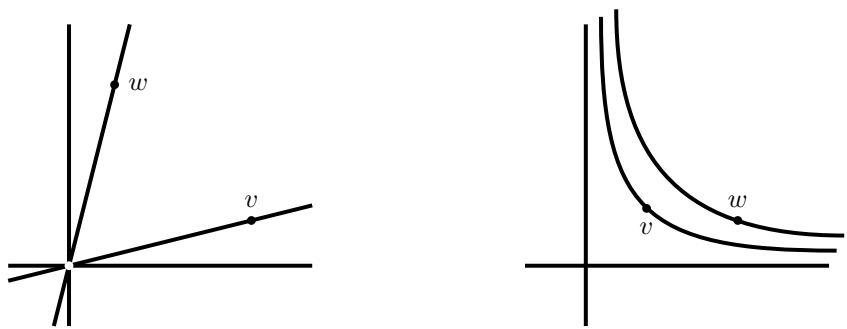
\begin{figure}[t]
    \centering
    \begin{tikzpicture}[scale=0.4]
    \pgfdeclarelayer{nodelayer}
    \pgfdeclarelayer{edgelayer}
    \pgfsetlayers{main,nodelayer,edgelayer}
	\begin{pgfonlayer}{nodelayer}
		\node  (1) at (17, 0) {};
		\node  (3) at (25, 0) {};
		\node  (4) at (17, 8) {};
		\node  (5) at (17.5, 8.25) {};
		\node  (6) at (25.25, 0.5) {};
		\node  (7) at (18, 8.5) {};
		\node  (8) at (25.5, 1) {};
		\node  (10) at (0, 0) {};
		\node  (12) at (8, 0) {};
		\node  (13) at (0, 8) {};
		\node  (14) at (2, 8) {};
		\node  (16) at (8, 2) {};
		\node  (17) at (-2, -0.5) {};
		\node  (19) at (-2, 0) {};
		\node  (20) at (0, -2) {};
		\node  (21) at (-0.5, -2) {};
		\node  (23) at (15, 0) {};
		\node  (24) at (17, -2) {};
		\node  [fill = black, circle, scale=0.35, label={above:$v$}](25) at (6, 1.5) {};
		\node  [fill = black, circle, scale=0.35, label ={right:$w$}](26) at (1.5, 6) {};
		\node  [fill = black, circle, scale=0.35, label={above:$w$}](27) at (22, 1.5) {};
		\node [fill = black, circle, scale=0.35, label={below:$v$}] (28) at (19, 1.9) {};
		\node  (29) at (19, 1.75) {};

	\end{pgfonlayer}
	\begin{pgfonlayer}{edgelayer}
		\draw [bend right=45, looseness=1.50, line width = 1.5pt] (5.center) to (6.center);
		\draw [bend right=45, looseness=1.25, line width = 1.5pt] (7.center) to (8.center);
		\draw [line width = 1.5pt](17.center) to (16.center);
		\draw [line width = 1.5pt](19.center) to (12.center);
		\draw [line width = 1.5pt](13.center) to (20.center);
		\draw [line width = 1.5pt](14.center) to (21.center);
		\draw [line width = 1.5pt](4.center) to (24.center);
		\draw [line width = 1.5pt](23.center) to (3.center);
  \node [ fill = white , circle, scale = 0.35 ](30) at (0,0) {};
	\end{pgfonlayer}
\end{tikzpicture}
\caption{\small
(a)~Action of~$T = \CC^\times$ on~$V = \CC^2$ given by~$t\cdot (x,y)=(tx,ty)$: All orbits are lines through the origin,
with the origin excluded. Thus, all orbit closures intersect in the origin.
(b)~Action of~$T = \CC^\times$ on~$V = \CC^2$ given by~$t\cdot (x,y)=(tx,t^{-1}y)$:
All orbits of vectors with non-zero coordinates $x,y\neq0$ are closed (they are hyperbolas).
}
\label{fig:example-actions}
\end{figure}

In many applications,
e.g., the ones in statistics or physics mentioned above,
the vectors $v,w$ are not given exactly, but only up to a certain error.
If this is the case, the orbit equality problem has to be replaced
by a robust generalization, which meaningfully tolerates small perturbations in the input.

The goal of this paper is to define such robust generalizations of the orbit equality problem and to investigate their computational complexity.
More precisely, we aim to \emph{approximate the distance between two orbits} given by vectors~$v,w \in \CC^n$
up to an approximation factor~$\gamma>1$.
In particular, this allows deciding whether two points are approximately in the same orbit or far from being so.

At first glance, this appears to simplify the situation, since we no longer need to distinguish between an orbit and its closure.
However, surprisingly, the picture arising is far more intricate than for the problems that were dealt with in~\cite{torus}.
On the one hand, we prove \NP-hardness when $\gamma = n^{\Omega(1/\!\log\log n)}$.
On the other hand, using a combination of tools from invariant theory and algorithmic lattice theory,
we give algorithms that achieve an approximation factor~$\gamma = \exp(\poly(n))$ and run in polynomial time
if and only if a version of the well-known number-theoretic \abc-conjecture holds!
Our algorithms also return a witness~$t \in T$ such that the distance between~$t \cdot v$ and~$w$ is at most~$\gamma$ times
the distance between the orbits.
We give precise technical statements below.

We note that our results are not the first examples of an open problem in number theory
having an impact on the complexity of algorithms.
As is widely known, the \textit{generalized (or extended) Riemann hypothesis} has been used, e.g.,
for testing primality in polynomial time~\cite{miller-primality}
(an unconditional deterministic polynomial time algorithm was later given in~\cite{aks-primality}),
the containment of knottedness in~\NP~\cite{kuperberg-knot},
or the containment of Hilbert's Nullstellensatz in the polynomial hierarchy~\cite{koiran-hn}.
To our best knowledge, however, our work, together with~\cite{mono-testing-abc},
constitutes one of the first examples that the \abc-conjecture has some bearing
on the performance of numerical algorithms.

In the remainder of this introduction we first recap the structure of torus actions and
define the input model and complexity parameters used in all our computational problems, algorithms,
and results (\cref{subsec:structure and input}).
We then state the computational problems (\cref{subsec:intro defs}) and discuss our
hardness and algorithmic results (\cref{subsec:intro hardness,subsec:intro algos}).
Next, we explain how the ``separation hypotheses'' that ensure that our algorithms run
in polynomial time are intimately related to well-known variants of
the $abc$-conjecture in number theory (\cref{subsec:intro-abc}).
Finally, we sketch the proof idea of our lattice lifting result (\cref{subsec:proof ideas}),
and we conclude with open problems for future research (\cref{subsec:intro conclusion and outlook}).

\subsection{Torus actions: structure, input model, parameters}\label{subsec:structure and input}
Every rational
action of~$T = (\CC^\times)^d$ or~$K = (S^1)^d$ on a finite-dimensional complex vector space~$V=\CC^n$
can be parameterized by a \emph{weight matrix} $M\in\ZZ^{d\times n}$, as follows:
For $t=(t_1,t_2,\dots,t_d)\in T$ and $v=(v_1,v_2,\dots,v_n)\in V$,
\begin{equation}\label{eq:def-action}
  t\,\cdot\, v \; \coloneqq \; \bigg( v_1 \prod_{i=1}^d t_i^{M_{i1}},\, v_2 \prod_{i=1}^d t_i^{M_{i2}},\,
     \dots,\, v_n \prod_{i=1}^d t_i^{M_{in}} \bigg).
\end{equation}
See \cref{fig:example-actions} for two examples with weight matrix~$M=(1 \;\; 1)$ and $M = (1 \ -\!1)$, respectively.

Our computational problems and algorithms take as their input torus actions as well as vectors~$v,w\in\CC^n$.
The former are encoded in terms of the weight matrix~$M$, with each entry represented in binary.
The latter are assumed to be vectors~$v,w \in \QQ(\ii)^n$, i.e., their components are Gaussian rationals
(complex numbers of the form $q + \ii r$ with $q,r\in\QQ$)
and given by encoding the numerators and denominators of the real and imaginary parts in binary.

Throughout the paper, we denote by~$B$ the maximum bit-length of the entries of~$M$, and by~$b$
the maximum bit-length of the components of~$v$ and~$w$, respectively.
The parameters~$B$ and~$b$ will have different effects in the bounds.

\subsection{Robust orbit problems: definitions}\label{subsec:intro defs}

We will now define precisely the computational problems that we address in this paper.
Given a torus action on a vector space~$V=\CC^n$ and two vectors~$v,w\in V$,
we wish to approximate the distance of the orbit under either the action of the algebraic torus~$T$
or the compact torus~$K$.

We start with the latter.
Because~$K = (S^1)^d$ is compact, so are its orbits.
Thus, if two $K$-orbits are distinct then they must have a finite distance with respect to any norm on~$V$.
We find it natural to use the Euclidean norm~$\norm{v} := \sqrt{\sum_{j=1}^n \abs{v_j}^2}$.
Given two vectors~$v,w\in\CC^n$, we denote their Euclidean distance by~$\dist(v,w) \coloneqq \norm{v-w}$
and we extend this notation to arbitrary subsets~$A,B\subseteq\CC^n$
by setting~$\dist(A,B) \coloneqq \inf_{a\in A, b\in B} \norm{a-b}$.
Then we are interested in the following approximation problem,
where the approximation factor~$\gamma$ is a parameter that may depend on~$d$, $n$, $b$, or $B$.

\begin{problem}
\label{prob:k-approx-dist}
The \emph{compact orbit distance approximation problem} with approximation factor~$\gamma\geq1$ is defined as follows:
Given $v,w\in (S^1)^n$, compute a number~$D$ such that
\[
 \dist(\KK_v,\KK_w) \le D \le \gamma \dist(\KK_v,\KK_w).
\]
\end{problem}

We now consider the algebraic torus~$T = (\CC^\times)^d$.
Since it not compact, in general neither are its orbits and hence two distinct orbits
need not have any finite distance -- in other words,
we can have~$\OO_v \neq \OO_w$ but $\dist(\OO_v, \OO_w) = 0$.
This can be due to two phenomena:
(a)~when orbits are not closed, they can still intersect in their closure;
(b)~even if orbits are closed, they may still become arbitrarily close at infinity.
See \cref{fig:example-actions} for an illustration.
The first phenomenon is natural:
if $\overline{\OO_v} \cap \overline{\OO_w} \neq \emptyset$
then it can be natural to define the distance to be zero.
The second phenomenon however is undesirable -- clearly we would like to be able to tell apart,
e.g., the two hyperbolas in \cref{fig:example-actions}~(b)! To overcome this problem, we will instead consider a \emph{logarithmic distance}, which always assigns a positive distance between distinct orbits, even when their closures intersect.

For simplicity, we assume that $v, w$ have non-zero coordinates, that is,
they are elements of~$(\CC^\times)^n \subseteq V$.
The key idea is to ``linearize'' or ``flatten'' the group action by using
the (coordinatewise) complex logarithm map:
\begin{equation}\label{eq:Log-def}
  \Log \colon (\CC^\times)^n \rightarrow \CC^n/2\pi\ii\ZZ^n,
  \quad
  v \mapsto (\log v_1, \log v_2,\dots, \log v_n) ,
\end{equation}
which is a group isomorphism.
This is natural because it converts the multiplicative action of~$T$ into an additive one. 
Indeed, one finds from \eqref{eq:def-action} that~$\Log$ maps any $T$-orbit~$\OO_v$ to
the image of the affine subspace~$\ran(M^T) + \Log(v)$ under the quotient map~$\CC^n \to \CC^{n}/2\pi\ii\ZZ^n$.
For distinct orbits these will have a finite Euclidean distance because
the corresponding subspaces are parallel (see \cref{fig:example-action-log}).
This motivates using the following distance:
Given~$v,w \in (\CC^\times)^n$, we define their \emph{logarithmic distance} by
\begin{equation}
\label{eq:def-distLD}
 \distLD(v,w)
\coloneqq \dist\mleft(\Log(v) + 2 \pi \ii \ZZ^n, \Log(w) + 2 \pi \ii \ZZ^n\mright)
\coloneqq \min_{\alpha \in 2 \pi \ii \ZZ^n} \norm{\Log(v) - \Log(w) - \alpha},
\end{equation}
using the natural Euclidean distance on the quotient~$\CC^{n}/2\pi\ii\ZZ^n$, and we extend this
to arbitrary subsets~$A,B\subseteq (\CC^\times)^n$ as above.
In particular, we have $\distLD(\OO_v,\OO_w)=0$ if and only if $\OO_v=\OO_w$, which is exactly the property that we wanted.
Thus we arrive at the following approximation problem, where the approximation factor~$\gamma$ may depend on~$d$, $n$, $b$, or $B$ as above.%
\footnote{\label{footnote:norm conversion} One can also use the metric~$\distLD$ in definition of \cref{prob:k-approx-dist} in place of~$\dist$. However, this is essentially equivalent, since for~$v,w\in(S^1)^n$ the two metrics are related by~$\frac2\pi \distLD(v,w) \leq \dist(v,w) \leq \distLD(v,w)$, see \cref{sec:log}.
}

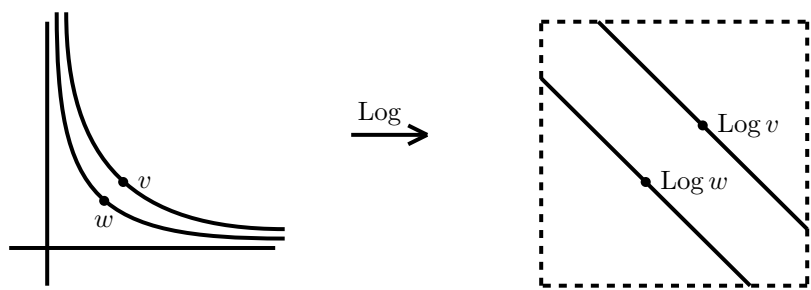
\begin{figure}[t]
\centering
\begin{tikzpicture}[scale=0.5]
    \pgfdeclarelayer{nodelayer}
    \pgfdeclarelayer{edgelayer}
    \pgfsetlayers{main,nodelayer,edgelayer}
	\begin{pgfonlayer}{nodelayer}
		\node  (0) at (0, 0) {};
		\node  (1) at (0, -1) {};
		\node  (2) at (0, 6) {};
		\node  (3) at (-1, 0) {};
		\node  (4) at (6, 0) {};
		\node  (7) at (0.25, 6.25) {};
		\node  (8) at (0.5, 6.25) {};
		\node  (9) at (6.25, 0.5) {};
		\node  (10) at (6.25, 0.25) {};
		\node  (11) at (13, 6) {};
		\node  (12) at (13, -1) {};
		\node  (13) at (20, -1) {};
		\node  (14) at (20, 6) {};
		\node  (16) at (13, 4.5) {};
		\node  (17) at (18.5, -1) {};
		\node  (18) at (20, 0.5) {};
		\node  (19) at (14.5, 6) {};
		\node  (20) at (8, 3) {};
		\node  (21) at (10, 3) {};
		\node  (22) at (9.5, 3.25) {};
		\node  (23) at (9.5, 2.75) {};
		\node  (24) at (1.5, 1.25) {};
		\node  [fill = black, circle, scale = 0.4, label={below:$w$}](25) at (1.5, 1.25) {};
		\node  [fill = black, circle, scale = 0.4, label={right:$v$}](26) at (2, 1.75) {};
		\node  [fill = black, circle, scale = 0.4, label={right:$\Log v$}](27) at (17.25, 3.25) {};
		\node  [fill = black, circle, scale = 0.4, label={right:$\Log w$}](28) at (15.75, 1.75) {};
		\node  [label={above:$\Log$}](29) at (8.75, 2.75) {};
	\end{pgfonlayer}
	\begin{pgfonlayer}{edgelayer}
		\draw [line width=1.5pt](2.center) to (1.center);
		\draw [line width=1.5pt](3.center) to (4.center);
		\draw [bend right=45, looseness=1.50, line width=1.5pt] (7.center) to (10.center);
		\draw [bend left=45, looseness=1.25, line width=1.5pt] (9.center) to (8.center);
		\draw [dashed,line width=1.5pt](11.center) to (14.center);
		\draw [dashed,line width=1.5pt](14.center) to (13.center);
		\draw [dashed, line width=1.5pt](13.center) to (12.center);
		\draw [dashed, line width=1.5pt](12.center) to (11.center);
		\draw [line width=1.5pt](16.center) to (17.center);
		\draw [line width=1.5pt](18.center) to (19.center);
		\draw [line width=1.5pt](20.center) to (21.center);
		\draw [line width=1.5pt](21.center) to (22.center);
		\draw [line width=1.5pt](21.center) to (23.center);
	\end{pgfonlayer}
\end{tikzpicture}

\caption{\small
The logarithm maps the ``hyperbolic'' orbits (left) into parallel ``lines'' (right), which have a finite Euclidean distance.
We define the logarithmic distance of the former as the Euclidean distance of the latter.
}
\label{fig:example-action-log}
\end{figure}

\begin{problem}
\label{prob:g-approx-dist-LD}
The \emph{algebraic orbit distance approximation problem} with approximation factor~$\gamma\geq1$ is defined as follows:
Given $v,w\in (\CC^\times)^n$, compute a number~$D$ such that
\[
 \distLD(\OO_v,\OO_w) \le D \le \gamma \, \distLD(\OO_v,\OO_w).
\]
\end{problem}

\subsection{Robust orbit problems: hardness}\label{subsec:intro hardness}
The appearance of a lattice in the above suggests that there might be some hardness lurking behind this problem.
In fact, we find that \emph{both} orbit distance approximation problems are \NP-hard for a sufficiently small approximation factor.
We will prove the following in \cref{sec:hardness} by showing that there is a polynomial time reduction from the \textit{closest vector problem} (CVP) to \cref{prob:k-approx-dist,prob:g-approx-dist-LD}.

\begin{theorem}\label{thm:intro-orbit-hardness}
There is a constant~$c>0$ such that \cref{prob:k-approx-dist,prob:g-approx-dist-LD}
for~$\gamma = n^{c/\log\log n}$ are \NP-hard.
\end{theorem}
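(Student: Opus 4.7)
The plan is to prove \cref{thm:intro-orbit-hardness} by a polynomial-time reduction from the approximate Closest Vector Problem (GapCVP), which by Dinur--Kindler--Raz--Safra is \NP-hard to approximate within factor $n^{\Omega(1/\log\log n)}$.

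The first step is to reformulate the orbit distance in log-coordinates as CVP in a projection lattice. Set $V_\RR \coloneqq \ran(M^\trn) \cap \RR^n$ (the real row-span of~$M$) and let $\pi \colon \RR^n \to V_\RR^\perp$ denote orthogonal projection. Since the $\Log$ map sends $\OO_v$ to the affine set $\Log(v) + V_\RR + \ii V_\RR$ modulo~$2\pi\ii\ZZ^n$, splitting the squared norm in \eqref{eq:def-distLD} by real and imaginary parts yields
\[
\distLD(\OO_v,\OO_w)^2 = \norm{\pi(u_1)}^2 + \min_{\beta \in \ZZ^n}\norm{\pi(u_2) - 2\pi\,\pi(\beta)}^2,
\]
where $u_1, u_2 \in \RR^n$ are the real and imaginary parts of $\Log(v) - \Log(w)$. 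The first summand is an elementary linear-algebra quantity, while the second is exactly a CVP instance in the rational projection lattice $\Lambda \coloneqq 2\pi\,\pi(\ZZ^n) \subseteq V_\RR^\perp$ with target $\pi(u_2)$. For the compact problem one has $u_1 = 0$ automatically because $v,w \in (S^1)^n$, and the metric comparison $\tfrac{2}{\pi}\distLD \le \dist \le \distLD$ mentioned in the footnote lets us transfer hardness between $\dist$ and $\distLD$ at a constant-factor loss absorbed into~$\gamma$.

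Given a GapCVP instance with integer basis $B \in \ZZ^{k\times k}$ and target $t \in \ZZ^k$, the second step is to engineer a weight matrix $M \in \ZZ^{d \times n}$ of bit-length polynomial in the input so that $\Lambda$ is isometric to $2\pi L$, where $L = B\ZZ^k$. This relies on the classical fact that every rational lattice is, up to isometry, the orthogonal projection of $\ZZ^N$ onto the complement of a rational subspace spanned by integer vectors, with $N = O(k)$; an explicit such $M$ can be read off from the Hermite normal form of the dual lattice~$L^*$.

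The main obstacle is encoding the CVP target~$t$ through Gaussian rational vectors. I need $v,w \in \QQ(\ii)^n$ such that $\pi(u_2)$ equals (or sufficiently closely approximates) $2\pi t^*$, where $t^*$ is the image of $t$ under the above isometry. Since $\arg(p+\ii q) = \arctan(q/p)$ ranges densely over $\RR$ as $p, q \in \QQ$ vary, every prescribed coordinate of the required $u_2$ can be approximated to additive precision~$\eps$ by such an argument with bit-length $\poly(\log(1/\eps))$; choosing $\eps$ polynomially smaller than the GapCVP threshold~$d$ preserves the gap, because perturbing a CVP target by~$\delta$ changes its lattice-distance by at most~$\delta$. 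In conclusion, YES/NO instances of $\gamma$-GapCVP map faithfully to YES/NO instances of \cref{prob:g-approx-dist-LD} (respectively \cref{prob:k-approx-dist}) at an approximation factor of the same asymptotic order, after absorbing the constant and polynomial losses into the leading constant of $n^{c/\log\log n}$.
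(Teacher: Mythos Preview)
Your overall architecture matches the paper: reduce approximate CVP to the orbit problem via the projection-lattice interpretation of $\distLD$ (your Step~1), then encode the CVP target through Gaussian-rational vectors (your Step~3). Step~3 is essentially what the paper does in \cref{thm:sldp-is-hard-and-reductions}, though the paper is more careful: it first invokes the \emph{unconditional} separation bound of \cref{prop:logarithmic-separation} to know a priori that $\Delta(K\ast 2\pi\ii t, K\ast 0)\ge\epsilon$ whenever the orbits differ, and only then chooses the precision of $v\approx\Exp(2\pi\ii t)$ so that the perturbation changes the orbit distance by a bounded \emph{multiplicative} factor. Your ``perturb by less than the gap threshold $d$'' sketch is in the right spirit but does not by itself control relative error when the true distance is near zero.

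The genuine gap is Step~2. You assert as a ``classical fact'' that every rational lattice $L$ of rank~$k$ is, up to isometry and scaling, the orthogonal projection of $\ZZ^N$ onto the complement of an integer-spanned subspace with $N=O(k)$, and that the required $M$ ``can be read off from the Hermite normal form of the dual lattice $L^*$.'' This is precisely the content of the paper's Lattice Lifting Theorem (\cref{thm:intro lattice lifting}), which the authors present as new. Their proof is not a normal-form computation: \cref{prop:eutactic} shows the task is equivalent to finding a right-invertible integer matrix $L'$ with $(GL')(GL')^T=s^2 I_m$, a condition that is genuinely quadratic in the lattice data; they solve it via an efficient Waring decomposition of the positive definite form $s^2 G^{-1}G^{-T}-I_m$ as a sum of squares of rational linear forms (\cref{thm:effectivemordell}). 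The HNF of $L^*$ does not produce such a sum-of-squares certificate. Moreover, the paper only achieves $n=O(m\log m + m\log\langle G\rangle)$, not $O(k)$ independent of bit-length, and must add a case split (invoking Kannan's exact CVP algorithm when the bit-length exceeds $2^{\Omega(m\log m)}$) to force $n=O(m^2\log m)$ and thereby preserve the $n^{c/\log\log n}$ approximation factor. Your Step~2 needs a real proof along these lines; the HNF claim as stated does not work.
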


We note that a solution to \cref{prob:k-approx-dist} allows distinguishing between
the cases~$\dist(\KK_v,\KK_w)\leq\varepsilon$
and~$\dist(\KK_v,\KK_w)\geq \gamma\varepsilon$ on input~$\varepsilon$,
and similarly for \cref{prob:g-approx-dist-LD}.
The CVP problem has recently attracted significant interest due to its relevance to \emph{lattice-based cryptosystems},
which are conjectured to be secure against quantum computers
~\cite{ggh-crypto, ntru-crytpo-1, ntru-crypto-2, homomorphic-crypto}.

Interestingly, our reduction from CVP is not straightforward,
but rather uses a quantitative lattice lifting result
in the spirit of~\cite{conway-sloane} that might be of independent interest.
We sketch the main idea for the $K$-action and for~$\distLD$ instead of the Euclidean distance
(which by \cref{footnote:norm conversion} makes no difference).
Let $\theta \coloneqq \frac1{2\pi\ii} \Log(v)$ and $\phi\coloneqq \frac1{2\pi\ii} \Log(w)$.
Then,
\begin{equation}
\label{eq:distLD-to-CVP}
  \frac1{2\pi} \distLD(\KK_v, \KK_w)
 = \dist(P(\theta - \phi), P(\ZZ^n)),
\end{equation}
where $P$ denotes the orthogonal projection from~$\RR^n$ onto the orthogonal complement of the row space of~$M$.
Note that the right-hand side quantity amounts to a CVP for the lattice~$P(\ZZ^n)$.
Our lattice lifting result states that, up to scaling, \emph{every} full-dimensional lattice~$\mathcal L$
can be obtained as the orthogonal projection of a cubic lattice~$\ZZ^n$, where $n$ is not much larger than~$\dim(\mathcal L)$.
Moreover, this projection can be computed in polynomial time.
This yields the desired reduction from CVP.

\begin{theorem}[Lattice lifting]\label{thm:intro lattice lifting}
Suppose $\mathcal{L} \subset \RR^m$ is a lattice of rank~$m$ given by a generator matrix $G\in\ZZ^{m\times m}$,
i.e., $\mathcal{L} = G(\ZZ^m)$.
Then we can, in polynomial time, compute $n\geq m$, a scaling factor $s\in\ZZ_{>0}$,
and an orthonormal basis~$v_1,v_2,\dots,v_n\in\QQ^n$ such that
\[
    \mathcal{L} = s\, P(\mathcal{L}'),\qquad \text{where}\quad \mathcal{L}'\coloneqq \ZZ v_1 +\ZZ v_2+\dots+\ZZ v_n
\]
and
$P \colon \RR^n\rightarrow\RR^m$ denotes the orthogonal projection onto the first~$m$~coordinates.
Moreover, we may assume that $n = O(m\log m + m \log\langle G\rangle)$,
where $\langle G\rangle$ denotes the bit-length of~$G$.
\end{theorem}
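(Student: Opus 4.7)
The plan is to reduce the lattice lifting problem to writing a positive-definite integer matrix as~$XX\trn$ with $X$ integer, and then to construct the rational orthonormal basis by a Witt-style Householder chain.

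First I would replace $G$ by an LLL-reduced basis of $\mathcal{L}$, so that the Gram matrix $\Lambda \coloneqq G\trn G$ has bit-length polynomial in the input. Then I would look for an integer matrix $H\in\ZZ^{m\times n}$ of the shape $H=(G\mid GX)$ with $X\in\ZZ^{m\times (n-m)}$. The columns of such an $H$ automatically generate $\mathcal{L}$ over $\ZZ$, while the orthogonality condition $HH\trn = s^{2} I_{m}$ simplifies to the integer matrix equation $XX\trn = s^{2}\Lambda^{-1}-I_{m}$. Choosing $s$ as an integer multiple of $\det\Lambda$ large enough that the right-hand side is positive definite converts the task into decomposing a specific positive-definite integer matrix~$N$ as $XX\trn$ with $X$ integer.

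The core step is this integer Gram decomposition. I would build the rows $x_1,\dots,x_m$ of $X$ one at a time by a Cholesky-style recursion: having fixed $x_1,\dots,x_{i-1}$ with the correct pairwise inner products, I would append a constant number of new coordinates and use Lagrange's four-square theorem to realize the next diagonal entry $N_{ii}$. The required inner products of $x_i$ with previous rows reduce to an integer linear system, solvable thanks to $\gcd$ conditions that I would enforce during the Lagrange step (for instance by decomposing as a sum of five squares so that one coordinate equals $1$). Since Lagrange's decomposition is computable in randomized polynomial time via Rabin--Shallit, and since all bit-lengths are controlled by the LLL bound on~$\Lambda$, this yields the claimed bound $n = O(m\log m + m\log\langle G\rangle)$.

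With $H$ in hand, the rows of $H/s$ form a rational orthonormal family of $m$ vectors in $\QQ^n$. To extend them to a full rational orthonormal basis $v_1,\dots,v_n\in\QQ^n$ I would apply a sequence of $m$ rational Householder reflections: at step $i$, the reflection through the hyperplane perpendicular to $\mathbf{r}_i - e_i$ (where $\mathbf{r}_i$ is the $i$-th row vector in the current coordinate frame) sends $\mathbf{r}_i$ to $e_i$; a direct computation using $\langle \mathbf{r}_i,\mathbf{r}_j\rangle = \delta_{ij}$ shows that the remaining rows acquire zero $i$-th coordinate, so the problem recurses in dimension $n-i$. The composed reflection map is rational orthogonal, and its transpose $V$ has $H/s$ as its first $m$ rows; hence projecting $\mathcal{L}' = \ZZ v_1 + \cdots + \ZZ v_n$ onto the first $m$ coordinates produces $(1/s) H \ZZ^n = (1/s)\mathcal{L}$, as required.

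The main obstacle is the integer decomposition $N = XX\trn$: a naive Schur-complement recursion would produce rational Schur complements, so one must interleave the four-square expansion with the coordinate extension carefully to preserve integrality and positive definiteness of the intermediate matrices. The LLL reduction at the start is what controls bit-length growth, keeps $\Lambda^{-1}$ well-conditioned, and ultimately guarantees the announced bound on $n$.
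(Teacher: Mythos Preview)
Your overall architecture matches the paper's: the reduction to finding an integer $X$ with $(G\mid GX)(G\mid GX)\trn = s^{2} I_m$, equivalently $XX\trn = s^{2}(G\trn G)^{-1}-I_m$, and the completion of $s^{-1}(G\mid GX)$ to a rational orthogonal matrix by a chain of Householder reflections, are precisely what the paper does (\cref{prop:eutactic} and \cref{lem:completion-problem}). The LLL step is unnecessary, since $G$ is already the input and its Gram matrix has polynomial bit-length by definition.

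The gap is in your ``core step'', the row-by-row integer Gram decomposition of $N\coloneqq s^{2}(G\trn G)^{-1}-I_m$. In your scheme you first pick old coordinates $y_i$ of $x_i$ solving $x_j\cdot y_i=N_{ij}$ for $j<i$, then use fresh coordinates to make up the residual norm $N_{ii}-\lVert y_i\rVert^{2}$ via four squares. But nothing forces $N_{ii}-\lVert y_i\rVert^{2}\ge 0$. Positive definiteness only guarantees $N_{ii}>N_{[i]}\trn (N^{(i-1)})^{-1}N_{[i]}$, the minimum of $\lVert y_i\rVert^{2}$ over \emph{real} solutions; your gcd trick (placing $N_{ij}$ in the column where $x_j$ has its $1$) gives $\lVert y_i\rVert^{2}=\sum_{j<i}N_{ij}^{2}$, which can vastly exceed $N_{ii}$. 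Concretely, for $G=\begin{psmallmatrix}1&1\\0&1\end{psmallmatrix}$ and $s=2$ one gets $N=\begin{psmallmatrix}7&-4\\-4&3\end{psmallmatrix}$; after $x_1=(1,2,1,1,0)$ your recipe yields $\lVert y_2\rVert^{2}=16>3=N_{22}$. Finding a short-enough integer $y_i$ in the affine solution lattice is itself a CVP-type problem, so the recursion as described does not run, and your closing paragraph acknowledges but does not resolve this.

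The paper sidesteps the issue by \emph{not} insisting on integer $X$ up front. It applies Lagrange's congruence diagonalization $QNQ\trn=\diag(d_1,\dots,d_m)$ with $Q\in\ZZ^{m\times m}$ and $d_i\in\ZZ_{>0}$, then writes each $d_i$ as a sum of $O(\log\log d_i)$ integer squares by greedily subtracting $\lfloor\sqrt{\cdot}\rfloor^{2}$ (this deterministic step replaces your randomized Rabin--Shallit and is the source of the logarithmic factors in $n$). This yields a \emph{rational} $L$ with $LL\trn=N$; the denominators are then cleared in one shot by writing $f^{2}-1$ (with $f$ their lcm) as a sum of squares and prepending blocks $b_j I_m$. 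The two-stage ``rational Waring decomposition, then clear denominators'' is the idea your proposal is missing; it turns your delicate interleaved recursion into two straightforward steps with no positivity obstruction.
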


To the best of our knowledge, \cref{thm:intro lattice lifting} was not previously observed in the literature.
We sketch its proof in \cref{subsec:proof ideas} below and give the full proof in \cref{sec:liftings}.

\subsection{Robust orbit problems: algorithms}\label{subsec:intro algos}
The picture changes markedly if we allow for larger approximation ratios~$\gamma$,
where we recall that~$B$ denotes the maximum bit-length of the entries of~$M$.
Roughly speaking, our intuitive result is the following:
\begin{quote}
  \emph{If the distance between any two distinct orbits is no smaller than $\exp(-\poly(d,n,B,b))$,
  then these distances can be approximated to~$\gamma=\exp(\poly(d,n,B))$ in polynomial time.}
\end{quote}
To state this precisely, let us define the \emph{compact} and the \emph{algebraic separation parameters} as
\begin{align}\label{eq:def-sep}
  \sep_K(d,n,B,b) \coloneqq \min_{M,v,w} \dist(\KK_v,\KK_w)
\quad\text{and}\quad
  \sep_T(d,n,B,b) \coloneqq \min_{M,v,w} \distLD(\OO_v,\OO_w),
\end{align}
respectively, where the minima are taken over all weight matrices $M\in\ZZ^{d\times n}$
with entries of maximum bit-length at most~$B$ and over all vectors $v,w\in (\QQ(\ii)^\times)^n$
with components of bit-length at most~$b$ such that~$v$ and~$w$ are in distinct orbits.

\begin{sephypothesis}[for the compact torus]\label{hyp:compact-separation}
$\sep_{K}(d,n,B,b)\geq\exp(-\poly(d,n,b,B))$.
\end{sephypothesis}

\begin{sephypothesis}[for the algebraic torus]\label{hyp:sep}
$\sep_T(d,n,B,b)\geq \exp(-\poly(d,n,b,B))$. 
\end{sephypothesis}

The latter hypothesis is readily seen to imply the former
(this follows essentially from \cref{footnote:norm conversion},
see \cref{cor:rel-btw-seps}).
While both hypotheses appear geometric in nature, it turns out that they are intimately related
to well-known quantitative versions
of the \abc-conjecture in number theory.
We will explain this connection in \cref{subsec:intro-abc}.
Asssuming these hypotheses, we can solve the orbit distance approximation problems
with an exponential approximation factor:

\begin{theorem}\label{thm:intro-approximation-algorithm}
If \cref{hyp:compact-separation} holds, then there is a polynomial-time algorithm
that solves \cref{prob:k-approx-dist}
for an approximation factor~$\gamma = \exp(\poly(d,n,B))$.
Similarly, if \cref{hyp:sep} holds, then there is a polynomial-time algorithm
that solves \cref{prob:g-approx-dist-LD}
for~$\gamma = \exp(\poly(d,n,B))$.
\end{theorem}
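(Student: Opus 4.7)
The plan is to reduce both distance-approximation problems to an approximate closest vector problem on a rational lattice, and then invoke Babai's nearest plane algorithm on an LLL-reduced basis, which yields a $2^{O(n)}$-approximation. The separation hypothesis will be used only to certify that polynomial numerical precision suffices throughout.

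For the algebraic case, I would first compute the exact rational orthogonal projection $P\in\QQ^{n\times n}$ onto the orthogonal complement of the row span of~$M$; since $M$ is integral, $P$ has rational entries of polynomial bit-size. Writing $\Log v-\Log w=(\log|v|-\log|w|)+\ii(\arg v-\arg w)$ coordinatewise, and using that $\Log(\OO_v)=\Log v+\ran(M^T)$ inside $\CC^n/2\pi\ii\ZZ^n$, a direct decomposition into real and imaginary parts gives
\begin{equation*}
\distLD(\OO_v,\OO_w)^2
 = \bigl\|P(\log|v|-\log|w|)\bigr\|^2
 + \dist\!\bigl(P(\arg v-\arg w),\, 2\pi P(\ZZ^n)\bigr)^2.
\end{equation*}
Because $P$ is rational, $P(\ZZ^n)$ is a genuine lattice; I would extract a basis from $Pe_1,\dots,Pe_n$ via Hermite normal form in polynomial time and then LLL-reduce it. The first summand can be evaluated directly from high-precision numerical approximations of $\log|v_j|-\log|w_j|$, and the second is a CVP instance with target $P(\arg v-\arg w)$ in the lattice $2\pi P(\ZZ^n)$, which Babai's nearest plane procedure solves to within a factor $2^{O(n)}$. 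Combining the two summands (noting that $\sqrt{\tilde a^2+\tilde b^2}$ approximates $\sqrt{a^2+b^2}$ within the worst of the two factors) yields a number $D$ with $\distLD(\OO_v,\OO_w)\le D\le 2^{O(n)}\distLD(\OO_v,\OO_w)$.

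The hard part will be controlling numerical precision: the transcendentals $\log|v_j|$ and $\arg v_j$ can only be represented as finite-precision rationals, and a naive additive error could swamp a nonzero but very small true distance. This is exactly where the separation hypothesis enters. The algorithm would first invoke the polynomial-time orbit-equality test of~\cite{torus} to decide whether $\OO_v=\OO_w$, returning $D=0$ if so. Otherwise \cref{hyp:sep} guarantees $\distLD(\OO_v,\OO_w)\ge\exp(-\poly(d,n,B,b))$, so it suffices to carry all numerical operations at $\poly(d,n,B,b)$ bits of precision; the resulting additive error is then dominated by the true distance and absorbed into the $2^{O(n)}\le\exp(\poly(d,n,B))$ factor from LLL. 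All intermediate bit-lengths (including those in the LLL-reduced basis of $P(\ZZ^n)$) stay polynomially bounded in $d,n,B$.

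The compact case will specialize cleanly. For $v,w\in(S^1)^n$ one has $|v_j|=|w_j|=1$, so the first summand vanishes and the problem reduces to the CVP instance already analyzed, approximating $\distLD(\KK_v,\KK_w)$. The bi-Lipschitz equivalence of $\dist$ and $\distLD$ on $(S^1)^n$ with constant $\pi/2$ then converts this into the desired approximation of $\dist(\KK_v,\KK_w)$ up to an absorbable constant factor. Using the polynomial-time compact orbit-equality algorithm of~\cite{torus} together with \cref{hyp:compact-separation} justifies the precision control exactly as above.
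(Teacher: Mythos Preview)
Your proposal is correct and aligns with the paper's overall architecture: decompose $\distLD(\OO_v,\OO_w)^2$ into a real-part norm and an imaginary-part lattice distance, approximate the latter, and use the orbit-equality test together with the separation hypothesis to justify polynomial numerical precision. The treatment of the compact case via the $\dist$--$\distLD$ equivalence on $(S^1)^n$ is also the paper's route.

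The one genuine difference is in how the lattice problem is solved. You work with the rational orthogonal projection $P$ onto $(\ran M^T)^\perp$, obtain the exact identity $\dist(\theta-\phi+U,\ZZ^n)=\dist(P(\theta-\phi),P(\ZZ^n))$, and then run LLL/Babai on the rational lattice $P(\ZZ^n)$ for a $2^{O(n)}$ factor. The paper's primary algorithm instead computes an \emph{integer} matrix $H\in\ZZ^{k\times n}$ whose rows form a basis of the saturated lattice $\ker M\cap\ZZ^n$. The point is that $H(\ZZ^n)=\ZZ^k$, so the image lattice is the standard cubic lattice and CVP reduces to coordinatewise rounding---no LLL is needed. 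One then has
\[
\sigma_{\min}(H)\,\dist(t+U,\ZZ^n)\ \le\ \dist(Ht,\ZZ^k)\ \le\ \sigma_{\max}(H)\,\dist(t+U,\ZZ^n),
\]
yielding the approximation factor $\kappa(H)=\sigma_{\max}(H)/\sigma_{\min}(H)\le n^n 2^{Bn}$. Your LLL-based route is in fact mentioned by the paper as an alternative (reduce the subspace-to-cubic-lattice distance to CVP and apply LLL), giving the tighter $2^{O(n)}$ factor; the paper's primary route trades a slightly worse (but still $\exp(\poly(d,n,B))$) factor for an elementary algorithm that avoids lattice basis reduction entirely.
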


In fact, we show in \cref{sec:main-approx} without any hypotheses that for~$\gamma = \exp(\poly(d,n,B))$,
there is an algorithm solving \cref{prob:g-approx-dist-LD} in time $O(\poly(d,b,n,B) \log(\sep^{-1}_T(d,n,B,b)))$,
and similarly for \cref{prob:k-approx-dist}.
This algorithm runs in polynomial time if and only if the separation hypothesis is true, see \cref{rem:poly-time-iff-sep-hyp}.
Our algorithm also solves the corresponding \emph{search problem} (see \cref{thm:rop-witness}):
When $\OO_v\ne\OO_w$, it finds a group element~$t \in T$ such that%
\begin{equation}\label{eq:witness}
  \distLD( t \cdot v, w ) \leq \gamma \, \distLD(\OO_v,\OO_w).
\end{equation}
In particular, $t$ can serve as a witness of the approximate distance of the orbits.
In general, the coordinates of any~$t$ satisfying \cref{eq:witness} may require a superpolynomial bit-length.
Accordingly, our algorithm will output instead a vector~$x\in \QQ(\ii)^d$ such that~$t = \Exp(x)$.
When $\OO_v = \OO_w$, then it will in general not be possible to output such a group element,
since~$t \cdot v = w$
will in general not have a rational solution (neither in~$t$ nor in~$x$).

To explain the main idea behind proving \cref{thm:intro-approximation-algorithm}
and motivate the separation hypothesis,
we recall the algorithm for solving the orbit equality problem in~\cite{torus}.
For simplicity, we consider vectors~$v,w\in(\CC^\times)^n$ with closed $T$-orbits.
The algorithm in~\cite{torus} first computes a basis~$\mathcal B$ of the lattice
$\mathcal{K} \coloneqq \{ \alpha\in\ZZ^n \mid M\alpha=0 \}$ in polynomial time.
Each basis vector~$\alpha \in \mathcal B$ determines an invariant function (Laurent monomials)
of the form
\begin{align*}
  f_\alpha(x) 
\coloneqq x_1^{\alpha_1} x_2^{\alpha_2}\dots x_n^{\alpha_n}.
\end{align*}
Then, $\OO_v=\OO_w$ if and only if~$v$ and~$w$ take the same value
on all these functions~\cite[Corollary~5.2]{torus}.
Even though the~$\alpha$ have polynomial bit-length,
testing whether~$f_\alpha(v) = f_\alpha(w)$ is not obvious,
as the bit-length of the evaluations can be exponentially large in the input size.
Nevertheless, it is possible to determine equality of these numbers in polynomial time
without actually computing them, see~\cite[Proposition~5.5]{torus}.

When dealing with the problem of approximating orbit distances, one may use similar ideas.
One finds that one now needs to test whether~$f_\alpha(v)$
and $f_\alpha(w)$ are \emph{close} (rather than equal),
for a suitable choice of a metric.
This turns out to be substantially more difficult than the equality testing,%
\footnote{Just testing an inequality~$f_\alpha(v) \ge f_\alpha(w)$ in polynomial time is only known assuming a certain strengthening of the \abc-conjecture; see~\cite{mono-testing-abc}.}
and indeed it must be by the hardness results in \cref{subsec:intro hardness}.
To solve it with an exponential approximation factor we can proceed as follows:
Let $H\in\ZZ^{k\times n}$ denote the matrix with rows the vectors in~$\mathcal B$.
We prove in \cref{sec:log} that the logarithmic distance between the orbits~$\OO_v$ and~$\OO_w$
can be approximated by~$\norm{H \Log(v) - H \Log(w)}$,
up to a factor~$\sigma_{\max}(H)/\sigma_{\min}(H) = \exp(\poly(d,n,B))$.
Here, $\sigma_{\max}(H),\sigma_{\min}(H)$ denote the largest and smallest singular values of $H$, respectively.
It remains to approximate $\norm{H \Log(v) - H \Log(w)}$ with relative error.
The key challenge is in computing rational approximations of the logarithms.
This can be done by standard algorithms, but we have to make sure
that polynomially many bits of accuracy are sufficient.
This is guaranteed by the separation hypothesis!
The idea of relating the distance between two orbits with the difference
between the values that the invariants take on these orbits,
and approximating the distance between the orbits in this way is inspired by the invariant theory
and it will be important later
when we explain the close relationship between the robust orbit problems and the \abc-conjecture.
On the other hand,
we will also develop an alternative algorithm to the robust orbit problems in \cref{sec:sldp-to-cvp}
by providing a polynomial time reduction to CVP under the assumption that \cref{hyp:sep} is true
by using \cref{eq:distLD-to-CVP}.


Finally, we study the complexity of the orbit distance approximation problems when~$n$ is fixed.
In this situation, \cref{hyp:compact-separation,hyp:sep} are true,
as we will discuss in \cref{subsec:intro-abc},
and hence there are polynomial-time algorithms for \cref{prob:k-approx-dist,prob:g-approx-dist-LD}
with $\gamma=\exp(\poly(d,B))$.
However, we can in fact choose~$\gamma$ to be much smaller:

\begin{theorem}\label{thm:compact-fixed-dimension}
For fixed~$n$, \cref{prob:k-approx-dist} can be solved in polynomial time
for the approximation factor~$\gamma = 2$.
\end{theorem}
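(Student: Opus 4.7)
The plan is to exploit the fact that, for fixed~$n$, the compact orbit distance problem reduces to a closest vector problem in a lattice of rank at most~$n$, which is solvable exactly (on rational input) in polynomial time by any fixed-dimensional CVP algorithm. Combined with the elementary inequality
\[
\dist(\KK_v,\KK_w) \;\leq\; \distLD(\KK_v,\KK_w) \;\leq\; \tfrac{\pi}{2}\,\dist(\KK_v,\KK_w),
\]
obtained by minimizing the pointwise bounds of \cref{footnote:norm conversion} over $t \in K$, a sufficiently accurate approximation of $\distLD(\KK_v,\KK_w)$ from above will itself achieve approximation factor $\pi/2 < 2$.

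Writing $v_j = \exp(2\pi\ii\,\theta_j)$, $w_j = \exp(2\pi\ii\,\phi_j)$, and $\chi = \theta - \phi$, an argument parallel to the derivation of \cref{eq:distLD-to-CVP} yields
\[
\distLD(\KK_v,\KK_w) \;=\; 2\pi \, \dist\bigl(P\chi,\, P(\ZZ^n)\bigr),
\]
where $P\colon \RR^n\to\RR^n$ is the orthogonal projection onto the orthogonal complement of the row span of~$M$. Because $M \in \ZZ^{d\times n}$ has entries of bit-length at most~$B$, the matrix~$P$ has rational entries of bit-length $\poly(B)$, and $P(\ZZ^n)$ is a lattice of rank at most~$n$; both $P$ and a basis of $P(\ZZ^n)$ are computable in polynomial time. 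For fixed~$n$, the rank of $P(\ZZ^n)$ is fixed, so an exact CVP algorithm (for instance, Kannan's enumeration after LLL reduction) runs in polynomial time on rational input. To deal with the transcendental target~$P\chi$, we approximate each $\theta_j,\phi_j$---an argument of a Gaussian rational on~$S^1$---to precision $2^{-p}$ in $\poly(p,b)$ time, obtaining $\tilde\chi \in \QQ^n$ with $\|\tilde\chi - \chi\| \leq 2^{-p}$. Running exact CVP on the rational target~$P\tilde\chi$ yields a value $\tilde D = \dist(P\tilde\chi, P(\ZZ^n))$ which, by $\|P\| \leq 1$ and the Lipschitz property of distance-to-a-set, satisfies $|\tilde D - \dist(P\chi, P(\ZZ^n))| \leq 2^{-p}$.

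To produce the final number, we first test $\KK_v = \KK_w$ in polynomial time using~\cite{torus}; if they are equal, output $D = 0$. Otherwise, output $D \coloneqq 2\pi\tilde D + c\cdot 2^{-p}$ with a small absolute constant $c>0$ chosen to guarantee $D \geq \distLD(\KK_v,\KK_w) \geq \dist(\KK_v,\KK_w)$. The inequality $\distLD \leq (\pi/2)\dist$ at the orbit level then gives $D \leq (\pi/2)\dist(\KK_v,\KK_w) + O(2^{-p})$, so $D \leq 2\,\dist(\KK_v,\KK_w)$ as soon as $2^{-p}$ is smaller than a suitable constant multiple of $\dist(\KK_v,\KK_w)$. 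The one non-routine step is converting the additive CVP precision into a multiplicative approximation of the orbit distance, and this is exactly where the separation bound \cref{hyp:compact-separation}---which is unconditionally true for fixed~$n$ (see \cref{subsec:intro-abc})---is essential: whenever $\KK_v \ne \KK_w$, we have $\dist(\KK_v,\KK_w) \geq (2/\pi)\distLD(\KK_v,\KK_w) \geq \exp(-\poly(B,b))$, so taking $p = \poly(B,b)$ suffices and the overall algorithm runs in polynomial time.
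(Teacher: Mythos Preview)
Your proof is correct and follows essentially the same route as the paper: you use Matveev's bound to get the compact separation hypothesis unconditionally for fixed~$n$, reduce the orbit distance to a CVP instance in a lattice of rank at most~$n$ via the orthogonal projection formula, solve that CVP exactly in polynomial time for fixed rank, and absorb the transcendental logarithms by approximating to precision governed by the separation bound. The paper phrases this as a chain of reductions $\textsc{ROP}(K,\dist)\to\textsc{ROP}(K,\Delta)\to\textsc{SLDP}$ (\cref{prop:delta-to-Delta-reduction}, \cref{le:ROP-SLP}) followed by \cref{cor:sldp-fixed-dimension}, but the content is identical; your direct derivation via the metric comparison $\dist\le\distLD\le(\pi/2)\dist$ on $(S^1)^n$ makes the constant~$2$ particularly transparent.
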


\begin{theorem}\label{thm:fixed-dimension}
For fixed~$n$,
and any fixed~$\gamma>1$,
\cref{prob:g-approx-dist-LD} can be solved in polynomial time.
More specifically, there is a polynomial-time algorithm that on
input~$M\in\ZZ^{d\times n}$, $v,w \in \QQ(\ii)^n$, and $\eps>0$ outputs a number~$D$
such that~$\distLD(\OO_v,\OO_w) \,\leq \, D \, \leq\, (1+\varepsilon)\, \distLD(\OO_v,\OO_w)$.
\end{theorem}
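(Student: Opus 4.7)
The plan is to decompose $\distLD(\OO_v,\OO_w)^2$ exactly into two pieces, each of which can be computed to $(1+\varepsilon)$-accuracy in polynomial time when $n$ is fixed: a continuous distance to a rational affine subspace, and a closest vector problem on a rational lattice of rank at most $n$.

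First I would set up the decomposition. Writing $\Log v=a_v+\ii b_v$ and $\Log w=a_w+\ii b_w$ with $a_v,b_v,a_w,b_w\in\RR^n$, the image of $\OO_v$ under $\Log$ is $\Log(v)+M^\trn\CC^d$ taken modulo $2\pi\ii\ZZ^n$. Since the lattice $2\pi\ii\ZZ^n$ is purely imaginary, the real and imaginary parts of the optimizing translation minimize independently, yielding
\[
\distLD(\OO_v,\OO_w)^2 \;=\; \dist(a,M^\trn\RR^d)^2 \;+\; \dist(b,M^\trn\RR^d+2\pi\ZZ^n)^2,
\]
with $a\coloneqq a_v-a_w$ and $b\coloneqq b_v-b_w$. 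Letting $P\in\QQ^{n\times n}$ be the orthogonal projection onto $(M^\trn\RR^d)^\perp$ (rational and computable in polynomial time from $M$), this equals $\|Pa\|^2+4\pi^2\dist(q,L)^2$, where $q\coloneqq Pb/(2\pi)$ and $L\coloneqq P\ZZ^n$. Because $M^\trn\RR^d$ is rationally defined, $L$ is a full lattice in $(M^\trn\RR^d)^\perp$ of rank $n-\rk M$, and a basis of $L$ can be extracted in polynomial time via Hermite normal form applied to the columns of $P$.

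For a precision parameter $\eta>0$ to be chosen below, I would next compute rational approximations $\tilde a,\tilde b\in\QQ^n$ of $a,b$ and $\hat\pi\in\QQ$ of $\pi$, all to additive error $\eta$, using standard algorithms for $\log$ (applied to the positive rationals $|v_j|^2/|w_j|^2$), for $\arctan$, and for $\pi$; each runs in $\poly(b,\log(1/\eta))$ time. The rational number $\|P\tilde a\|^2$ is then evaluated directly. For the lattice part, set $\tilde q\coloneqq P\tilde b/(2\hat\pi)\in\QQ^n$ and invoke Kannan's exact CVP algorithm on $(L,\tilde q)$: because $L$ has rank at most $n=O(1)$, this runs in polynomial time in the bit-length of its input and returns $y^*\in L$ with $\|\tilde q-y^*\|=\dist(\tilde q,L)$. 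The algorithm then outputs
\[
D \;\coloneqq\; \sqrt{\|P\tilde a\|^2 + 4\hat\pi^2\|\tilde q-y^*\|^2}.
\]

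The main obstacle will be the precision analysis — in particular, choosing $\eta$ so that $D$ achieves the required multiplicative $(1+\varepsilon)$-accuracy rather than merely additive accuracy. A routine perturbation analysis, using that CVP is $1$-Lipschitz in the target, gives $|\distLD(\OO_v,\OO_w)^2-D^2|\le c\,\eta$, where $c$ depends polynomially on $\|a\|,\|b\|,\|P\|$, all of magnitude $\poly(n,b,B)$. To turn this additive bound into a multiplicative one, the algorithm would first run the polynomial-time orbit-equality test of~\cite{torus}: if $\OO_v=\OO_w$, it outputs $D=0$ and halts. Otherwise, by the unconditional validity of \cref{hyp:sep} for fixed $n$ (cf.~\cref{subsec:intro-abc}), $\distLD(\OO_v,\OO_w)\ge 2^{-Q(d,b,B)}$ for some explicit polynomial $Q$. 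Choosing $\eta\coloneqq\varepsilon\cdot 2^{-2Q}/(c+1)$ and inflating $D$ by a negligible additive safety margin to guarantee $D\ge\distLD$, the output satisfies $\distLD(\OO_v,\OO_w)\le D\le(1+\varepsilon)\distLD(\OO_v,\OO_w)$. Since $\log(1/\eta)=\poly(d,b,B)+\log(1/\varepsilon)$, the entire procedure runs in polynomial time in the input size plus $\log(1/\varepsilon)$; the delicate but routine part is carefully propagating the rounding errors through the square root and through Kannan's subroutine, which the separation lower bound converts into negligible relative errors.
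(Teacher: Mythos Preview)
Your proposal is correct and takes essentially the same approach as the paper: decompose $\distLD(\OO_v,\OO_w)^2$ into the real-part projection distance and the imaginary-part CVP via \cref{prop:delta-orbit-distance}, solve the fixed-dimension CVP exactly using Kannan's algorithm (the paper's \cref{cor:sldp-fixed-dimension}), and convert additive to multiplicative error by first running the orbit-equality test of~\cite{torus} and then invoking the separation lower bound, which holds unconditionally for fixed~$n$ by Matveev's theorem. The paper's own proof is a terse four lines that routes through the reductions \cref{prop:delta-to-Delta-reduction} and \cref{le:ROP-SLP} to \textsc{SLDP} and then cites \cref{cor:sldp-fixed-dimension}; you have simply unpacked that pipeline explicitly, with the only cosmetic difference that you project onto $(\ran M^T)^\perp$ and run CVP on $P(\ZZ^n)$ directly, whereas the paper phrases this step as \textsc{SLDP} on $\ZZ^n$ (equivalent by \cref{cor:ort-proj-dist}), and you carry an explicit rational $\hat\pi$ rather than working in $\QQ^n+2\pi\ii\QQ^n$.
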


We prove these results at the end of \cref{sec:sldp-to-cvp}.

\subsection{Separation hypotheses and the \abc-conjecture}\label{subsec:intro-abc}
The famous \abc-conjecture by Oesterl\'{e} and Masser~\cite{oesterle} states that for every~$\varepsilon>0$
there exists~$\kappa_\varepsilon>0$ such that, if~$a$ and~$b$ are coprime positive integers and $c=a+b$, then
\[
c < \kappa_\varepsilon \, \rad(abc)^{1+\varepsilon},
\] where $\rad(abc)$ is the \textit{radical} of $abc$, i.e., the product of all primes dividing~$abc$.
Motivated by an earlier conjecture of Szpiro, the \abc-conjecture is one of the most important open problems
in number theory,
due to its many consequences~\cite{granvilletucker,abc-consequences}.

Baker~\cite{baker98} observed that the \abc-conjecture is intimately related to lower bounds
for linear forms in logarithms.
Let $v_1,v_2,\dots,v_n$ be positive integers and $e_1,e_2,\dots,e_n$ be integers such that
the product $v_1^{e_1} v_2^{e_2}\cdots v_n^{e_n}\neq 1$ is different from one.
How close to $1$ can such a product be?
Equivalently, how close can the linear combination of logarithms,
\begin{equation}\label{eq:def-Lambda(v,e)}
  \Lambda(v,e) \coloneqq e_1\log v_1 +\ldots + e_n \log v_n ,
\end{equation}
be to zero?
We note that if
$|v_1^{e_1}\dots v_n^{e_n}-1|\leq \frac{1}{2}$,
then
\[
  \frac{1}{2} |\Lambda(v,e)| \, \leq \, |v_1^{e_1} \cdots v_n^{e_n}-1| \, \leq \, 2 |\Lambda(v,e)| ,
\]
so lower bounds for $|\Lambda(v,e)|$ and for $|v_1^{e_1}\dots v_n^{e_n}-1|$ are equivalent;
see \cref{eq:Lambda-inequ}.
Improving work by Baker and W\"ustholz~\cite{wustholz1993}, Matveev~\cite{matveev} proved that
if $\Lambda(v,e)\ne 0$, then
\begin{equation}\label{eq:BW:93}
  \abs{ \Lambda(v,e) } \ge e^{-B \cdot b^{O(n)}}.
\end{equation}
Here and below $B$ and $b$ denote upper bounds on the bit-length of the $e_j$ and $v_j$, respectively.
Note that for fixed~$n$, this bound is polynomial in~$B$ and $b$.
Furthermore, every lower bound for~$|\Lambda(v,e)|$ implies a version of the \abc-conjecture:
Stewart, Yu and Tijdemann~\cite{yu-stewart, stewart_oesterle-masser_1986}
used the Baker-W\"{u}stholz-Matveev bound~\eqref{eq:BW:93} and its $p$-adic version by Van der Poorten~\cite{vdpoorten}
to prove the inequality
\[
  \log c < \kappa_\varepsilon\, \rad(abc)^{\frac{1}{3}+\varepsilon}.
\]
To our knowledge, this bound is the strongest bound to date given towards a solution to
the \abc-conjecture and demonstrates the strong connection between the \abc-conjecture and
lower bounds for linear forms in logarithms.
Baker~\cite{baker98} proved that a stronger lower bound than \eqref{eq:BW:93}
(together with its $p$-adic version)
is equivalent to a certain strengthening of the \abc-conjecture.
We will say more about this connection in \cref{sec:abc}.

It is widely conjectured that \eqref{eq:BW:93} is \emph{not} optimal.
In particular, famous conjectures in number theory,
such as Waldschmidt's conjectures~\cite[Conjecture~14.25, p.~547]{waldschmidt-2},~\cite[Conjecture~4.14]{waldschmidt-1},
or the Lang-Waldschmidt conjecture
for Gaussian rationals~\cite[Introduction to Chapters~X and~XI]{lang-elliptic-curves}
imply that the following hypothesis holds (in fact, they make even stronger predictions!):

\begin{numhypothesis}\label{hyp:linearforms}
For any~$n\in\NN$, $v_1,v_2,\dots,v_n\in\QQ(\ii)^\times$, and $e_1,e_2,\dots,e_n\in\ZZ$ such that,
with $\Log$ the principal branch of the complex logarithm,
\[
  \Lambda(v,e) \coloneqq e_1 \Log v_1 + e_2 \Log v_2 +\dots +e_n\Log v_n \neq 0,
\]
we have
\[
	|\Lambda(v,e)| \geq \exp(-\poly(n,b,B)),
\]
where $B$ and $b$ are the maximum bit-lengths of the $e_j$ and~$v_j$, respectively.
\end{numhypothesis}

Here we prove a novel connection between this number-theoretic conjecture and
the separation hypothesis of group orbits, which can be seen as further evidence for the latter:

\begin{theorem}\label{thm:abc-and-sep}
\Cref{hyp:sep} and \cref{hyp:linearforms} are equivalent.
Moreover, \cref{hyp:compact-separation} is equivalent to \cref{hyp:linearforms}
when the latter is restricted to~$\abs{v_1}=\dots=\abs{v_n}=1$.
\end{theorem}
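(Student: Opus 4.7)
The plan is to prove both equivalences by relating orbit distances to linear forms in logarithms of Gaussian rationals, exploiting the identity $2\pi\ii=2\Log(-1)$ so that lattice shifts in $2\pi\ii\ZZ^n$ appear as linear-form terms with base $-1\in\QQ(\ii)^\times$. For the direction \cref{hyp:linearforms}$\Rightarrow$\cref{hyp:sep}, fix $M\in\ZZ^{d\times n}$ and $v,w$ with $\OO_v\ne\OO_w$, and compute (via Hermite normal form) a basis matrix $H\in\ZZ^{k\times n}$ for the invariant lattice $\ker M\cap\ZZ^n$, of bit-length $\poly(n,B)$. Building on the logarithmic distance analysis of \cref{sec:log}, one obtains
\[
\sigma_{\min}(H)\cdot\distLD(\OO_v,\OO_w)\;\le\;\min_{k\in\ZZ^n}\norm{H\Log v-H\Log w-2\pi\ii Hk}\;\le\;\sigma_{\max}(H)\cdot\distLD(\OO_v,\OO_w),
\]
since the rows of $H$ span $\ker M$ while the orbit tangent directions lie in $\ran(M^\trn)$. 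Because the orbits are distinct, the middle quantity is nonzero for every~$k$, and each of its components is a linear form $\Lambda(\tilde v,\tilde e)$ whose bases $\tilde v_\ell$ lie in $\{v_1,\dots,v_n,w_1,\dots,w_n,-1\}$ with integer coefficients built from entries of~$H$ and~$Hk$. Arguing that the minimizer~$k^*$ satisfies $\norm{Hk^*}=\exp(\poly)$ keeps all these coefficients of polynomial bit-length, so \cref{hyp:linearforms} lower bounds some component by $\exp(-\poly)$; dividing by $\sigma_{\max}(H)=\exp(\poly)$ yields the required lower bound on $\distLD(\OO_v,\OO_w)$.

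For the direction \cref{hyp:sep}$\Rightarrow$\cref{hyp:linearforms}, given $v_1,\ldots,v_n\in\QQ(\ii)^\times$ and $e\in\ZZ^n$ with $\Lambda(v,e)\ne 0$, assume WLOG $\gcd(e_j)=1$ and $|\Lambda|<\pi$ (else $|\Lambda|\ge\pi$ is immediate). Build an action with $d=n-1$ by extending~$e$ to a $\ZZ$-basis of $\ZZ^n$ via Smith normal form and taking the complementary dual basis vectors as the rows of~$M$, so that $\ker M\cap\ZZ^n=\ZZ e$ and $\ran(M^\trn)=e^\perp$ with $M$ of polynomial bit-length. Set $w=(1,\dots,1)$. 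Since $e\cdot\Log v=\Lambda\notin 2\pi\ii\ZZ$, the orbits are distinct. Decomposing $\Log v$ orthogonally along $\CC e$ and $e^\perp$ produces $\phi_0\in\CC^{n-1}$ with $\Log v-M^\trn\phi_0=(\Lambda/\norm{e}^2)\,e$, giving $\distLD(\OO_v,\OO_w)\le|\Lambda|/\norm{e}$; conversely, Cauchy--Schwarz applied to $e\cdot(\Log v-M^\trn\phi-2\pi\ii k)=\Lambda-2\pi\ii(e\cdot k)$, together with $e\cdot k\in\ZZ$ and $|\Lambda|<\pi$, yields the matching lower bound. Hence $|\Lambda|=\norm{e}\cdot\distLD(\OO_v,\OO_w)\ge\exp(-\poly)$ by \cref{hyp:sep}.

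The compact version follows the same template after one preprocessing step. If $|v_i|\ne|w_i|$ for some~$i$, then $\dist(\KK_v,\KK_w)\ge\abs{|v_i|-|w_i|}\ge\exp(-\poly(b))$, since the $K$-action preserves coordinate moduli and $|v_i|^2,|w_i|^2\in\QQ$ are nonzero rationals of polynomial bit-length. Otherwise put $u_i\coloneqq v_i/w_i\in\QQ(\ii)^\times\cap S^1$ (of polynomial bit-length) and reduce to the pair $(u,\mathbf{1})$; the $H$-lattice argument of the first paragraph goes through, but every base $\tilde v_\ell$ of the resulting linear forms now lies on the unit circle, so only the restricted form of \cref{hyp:linearforms} is invoked. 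For the converse, the construction of the preceding paragraph applied to any $v\in(S^1)^n$ yields $\distLD(\KK_v,\KK_{\mathbf{1}})=|\Lambda|/\norm{e}$ (the minimizing~$\phi$ may be taken in $\RR^{n-1}$ since $\Lambda$ is then purely imaginary), and footnote~\ref{footnote:norm conversion} transfers the estimate between $\dist$ and $\distLD$.

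The main technical obstacle is careful bit-length bookkeeping: one must certify that the invariant lattice basis~$H$ and the extension-cum-dual-basis construction for~$M$ fit in polynomial bit-length via Hermite and Smith normal forms, and control the minimizer~$k^*$ in the first direction so that $Hk^*$ has polynomial bit-length (even when~$k^*$ itself need not). These bounds are precisely what keeps the parameters fed into \cref{hyp:linearforms} polynomial in $n,b,B$, and analogously in the converse directions.
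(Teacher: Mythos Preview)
Your proposal is correct and follows essentially the same approach as the paper's proof: the singular-value sandwich via the invariant-lattice matrix~$H$ in one direction (with the $2\pi\ii$-shift absorbed as a $\Log(-1)$ term, which the paper packages as \cref{cor:hyp-in-delta-metric}), and in the other direction the one-row action $H=e^T$ built from $e^\perp$, yielding the exact identity $\distLD(\OO_v,\OO_{\mathbf 1})=|\Lambda|/\|e\|$ (your orthogonal-decomposition/Cauchy--Schwarz argument is precisely the equality case of \cref{cor:delta-ld-linear-forms}). Your compact-case reduction via the case split on $|v_i|\ne|w_i|$ and the substitution $u=v/w\in S^1$ is equivalent to the paper's use of \cref{cor:K-orbits-approximation}; the only cosmetic difference is your dual-basis construction of~$M$ versus the paper's direct lattice basis of $e^\perp\cap\ZZ^n$, both of which give integral~$M$ of polynomial bit-length with $\ker M=\RR e$.
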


We prove this in \cref{sec:abc} by using similar ideas as sketched in \cref{subsec:intro algos}.
There, we also show that Matveev's bound~\eqref{eq:BW:93} implies
the lower bound~$\sep_T(d,n,B,b) \geq e^{-B^{O(1)} \cdot b^{O(n)}}$,
which is only exponentially small for constant~$n$.

\subsection{Proof sketch of the lattice lifting theorem}\label{subsec:proof ideas}
We now give a summary of the idea behind the proof of \cref{thm:intro lattice lifting}.
We call a collection $u_1,u_2,\dots,u_n$ of vectors in $\RR^m$, $n\ge m$,
an \emph{eutactic star of scale~$s$}
if each $u_j$ can be extended to a vector~$v_j\in\RR^n$ such that the $v_j$ are pairwise orthogonal
and of the same norm~$s$.
Note that $u_i=P(v_i)$ where $P$ denotes the projection onto the first~$m$ coordinates.
Let $X\in\RR^{m\times n}$ denote the matrix whose $i$-th column is $u_i$.
It is easy to see that the collection $u_1,u_2,\dots,u_n$ forms an eutactic star of scale $s$
iff $X\, X^T = s^2 I_m$.

Assume that the lattice $\mathcal{L}\subset\RR^m$
is given as the $\ZZ$-span of the columns of the matrix $G\in\ZZ^{m\times m}$.
Our aim is to understand when $\mathcal{L}$ is generated (as a lattice) by an eutactic star.
We first note that the collection $u_1,u_2,\dots,u_n$ is contained in $\mathcal{L}$
iff there exists an integer matrix $L\in\ZZ^{m\times n}$ such that $X=GL$.
To enforce further that the vectors $u_i$ generate $\mathcal{L}$ as a lattice,
we require $L$ to be \textit{right invertible},
meaning there exists $R\in\ZZ^{n\times m}$ such that $LR=I_m$.
In \cref{prop:eutactic} we prove:
\begin{equation*}
\begin{split}
 \mathcal{L}\text{ is generated by}&\text{ an eutactic star of scale } s\in \ZZ_{>0}\, \iff\, \\
 &\exists \text{ right invertible }L\in\ZZ^{m\times n} \text{ s.t. }
    (GL)(GL)^T = s^2 I_m.
\end{split}
\end{equation*}
Therefore, 
writing $\mathcal{L}$ as in \cref{thm:intro lattice lifting} amounts to finding such~$L$ and~$s$.
In \cref{sec:eutactic} we show that for given $G\in\ZZ^{m\times m}$,
such $L$ and $s$ can be computed in polynomial time.
For this, we choose the integer~$s$ so large that the matrix
\[
A\coloneqq s^2 (G^{-1}G^{-T})-I_m.
\]
is positive definite.
In \cref{thm:effectivemordell} we prove that a decomposition of the form $A=(L')(L')^T$
can be computed in polynomial time (efficient Waring decomposition).
Note that this amounts to writing the given positive definite quadratic form $x^T Ax$
as sum of squares of rational linear forms.
Such decompositions are known to exist if $n=m+3$~\cite{mordell},
and randomized polynomial time algorithms
computing them are available~\cite{rabin-shallit}.
However, to the best of our knowledge, it is open
whether such small decompositions can be computed in deterministic polynomial time.
Instead, we will use the simple \cref{lem:sos}, which writes an integer $D$ as a sum of $O(\log\log D)$ many squares,
but can be easily shown to run in deterministic polynomial time.
Finally, one checks that $L\coloneqq \begin{bmatrix} I_m & L'\end{bmatrix}$
satisfies $(GL)(GL)^T=s^2 I_m$, which is what we wanted to get.

\subsection{The method of Kempf-Ness}
A major challenge is to extend the results in this paper beyond torus actions!
The Kempf-Ness theorem~\cite{kempfness} indicates a strategy to do so.
This is a general theorem on rational actions of reductive algebraic groups,
that in principle allows to reduce orbit closure intersection problems
to compact orbit equality problems.
It states that the vectors of minimal norm in the orbit closure of $v\in V$
form a single $K$-orbit that we denote by $\KK_{v^\star}$.
Here $K$ denotes a maximal compact subgroup of a reductive algebraic group $G$
that is assumed to act isometrically on $V=\CC^n$.
Moreover, the Kempf-Ness theorem states that for $v,w\in\CC^n$,
\[
 \overline{\OO_v}\cap\overline{\OO_w}\neq\varnothing \quad \iff\quad \KK_{v^\star}=\KK_{w^\star}.
\]
In an ideal world, this reduces the problem of testing
$\overline{\OO_v}\cap\overline{\OO_w}\neq\varnothing$
to the problem of testing $\KK_{v^\star}=\KK_{w^\star}$,
provided that we can compute $v^\star$ on input $v$.
Unfortunately, even when the coordinates of $v$ are rational,
$v^\star$ may not have rational entries.
Instead one may work with a numerical approximation of $v^\star$.
This was successfully carried out in~\cite{AZGLOW}
for the left-right action on tuples of complex matrices.
As a result, polynomial time algorithms for the orbit closure problems for the left-right action were obtained along this way.

For torus actions, approximating $v^\star$ becomes a convex optimization problem. More precisely, we consider the \emph{Kempf-Ness function} corresponding to $v$, \[
f(x) \coloneqq \log\; \Vert e^{x/2}\cdot v\Vert^2 \; = \log\Big(\;\sum_{i=1}^n q_i\, e^{\omega_i^T x}\;\Big),
\] where $\omega_1,\omega_2,\dots,\omega_n\in\ZZ^d$ denote the columns of $M$ and $q_i\coloneqq |v_i|^2$. We have $\min_x f(x)=2\log\Vert v^\star\Vert$ by the definition of $v^\star$. This function finds applications in a surprising number of different areas: In machine learning~\cite{ml-lse-1, ml-lse-2, ml-lse-3, ml-tropical},
$f(x)$ is known as the \textit{log-sum-exp} function and it is used as a smooth approximation
to the piecewise affine function $L(x)\coloneqq \max_i \big(\langle x,\omega_i\rangle +\log q_i\big)$: we have
$
 L(x) \,\leq\, f(x) \,\leq\, L(x) + \log n .
$
The optimization of the Kempf-Ness function also has uses in the area of statistical physics: The convex program $\inf_x f(x)$ is the Lagrange dual of an entropy maximization problem with mean constraints. More precisely, one has \[
\inf_x f(x) = \sup\, \Big\{\,  -D_{KL}(p||q)\; \Big| \; \ \; \sum_{i=1}^n p_i \omega_i =0,\; \sum_{i=1}^n p_i=1,\; \forall i, p_i\geq 0\, \Big\}
\] where $D_{KL}(p || q)\coloneqq \sum_{i=1}^n p_i \log(p_i/q_i)$ denotes the \emph{Kullback-Leibler divergence} between the probability distribution $p$ and (possibly non-normalized) distribution $q$. For more on this connection, we refer to~\cite{vishnoi-singh,vishnoi-straszak,leake-vishnoi,BLNW:20}.

In \cref{sec:kempf-ness}, we consider the problem of approximating the optimal solution of $f(x)$. We conjecture that a point $x\in\RR^d$ that is $\varepsilon$-close to $\arg\min f(x)$ in the Euclidean distance can be computed in polynomial time, see \cref{conj:optimal}. We show that if the conjecture is true then the logarithmic distance $\distLD(\KK_{v^\star},\KK_{w^\star})$ between two Kempf-Ness orbits can be efficiently approximated.
This leads to new polynomial time algorithms for the orbit equality problems,
provided the separation hypothesis holds.
We interpret our finding as some positive evidence towards the feasibility
of the Kempf-Ness approach in more general settings, but also interpret it
as a warning about the difficulties to be encountered.

\subsection{Conclusion and outlook}\label{subsec:intro conclusion and outlook}
Summarizing, we defined the problem of approximating orbit distances for torus actions and
showed that these problems are \NP-hard for an almost-polynomial factor,
but can be solved in polynomial time within an at most single exponentially large factor
under the assumption that a variant of the famous \abc-conjecture is true.
Our results point to an exciting connection between orbit problems and deep theorems
and conjectures in the areas of number theory and algorithmic lattice theory.

Let us point out that
our approximation algorithm 
for the robust orbit problem,
jointly with our reduction result from CVP to this problem,
yield a new polynomial time approximation algorithm for CVP,
which is not based on the LLL algorithm. 
Unfortunately,
the resulting approximation factor $\gamma$ 
not only depends on the rank~$m$ of the lattice,
but also on the bit-length of the generators of the lattice,
which cannot compete with the well-known $2^{O(m)}$-approximation algorithms for CVP
(see, for example,~\cite[Lemma~2.12]{micciancio}).
It is an interesting question whether competitive algorithms for CVP can be developed along this way and whether there are other uses of our lattice lifting result
(\cref{thm:intro lattice lifting}).
A further interesting problem is 
if one can remove the separation hypothesis
by allowing for significantly larger approximation factors~$\gamma$.

\subsection{Organization of the paper}
The preliminary \cref{sec:prelim} collects known facts from
different areas that we need to establish our main results.
\Cref{sec:log} is devoted to an analysis of the structure of the orbits
in the logarithmic space $\CC^n/2\pi\ii\ZZ^n$, and the
logarithmic distance metric $\distLD$ defined in \cref{eq:def-distLD}.
We show that $\distLD(\OO_v,\OO_w)$ is approximated by a linear form in $\Log v,\Log w$.
The goal of \cref{sec:abc} is to explain the connection
between the \abc-conjecture and \cref{hyp:linearforms}
in more detail and to formally prove \cref{thm:abc-and-sep}.
\Cref{sec:main-approx} describes our algorithms to solve \cref{prob:g-approx-dist-LD}
and \cref{prob:k-approx-dist} in polynomial time,
proving \cref{thm:intro-approximation-algorithm}.
The goal of \cref{sec:hardness} is to prove the hardness results by reducing
the closest vector problem 
to the orbit distance problems (\cref{prob:k-approx-dist} and \cref{prob:g-approx-dist-LD}).
There, we also provide the proof of \cref{thm:intro lattice lifting} on efficient lattice lifting
and we prove
\cref{thm:compact-fixed-dimension} and \cref{thm:fixed-dimension}.
Finally, \cref{sec:kempf-ness} is devoted to the analysis of
the Kempf-Ness approach for general torus actions.
We conjecture a result on the complexity of approximating the optimal solution of the Kempf-Ness function and based on that, we devise a numerical algorithm for deciding the equality of two orbits,
which runs in polynomial time if \cref{hyp:sep} is true.

\section{Preliminaries}\label{sec:prelim}

The goal of this section is to collect known facts from
different areas: invariants for torus actions, singular values of matrices,
the complexity of numerically computing elementary functions.
We end with general observations on quotient metrics.

\subsection{Notation}
Throughout the paper, $T=(\CC^\times)^d$ denotes the algebraic torus of rank $d$
and $V=\CC^n$. 
We always denote by $M\in\ZZ^{d\times n}$ a weight matrix that determines the action of $T$ on $V$.
The Euclidean norm of $v\in\CC^n$ is denoted~$\norm{v}$.
We write $\dist(v,w) \coloneqq \norm{v-w}$
for the Euclidean distance of $v,w\in\CC^n$
and we extend this notation to nonempty subsets~$A,B\subseteq\CC^n$
by setting~$\dist(A,B) \coloneqq \inf_{a\in A, b\in B} \norm{a-b}$.

We denote by $\Re(z)$ and $\Im(z)$ the real and imaginary part of a complex number $z\in\CC$, respectively.
Moreover,
$\log(z)\in\CC/2\pi\ii\ZZ$ denotes complex logarithm.
The componentwise defined logarithm and exponentiation functions are denoted by $\Log\colon (\CC^\times)^n\rightarrow\CC^n/2\pi\ii\ZZ^n$ and $\Exp:\CC^n/2\pi\ii\ZZ^n\rightarrow (\CC^\times)^n$.

The letters $v,w$ denote vectors in $V$,
while the Greek letters $\eta,\zeta$
refer to vectors that are exponentiated, as in $v=\Exp(\eta)$.
We always denote by $b$ a bound for the bit-lengths of the components
of the input vectors $v,w\in\QQ(\ii)^n$ (or $\eta,\zeta$),
and $B$ always denotes a bound for the bit-length of the entries of the weight matrix $M\in\ZZ^{d\times n}$.

\subsection{Invariant theory of torus actions}\label{se:invar-th-torus}
In this section, we present a brief summary of the setting and the results of~\cite{torus}.
We will discuss the proof strategy sketched in \cref{subsec:intro algos} in more detail.
As always, $T=(\CC^\times)^d$ is the $d$-dimensional torus and $V=\CC^n$.

Any rational action of $T$ on $V$, up to some base change,
can be simultaneously diagonalized and brought to the form \cref{eq:def-action}.
The matrix $M=[M_{ij}]\in\ZZ^{d\times n}$ occurring in \cref{eq:def-action}
is called the \textit{weight matrix} and its columns
$\omega_1,\omega_2,\dots,\omega_n\in\ZZ^d$ are called
the \textit{weights} of the action.
There is an induced action of $T$ on the polynomial ring $\CC[x_1,x_2,\dots,x_n]$, given by
\[
 (t\cdot f)(v) \coloneqq f(t^{-1}\cdot v), \qquad f\in\CC[x_1,x_2,\dots,x_n],\; t\in T,\; v\in V.
\]
A \textit{polynomial invariant} is a fixed point of this action, i.e.,
$t\cdot f=f$ for all $t\in T$.
Note that a polynomial is an invariant iff it is constant along orbits.
We denote the ring of polynomial invariants by $\CC[V]^T$.
For a monomial $x^\alpha\coloneqq x^{\alpha_1}x^{\alpha_2}\dots x^{\alpha_n}\in\CC[x_1,x_2,\dots,x_n]$,
we have \begin{equation}
\label{eq:monomial}
x^\alpha(t\cdot v) = t^{M\alpha} x^\alpha(v),
\end{equation}
\Cref{eq:monomial} implies that the line $\CC x^\alpha$ is preserved by the action of $T$.
Moreover, $x^\alpha$ is an invariant iff $M\alpha=0$.
Hence a polynomial $f$ is an invariant if and only if each monomial appearing in $f$ is invariant.
In particular, the space of invariant polynomials is linearly spanned by the invariant monomials.

The action of $T$ leaves $X := (\CC^\times)^n$ invariant.
We have an induced action of $T$ on the algebra
\[
 \CC[X] = \CC[x_1,x_1^{-1},x_2,x_2^{-1},\dots,x_n,x_n^{-1}],
\]
of Laurent polynomials,
the ring of regular functions $\CC[X]$ of $X$,
which is easier to study.
The above observations extend:
\cref{eq:monomial} also holds for \textit{Laurent monomials}
$x^{\alpha}$ with exponent vector $\alpha\in\ZZ^n$,
which is thus invariant iff $M\alpha=0$.
The space of invariant Laurent polynomials is linearly spanned by the invariant Laurent monomials.
We call
\begin{equation}
\label{eq:def-L}
    \mathcal{K} \coloneqq \{\alpha\in\ZZ^n\mid M\alpha=0\} ,
\end{equation}
the \emph{lattice of rational invariants} defined by $M$.
Note that the rank of $\mathcal{K}$ is given by $k= n - \rk M$.

Thus $\alpha_1,\alpha_2,\dots,\alpha_k$ is a lattice basis of $\mathcal{K}$,
then the Laurent monomials $x^{\alpha_1},x^{\alpha_2},\dots,x^{\alpha_k}$ generate
$\CC[X]^T$ as an algebra. Then, for $v,w\in(\CC^\times)^n$, we have
(see~\cite[Proposition 4.1]{torus})
\begin{equation}\label{prop:t-invariants}
\OO_v= \OO_w \quad \iff \quad \forall i\in [k] \quad x^{\alpha_i}(v)=x^{\alpha_i}(w) .
\end{equation}
There is an analogous result for the orbits of the compact torus $K=(S^1)^d$
(see~\cite[Proposition~8.1]{torus}):
For two vectors $v,w\in(\CC^\times)^n$, we have
\begin{equation}\label{prop:k-invariants}
\begin{split}
	\KK_v=\KK_w\quad &\iff\quad \forall \alpha\in\mathcal{K}, \;
     x^{\alpha}(v)=x^{\alpha}(w)\,\text{ and }\, \forall i\in [n], |v_i|=|w_i|\\
	         &\iff\quad \OO_v = \OO_w \;\text{ and }\;\forall i\in [n], \, |v_i|=|w_i|.
\end{split}
\end{equation}
The next theorem (see~\cite[Corollary~4.4 and Proposition~5.5]{torus})
shows that one can decide $\OO_v=\OO_w$ in polynomial time.

\begin{theorem}
\label{thm:basis-and-equality}
	\begin{alphaenumerate}
		\item\label{it:basis 1}
             We can compute in $\poly(d,n,B)$-time a basis for the lattice $\mathcal{K}$ of $T$-invariant
             Laurent monomials. In particular, basis elements of $\mathcal{K}$ have bit-lengths bounded by $\poly(d,n,B)$.
		\item\label{it:basis 2}
            Suppose $v,w\in(\QQ(\ii)^\times)^n$ and $\alpha\in\ZZ^n$. Then in $\poly(n,b,\langle \alpha\rangle)$-time
            one can decide whether $x^{\alpha}(v)=x^{\alpha}(w)$.
	\end{alphaenumerate}
\end{theorem}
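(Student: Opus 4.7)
For part (a), the lattice $\mathcal{K} = \{\alpha \in \ZZ^n \mid M\alpha = 0\}$ is simply the integer kernel of $M \in \ZZ^{d\times n}$. Computing a $\ZZ$-basis of the kernel of an integer matrix is a classical task of integer linear algebra: I would invoke polynomial-time Hermite (or Smith) normal form algorithms (for instance Kannan--Bachem or Storjohann), which return a basis of $\mathcal{K}$ in $\poly(d, n, B)$ time with each basis vector having bit-length bounded by $\poly(d, n, B)$.

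For part (b), the essential obstruction is that $x^\alpha(v) = \prod_i v_i^{\alpha_i}$ can have bit-length of order $b \cdot \|\alpha\|_1$, which is exponential in $\langle\alpha\rangle$ in the worst case, so the values $x^\alpha(v)$ and $x^\alpha(w)$ cannot be formed explicitly. Reducing modulo a single prime of polynomial bit-length is not sufficient for deterministic correctness, and reducing modulo a prime large enough to witness every possible difference would require numbers of exponential bit-length. Instead, I would avoid the large values altogether by passing to a coprime factorization in $\ZZ[\ii]$.

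Concretely, write $v_i = a_i/c_i$ and $w_i = b_i/d_i$ in lowest terms with $a_i, b_i, c_i, d_i \in \ZZ[\ii]$ of bit-length $O(b)$, and compute a pairwise coprime family $q_1, \ldots, q_m \in \ZZ[\ii]$ of non-units such that every one of these $4n$ inputs is a unit times a product of the $q_j$. This computation runs in polynomial time because $\ZZ[\ii]$ admits a polynomial-time Euclidean gcd, and the standard iterative coprime-base construction (Bach--Driscoll--Shallit) goes through verbatim in any Euclidean domain. The resulting factorizations $v_i = u_i \prod_j q_j^{E_{ij}}$ and $w_i = u_i' \prod_j q_j^{F_{ij}}$ satisfy $u_i, u_i' \in \ZZ[\ii]^\times = \{\pm 1, \pm \ii\}$, $m = O(n)$, and $|E_{ij}|, |F_{ij}| = O(b)$. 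By unique factorization in $\ZZ[\ii]$, together with pairwise coprimality and non-triviality of the $q_j$, the equality $x^\alpha(v) = x^\alpha(w)$ is equivalent to the conjunction of
\begin{enumerate}
\item[(i)] $\prod_i u_i^{\alpha_i} = \prod_i (u_i')^{\alpha_i}$ in $\ZZ[\ii]^\times \cong \ZZ/4\ZZ$, and
\item[(ii)] $\sum_i \alpha_i (E_{ij} - F_{ij}) = 0$ for each $j \in [m]$.
\end{enumerate}
Each such check is a linear-arithmetic test on integers of bit-length $O(\log n + \langle\alpha\rangle + \log b)$, hence runs in $\poly(n, b, \langle\alpha\rangle)$ time.

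The main technical point is to justify the polynomial-time complexity and polynomial-size guarantees of the coprime-base computation in $\ZZ[\ii]$. This follows from the usual argument: $\gcd$ in $\ZZ[\ii]$ cannot increase bit-length, so every $q_j$ has bit-length $O(b)$; every non-unit in $\ZZ[\ii]$ has norm at least $2$, so the $q_j$-adic exponent of any input of bit-length $O(b)$ is $O(b)$; and a standard cascading argument bounds the number $m$ of basis elements by the number of inputs. With these bounds in place, the equivalence with (i) and (ii) and the polynomial time of the checks are both immediate.
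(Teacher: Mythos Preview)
The paper does not prove this theorem itself; it quotes it from the companion paper~\cite{torus} (Corollary~4.4 and Proposition~5.5 there). Your argument is correct and is essentially the standard route to these facts.

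For part~(a), invoking polynomial-time Hermite or Smith normal form to compute a $\ZZ$-basis of $\ker_\ZZ M$ is exactly what is needed.

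For part~(b), the coprime-base reduction over $\ZZ[\ii]$ is correct and matches the spirit of the cited reference: once the $v_i,w_i$ are expressed over a common pairwise-coprime family, the equality $x^\alpha(v)=x^\alpha(w)$ becomes a finite list of integer-linear identities on exponent vectors together with a check in $\ZZ[\ii]^\times\cong\ZZ/4\ZZ$, exactly as you wrote. One small inaccuracy: your claimed bound $m=O(n)$ on the size of the coprime base does not follow from the cascading argument --- already the two inputs $p^2q$ and $qr^2$ force a three-element base $\{p^2,q,r^2\}$. The correct (and still sufficient) bound is $m=O(nb)$: in each refinement step the product of the current family can only shrink, so $\prod_j q_j$ divides the product of the $4n$ original numerators and denominators, each of norm $2^{O(b)}$; since every non-unit of $\ZZ[\ii]$ has norm at least~$2$, this gives $m=O(nb)$. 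This correction does not affect your polynomial-time conclusion.
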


Let $H\in\ZZ^{k\times n}$ be a matrix whose rows $\alpha_1^T,\alpha_2^T,\dots,\alpha_k^T$ form a basis of the lattice $\mathcal{K}$ defined by~$M$, see \cref{eq:def-L}.
We call $H$ a \textit{matrix of rational invariants}.

\begin{proposition}\label{prop:allaboutH}
We have $\ker H = \ran M^T$ and $\rk H=k$.
Moreover, $H(\ZZ^n)=\ZZ^k$, i.e., for every $\beta\in\ZZ^k$,
there exists $\alpha\in\ZZ^n$ such that $H\alpha=\beta$.
If the rows of $H$ are produced from $M$ by
a polynomial time algorithm (as in \cref{thm:basis-and-equality}),
then the bit-length of $H$ is at most $\poly(d,n,B)$.
\end{proposition}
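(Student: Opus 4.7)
The plan is to handle the four assertions in sequence. The rank identity and the kernel characterization will follow from straightforward linear algebra, the bit-length bound is immediate from \cref{thm:basis-and-equality}, and the only substantive point is the surjectivity $H(\ZZ^n)=\ZZ^k$, which hinges on the fact that $\mathcal{K}$ is a \emph{saturated} sublattice of $\ZZ^n$.

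First, $\rk H = k$ is immediate because the rows $\alpha_1^T,\dots,\alpha_k^T$ are a $\ZZ$-basis of the rank-$k$ lattice $\mathcal{K}$ and hence are $\QQ$-linearly independent. For the kernel, the inclusion $\ran M^T \subseteq \ker H$ is a direct computation: the $i$-th row of $HM^T$ equals $\alpha_i^T M^T = (M\alpha_i)^T = 0$, using $\alpha_i \in \mathcal{K}$. Equality then follows by a dimension count, since $\dim \ker H = n-k = \rk M = \dim \ran M^T$.

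The core step is the surjectivity. I would first verify that $\mathcal{K}$ is saturated in $\ZZ^n$: if $c\alpha \in \mathcal{K}$ with $c \in \ZZ\setminus\{0\}$ and $\alpha \in \ZZ^n$, then $M(c\alpha)=0$ forces $M\alpha=0$, hence $\alpha\in\mathcal{K}$. Consequently $\ZZ^n/\mathcal{K}$ is finitely generated and torsion-free, hence free, so the short exact sequence $0 \to \mathcal{K} \to \ZZ^n \to \ZZ^n/\mathcal{K} \to 0$ splits. This yields a $\ZZ$-basis $\gamma_1,\dots,\gamma_n$ of $\ZZ^n$ whose first $k$ elements form a $\ZZ$-basis of $\mathcal{K}$. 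Expanding $\alpha_i = \sum_{j=1}^{k} U_{ij}\,\gamma_j$ gives a matrix $U \in \GL_k(\ZZ)$, since $U$ relates two $\ZZ$-bases of $\mathcal{K}$. Collecting the $\gamma_j$ as columns of $G \in \GL_n(\ZZ)$ produces the factorization $H = [\,U \mid 0\,]\, G^T$, from which $H(\ZZ^n) = U(\ZZ^k) = \ZZ^k$ follows by the unimodularity of $G^T$ and $U$. Finally, the bit-length bound on $H$ is inherited directly from the $\poly(d,n,B)$ bit-length guarantee of \cref{thm:basis-and-equality}, applied to any $\ZZ$-basis of $\mathcal{K}$ produced by the algorithm there.

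The principal obstacle is recognizing saturation as the correct structural input; once saturation and the consequent splitting are in hand, the reduction to a unimodular factorization of $H$ is a routine bookkeeping exercise. An alternative route would be to invoke the Smith normal form of a presentation of $\mathcal{K}\hookrightarrow\ZZ^n$ directly, but the splitting approach above feels cleaner and needs no explicit normal-form computation.
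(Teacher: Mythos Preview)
Your proof is correct, but the route to the key claim $H(\ZZ^n)=\ZZ^k$ differs from the paper's. You observe that $\mathcal{K}$ is saturated, deduce that $\ZZ^n/\mathcal{K}$ is torsion-free and hence free, split the short exact sequence, and then read off a unimodular factorization $H=[\,U\mid 0\,]G^T$. The paper instead invokes the integral Farkas lemma (\cite[Corollary~4.1a]{integer-farkas}): assuming some $\beta\in\ZZ^k\setminus H(\ZZ^n)$, Farkas produces $\gamma\in\QQ^k$ with $H^T\gamma\in\ZZ^n$ but $\beta^T\gamma\notin\ZZ$; since $H^T\gamma\in\mathcal{K}$ and the rows of $H$ span $\mathcal{K}$, one gets $H^T\gamma=H^T\tilde\gamma$ for some $\tilde\gamma\in\ZZ^k$, and $\rk H=k$ forces $\gamma=\tilde\gamma\in\ZZ^k$, a contradiction. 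Your structural argument is self-contained and makes the role of saturation completely explicit (indeed, the paper notes in the remark following the proposition that saturation is the real content); the paper's duality argument is shorter once the Farkas statement is granted, but at the cost of an external citation. Your suggested Smith-normal-form alternative is also acknowledged in that same remark as an equivalent characterization.
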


\begin{proof}
The first claim is obvious from the construction of $\mathcal{K}$ and $H$.
The second claim is obvious. 
For the third claim, we are going to use the integral analogue of Farkas' lemma
(see~\cite[Corollary~4.1a]{integer-farkas}), which states that for $H\in\ZZ^{k\times n}, \beta\in\ZZ^k$,
\[
	H \alpha = \beta \text{ has a solution }\alpha\in\ZZ^n \quad \iff\quad \forall \gamma\in\QQ^k
     \text{ with }H^T \gamma\in\ZZ^n, \text{ it holds that } \beta^T \gamma\in\ZZ.
\]
To reach a contradiction, suppose $H(\ZZ^n)\neq\ZZ^k$, i.e., there exists a vector $\beta\in\ZZ^k\setminus H(\ZZ^n)$.
Then there exists a rational vector $\gamma\in\QQ^k$ such that $H^T \gamma\in\ZZ^n$ but $\beta^T \gamma\not\in\ZZ$.

We note that if $H^T\gamma\in\ZZ^n$ for some $\gamma\in\QQ^k$, then $H^T\gamma\in\mathcal{K}$ since $MH^T=0$.
Moreover, since the rows of $H$ generate $\mathcal{K}$,
we further have $H^T \gamma = H^T \tilde{\gamma}$ for some integral vector~$\tilde{\gamma}\in\ZZ^k$.
However, $\rk H=k$ so we must have $\gamma=\tilde{\gamma}$, so $\gamma$ is integral.
This contradicts~$\beta^T \gamma\not\in\ZZ$.
\end{proof}

\begin{remark}
The property that $H(\ZZ^n)=\ZZ^k$ will be used frequently throughout the paper.
We note that this is equivalent to the diagonal entries of the Smith normal form of $H$ being all one. Moreover, the proof of \cref{prop:allaboutH} shows that $H(\ZZ^n)=\ZZ^k$ holds whenever
the rows of $H$ form a basis of a \emph{saturated lattice} in $\ZZ^n$ with rank $k$. We recall that this means for $s\in\ZZ_{>0}$ and $\alpha\in\ZZ^n$, $s\alpha\in\mathcal{K}$ implies $\alpha\in\mathcal{K}$.
\end{remark}

\begin{remark}
We could ignore the dependence of the complexity parameters on $d$
since $d\leq n$  can be assumed  without loss of generality.
For seeing this, assume $d>n$ and let
$A\in\ZZ^{d\times n}$ denote the (row reduced) Hermite normal form of
the weight matrix $M\in\ZZ^{d\times n}$.
Then the orbits with respect to the actions defined by $M$ and by $A$ are the same.
However, since the last $d-n$ rows of $A$ are zero,
the last $d-n$ coordinates of the torus act trivially on $V$
and can be ignored.
\end{remark}

\subsection{Singular values}
Suppose $k\leq n$. For $H\in\CC^{k\times n}$, there exist unitary matrices
$X\in\mathrm{U}(k), Y\in \mathrm{U}(n)$
such that $XHY$ is a diagonal matrix with non-negative real diagonal entries
$\sigma_1\geq\sigma_2\geq\dots\geq\sigma_k\geq 0$.
The values $\sigma_i$ are called the \textit{singular values} of the matrix~$H$.
We denote by $\sigma_{\max}(H)\coloneqq\sigma_1$ and $\sigma_{\min}(H)\coloneqq\sigma_k$.
We have $\sigma_{\min}(H)\neq 0$ if and only if $\rk H=k$.
Recall that $\|x\|$ stands for the Euclidean norm of $x\in\CC^n$
and $\dist$ denotes the corresponding distance for subsets of~$\CC^n$.

The maximum and the minimum singular values $\sigma_{\max}(H), \sigma_{\min}(H)$
are characterized by the following properties:
\begin{equation}\label{eq:sings}
	\sigma_{\max}(H) = \max_{\Vert x\Vert =1} \Vert Hx\Vert
	\quad\text{and}\quad
       \forall x \;\; \Vert Hx\Vert\geq\sigma_{\min}(H)\cdot \dist(x, \ker H).
\end{equation}

\begin{lemma}
\label{lem:sing-value-inequality}
Suppose $k\leq n$ and $H\in\CC^{k\times n}$.
For nonempty subsets $A,B\subset\CC^n$ we have
\[
 \sigma_{\min}(H)\dist(A+\ker H,B+\ker H)\,\leq\,\dist(HA, HB)\,\leq\,\sigma_{\max}(H) \dist(A+\ker H,B+\ker H).
\]
\end{lemma}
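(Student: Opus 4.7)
The plan is to deduce both inequalities directly from the variational characterizations in \cref{eq:sings}, using that $\ker H$ is a linear subspace. First I would simplify the right-hand quantity: since $\ker H$ is closed under subtraction, for any nonempty $A,B\subseteq\CC^n$ the set $(A+\ker H)-(B+\ker H)$ equals $(A-B)+\ker H$, so
\[
\dist(A+\ker H, B+\ker H) = \inf_{a\in A,\, b\in B}\dist(a-b,\ker H).
\]
Meanwhile $\dist(HA,HB) = \inf_{a\in A,\, b\in B}\|Ha-Hb\|$ by definition.

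For the upper bound, I would fix $a\in A$, $b\in B$, and observe that for every $k\in\ker H$ we have $Ha-Hb = H(a-b-k)$, hence by the first identity in \cref{eq:sings},
\[
\|Ha-Hb\| \leq \sigma_{\max}(H)\,\|a-b-k\|.
\]
Taking the infimum over $k\in\ker H$ yields $\|Ha-Hb\|\leq \sigma_{\max}(H)\dist(a-b,\ker H)$, and then taking the infimum over $a\in A$, $b\in B$ gives the claimed upper bound.

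For the lower bound, I would again fix $a\in A$, $b\in B$ and apply the second property in \cref{eq:sings} to the vector $x=a-b$, giving
\[
\|Ha-Hb\| = \|H(a-b)\| \geq \sigma_{\min}(H)\dist(a-b,\ker H).
\]
Taking the infimum over $a\in A$, $b\in B$ on both sides and using the simplification above for the right-hand side then yields $\dist(HA,HB)\geq \sigma_{\min}(H)\dist(A+\ker H, B+\ker H)$.

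There is essentially no serious obstacle here; the only point that requires any care is the preliminary reduction of $\dist(A+\ker H, B+\ker H)$ to $\inf_{a,b}\dist(a-b,\ker H)$, which is exactly where the subspace property of $\ker H$ (as opposed to an arbitrary set being added) gets used. Once that identification is made, both bounds follow at once from \cref{eq:sings}.
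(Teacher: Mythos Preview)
Your proof is correct and essentially identical to the paper's: both reduce $\dist(A+\ker H,B+\ker H)$ to $\inf_{a,b,u}\|a-b+u\|$ with $u\in\ker H$, then apply the two characterizations in \cref{eq:sings} and take infima in the appropriate order.
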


\begin{proof}
We first note that $\dist(A+\ker H, B+\ker H)$ is the infimum of $\Vert a-b+u\Vert$ over $a\in A, b\in B, u \in \ker H$,
and
$\dist(HA,HB)$ is the infimum of $\Vert H(a-b)\Vert$ over $a\in A, b\in B$.

For any $a\in A, b\in B$ and $u\in \ker H$ we have, by the equation in \eqref{eq:sings},
\[
 \dist(HA, HB) \leq \Vert H(a-b)\Vert = \Vert H(a-b+u)\Vert \leq \sigma_{\max}(H)\, \Vert a-b+u\Vert.
\]
Taking the infimum over $a,b,u$ we have $\dist(HA,HB)\leq \sigma_{\max}(H)\dist(A+\ker H, B+\ker H)$.

For the other inequality we use the inequality in \eqref{eq:sings}: For any $a\in A, b\in B$
\[
 \Vert H(a-b)\Vert \geq \sigma_{\min}(H) \dist( a-b, \ker H ) \geq \sigma_{\min}(H) \dist(A+\ker H, B+\ker H).
\]
Taking the infimum over $a$, $b$ we get $\dist(HA,HB)\geq \sigma_{\min}(H)\dist(A+\ker H, B+\ker H)$.
\end{proof}

The distance $\dist(A+\ker H, B+\ker H)$ depends only on $\ker H$ but not on $H$ itself.
Consequently, \cref{lem:sing-value-inequality} gives different approximations for different matrices with the same kernels.
The optimal choice is the orthogonal projection onto $\ker(H)^\perp$.
The singular values of the orthogonal projection are all $1$, so in this case the inequality from \cref{lem:sing-value-inequality}
becomes an equality and we get the following.

\begin{corollary}\label{cor:ort-proj-dist}
For nonempty subsets $A,B\subset\CC^n$ and an orthogonal projection $P:\CC^n\rightarrow \CC^n$, we have
\[
  \dist(A+\ker P, B ) = \dist(A+\ker P, B+\ker P ) = \dist(P(A), P(B) ).
\]
\end{corollary}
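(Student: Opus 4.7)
The plan is to derive both equalities directly from \cref{lem:sing-value-inequality}, exploiting the fact that for an orthogonal projection the two singular-value bounds in that lemma collapse into a single equality, together with the elementary fact that $\ker P$ is closed under addition.

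First I would apply \cref{lem:sing-value-inequality} with $H = P$, viewed as a surjection onto its image (a $k$-dimensional subspace of $\CC^n$, where $k = \dim \mathrm{im}(P)$). Since $P$ restricted to $(\ker P)^{\perp}$ is an isometry onto $\mathrm{im}(P)$, we have $\sigma_{\min}(P) = \sigma_{\max}(P) = 1$, so the two-sided inequality of \cref{lem:sing-value-inequality} immediately collapses to the equality $\dist(P(A), P(B)) = \dist(A + \ker P, B + \ker P)$. This yields the second equality of the corollary.

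For the remaining equality $\dist(A + \ker P, B) = \dist(A + \ker P, B + \ker P)$, I would use only that $\ker P$ is a linear subspace. The inequality $\dist(A + \ker P, B + \ker P) \le \dist(A + \ker P, B)$ is trivial by taking the zero element of $\ker P$ on the $B$-side. Conversely, for any $a \in A$, $b \in B$, and $u, u' \in \ker P$, the identity $\|a + u - b - u'\| = \|a + (u - u') - b\|$ with $u - u' \in \ker P$ shows that every quantity appearing in the infimum defining $\dist(A + \ker P, B + \ker P)$ is bounded below by some quantity appearing in the infimum defining $\dist(A + \ker P, B)$; taking infima yields the reverse inequality.

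There is no real obstacle here; the only subtlety is conventional. \cref{lem:sing-value-inequality} is stated for matrices $H \in \CC^{k \times n}$ with $k \le n$, so that $\sigma_{\min}(H)$ can be non-zero, and one must apply it to $P$ regarded as a map onto its image rather than as an endomorphism of $\CC^n$ (which is harmless since $\dist(P(A), P(B))$ is unaffected by restricting the codomain to $\mathrm{im}(P)$). Alternatively, one could bypass the lemma altogether and observe directly that the identity $\|Pv\| = \dist(v, \ker P)$ characteristic of orthogonal projections permits the proof of \cref{lem:sing-value-inequality} to be rerun with constant $1$ in place of both singular values, producing the desired equality in a few lines.
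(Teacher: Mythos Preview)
Your proposal is correct and follows essentially the same approach as the paper: the paper also observes that the singular values of an orthogonal projection are all~$1$, so that the two-sided inequality of \cref{lem:sing-value-inequality} collapses to the equality $\dist(P(A),P(B)) = \dist(A+\ker P, B+\ker P)$. Your treatment is in fact more careful than the paper's, since you spell out the first equality $\dist(A+\ker P,B) = \dist(A+\ker P,B+\ker P)$ via the subspace absorption $(A+\ker P)+\ker P = A+\ker P$, and you explicitly address the codomain convention needed to apply \cref{lem:sing-value-inequality} to~$P$; the paper leaves both of these implicit.
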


We will use \cref{lem:sing-value-inequality} mostly in the case where $H$ is the matrix of rational invariants
defined in the previous section. In this case $H$ is integral and the singular values can be bounded
in terms of the bit-length of $H$.

\begin{lemma}
\label{lem:singular}
Suppose $k\leq n$ and $H\in\ZZ^{k\times n}$ is an integer matrix of rank $\rk H=k$. Then
\[
   \sigma_{\max}(H)\leq n \Vert H\Vert_{\max},\quad \sigma_{\min}(H) \geq n^{-(n-1)} \Vert H\Vert_{\max}^{-(n-1)},
\]
where $\Vert H\Vert_{\max}\coloneqq\max_{i,j}|H_{ij}|$ is the max norm of $H$.
Consequently, if $B\coloneqq\langle H\rangle$ is the bit-length of $H$, then
\[
     \kappa(H)\coloneqq \sigma_{\max}(H)/\sigma_{\min}(H) \leq n^n 2^{Bn}.
\]
Given an $H$ as above, one can compute in polynomial time a number~$D\in\QQ$ such that
$\sigma_{\min}(H) \le D \le 2 \sigma_{\min}(H)$.
\end{lemma}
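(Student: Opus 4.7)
The proof splits into the three norm bounds and the algorithmic claim.

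For the upper bound, I would use the standard inequality $\sigma_{\max}(H) \leq \|H\|_F = (\sum_{i,j} H_{ij}^2)^{1/2} \leq \sqrt{kn}\,\|H\|_{\max} \leq n\|H\|_{\max}$, where the last step uses $k\leq n$. For the lower bound on $\sigma_{\min}(H)$, the key observation is that the Gram matrix $HH^T \in \ZZ^{k\times k}$ is symmetric positive definite (since $\rk H = k$) and has integer entries, so $\det(HH^T)$ is a positive integer and hence at least $1$. The eigenvalues of $HH^T$ being $\sigma_1(H)^2 \geq \dots \geq \sigma_k(H)^2$, we get
\[
1 \,\leq\, \det(HH^T) \,=\, \prod_{i=1}^k \sigma_i(H)^2 \,\leq\, \sigma_{\min}(H)^2 \cdot \sigma_{\max}(H)^{2(k-1)}.
\]
Substituting the already-proved upper bound for $\sigma_{\max}(H)$ and using $k\leq n$ yields $\sigma_{\min}(H) \geq n^{-(n-1)}\|H\|_{\max}^{-(n-1)}$. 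The claimed bound on $\kappa(H)$ follows immediately by multiplying the two inequalities and inserting $\|H\|_{\max} \leq 2^B$.

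For the computational part, the plan is to form $HH^T$ exactly (an integer matrix of bit-length $O(B+\log n)$) and then compute its characteristic polynomial $p(t) = \det(tI - HH^T) \in \ZZ[t]$ in polynomial time, for instance via the Faddeev--LeVerrier algorithm. The smallest root of $p$ equals $\lambda_{\min}(HH^T) = \sigma_{\min}(H)^2$, and by the lower bound just proved it lies in the interval $[\,n^{-2(n-1)}2^{-2B(n-1)},\, n^2 2^{2B}\,]$, whose endpoints have bit-length polynomial in $n$ and~$B$. I would then invoke a standard polynomial-time univariate root-approximation routine (such as Sch\"onhage's or Pan's algorithm) to produce a rational $\tilde\lambda$ satisfying $\lambda_{\min}(HH^T) \leq \tilde\lambda \leq 4\lambda_{\min}(HH^T)$; only constant relative precision is needed. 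Taking $D$ to be a rational upper approximation of $\sqrt{\tilde\lambda}$ to sufficient precision (computed by a few Newton iterations) then delivers $\sigma_{\min}(H) \leq D \leq 2\sigma_{\min}(H)$, as desired.

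The main obstacle is the last step: extracting the smallest real root of an integer polynomial to constant relative precision in polynomial time. This is supplied by the classical theory of real root isolation for univariate integer polynomials, and the explicit a priori lower bound on $\sigma_{\min}(H)$ proved above guarantees that only polynomially many bits of precision are required, so the whole procedure runs in polynomial time.
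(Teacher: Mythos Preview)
Your proof of the two norm bounds and the condition number bound is essentially identical to the paper's: both use $\sigma_{\max}(H)\le\sqrt{kn}\,\|H\|_{\max}$ for the upper bound, and both use the integrality of $HH^T$ to get $\det(HH^T)\ge 1$, together with $\prod_i\sigma_i(H)^2=\det(HH^T)$, for the lower bound.

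For the algorithmic claim the paper simply cites a reference on numerical singular value computation (Pan--Chen), whereas you give an explicit route via the characteristic polynomial of $HH^T$ and univariate real root approximation. Your approach is correct and arguably more self-contained: the a~priori lower bound you just proved guarantees that only polynomially many bits of precision are needed, and standard real root approximation for integer polynomials (Sturm sequences with bisection, or the algorithms of Sch\"onhage/Pan) then yields the required constant-factor approximation in polynomial time. The only mild caveat is that root \emph{isolation} can be delicate when the smallest eigenvalue has multiplicity greater than one, but you do not actually need isolation---counting roots below a threshold via Sturm sequences and binary searching on the threshold suffices to locate $\lambda_{\min}$ to the desired relative accuracy.
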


\begin{proof}
The upper bound on $\sigma_{\max}(H)$ follows from
$\Vert Hx\Vert  \leq  \sqrt{nk}\Vert H\Vert_{\max} \Vert x\Vert $.
For the lower bound, note that the product of the singular values of $H$ equals
$\prod_{i=1}^k \sigma_i(H)=\sqrt{\det(HH^T)}$.
Since $H$ has rank $k$, the matrix $HH^T$ is invertible and positive definite.
Hence, $\det(HH^T)\geq 1$ and we deduce with part one that
\[
  \sigma_{\min}(H) \geq \  \sigma_{\max}(H)^{1-k}  \sqrt{\det(HH^T)} \geq n^{-(k-1)}\Vert H\Vert_{\max}^{-(k-1)}
    \geq n^{-(n-1)}\Vert H\Vert_{\max}^{-(n-1)},
\]
which shows the second inequality.

We refer to~\cite{pan-chen:99} for the algorithmic claim on computing $\sigma_{\min}(H)$.
\end{proof}

\subsection{Complexity of elementary functions}
Two important functions used in this paper are the complex logarithm and exponentiation, $\log$ and $\exp$.
Both functions are transcendental and rarely assume rational values on rational inputs.
Fortunately, they can be efficiently approximated by rational functions with
the arithmetic mean-geometric mean iteration of Gauss, Lagrange and Legendre.
We refer to the paper~\cite{BorweinBorwein:88} and the book~\cite{agm} for a detailed study of the AM-GM iteration
and for the following results.

\begin{lemma}
\label{lem:log}
 Suppose $\log$ denotes the standard branch of the complex logarithm whose imaginary part satisfies $\Im(\log(z))\in (-\pi,\pi]$, and $\Log$ is the componentwise logarithm. Assume that $v\in(\QQ(\ii)^\times)^n$ is a vector with Gaussian rational entries.
 Given $\varepsilon\in\QQ_{>0}$, in $\poly(n,\langle v\rangle,\log\varepsilon^{-1})$-time we can compute an approximation to
 $\Log v$ with absolute error $\varepsilon$. That is, we can compute a vector $\eta\in\QQ^n+2\pi\ii\QQ^n$ such that
 \[
	\Vert \Log v-\eta\Vert\,<\,\varepsilon.
 \]
\end{lemma}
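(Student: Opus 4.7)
The plan is to reduce the approximation of $\Log v$ to componentwise approximations of the real logarithm and of the argument function, both of which can be computed rapidly via the AM-GM iteration of Gauss, Lagrange and Legendre.

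First I would observe that it suffices to approximate each coordinate $\log v_j$ with absolute error at most $\varepsilon/\sqrt n$: stacking these into a vector $\eta\in\QQ^n+2\pi\ii\QQ^n$ then gives $\norm{\Log v-\eta}<\varepsilon$. For a single Gaussian rational $v_j = a_j + \ii b_j \in\QQ(\ii)^\times$, the standard branch decomposes as
\[
  \log v_j \;=\; \tfrac12 \log\bigl(a_j^2+b_j^2\bigr) \;+\; \ii \,\arg(v_j),
\]
where $a_j^2+b_j^2\in\QQ_{>0}$ and $\arg(v_j)\in(-\pi,\pi]$. Both the real-valued function $\log$ on positive rationals and the function $\arg$ on Gaussian rationals (via $\arctan$, together with a case distinction on the signs of $a_j,b_j$) can be evaluated to absolute error $\delta$ in time $\poly(\langle v_j\rangle,\log\delta^{-1})$ by the AGM-based algorithms analyzed in~\cite{BorweinBorwein:88, agm}. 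These algorithms exhibit quadratic convergence and hence require only $O(\log\log\delta^{-1})$ AGM steps performed at working precision $O(\log\delta^{-1})$, which gives the advertised bit-complexity.

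It remains to tally the precision required. Setting $\delta \coloneqq \varepsilon/(2\sqrt n)$, the real and imaginary parts of $\log v_j$ can each be computed with absolute error at most $\delta/\sqrt2$; the resulting rational (resp.\ $\pi\QQ$) approximations $\eta_j$ then satisfy $\abs{\log v_j - \eta_j}\le\delta$. Summing over $j=1,\dots,n$ yields
\[
  \norm{\Log v-\eta}^2 \;\le\; n\,\delta^2 \;=\; \varepsilon^2/4 \;<\;\varepsilon^2.
\]
Each coordinate computation runs in $\poly(\langle v_j\rangle,\log\delta^{-1}) = \poly(\langle v\rangle, \log n, \log\varepsilon^{-1})$ time, and doing this for all $n$ coordinates stays within $\poly(n,\langle v\rangle,\log\varepsilon^{-1})$ time overall, as required.

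The only step that is not completely mechanical is the invocation of the AGM complexity estimate; however, this is precisely what is established in~\cite{BorweinBorwein:88,agm}, so no genuine obstacle arises. One small care point is that $v_j\ne 0$ is needed so that $a_j^2+b_j^2>0$, which is guaranteed by $v\in(\QQ(\ii)^\times)^n$; and one should handle the argument near the branch cut $(-\infty,0]$ by choosing the representative in $(-\pi,\pi]$, which amounts to selecting the correct quadrant from the signs of $a_j$ and $b_j$ before invoking the $\arctan$ approximation.
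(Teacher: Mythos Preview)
The paper does not give its own proof of this lemma: it states the result and defers entirely to the cited AGM references~\cite{BorweinBorwein:88,agm}. Your sketch is a correct and standard fleshing-out of how those references are applied, and is exactly in the spirit of what the paper intends.

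One small point worth making explicit: the statement asks for $\eta\in\QQ^n+2\pi\ii\QQ^n$, i.e.\ the imaginary parts must be \emph{rational multiples of $2\pi$}, not arbitrary rationals. The AGM-based $\arctan$ routine returns a rational approximation to $\arg(v_j)$, which lies in $\QQ$, not in $2\pi\QQ$. To match the required output format you also need a rational approximation of $\pi$ (again available via AGM in the same complexity) and then output $2\pi\cdot q_j$ with $q_j\in\QQ$ a suitable approximation of $\arg(v_j)/(2\pi)$. You hint at this with ``(resp.\ $\pi\QQ$)'' but do not spell out the extra step; it is routine but should be mentioned.
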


\begin{lemma}
\label{lem:exp}
 Suppose $\exp$ denotes the complex exponentation, $\exp(\rho+\ii\theta)=e^{\rho}(\cos(\theta)+\ii\sin(\theta))$ and
 $\Exp$ is the componentwise $\exp$. Assume that $\eta\in \QQ^n+2\pi\ii\QQ^n$.
 Given $\varepsilon\in\QQ_{>0}$, in $\poly(n,\langle \eta\rangle,\log\varepsilon^{-1})$-time
 we can compute an approximation to $\Exp(\eta)$ with relative error~$\varepsilon$.
 That is, we can compute a vector $v\in(\QQ(\ii)^\times)^n$ such that
 \[
	\frac{\Vert \Exp(\eta)-v\Vert}{\Vert \Exp(\eta)\Vert}\,<\,\varepsilon.
 \]
\end{lemma}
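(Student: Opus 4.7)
The plan is to reduce to computing each coordinate $\exp(\eta_j)$ separately with relative error at most~$\varepsilon$. Indeed, if~$|v_j - \exp(\eta_j)| \le \varepsilon\, |\exp(\eta_j)|$ for every~$j$, then
\[
  \Vert v - \Exp(\eta)\Vert^2 \;=\; \sum_{j=1}^n \abs{v_j - \exp(\eta_j)}^2 \;\le\; \varepsilon^2 \sum_{j=1}^n \abs{\exp(\eta_j)}^2 \;=\; \varepsilon^2\, \Vert \Exp(\eta)\Vert^2,
\]
which is exactly the claimed bound. Hence it suffices to approximate each complex exponential with relative error~$\varepsilon$ in time $\poly(\langle \eta \rangle, \log \varepsilon^{-1})$; summing over $j \in [n]$ gives the overall bound $\poly(n, \langle \eta \rangle, \log\varepsilon^{-1})$.

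Fix a coordinate $\eta_j = \rho_j + 2\pi\ii \theta_j$ with $\rho_j,\theta_j \in \QQ$, so that
\[
  \exp(\eta_j) \;=\; e^{\rho_j}\bigl(\cos(2\pi \theta_j) + \ii \sin(2\pi \theta_j)\bigr), \qquad \abs{\exp(\eta_j)} = e^{\rho_j}.
\]
As a first step, I would exploit the $1$-periodicity of $\theta \mapsto \cos(2\pi\theta), \sin(2\pi\theta)$ and replace $\theta_j$ by $\tilde\theta_j \coloneqq \theta_j - \lfloor \theta_j \rfloor \in [0,1)$. This argument reduction takes polynomial time and yields a rational of bit-length polynomial in $\langle \eta\rangle$, which prevents blow-up in the subsequent trigonometric computation.

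As a second step, I would invoke the AM-GM iteration of Borwein and Borwein~\cite{BorweinBorwein:88,agm} (or any other standard high-precision algorithm for elementary functions) to compute, for a target precision $\delta \coloneqq \varepsilon/3$,
\[
 a \in \QQ \text{ with } \abs{a - e^{\rho_j}} \le \delta\, e^{\rho_j}, \quad c,s \in \QQ \text{ with } \abs{c - \cos(2\pi\tilde\theta_j)}, \abs{s - \sin(2\pi\tilde\theta_j)} \le \delta.
\]
Since the AM-GM iteration converges quadratically, these approximations can be produced in time $\poly(\langle\eta\rangle, \log \delta^{-1}) = \poly(\langle\eta\rangle, \log\varepsilon^{-1})$. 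Note that the absolute error $\delta$ for $\cos, \sin$ is simultaneously a relative error because $\cos(2\pi\tilde\theta_j) + \ii\sin(2\pi\tilde\theta_j)$ has modulus~$1$.

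Finally, I would set $v_j \coloneqq a(c + \ii s)$, which is a nonzero Gaussian rational provided $\varepsilon < 1$ (since $v_j$ lies in a disk of radius $\varepsilon e^{\rho_j}$ around the nonzero point $\exp(\eta_j)$). The triangle inequality gives
\begin{align*}
  \abs{v_j - \exp(\eta_j)}
  &\le \abs{a - e^{\rho_j}}\,\abs{c + \ii s} + e^{\rho_j}\,\abs{(c + \ii s) - (\cos(2\pi\tilde\theta_j) + \ii \sin(2\pi\tilde\theta_j))} \\
  &\le \delta\, e^{\rho_j}\, (1 + \sqrt{2}\,\delta) + e^{\rho_j} \cdot \sqrt{2}\,\delta \;\le\; \varepsilon\, e^{\rho_j},
\end{align*}
for $\varepsilon$ sufficiently small (the case $\varepsilon \ge 1/2$ being handled by running the algorithm with $\varepsilon' = 1/2$). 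Summing these per-coordinate bounds as above yields the result. The proof is thus essentially a careful bookkeeping of precisions around a black-box call to AM-GM; the only substantive point is the conversion between the absolute precision provided by AM-GM and the relative precision required by the lemma, which is the main obstacle and is addressed by the two-part splitting $\exp(\eta_j) = (\text{magnitude}) \cdot (\text{unit phase})$ above.
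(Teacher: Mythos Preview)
Your proof is correct. The paper itself does not supply a proof of this lemma: it simply states the result and refers the reader to Borwein--Borwein~\cite{BorweinBorwein:88} and the book~\cite{agm} on the AM--GM iteration. Your argument spells out exactly the details one would expect behind that citation---the coordinatewise reduction, the splitting $\exp(\eta_j) = e^{\rho_j}\cdot(\text{unit phase})$, the argument reduction modulo~$1$ for the phase, and the combination of the separate AM--GM approximations with a clean error analysis. So there is no discrepancy in approach; you have simply filled in what the paper leaves to the references.
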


\subsection{Quotient topology and quotient metric}\label{sec:quotients}
Throughout the paper, we consider various group actions on various metric spaces. The orbit space, i.e.,
the set of orbits of the action, has a natural quotient topology and it is an important question for us to decide
when it is possible to carry over the metric space structure to the orbit space.
This is possible in all cases we consider in this section.

\begin{definition}
Suppose $X$ is a topological space and $\sim$ is an equivalence relation on~$X$.
We denote by $\Xquot\coloneqq\{[x]\mid x\in X\}$ the set of equivalence classes of $X$,
where $[x]$ is the equivalence class of $x$. Let $\pi:X\rightarrow \Xquot$ be the canonical projection map.
The \textit{quotient topology} on $\Xquot$ is defined by calling
a subset $U$ of $\Xquot$ open iff $\pi^{-1}(U)$ is open in~$X$.
\end{definition}

Assume that $(X,\delta)$ is a metric space with distance function $\delta$.
We define the induced distance function on $\Xquot$ by
\[
\delta([x],[y]) \coloneqq \inf\{\delta (x',y')\mid x\sim x', y\sim y'\} .
\]
Clearly, this is a \textit{pseudo-metric} on $\Xquot$, which means that
$\delta(x,x)=0, \, \delta(x,y)=\delta(y,x)$
and $\delta(x,z)\leq\delta(x,y)+\delta(y,z)$ for all $x,y,z\in X$.
Moreover, one can show that $\delta$ is \textit{compatible}
with the quotient topology on $\Xquot$,
which means that the balls
$\{ [x] \mid \delta([x],[y]) < r \}$
form a basis of the quotient topology on $\Xquot$,
see~\cite[Theorem~4]{himmelberg}.
A pseudo-metric is a metric iff $\delta(x,y)=0$ implies $x=y$.

We note that the induced distance is not always a metric.
Indeed, the quotient topology on~$\Xquot$ does not need to be Hausdorff.
(The so-called line with two origins is an example for this.)

However, in the following two cases,
the pseudo-metric $\delta([x],[y])$ is indeed a metric.

\begin{proposition}
\label{prop:compact-quot}
Suppose $(X,\delta)$ is a metric space and $K$ is a compact topological group
acting continuously on $X$. 
For $x\in X$, we denote by $\KK_x$ the orbit of $x$.
Then $\delta(\KK_x, \KK_y)$ defines a metric on the space of orbits
$X/K := \{ \KK_x \mid x\in X \})$,
which is compatible with the quotient topology.
\end{proposition}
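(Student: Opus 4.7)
The plan is to verify the only nontrivial assertion, namely that $\delta(\KK_x,\KK_y)$ satisfies the definiteness axiom of a metric; the remaining pseudo-metric axioms and the compatibility with the quotient topology then come for free from the general discussion preceding the proposition. Concretely, I identify $X/K$ with $\Xquot$ where $x\sim y$ iff $y=k\cdot x$ for some $k\in K$, so that equivalence classes are precisely the orbits $\KK_x$, and the induced pseudo-metric on $\Xquot$ coincides with
\[
\delta(\KK_x,\KK_y)\;=\;\inf_{k_1,k_2\in K}\delta(k_1\cdot x,k_2\cdot y).
\]

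The key step is to exploit compactness of $K$ to promote this pseudo-metric to a metric. Since the action $K\times X\to X$ is continuous, each orbit $\KK_x$ is the image of the compact set $K$ under the continuous map $k\mapsto k\cdot x$, hence compact. I then invoke the elementary fact that for two nonempty compact subsets $A,B$ of a metric space, the distance $\dist(A,B)$ is attained by some pair $(a,b)\in A\times B$ (the map $(a,b)\mapsto\delta(a,b)$ is continuous on the compact space $A\times B$). Applying this to $A=\KK_x$ and $B=\KK_y$, if $\delta(\KK_x,\KK_y)=0$ then there exist $k_1,k_2\in K$ with $\delta(k_1\cdot x,k_2\cdot y)=0$, and since $\delta$ is a metric on $X$ this forces $k_1\cdot x=k_2\cdot y$. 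Hence $y=k_2^{-1}k_1\cdot x\in\KK_x$, so $\KK_x=\KK_y$, as required.

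Compatibility with the quotient topology is then immediate: the cited theorem of Himmelberg asserts that for \emph{any} induced pseudo-metric on a quotient, the open balls form a basis of the quotient topology; specializing this to our case yields the claim without further work. There is essentially no serious obstacle here — the entire substance of the proof is the compactness argument above, which simultaneously exploits compactness of $K$ (to make orbits compact) and the metric structure of $X$ (so that $\dist$ on compact sets is attained). If one wished to avoid invoking the Himmelberg reference, the main technical task would shift to verifying by hand that the projection $\pi\colon X\to X/K$ is open with respect to the metric topology, which again follows from the fact that $\pi^{-1}(\pi(U))=\bigcup_{k\in K} k\cdot U$ is a union of open sets whenever $U\subseteq X$ is open.
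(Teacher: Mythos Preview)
Your proposal is correct and follows essentially the same approach as the paper: both observe that orbits are compact as continuous images of $K$, invoke the general result that the induced distance is a compatible pseudo-metric, and then use compactness of the orbits to conclude that distinct (hence disjoint) orbits have strictly positive distance. Your version spells out explicitly that the infimum is attained on the compact product $\KK_x\times\KK_y$, which is exactly the content behind the paper's one-line assertion that disjoint compact sets have positive distance.
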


\begin{proof}
The orbit $\KK_x$ is compact since it is the image of $K$ under the continuous map~$k\mapsto k\cdot x$.
By the above observation, $\dist(\KK_x,\KK_y)$ defines a compatible pseudo-metric on $X/K$.
Different orbits $\KK_x,\KK_y$ are disjoint. Since they are compact, they have a positive distance:
$\dist(\KK_x,\KK_y)>0$.
This shows that the induced distance is a metric.
\end{proof}

For the second example, let $V$ be a finite dimensional real vector space
and $\mathcal{L}\subseteq V$ be a lattice. This means that
$\mathcal{L}$ is the set of $\ZZ$-linear combinations of a collection of
linearly independent vectors in $V$.
Alternatively, a lattice $\mathcal{L}$ can be defined as an additive subgroup of $V$
that is \textit{discrete},
i.e., there exists a positive constant $\varepsilon>0$ such that
for any distinct lattice points $x,y\in\mathcal{L}$ we have
$\dist(x,y)\geq\varepsilon$.
Thus, any convergent sequence of lattice points must be eventually constant.

\begin{proposition}
\label{prop:lattice-quot}
Let $V$ be a finite dimensional real vector space,
$\mathcal{L}$ a lattice in $V$,
and~$\delta$ a translation invariant metric on~$V$ that is compatible with the standard topology on $V$.
Then~$\delta(x+\mathcal{L}, y+\mathcal{L})$ defines a metric on the quotient space
$V/\mathcal{L}$ which is compatible with the quotient topology.
\end{proposition}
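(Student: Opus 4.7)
The plan is to invoke the general framework set up earlier in \cref{sec:quotients} and then check the one missing property, namely that the induced pseudo-metric on $V/\mathcal{L}$ actually separates distinct cosets. By the discussion preceding \cref{prop:compact-quot}, the function
\[
 \delta([x],[y]) \coloneqq \inf \{ \delta(x',y') \mid x' \in x+\mathcal{L},\ y' \in y+\mathcal{L}\}
\]
is automatically a pseudo-metric on $V/\mathcal{L}$ that is compatible with the quotient topology (using~\cite[Theorem~4]{himmelberg}). So the whole content of the proposition is to verify that $\delta([x],[y])=0$ forces $x+\mathcal{L}=y+\mathcal{L}$.

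First I would use translation invariance to simplify: since $\delta$ is invariant under translations in~$V$, one has $\delta(x+\ell_1,y+\ell_2)=\delta(x-y+\ell_1-\ell_2,0)$, so
\[
 \delta([x],[y]) = \inf_{\ell\in\mathcal{L}} \delta(x-y-\ell,0).
\]
Thus it suffices to show that for any $z \in V$, if $\inf_{\ell\in\mathcal{L}} \delta(z-\ell,0)=0$ then $z\in\mathcal{L}$.

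Next, pick a sequence $\ell_n\in\mathcal{L}$ with $\delta(z-\ell_n,0)\to 0$. Because $\delta$ is compatible with the standard topology on~$V$, this means $\ell_n\to z$ with respect to the standard topology, so in particular $(\ell_n)$ is a Cauchy sequence in~$V$. Now I invoke the discreteness of the lattice: there exists $\varepsilon>0$ such that any two distinct points of $\mathcal{L}$ are at standard distance at least $\varepsilon$ (this is the characterization of lattices as discrete additive subgroups recalled just before the statement). Consequently $(\ell_n)$ is eventually constant, equal to some~$\ell\in\mathcal{L}$, and therefore $z=\lim_n \ell_n = \ell\in\mathcal{L}$, as desired.

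The main (and only) obstacle is really the interplay between the two topologies, namely that $\delta$-convergence implies standard-topology convergence — this is the hypothesis that $\delta$ is compatible with the standard topology on~$V$. Once that is in hand, the discreteness of~$\mathcal{L}$ does the rest. I would also remark that the very same argument shows that the quotient map $V\to V/\mathcal{L}$ is a local isometry on a ball of radius $\varepsilon/2$, which re-confirms compatibility with the quotient topology without appealing to~\cite{himmelberg}, but this is not needed for the statement itself.
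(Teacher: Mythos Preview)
Your proof is correct and follows essentially the same approach as the paper's: reduce to showing the pseudo-metric separates cosets, use translation invariance to rewrite $\delta([x],[y])$ as an infimum over a single lattice translate, take a sequence of lattice points converging to $x-y$, and invoke discreteness to conclude the sequence is eventually constant. The paper's argument is terser (it does not spell out the passage from $\delta$-convergence to standard-topology convergence), but the logic is identical.
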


\begin{proof}
We argue as in the proof of \cref{prop:compact-quot}.
Suppose that $\delta(x+\mathcal{L},y+\mathcal{L})=0$.
Then there exist sequences $u_i$ and $v_i$ in $\mathcal{L}$ such that
$\delta(x+u_i,y+v_i)= \delta(x-y,v_i-u_i)$ converges to zero.
Since the sequence $v_i-u_i$ of lattice points converges,
it must be eventually constant. Hence $x-y\in\mathcal{L}$,
which completes the proof.
\end{proof}

Sometimes in the paper the lattice $\mathcal{L}$ will be given
as an orthogonal projection of another lattice.
Note that the orthogonal projection of a lattice is not always a lattice.
(For instance, the orthogonal projection of $\ZZ^2$ onto
$\RR(1,y)$ with irrational $y$ is not discrete.)

\begin{lemma}\label{lem:ort-proj-lattice}
Suppose $U\subset V$ is a rational subspace of $V=\RR^n$,
spanned by linearly independent rational vectors
$u_1,\dots,u_k\in\QQ^n$ and $\mathcal{L}\subset V$ is a lattice, generated by integral vectors.
Then, the orthogonal projection of $\mathcal{L}$ to $U$ is a lattice. 
\end{lemma}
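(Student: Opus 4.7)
The plan is to reduce to the special case $\mathcal L = \ZZ^n$ and then observe that arbitrary sublattices inherit discreteness. Since $\mathcal L$ is generated by integer vectors, we have $\mathcal L\subseteq\ZZ^n$, so $P(\mathcal L)\subseteq P(\ZZ^n)$, and it suffices to prove that $P(\ZZ^n)$ is a lattice in $U$; once we know this, $P(\mathcal L)$ is an additive subgroup of a discrete set, hence itself discrete and finitely generated, hence a lattice.

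To show $P(\ZZ^n)$ is a lattice, I will exploit rationality of $U$ to build a $\ZZ$-basis of $\ZZ^n$ that is compatible with the decomposition $V=U\oplus U^\perp$. First, since $u_1,\dots,u_k\in\QQ^n$ span $U$, the orthogonal complement $U^\perp$ is defined by $k$ rational linear equations and therefore admits a rational basis; clearing denominators produces $n-k$ linearly independent integer vectors in $U^\perp$, showing that $M\coloneqq \ZZ^n\cap U^\perp$ is a rank-$(n-k)$ sublattice of $\ZZ^n$ that spans $U^\perp$. Second, $M$ is primitive in $\ZZ^n$: if $v\in\ZZ^n$ and $cv\in M$ for some positive integer $c$, then $cv\in U^\perp$, and since $U^\perp$ is an $\RR$-subspace, $v\in U^\perp$, hence $v\in M$. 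Primitivity implies $\ZZ^n/M$ is free abelian, so we can extend any $\ZZ$-basis $w_1,\dots,w_{n-k}$ of $M$ to a $\ZZ$-basis $w_1,\dots,w_n$ of $\ZZ^n$.

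Now $P(w_i)=0$ for $i\le n-k$, so $P(\ZZ^n)=\ZZ P(w_{n-k+1})+\dots+\ZZ P(w_n)$, and the vectors $P(w_{n-k+1}),\dots,P(w_n)$ are $\RR$-linearly independent: any nontrivial dependency $\sum_{i>n-k}c_i P(w_i)=0$ would force $\sum_{i>n-k}c_i w_i\in U^\perp=\operatorname{span}_\RR(w_1,\dots,w_{n-k})$, contradicting the linear independence of $w_1,\dots,w_n$. This exhibits $P(\ZZ^n)$ as a rank-$k$ lattice in $U$, completing the argument when combined with the sublattice observation from the first paragraph.

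The only place real work is needed is Step 1, the assertion that $\ZZ^n\cap U^\perp$ has full rank in $U^\perp$; this is the unique point where the hypothesis that $U$ is rational is used in an essential way (without it, $P(\ZZ^n)$ can fail to be discrete, as illustrated by the projection of $\ZZ^2$ onto a line of irrational slope mentioned in the paragraph preceding the lemma). The rest of the proof is elementary linear algebra over $\ZZ$ together with the fact that subgroups of discrete sets are discrete.
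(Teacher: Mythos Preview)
Your proof is correct, but it takes a genuinely different route from the paper's. The paper simply observes that Gram--Schmidt applied to the rational basis $u_1,\dots,u_k$ yields an orthogonal rational basis of $U$, so the projection $P(v)=\sum_i \frac{\langle u_i,v\rangle}{\langle u_i,u_i\rangle}u_i$ maps $\QQ^n$ into $\QQ^n$; clearing denominators of the finitely many generators of $\mathcal L$ gives an integer $N$ with $N\,P(\mathcal L)\subseteq\ZZ^n$, and discreteness is immediate. Your argument instead reduces to $\mathcal L=\ZZ^n$, identifies $\ZZ^n\cap U^\perp$ as a primitive rank-$(n-k)$ sublattice, and extends a basis of it to a $\ZZ$-basis of $\ZZ^n$ so that $P(\ZZ^n)$ is visibly the $\ZZ$-span of $k$ independent vectors. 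The paper's approach is shorter and avoids the structure theory of primitive sublattices; your approach costs a bit more machinery but yields the extra information that $P(\ZZ^n)$ has rank exactly $k$ together with an explicit lattice basis, which could be useful if one wanted quantitative control on the projected lattice.
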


\begin{proof}
Using Gram-Schmidt orthogonalization,
we may assume that the $u_i$ are pairwise orthogonal.
Then the orthogonal projection
$P:\RR^n\rightarrow U$ is given by
$P(v) = \sum_{i=1}^k \frac{\langle u_i, v\rangle}{\langle u_i,u_i \rangle} \, u_i$.
This shows that $P(\QQ^n)\subset\QQ^n$.
Therefore, if $v_1,v_2,\dots,v_l\in\ZZ^n$ generate $\mathcal{L}$,
there is a positive integer~$N$ such that $N\, P(v_i)\in\ZZ^n$ for all $i$.
Hence  $N\,P(\mathcal{L})\subset\ZZ^n$, which
implies that $P(\mathcal{L})$ is discrete. 
\end{proof}

\section{Logarithmic image of orbits}\label{sec:log}

The goal of this section is to discuss the structure of the logarithmic image of orbits in the quotient space $\CC^n/2\pi\ii\ZZ$,
to study the metric $\distLD$ defined in \eqref{eq:def-distLD},
and to prove that $\distLD(\OO_v,\OO_w)$ is approximated by a linear form in~$\Log v$ and~$\Log w$.

We assume the setting of \cref{se:invar-th-torus}:
$T=(\CC^\times)^d$ acts on~$V=\CC^n$ via the weight matrix~$M\in\ZZ^{d\times n}$.
The columns of $M$ are called the weights of the action and
the \textit{weight polytope} $P\subset\RR^d$ is defined as the convex hull of these weights.
We consider orbits of vectors~$v\in (\CC^\times)^n$, thus we assume that all components of~$v$ are nonzero.
We note that~$\OO_v$ is closed in~$(\CC^\times)^n$,
see~\cite[Prop.~5.1]{torus}.
However, $\OO_v$ may not be closed.
It is well known that~$\OO_v$ is closed iff~$0$ lies in the interior of~$P$;
see~\cite[Section~3]{torus}. This will only become relevant later in \cref{sec:EA-KN0}.


\subsection{Structure of logarithms of orbits}
The key idea is to use the group isomorphism
$\Log \colon (\CC^\times)^n \rightarrow \CC^n/2\pi\ii\ZZ^n$
provided by the componentwise complex logarithm, compare \cref{eq:Log-def}.
The inverse is given by the (componentwise) exponential function $\Exp$.

The action of the group~$T$ on $(\CC^\times)^n$ induces via $\Log$
an action of~$T$ on $\CC^n/2\pi\ii\ZZ^n$, which we shall denote by $\ast$.
This action is simpler to understand, since it works by \textit{translations}.
More specifically, for $x\in \CC^n$ and $v\in(\CC^\times)^n$, we have
\begin{equation}
\label{eq:translation}
	\Log(e^x\cdot v) = \Log v + M^T x ,
\end{equation}
see \cref{fig:flatten-action}.
Note that, by the periodicity of $\exp$, the right-hand side
in $\CC^n/2\pi\ii\ZZ^n$
indeed only depends on $x\bmod 2\pi\ii\ZZ^n$.
(We could also consider the corresponding action of the Lie algebra $\Lie(T)=\CC^d$,
which has the same orbits.)
This leads us to the following definition.

\begin{definition}\label{def:ast}
We denote by $\ast$ the induced action of $T$ on $\CC^n/2\pi\ii\ZZ^n$ via translations
defined by
\[
  e^x \ast \eta \coloneqq \eta+M^Tx, \qquad \text{for }x\in \CC^d,\; \eta\in\CC^n/2\pi\ii\ZZ^n.
\]

\end{definition}

By construction, the orbits~$\OO_v$ and~$\KK_v$ of~$v\in (\CC^\times)^n$ are mapped
to the corresponding orbits of~$\eta=\Log v$.
If we denote
\[
 T\ast\eta \coloneqq \{t\ast\eta\mid t\in T\}\quad\text{and}\quad K\ast\eta\coloneqq\{k\ast\eta\mid k\in K\} ,
\]
then
$\Log (\OO_v) = T\ast \Log(v)$ and $\Log (\KK_v) = K\ast \Log(v)$.

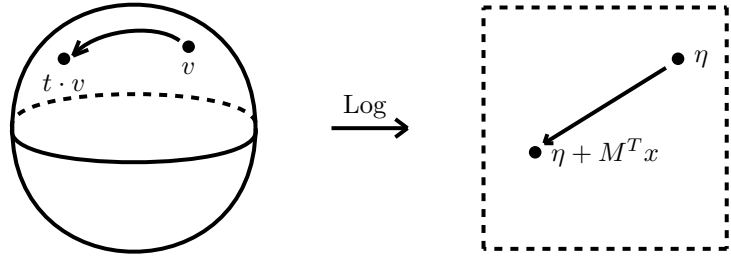
\begin{figure*}[ht]
	\centering
    \begin{tikzpicture}[scale=0.8]
    \pgfdeclarelayer{nodelayer}
    \pgfdeclarelayer{edgelayer}
    \pgfsetlayers{main,nodelayer,edgelayer}
	\begin{pgfonlayer}{nodelayer}
		\node  (0) at (-2, 0) {};
		\node  (1) at (2, 0) {};
        \node  [fill = black, circle, label={below:$v$}, scale=0.5](20) at (0.9, 1.35) {};
		\node  (4) at (0.75, 1.45) {};
		\node  (5) at (-1, 1.2) {};
        \node  [fill = black, circle, label={below:$t\cdot v$}, scale=0.5](21) at (-1.15, 1.15) {};
		\node  (6) at (-0.9, 1.45) {};
		\node  (7) at (-0.75, 1.3) {};
		\node  (8) at (3.25, 0) {};
		\node  (9) at (4.5, 0) {};
        \node [label={$\Log$}] (24) at (3.8, -0.15) {};
		\node  (10) at (4.3, 0.15) {};
		\node  (11) at (4.3, -0.15) {};
		\node  (12) at (5.75, 2) {};
		\node  (13) at (5.75, -2) {};
		\node  (14) at (9.75, -2) {};
		\node  (15) at (9.75, 2) {};
        \node  [fill = black, circle, scale=0.5, label = {right:$\eta$}] (23) at (8.95, 1.15) {};
		\node  (16) at (8.75, 1) {};
        \node  [fill = black, circle, scale=0.5, label = {right:$\eta+M^T x$}] (22) at (6.6, -0.4) {};
		\node  (17) at (6.75, -0.25) {};
		\node  (18) at (6.80, -0.1) {};
		\node  (19) at (6.9, -0.25) {};
	\end{pgfonlayer}
	\begin{pgfonlayer}{edgelayer}
		\draw [bend left=90, looseness=1.75, line width = 1.5pt] (0.center) to (1.center);
		\draw [bend right=90, looseness=1.75, line width = 1.5pt] (0.center) to (1.center);
		\draw [bend right=105, looseness=0.50, line width = 1.5pt] (0.center) to (1.center);
		\draw [dashed, bend left=105, looseness=0.50, line width = 1.5pt] (0.center) to (1.center);
		\draw [bend right=45, looseness=0.75, line width = 1.5pt] (4.center) to (5.center);
		\draw [line width = 1.5pt](5.center) to (6.center);
		\draw [line width = 1.5pt](5.center) to (7.center);
		\draw [line width = 1.5pt](8.center) to (9.center);
		\draw [line width = 1.5pt](9.center) to (10.center);
		\draw [line width = 1.5pt](9.center) to (11.center);
		\draw [dashed, line width = 1.5pt](12.center) to (15.center);
		\draw [dashed, line width = 1.5pt](15.center) to (14.center);
		\draw [dashed, line width = 1.5pt](14.center) to (13.center);
		\draw [dashed, line width = 1.5pt](13.center) to (12.center);
		\draw [line width = 1.5pt](16.center) to (17.center);
		\draw [line width = 1.5pt](17.center) to (19.center);
		\draw [line width = 1.5pt](17.center) to (18.center);
	\end{pgfonlayer}
\end{tikzpicture}
\caption{\small The logarithm \textit{flattens} the group action.
 The vector $\eta=\Log v$ is translated to $\eta+M^T x$ via the group element $t= e^{x}\in T$.}%
 \label{fig:flatten-action}
\end{figure*}

We observe that the $\ast$-orbits of $T$ (and $K$) are the
\textit{images of affine subspaces}
under the canonical surjection $\CC^n\rightarrow\CC^n/2\pi\ii\ZZ^n$.
More concretely, we denote by
$U_\CC:=\ran M^T = \{M^T x \mid x\in \CC^d\}$
the row space of $M$.
Then it immediately follows from the definition of the $\ast$-action that
\begin{equation}\label{eq:row-space}
	T\ast\eta = \Big(\eta + U_\CC +2\pi\ii\ZZ^n\Big)\,\big/\,2\pi\ii\ZZ^n .
\end{equation}
We equip now $\CC^n/2\pi\ii\ZZ^n$ with the quotient metric~$\Delta$
induced by the Euclidean metric on $\CC^n$, denoted by $\dist$
(by \cref{prop:lattice-quot} this is indeed a metric).
Explicitly,
\begin{equation}
\label{eq:delta}
	\begin{split}
	\Delta\big(\, \eta\, ,\, \zeta\, \big)\, &\coloneqq\, \dist\big(\, \eta-\zeta \, ,\, 2\pi\ii\ZZ^n \, \big)\\
	&=\sqrt{\Vert\, \rho-\tau\,\Vert^2 \, +\,4\pi^2\, \dist^2\left(\,\theta-\phi\,,\, \ZZ^n\right)}\, ,
\end{split}
\end{equation}
where $\eta=\rho+2\pi\ii\theta$ and $\zeta=\tau+2\pi\ii\phi$
with $\rho,\theta,\tau,\phi\in\RR^n$
(note that the imaginary parts~$\theta,\phi$ are only determined modulo $\ZZ^n$).

\begin{proposition}\label{prop:delta-is-a-metric}
The metric $\Delta$ on $X:=\CC^n/2\pi\ii\ZZ^n$ induces a metric
$$
 \Delta(T\ast \eta, T\ast \zeta) := \inf_{t\in T} \Delta( t\ast \eta,\zeta)
$$
on the space $X/T$ of $T$-orbits with respect to the $\ast$-action,
which is compatible with the quotient topology.
An analogous result holds for $K$-orbits.
\end{proposition}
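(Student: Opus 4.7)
The plan is to lift everything to $\CC^n$ and reduce the claim to the closedness of a single subgroup. First I would observe that the $T$-orbit $T\ast\eta$ pulls back under the quotient map $\CC^n\to\CC^n/2\pi\ii\ZZ^n$ to the coset $\eta+H$, where $H\coloneqq U_\CC+2\pi\ii\ZZ^n$ is an additive subgroup of $\CC^n$ (recall $U_\CC=\ran M^T$ from~\cref{eq:row-space}). Combining \cref{eq:row-space} with the definition~\cref{eq:delta} of $\Delta$, and using translation invariance of the Euclidean distance, one obtains
\[
  \Delta(T\ast\eta,T\ast\zeta)\;=\;\dist\bigl(\eta-\zeta,\,H\bigr).
\]
From this the symmetry and triangle inequality are immediate, so only two things remain: first, that the pseudo-metric separates distinct orbits, and second, that it is compatible with the quotient topology.

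Closedness of $H$ in $\CC^n$ is the key step, since then $\dist(\eta-\zeta,H)=0$ forces $\eta-\zeta\in H$, i.e.\ $T\ast\eta=T\ast\zeta$. I would decompose $\CC^n=\RR^n\oplus\ii\RR^n$ and set $U_\RR\coloneqq\{M^T x\mid x\in\RR^d\}\subset\RR^n$, so that $U_\CC=U_\RR+\ii U_\RR$ and
\[
  H\;=\;U_\RR\;\oplus\;\ii\bigl(U_\RR+2\pi\ZZ^n\bigr).
\]
It thus suffices to show $U_\RR+2\pi\ZZ^n$ is closed in $\RR^n$. Here I would invoke the rationality of $M$: the subspace $U_\RR$ is spanned by the integer columns of $M^T$, hence so is $U_\RR^\perp$ a rational subspace, and \cref{lem:ort-proj-lattice} guarantees that the orthogonal projection $P\colon\RR^n\to U_\RR^\perp$ sends $\ZZ^n$ to a lattice in $U_\RR^\perp$. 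Then $U_\RR+2\pi\ZZ^n=P^{-1}\bigl(2\pi P(\ZZ^n)\bigr)$ is the preimage of a discrete (hence closed) set under a continuous map, and therefore is closed.

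For compatibility with the quotient topology I would descend through two stages. The quotient $V\coloneqq\CC^n/U_\CC$ is a finite-dimensional real vector space with its quotient norm, and the image $\mathcal{L}\subset V$ of $2\pi\ii\ZZ^n$ is discrete (a restatement of the closedness of $H$), hence a lattice. Applying \cref{prop:lattice-quot} to $V$ and $\mathcal{L}$ then produces a metric on $V/\mathcal{L}\cong\CC^n/H\cong X/T$ that is compatible with the quotient topology, and unwinding the identifications shows this metric agrees with $\Delta(T\ast\eta,T\ast\zeta)$. The $K$-orbit statement needs no further work: since $K=(S^1)^d$ is compact, \cref{prop:compact-quot} applies directly. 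The main obstacle is really the closedness of $H$, which rests crucially on the integrality of the weight matrix $M$; for a matrix $M$ with generic irrational entries the set $U_\RR+2\pi\ZZ^n$ can be dense in a proper subspace, and the induced pseudo-metric would then fail to separate distinct orbits.
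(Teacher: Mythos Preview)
Your argument is correct and uses the same two ingredients as the paper (\cref{lem:ort-proj-lattice} and \cref{prop:lattice-quot}), just organized a little differently. The paper works directly with the orthogonal projection $P\colon\CC^n\to U_\CC^\perp$, observes via \cref{lem:ort-proj-lattice} that $2\pi\ii P(\ZZ^n)$ is a lattice, applies \cref{prop:lattice-quot} to $Y\coloneqq U_\CC^\perp/2\pi\ii P(\ZZ^n)$, and then notes that $P$ induces a distance-preserving bijection $X/T\to Y$; this handles both the metric axiom and the topology in one stroke. You instead first isolate the closedness of $H=U_\CC+2\pi\ii\ZZ^n$ via the real/imaginary decomposition, and then pass to $V=\CC^n/U_\CC$; since $V$ is canonically isometric to $U_\CC^\perp$ and the image of $2\pi\ii\ZZ^n$ there is precisely $2\pi\ii P(\ZZ^n)$, the two routes coincide. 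One small wording issue: ``discrete (hence closed)'' is not valid for arbitrary subsets, but it is for subgroups (which is all you need). For the $K$-case the paper says ``analogously'' (repeating the projection argument), whereas your appeal to \cref{prop:compact-quot} via compactness of $K$ is a legitimate and slightly quicker shortcut.
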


\begin{proof}

Denote by $P:\CC^n\rightarrow U_\CC^\perp$ the orthogonal projection onto the orthogonal complement of $U_\CC$,
thus $\ker P = U_\CC$.
\Cref{lem:ort-proj-lattice} shows that $2\pi\ii P(\ZZ^n)$ is a lattice
and \cref{prop:lattice-quot} implies that
$Y := U_\CC^\perp/P(2\pi\ii\ZZ^n)$
is a metric space with respect to the quotient metric of the Euclidean metric.

The projection $P$ induces a surjective group morphism
$P'\colon X \to Y$.
From \cref{eq:row-space} we see that $P'$
is constant on $T$-orbits. More specifically,
$P'$ induces a continuous bijection
$P''\colon X/T \to Y$.
By definition, $P''$ preserves the distance $\Delta$:
$$
 \Delta(T\ast \eta, T\ast \zeta) = \Delta(\eta, \zeta) .
$$
Since $\Delta$ is a metric on $Y$, this implies that
the pseudo-metric $\Delta$ on $X/T$ in fact is a metric.
Therefore, $\Delta(T\ast\eta,T\ast\zeta)=0$ implies $T\ast\eta=T\ast\zeta$.
The claim about $K$-orbits follows analogously.
\end{proof}

\begin{remark}
The fact that $\Delta$ is metric on $X/T$ is our essential gain.
Note that the orbit space $(\CC^\times)/T$ equipped with the induced
Euclidean distance $\dist(\OO_v,\OO_w)$ between $T$-orbits
is not a metric: it can be zero even though $\OO_v\neq\OO_w$,
as illustrated in the case of the hyperbolas in \cref{fig:example-actions}.
\end{remark}

\Cref{eq:delta} implies the important equations
\begin{equation}\label{eq:Delta=dist}
 \Delta(T\ast\eta,T\ast \zeta) = \dist( \eta-\zeta + U_\CC , 2\pi\ii\ZZ^n) ,\quad
 \Delta(K\ast \ii\theta, K\ast \ii\phi) = \dist(\theta-\phi + U, \ZZ^n) ,
\end{equation}
where $U_\CC$ and $U$ are the complex and real row spaces  of $M$,
respectively.

The logarithmic distance of $v,w\in (\CC^\times)^n$,
introduced in \cref{eq:def-distLD},
can now be expressed as
\begin{equation}\label{eq:logD=Delta}
	\distLD(v,w) = \Delta(\Log v, \Log w) .
\end{equation}
Hence the distances of orbits in the metric $\distLD$
are given by
\begin{equation}\label{eq:delta-ld}
  \distLD(\OO_v, \OO_w) = \Delta(T\ast\Log v, T\ast\Log w) ,\quad
  \distLD(\KK_v,\KK_w)=\Delta(K\ast \Log v, K\ast \Log w) .
\end{equation}

\begin{figure}
    \centering
    \begin{tikzpicture}[scale=0.4]
    \pgfdeclarelayer{nodelayer}
    \pgfdeclarelayer{edgelayer}
    \pgfsetlayers{main,nodelayer,edgelayer}
	\begin{pgfonlayer}{nodelayer}
		\node  (0) at (0, -4) {};
		\node  (1) at (0, 4) {};
		\node  (2) at (8, -4) {};
		\node  (3) at (8, 4) {};
		\node  [fill = black, scale=0.4, circle, label={right:$0$}](4) at (4, 0) {};
		\node  (5) at (0, 2) {};
		\node  (6) at (6, -4) {};
		\node  (7) at (2, 4) {};
		\node  (8) at (8, -2) {};
		\node  (9) at (10, 0) {};
		\node  (10) at (12, 0) {};
		\node  (11) at (11.75, 0.25) {};
		\node  (12) at (11.75, -0.25) {};
		\node  (13) at (14.5, 4) {};
		\node  (14) at (14.5, -4) {};
		\node  (15) at (14, -3.5) {};
		\node  (16) at (22, -3.5) {};
		\node  (17) at (15.25, 4.25) {};
		\node  (18) at (22.25, -2.75) {};
		\node  (19) at (15, 4.25) {};
		\node  (20) at (22.25, -3) {};
		\node  (21) at (22.25, -2.5) {};
		\node  (22) at (15.5, 4.25) {};
		\node  [fill = black, scale=0.4, circle, label={below:$v$}](23) at (17.1, -1.25) {};
	\end{pgfonlayer}
	\begin{pgfonlayer}{edgelayer}
		\draw [line width=0.5pt](3.center) to (1.center);
		\draw [line width=0.5pt](1.center) to (0.center);
		\draw [line width=0.5pt](0.center) to (2.center);
		\draw [line width=0.5pt](2.center) to (3.center);
		\draw [line width=1.5pt](1.center) to (2.center);
		\draw [dashed, line width=1pt](5.center) to (6.center);
		\draw [dashed, line width=1pt](7.center) to (8.center);
		\draw [line width=1.5pt](9.center) to (10.center);
		\draw [line width=1.5pt](10.center) to (11.center);
		\draw [line width=1.5pt](10.center) to (12.center);
		\draw [line width=1.5pt](13.center) to (14.center);
		\draw [line width=1.5pt](15.center) to (16.center);
		\draw [bend right=45, looseness=1.25, line width=1.5pt] (17.center) to (18.center);
		\draw [dashed, bend right, looseness=1.25, line width=1pt] (22.center) to (21.center);
		\draw [dashed, bend right=45, looseness=1.75, line width=1pt] (19.center) to (20.center);
	\end{pgfonlayer}
\end{tikzpicture}
\caption{Suppose $\CC^\times$ acts on $\CC^2$ via $t\cdot (x,y)=(tx,t^{-1}y)$ as in \cref{fig:example-actions}.
The logarithmic orbit $T\ast 0$ is mapped via $\Exp$ to the orbit $\OO_v$ of $v\coloneqq (1,1)$.
The image shows the $\varepsilon$-neighbourhood of $T\ast 0$ and its image under $\Exp$.
\label{fig:kempf ness example}
}
\end{figure}

The action $*$ splits into a real and imaginary part:
for $x=y+\ii z$ with $y,z\in\RR^d$ and $\eta = \rho+2\pi\ii\theta$
with $\rho\in\RR^n$, $\theta\in\CC^n$, we have by \cref{def:ast},
\begin{equation}\label{eq:indep}
	e^x \ast \eta
  = \big(\rho + M^T y\big) + \ii\big( 2\pi\theta + M^T z\big)
  = e^y \ast \rho +e^{\ii z}\ast (2\pi\ii\theta) ,
\end{equation}
hence
\begin{equation*}
		T\ast\eta
		= \Big( \big(\rho+ U)\big) + \ii \big(2\pi\theta+ U +2\pi\ZZ^n\big)\Big)\big/2\pi\ii\ZZ^n .
\end{equation*}
Let $\zeta=\tau+2\pi\ii\phi$.
Together with \cref{eq:delta,eq:Delta=dist}, we obtain the following formula for the $\Delta$-distance between $\ast$-orbits:
\begin{equation}\label{prop:delta-orbit-distance}
\begin{aligned}
\Delta^2(T\ast\eta,T\ast\zeta)
 &= \dist^2(\rho-\tau, U) + 4\pi^2 \Delta^2(K\ast \ii\theta, K\ast \ii\phi) \\
 &= \dist^2(\rho-\tau, U) + 4\pi^2 \dist^2(\theta-\phi + U, \ZZ^n)
\end{aligned}
\end{equation}
and similarly,
\begin{equation}\label{prop:Kdelta-orbit-distance}
\begin{aligned}
 \Delta^2(K\ast\eta, K\ast\zeta)
 &= \Vert \rho-\tau \Vert^2 + 4\pi^2 \Delta^2(K\ast \ii\theta, K\ast \ii\phi) \\
 &= \Vert \rho-\tau \Vert^2 + 4\pi^2 \dist^2(\theta-\phi + U, \ZZ^n) .
\end{aligned}
\end{equation}

The contributions of the real parts,
$\dist^2(\rho-\tau,U)$ in the $T$-case and $\Vert\rho-\tau\Vert^2$ in the $K$-case,
are easy to compute.
On the other hand, the contribution of the imaginary parts
$\dist^2(\theta-\phi + U, \ZZ^n)$,
turn out to be difficult to compute.
We will show in \cref{sec:hardness} that approximating
$\dist(\theta-\phi +U,\ZZ^n)$ within a subpolynomial factor is \textsc{NP}-hard,
by providing a polynomial time reduction from the \textit{closest vector problem} (CVP) to it.
By contrast, deciding whether this distance equals zero can be done in polynomial time, see \cref{re:equ-lattice-dist}.
(This is analogous to the \textsc{CVP} problem, see \cref{sec:sldp-to-cvp}.)

\begin{remark}\label{re:equ-lattice-dist}
Given a subspace $U\subset\RR^n$
by generators $v_1,\ldots,v_m\in\ZZ^n$ and
$t\in\QQ^n$, we can decide
$(t + U) \cap \ZZ^n \neq \varnothing$
in polynomial time. This can be seen by
putting the matrix with columns $v_1,\ldots,v_m$
into Smith normal form, which is
possible in polynomial time~\cite{snf}.
\end{remark}

\subsection{Approximation of orbit distances by linear forms in logarithms: \texorpdfstring{$T$}{T}-orbits}
We use now invariant theory to analyze the orbits.
Let $M\in\ZZ^{k\times n}$ be a weight matrix and denote by
$H\in\ZZ^{k\times n}$ a matrix of rational invariants
of $H\in\ZZ^{k\times n}$, see \cref{prop:allaboutH}.

For $v,w\in(\CC^\times)^n$, \cref{prop:t-invariants} characterizes the equality of orbits~$\OO_v = \OO_w$ by~$H\eta = H\zeta$, where~$\eta =\Log v$ and~$\zeta = \Log w$.
Our goal is a robust version of this:
to show that the closeness of the corresponding logarithmic orbits
$T\ast \eta$ and $T\ast \zeta$,
measured in terms of the metric $\Delta$ on $\CC^n/2\pi i\ZZ^n$,
is quantitatively related to the closeness of
$H\eta$ and $H\zeta$, measured in terms of the metric~$\Delta$
on the quotient space $\CC^k/2\pi i \ZZ^k$.
The correction factors are provided by the minimum and maximum singular values of~$H$.

\begin{proposition}\label{cor:delta-ld-linear-forms} 
If $\eta,\zeta\in\CC^n/2\pi\ii\ZZ^n$, we have for the distance of $T$-orbits:
\[
\sigma_{\min}(H)\, \Delta(T\ast\eta,T\ast\zeta) \,\leq\, \Delta(H\eta,H\zeta)
   \,\leq\, \sigma_{\max}(H) \, \Delta(T\ast\eta,T\ast\zeta).
\]
For $v,w\in(\CC^\times)^n$ we have
\[
 \sigma_{\min}(H)\; \distLD(\OO_v, \OO_w) \,\leq\, \Delta(H\Log v,\, H\Log w)\,\leq\,
  \sigma_{\max}(H)\; \distLD(\OO_v, \OO_w).
\]
If $\eta,\zeta$ are purely imaginary, then the same bounds hold for the distances of $K$-orbits.
\end{proposition}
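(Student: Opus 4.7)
The plan is to reduce everything to the two-sided singular value inequality from \cref{lem:sing-value-inequality}, exploiting two crucial properties of the matrix of rational invariants $H$ proved in \cref{prop:allaboutH}: namely, $\ker H = U_\CC = \ran M^T$ and $H(\ZZ^n) = \ZZ^k$. By the identities \eqref{eq:Delta=dist} and the translation invariance of $\Delta$, writing $\xi \coloneqq \eta - \zeta$, the claim becomes
\[
\sigma_{\min}(H)\,\dist(\xi+U_\CC,\, 2\pi\ii\ZZ^n) \,\le\, \dist(H\xi,\, 2\pi\ii\ZZ^k) \,\le\, \sigma_{\max}(H)\,\dist(\xi+U_\CC,\, 2\pi\ii\ZZ^n),
\]
since $H\eta - H\zeta = H\xi$ and the $\ast$-orbits are cosets modulo $U_\CC + 2\pi\ii\ZZ^n$ and $2\pi\ii\ZZ^k$, respectively.

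For the upper bound, I would start from an arbitrary $u\in U_\CC$ and $m\in\ZZ^n$. Since $u\in\ker H$, we have $H(\xi + u - 2\pi\ii m) = H\xi - 2\pi\ii Hm$, and $Hm\in H(\ZZ^n)\subseteq \ZZ^k$, so $2\pi\ii Hm \in 2\pi\ii\ZZ^k$. The bound $\|H x\|\le\sigma_{\max}(H)\|x\|$ from \eqref{eq:sings} then gives
\[
\dist(H\xi,\,2\pi\ii\ZZ^k) \,\le\, \|H\xi - 2\pi\ii Hm\| \,\le\, \sigma_{\max}(H)\,\|\xi+u-2\pi\ii m\|,
\]
and taking the infimum over $u$ and $m$ yields the right inequality.

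For the lower bound, the key use of $H(\ZZ^n)=\ZZ^k$ comes in: given any $\ell\in\ZZ^k$, I can lift it to some $m\in\ZZ^n$ with $Hm=\ell$. Then $H\xi - 2\pi\ii\ell = H(\xi - 2\pi\ii m)$, and the inequality in \eqref{eq:sings} gives
\[
\|H\xi - 2\pi\ii\ell\| \,\ge\, \sigma_{\min}(H)\,\dist(\xi - 2\pi\ii m,\, \ker H) \,=\, \sigma_{\min}(H)\,\dist(\xi - 2\pi\ii m,\, U_\CC),
\]
which in turn is at least $\sigma_{\min}(H)\,\dist(\xi+U_\CC,\, 2\pi\ii\ZZ^n)$ after infimizing over $m$. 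Taking the infimum over $\ell\in\ZZ^k$ completes the left inequality.

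The second statement for $v,w\in(\CC^\times)^n$ follows immediately by setting $\eta=\Log v$, $\zeta=\Log w$ and combining with \eqref{eq:delta-ld}. For the $K$-orbit assertion with purely imaginary $\eta=2\pi\ii\theta,\,\zeta=2\pi\ii\phi$, everything reduces to the real row space $U=\ker H\cap\RR^n$ and the lattice identity $H(\ZZ^n)=\ZZ^k$ still applies, so the identical argument yields the bounds with $\dist(\theta-\phi+U,\ZZ^n)$ in place of the complex analogue, matching \eqref{prop:Kdelta-orbit-distance}. The only subtle point in the whole argument --- and what I would flag as the ``main obstacle'' --- is that without the surjectivity $H(\ZZ^n)=\ZZ^k$ the lower bound would fail, since a generic $\ell\in\ZZ^k$ would have no preimage in $\ZZ^n$ and one could not transfer the lattice constraint from the codomain back to the domain; this is precisely why \cref{prop:allaboutH} had to be established first via the integral Farkas lemma.
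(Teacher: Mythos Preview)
Your proof is correct and follows essentially the same approach as the paper: the paper applies \cref{lem:sing-value-inequality} directly with $A=\{\eta-\zeta\}$ and $B=2\pi\ii\ZZ^n$, whereas you unpack that application inline. The one thing you make explicit that the paper's proof leaves implicit is the use of $H(\ZZ^n)=\ZZ^k$ from \cref{prop:allaboutH} to identify $H(2\pi\ii\ZZ^n)$ with $2\pi\ii\ZZ^k$; this is indeed required, and your emphasis on it is well placed.
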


\begin{proof}
We have
$\Delta(T\ast\eta,T\ast \zeta) = \dist( \eta-\zeta + U_\CC , 2\pi\ii\ZZ^n)$
by \cref{eq:Delta=dist},
where $U_\CC=\ker H =\ran M^T$.
The first assertion follows from \cref{lem:sing-value-inequality}
with $A =\{\eta-\zeta\}$ and $B=2\pi\ii\ZZ^n$.
The second assertion is just a rewriting of the first, using \cref{eq:delta-ld}.
The statement on $K$-orbits follows analogously.
\end{proof}

\subsection{Approximation of orbit distances by linear forms in logarithms: \texorpdfstring{$K$}{K}-orbits}
Now the task is to provide an approximation
for the distance $\dist(\KK_v,\KK_w)$ between $K$-orbits,
similarly to \cref{cor:delta-ld-linear-forms}.
The analogous formula in \cref{cor:K-orbits-approximation} below is more complicated.
It involves a priori upper and lower bounds on the norms of $v$ and $w$.
For $0<r\leq R$ we consider the closed region defined by
\begin{equation}\label{eq:DrR}
 D_{r, R} \,\coloneqq\, \{ v\in\CC^n \, \mid\, \forall i\in [n]\; r \leq |v_i|\leq R \}.
\end{equation}

\begin{lemma}
\label{lem:delta-dist}
Let $v,w\in D_{r, R}$ and put $\eta = \Log v$, $\zeta = \Log w\in\CC^n/2\pi\ii\ZZ^n$. Then
\[
 \frac{2r}{\pi}\, \Delta(\eta,\zeta) \, \leq \, \Vert v-w\Vert\, \leq \, R\, \Delta(\eta,\zeta).
\]
\end{lemma}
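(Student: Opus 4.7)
The plan is to reduce everything to a coordinatewise statement and then to elementary bounds on the complex exponential on a strip.

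First I would observe that the optimal lattice shift defining $\Delta$ decouples across coordinates: since $2\pi\ii\ZZ^n$ is a product lattice and the Euclidean norm is a sum of squared coordinates, one has
\[
  \Delta(\eta,\zeta)^{2} \;=\; \sum_{i=1}^{n} \Delta_1(\eta_i,\zeta_i)^{2},
  \qquad \Delta_1(\eta_i,\zeta_i) \coloneqq \min_{k\in\ZZ}\, \abs{\eta_i-\zeta_i - 2\pi\ii k}.
\]
Because $\lVert v-w\rVert^{2}=\sum_i\abs{v_i-w_i}^2$, it suffices to prove the one-dimensional estimates $\tfrac{2r}{\pi}\,\Delta_1(\eta_i,\zeta_i) \leq \abs{v_i-w_i} \leq R\,\Delta_1(\eta_i,\zeta_i)$ for each coordinate separately and then sum the squares.

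For each coordinate I would then pick representatives $a,b\in\CC$ of $\eta_i,\zeta_i$ realising the minimum in $\Delta_1$, so that $\abs{a-b}=\Delta_1(\eta_i,\zeta_i)$ while $e^{a}=v_i$ and $e^{b}=w_i$ are unchanged. In particular the minimality in $k\in\ZZ$ forces $\abs{\Im(a-b)}\leq\pi$. For the upper bound I use the integral representation
\[
  e^{a}-e^{b} \;=\; (a-b)\int_{0}^{1} e^{b+t(a-b)}\,dt,
\]
and observe that $\abs{e^{b+t(a-b)}}=e^{(1-t)\Re b + t\Re a}\leq \max(\abs{e^a},\abs{e^b})\leq R$ by convexity (actually monotonicity) of $s\mapsto e^s$ on a convex combination. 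This yields $\abs{v_i-w_i}\leq R\,\abs{a-b}$.

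For the lower bound I would use the identity
\[
  e^{a}-e^{b} \;=\; 2\,e^{(a+b)/2}\,\sinh\mleft(\tfrac{a-b}{2}\mright),
\]
together with $\abs{e^{(a+b)/2}}=\sqrt{\abs{e^a}\abs{e^b}}\geq r$. Writing $z=(a-b)/2$, one has $\abs{\Im z}\leq \pi/2$, and the standard decomposition $\abs{\sinh z}^2 = \sinh^2(\Re z) + \sin^2(\Im z)$. Combined with $\sinh^2 x\geq x^2$ and Jordan's inequality $\abs{\sin y}\geq \tfrac{2}{\pi}\abs{y}$ for $\abs{y}\leq \pi/2$, this gives $\abs{\sinh z}\geq \tfrac{2}{\pi}\abs{z}$ on the strip, hence $\abs{v_i-w_i}\geq 2r\cdot\tfrac{2}{\pi}\cdot\tfrac{\abs{a-b}}{2}=\tfrac{2r}{\pi}\,\Delta_1(\eta_i,\zeta_i)$.

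The only slightly delicate step is the lower bound: we must exploit both the choice of shortest representative (so that the $\sinh$ argument lies in the strip $\abs{\Im z}\leq \pi/2$) and the sharp Jordan inequality to obtain exactly the constant $2/\pi$; without restricting to this strip, $\sinh$ is periodic in the imaginary direction and no such linear lower bound holds. Everything else is routine.
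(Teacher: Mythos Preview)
Your proof is correct and achieves exactly the same constants as the paper's argument, but the one-dimensional step proceeds along a different route. The paper writes $v_i=e^{\rho}e^{\ii\theta}$, $w_i=e^{\tau}e^{\ii\phi}$ and applies the cosine law $|v_i-w_i|^2=(|v_i|-|w_i|)^2+2|v_i||w_i|(1-\cos(\theta-\phi))$, then bounds the radial piece via the mean value theorem for the real exponential and the angular piece via the elementary inequality $\tfrac{4}{\pi^2}\psi^2\le 2-2\cos\psi\le\psi^2$ on $[-\pi,\pi]$. You instead work directly with the complex variable: the integral representation of $e^a-e^b$ for the upper bound, and the factorisation through $\sinh$ together with $|\sinh z|^2=\sinh^2(\Re z)+\sin^2(\Im z)$ and Jordan's inequality for the lower bound. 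Both arguments isolate the same two ingredients (the convexity/monotonicity of the real exponential and the $\tfrac{2}{\pi}$ bound on $\sin$), just packaged differently; your version is arguably a bit cleaner in that it avoids the explicit real/imaginary split and handles the complex difference in one stroke, while the paper's polar decomposition makes the geometric picture (radial versus angular contributions) more visible.
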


\begin{proof}
It is enough to show the equality for $n=1$.
We write here $\eta = \rho+\ii\theta,\, \zeta=\tau+\ii\phi$ (dropping the factor $2\pi$ to simplify notation).
The cosine theorem gives
\[
  |v-w|^2 = |v|^2 + |w|^2 - 2|v||w|\cos(\theta-\phi) = (|v|-|w|)^2 + 2|v||w|\left(1-\cos(\theta-\phi)\right).
\]
The first contribution on the right-hand side can be upper and lower bounded by
\[
	r^2(\rho-\tau)^2\leq (|v|-|w|)^2\leq R^2 (\rho-\tau)^2
\]
since the mean value theorem for $\exp$ implies (w.l.o.g.\ $\rho < \tau$)
$$
 r \le |v| = e^\rho \le \Big|\frac{e^\tau - e^\rho}{\tau - \rho}\Big| \le e^\tau = |w| \le R .
$$
The second contribution on the right-hand side can be be upper and lower bounded by
\[
 \frac{4 r^2}{\pi^2}\,\dist(  \theta-\phi,2\pi\ZZ)^2\, \leq \,
  2|v||w|\left(1-\cos(\theta-\phi)\right)\,\leq \,R^2 \, \dist(  \theta-\phi,2\pi\ZZ)^2,
\]
using the inequality
$\frac{4}{\pi^2}\psi^2\leq 2-2\cos\psi \leq \psi^2$ for $\psi\in[-\pi,\pi]$.
Bringing the inequalities together and observing that $4/\pi^2\leq 1$, we obtain
\[
 \frac{4r^2}{\pi^2} \left( (\rho-\tau)^2 + \dist(  \theta-\phi,2\pi\ZZ)^2 \right)
   \leq |v-w|^2 \leq R^2 \left( (\rho-\tau)^2 + \dist(  \theta-\phi,2\pi\ZZ)^2 \right),
\]
which completes the proof.
\end{proof}

We can now relate 
$\dist(\KK_v,\KK_w)$ to $\distLD(\KK_v,\KK_w)=\Delta(K\ast \Log v, K\ast \Log w)$.

\begin{proposition}
\label{thm:logorbits-equivalence}
For $v,w\in D_{r,R}$ we have
\[
 \frac{2r}{\pi}\, \Delta(K\ast \Log v,K\ast \Log w) \, \leq \, \dist(\KK_v,\KK_w) \, \leq \,
  R\, \Delta(K\ast \Log v,K\ast \Log w).
\]
\end{proposition}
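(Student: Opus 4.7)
The plan is to reduce the statement to Lemma \ref{lem:delta-dist} applied pointwise to pairs of orbit representatives, and then to take infima on both sides. The crucial preliminary observation is that the compact torus $K=(S^1)^d$ preserves the region $D_{r,R}$: since every $k\in K$ satisfies $|k_i|=1$, one sees from \eqref{eq:def-action} that $|(k\cdot v)_j|=|v_j|$ for every $j$, so $k\cdot v\in D_{r,R}$ whenever $v\in D_{r,R}$. Thus for any $k_1,k_2\in K$, both $k_1\cdot v$ and $k_2\cdot w$ still belong to $D_{r,R}$.

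Next I would fix $k_1,k_2\in K$ and apply Lemma \ref{lem:delta-dist} to the pair $k_1\cdot v$, $k_2\cdot w$, obtaining
\[
\frac{2r}{\pi}\,\Delta(\Log(k_1\cdot v),\Log(k_2\cdot w))\,\leq\,\|k_1\cdot v-k_2\cdot w\|\,\leq\,R\,\Delta(\Log(k_1\cdot v),\Log(k_2\cdot w)).
\]
By the construction of the $\ast$-action in \cref{def:ast} (cf.\ \eqref{eq:translation}), we have $\Log(k_i\cdot v)=k_i\ast\Log v$ and similarly for $w$, so the bound rewrites as
\[
\frac{2r}{\pi}\,\Delta(k_1\ast\Log v,\,k_2\ast\Log w)\,\leq\,\|k_1\cdot v-k_2\cdot w\|\,\leq\,R\,\Delta(k_1\ast\Log v,\,k_2\ast\Log w).
\]

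Finally, I would take the infimum over $k_1,k_2\in K$ on both sides of each inequality. Using the elementary fact that $f(k_1,k_2)\leq g(k_1,k_2)$ for all $(k_1,k_2)$ implies $\inf f\leq \inf g$, together with the identities $\dist(\KK_v,\KK_w)=\inf_{k_1,k_2}\|k_1\cdot v-k_2\cdot w\|$ and $\Delta(K\ast\Log v,K\ast\Log w)=\inf_{k_1,k_2}\Delta(k_1\ast\Log v,k_2\ast\Log w)$ (the latter from \cref{prop:delta-is-a-metric}), the desired two-sided bound follows immediately.

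There is no real obstacle here; the only point requiring care is verifying that the $K$-action respects the moduli of coordinates so that Lemma \ref{lem:delta-dist} applies uniformly to every pair of representatives, and that both infima are taken over the same parameter space so they can be compared term by term.
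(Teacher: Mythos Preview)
Your proof is correct and follows essentially the same route as the paper: observe that $K$ preserves $D_{r,R}$, apply Lemma~\ref{lem:delta-dist} to orbit representatives, and take the infimum. The only cosmetic difference is that the paper parametrizes by a single group element $k\in K$ (using the $K$-invariance of both the Euclidean norm and $\Delta$ to reduce $\inf_{k_1,k_2}$ to $\inf_k$), while you keep both $k_1,k_2$; either way the argument goes through.
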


\begin{proof}
Clearly, the $K$-action preserves $D_{r,R}$.
\Cref{lem:delta-dist} implies that
\[
 \frac{2r}{\pi} \Delta( \Log( k\cdot v ) , w ) \, \leq \,
  \Vert k\cdot v-w\Vert\, \leq \, R\, \Delta(\Log(k\cdot v), w)
\]
for all $k\in K$. The claim follows from by taking the infimum over $k\in K$.
\end{proof}

Combining \cref{thm:logorbits-equivalence} with
\cref{prop:delta-orbit-distance} and \cref{cor:delta-ld-linear-forms}, 
we obtain the following corollary.

\begin{corollary}\label{cor:K-orbits-approximation}
Assume that $0<r\leq R$ and $v,w\in D_{r,R}$. Suppose we have $\Log v=\rho+2\pi\ii\theta$ and
$\Log w = \tau+2\pi\ii\phi$.
Let $H$ by a matrix of rational invariants as in \cref{prop:allaboutH}. Then,
\[
 \frac{4r^2}{\pi^2} \big(\Vert\rho-\tau\Vert^2 +\frac{4\pi^2}{\sigma^2_{\max}(H) }
  \Delta^2(H\theta,H \phi)\big)  \leq \dist^2(\KK_v,\KK_w)
 \leq  R^2 \big(\Vert\rho-\tau\Vert^2 + \frac{4\pi^2}{\sigma^2_{\min}(H) }
   \Delta^2(H\theta, H\phi) \big) .
\]
\end{corollary}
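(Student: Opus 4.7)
The proof is a direct combination of three results proved earlier in this section, so I would organize it as a short chain of substitutions. First I would square the two-sided inequality of Proposition~\ref{thm:logorbits-equivalence}, obtaining
\[
 \frac{4r^2}{\pi^2}\, \Delta^2(K\ast \Log v,\, K\ast \Log w) \;\le\; \dist^2(\KK_v,\KK_w) \;\le\; R^2\, \Delta^2(K\ast \Log v,\, K\ast \Log w).
\]
Then I would invoke Proposition~\ref{prop:Kdelta-orbit-distance} to decompose the logarithmic orbit distance into its real and imaginary parts as
\[
 \Delta^2(K\ast \Log v,\, K\ast \Log w) \;=\; \|\rho-\tau\|^2 \;+\; 4\pi^2\, \Delta^2(K\ast \ii\theta,\, K\ast \ii\phi).
\]

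Since $\ii\theta$ and $\ii\phi$ are purely imaginary, the $K$-orbit case of Proposition~\ref{cor:delta-ld-linear-forms} applies and yields the two-sided estimate
\[
 \frac{1}{\sigma^2_{\max}(H)}\, \Delta^2(H\theta, H\phi) \;\le\; \Delta^2(K\ast \ii\theta,\, K\ast \ii\phi) \;\le\; \frac{1}{\sigma^2_{\min}(H)}\, \Delta^2(H\theta, H\phi).
\]
Substituting these bounds into the decomposition above and then into the squared inequality from the first step immediately delivers the two-sided inequality for $\dist^2(\KK_v,\KK_w)$ stated in the corollary.

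There is no serious obstacle here; the entire argument is a bookkeeping chain of previously proved estimates. The only point that deserves attention is that the inequalities must be applied in the correct direction: the reciprocal of $\sigma^2_{\max}(H)$ appears in the \emph{lower} bound, because it provides a lower estimate on the imaginary contribution, and the reciprocal of $\sigma^2_{\min}(H)$ appears in the \emph{upper} bound by the analogous reasoning. Since all quantities involved are nonnegative, the inequalities compose without complication, and the hypothesis $v,w\in D_{r,R}$ is used only once, when invoking Proposition~\ref{thm:logorbits-equivalence}.
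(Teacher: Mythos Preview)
Your proof is correct and follows exactly the approach indicated in the paper, which simply states that the corollary is obtained by combining Proposition~\ref{thm:logorbits-equivalence} with the decomposition formula~\eqref{prop:Kdelta-orbit-distance} and Proposition~\ref{cor:delta-ld-linear-forms}. Your write-up makes the chain of substitutions explicit and handles the direction of the singular-value inequalities carefully, which is the only place one could slip up.
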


We finally relate the separation parameters $\sep_K(d,n,B,b)$ and $\sep_T(d,n,B,b)$
defined in \cref{eq:def-sep}.

\begin{corollary}\label{cor:rel-btw-seps}
We have $\sep_T(d,n,B,b)\leq n\, 2^{O(b)}\sep_K(d,n,B,b)$.
\end{corollary}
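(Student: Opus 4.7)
The plan is to compare the two separation parameters by exploiting the relationship between the Euclidean distance on $K$-orbits and the logarithmic distance on $T$-orbits, which was already established in \cref{thm:logorbits-equivalence}. The strategy is to use a minimizer for $\sep_K$ to witness a correspondingly small value of $\sep_T$, paying a factor controlled by the minimum coordinate magnitude.

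First, I would fix an admissible minimizer $(M,v,w)$ realizing $\sep_K = \dist(\KK_v,\KK_w)$ with $\KK_v \neq \KK_w$. The central comparison goes via the chain
\[
\distLD(\OO_v,\OO_w) \;\le\; \distLD(\KK_v,\KK_w) \;=\; \Delta(K\ast\Log v,\,K\ast\Log w) \;\le\; \tfrac{\pi}{2r}\,\dist(\KK_v,\KK_w),
\]
where the first inequality is obvious (distances between larger sets are smaller), the equality is \cref{eq:delta-ld}, and the last inequality is \cref{thm:logorbits-equivalence} applied with $r = \min_i \min(|v_i|,|w_i|)$.

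Second, I would bound $r$ from below in terms of $b$. For $b$-bit Gaussian rationals $v_i = p/q + (s/t)\ii$ with $|p|,q,|s|,t \le 2^b$, a nonzero $|v_i|^2 = (p^2t^2 + s^2q^2)/(q^2t^2)$ is a positive rational with numerator at least $1$ and denominator at most $2^{O(b)}$; hence $|v_i| \ge 2^{-O(b)}$ and $\pi/(2r) \le 2^{O(b)}$. Combined with the chain above, this yields $\distLD(\OO_v,\OO_w) \le 2^{O(b)}\sep_K$. If additionally $\OO_v \neq \OO_w$, then $(M,v,w)$ is admissible for $\sep_T$ as well, so we immediately conclude $\sep_T \le 2^{O(b)}\sep_K$, well within the claimed bound.

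The main obstacle is the edge case where every minimizer of $\sep_K$ satisfies $\OO_v = \OO_w$, in which case $\distLD(\OO_v,\OO_w) = 0$ and the chain gives no information about $\sep_T$. Here I would invoke \cref{prop:k-invariants}: $\OO_v = \OO_w$ together with $\KK_v \neq \KK_w$ forces $|v_i|\neq|w_i|$ for some coordinate $i$. Since both $|v_i|^2$ and $|w_i|^2$ are nonzero $O(b)$-bit rationals, their difference is at least $2^{-O(b)}$ in absolute value, and combining with the upper bound $|v_i|+|w_i|\le 2^{O(b)}$ gives $\bigl||v_i|-|w_i|\bigr| \ge 2^{-O(b)}$. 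Because $K$-action preserves coordinate-wise moduli, this lower bound transfers to $\dist(\KK_v,\KK_w)$, so $\sep_K \ge 2^{-O(b)}$ in this regime. I would then exhibit an explicit admissible triple (e.g.\ trivial action on $v' = (1,\dots,1)$ and $w' = (2,1,\dots,1)$) witnessing the absolute bound $\sep_T \le \log 2$, so that $n\cdot 2^{O(b)}\sep_K \ge n\cdot 2^{O(b)-O(b)} \ge \sep_T$ with room to spare; the $n$ factor in the statement comfortably absorbs the absolute constants arising from this case distinction.
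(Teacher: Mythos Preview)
Your proof is correct and follows essentially the same two-case strategy as the paper: in the generic case $\OO_v\neq\OO_w$ you both use \cref{thm:logorbits-equivalence} together with the bound $r\ge 2^{-O(b)}$, and in the edge case $\OO_v=\OO_w$ you both invoke \cref{prop:k-invariants} to lower bound $\sep_K$ by $2^{-O(b)}$ and pair this with an absolute upper bound on $\sep_T$. The only cosmetic difference is in how that absolute upper bound is obtained: the paper bounds $\Delta(\Log v',\Log w')\le \sqrt{n}(2b+\pi\sqrt{n})$ uniformly over all $b$-bit inputs, whereas you exhibit the explicit witness $(M=0,\,v'=(1,\dots,1),\,w'=(2,1,\dots,1))$ giving $\sep_T\le\log 2$; both work, and your version is arguably cleaner.
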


\begin{proof}
Suppose that $\sep_K(d,n,B,b)= \dist(\KK_v,\KK_w)$.
We let $r$ be the minimum and $R$ be the maximum of $|v_i|,|w_i|, i=1,2,\dots,n$.
Clearly, $2^{-b} \le r$.
If moreover $\OO_v\neq\OO_w$, then
\cref{thm:logorbits-equivalence} implies
\[
 \dist(\KK_v,\KK_w)\geq \frac{2r}{\pi} \, \Delta(K\ast\Log v,K\ast\Log w)
    \geq \frac{2r}{\pi}\Delta(T\ast\Log v, T\ast\Log w)
     \geq \frac{2r}{\pi} \sep_T(d,n,B,b) ,
\]
hence
$\sep_T(d,n,B,b) \le \frac{\pi}{2r}\dist(\KK_v,\KK_w) \le 2^{O(b)} \sep_K(d,n,B,b)$.
On the other hand, if $\OO_v=\OO_w$, then \cref{prop:k-invariants} implies that $|v_i|\neq|w_i|$
for some~$i$. In this case,
$\sep_K(d,n,B,b) \ge\Big| |v_i| -|w_i| \Big|\ge 2^{-b}$.
We note that since $|v_i/w_i|\leq 2^{2b}$,  we have $\log(|v_i|/|w_i|)\leq 2b$. Furthermore, $\dist(\theta,2\pi\ZZ^n)\leq \sqrt{n}\pi$ for any $\theta\in\RR^n$. Hence,
$\sep_T(d,n,B,b)\le \Delta(\Log v, \Log w) \le  \sqrt{n}(2b+\pi \sqrt{n})$.
Altogether,
$$
 \sep_T(d,n,B,b)\le 2\sqrt{n}\, b + \pi n \le (\pi+2)\,n\, 2^b \le (\pi+2)\, n\, 2^{2b} \sep_K(d,n,B,b) ,
$$
which proves the assertion.
\end{proof}

\section{The separation hypotheses and the \abc-conjecture}\label{sec:abc}

The goal of this section is to explain in detail the connection between
the \abc-conjecture and \cref{hyp:linearforms} and to prove \cref{thm:abc-and-sep},
i.e., to show that
\cref{hyp:sep} and \cref{hyp:linearforms} are equivalent.

\subsection{The \abc-conjecture and the Number-Theoretic Hypothesis~\ref{hyp:linearforms}}\label{sec:abc-sec1}
Oesterl\'{e} and Masser's $abc$ conjecture~\cite[Conjecture 3]{oesterle} claims the following.

\begin{conjecture}[$abc$ conjecture] 
\label{conj:abc}
For every $\varepsilon>0$ there exists a constant $\kappa_\varepsilon>0$ such that
for all nonzero coprime integers $a,b,c$ satisfying $a+b+c=0$, we have
\[
	\max(|a|,|b|,|c|)\leq \kappa_\varepsilon \rad(abc)^{1+\varepsilon}.
\]
The radical $\rad(abc)$ is defined as the product of the distinct primes dividing $abc$,
taken with multiplicity one.
\end{conjecture}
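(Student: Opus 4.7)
The statement to be proved is the Oesterl\'e--Masser $abc$ conjecture itself, which is one of the central open problems in number theory. I will not pretend to produce a proof; instead I will lay out the strategies that are natural to attempt and identify the point at which each one stops.

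The most suggestive model is the function-field analogue, the Mason--Stothers theorem: for coprime polynomials $a(t),b(t),c(t)$ over a field, not all constant, with $a+b+c=0$, one has $\max(\deg a,\deg b,\deg c)\le \deg\rad(abc)-1$. The proof differentiates the relation $a+b+c=0$ to get $a'+b'+c'=0$, and then a Wronskian argument bounds the degrees in terms of the \emph{distinct} zeros of $a,b,c$. The plan in the number-theoretic case would be to produce a substitute for differentiation on $\ZZ$ and to imitate the Wronskian step. Candidate ``arithmetic derivatives'' do exist (Buium's arithmetic jet spaces, Joyal's $p$-derivations, Arakelov-theoretic height derivatives), and the first task would be to check whether any of them interacts with the additive relation $a+b+c=0$ strongly enough to produce a usable determinantal identity. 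The obstruction, and it is a deep one, is that none of these operators is presently known to convert the single additive relation into the multiplicative information encoded by $\rad(abc)$.

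A second, more established route is reduction to other conjectures of comparable depth. To a candidate triple $(a,b,c)$ one attaches the Frey--Hellegouarch curve $E\colon y^2 = x(x-a)(x+b)$, whose minimal discriminant is essentially $(abc)^2$ and whose conductor is essentially $\rad(abc)$. Szpiro's conjecture $|\Delta|\le \kappa_\varepsilon N^{6+\varepsilon}$ for semistable elliptic curves then translates, up to constants and the exponent, into the $abc$ inequality; Vojta's height conjecture on arithmetic surfaces gives a similar but more flexible reduction. The plan would therefore be to prove an effective height bound for integral points on modular curves, uniform in the level, and then to invoke the Frey reduction. This converts $abc$ into a problem in Diophantine geometry where tools such as Arakelov intersection theory, Bogomolov--Miyaoka--Yau type inequalities, and modularity lifting are available, but the required effective uniformity remains open.

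The approach closest in spirit to the present paper is through lower bounds for linear forms in logarithms. Writing $\Lambda = \log a - \log(-c) + \log(\text{prime factorizations})$ and applying Matveev's bound already gives the unconditional Stewart--Yu estimate $\log c\le \kappa_\varepsilon \rad(abc)^{1/3+\varepsilon}$ cited in the excerpt. The plan would be to sharpen Matveev's inequality from its current $\exp(-B\cdot b^{O(n)})$ shape to a bound polynomial in $n,B,b$, which is precisely \cref{hyp:linearforms} and is predicted by Lang--Waldschmidt; by Baker's equivalence this would yield a quantitatively strong form of $abc$. The main obstacle, and in my view the hardest part of any of these programs, is that every known transcendence proof of a linear-forms lower bound constructs an auxiliary polynomial whose degree grows polynomially in $n$, and the resulting estimates inherit an exponential-in-$n$ loss that no variant of Baker's method has so far managed to eliminate. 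Removing that loss would almost certainly require a genuinely new transcendence technique, and until such a technique appears the $abc$ conjecture seems out of reach by any of the routes above.
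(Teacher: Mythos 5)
The statement you were asked about is the Oesterl\'e--Masser $abc$ conjecture itself, which the paper only \emph{states} (as \cref{conj:abc}) and never proves -- it is used as a reference point for Baker's refinement (\cref{conj:refined}) and for the connection to linear forms in logarithms, not as a result established in the paper. Your decision not to manufacture a proof is therefore the correct call, and your survey is accurate and consistent with the paper's own discussion: in particular, the route you single out as closest in spirit (sharpening Matveev's bound toward a Lang--Waldschmidt-type estimate, i.e.\ \cref{hyp:linearforms}, and invoking Baker's equivalence with a strengthened $abc$) is exactly the chain of implications the paper describes in \cref{sec:abc}, including the Stewart--Yu bound $\log c < \kappa_\varepsilon \rad(abc)^{1/3+\varepsilon}$. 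There is no gap to report, since neither you nor the paper claims a proof of this open conjecture.
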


This simple looking conjecture is one of the most powerful statements in number theory.
In the words of Dorian Goldfeld~\cite{goldfeld}:
\say{The \abc~conjecture is the most important unsolved problem in Diophantine analysis} since
\say{it provides a way of reformulating an infinite number of Diophantine problems---and, if it is true, of solving them.}

Baker~\cite{baker98} proposed a more precise version of the conjecture:
he conjectured that there exist absolute constants $\kappa,\kappa'>0$ (not depending on $\varepsilon$)
such that for all $\varepsilon>0$
\[
 \max(|a|,|b|,|c|) \leq \kappa \varepsilon^{-\kappa' \omega(ab)} \rad(abc)^{1+\varepsilon},
\]
where $\omega(ab)$ denotes the number of distinct prime factors of $ab$.
Moreover, Baker observed that\footnote{Baker attributes this observation to Andrew Granville.}
the minimum of the right-hand side over all $\varepsilon>0$ occurs when $\varepsilon=\omega(ab)/\log N$
and suggested the following $\varepsilon$-free version of the conjecture:

\begin{conjecture}[Baker's refinement of the $abc$ conjecture, {\cite[Conjecture 3]{baker_explicit}}]
\label{conj:refined}
There exist constants $\kappa,\kappa'>0$ such that
for all nonzero coprime integers $a,b,c$ satisfying $a+b+c=0$, we have,
setting $N:=\rad(abc)$,
\begin{equation}
	\label{eq:bakers-refinement}
	 \max(|a|,|b|,|c|) \leq \kappa N \Big(\frac{\log N}{\omega(ab)}\Big)^{\kappa'\omega(ab)}.
\end{equation}
\end{conjecture}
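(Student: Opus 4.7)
My plan is to attack Conjecture~\ref{conj:refined} directly through sharp effective lower bounds for linear forms in logarithms at every place of $\QQ$, following the road opened by Baker and Stewart-Tijdeman-Yu but aiming for a decisive improvement in how those bounds depend on the number of primes $s := \omega(abc)$.

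First, I would reduce the \abc-inequality to a system of linear forms. Assume without loss of generality that $|c| = \max(|a|,|b|,|c|)$, let $S = \{p_1,\dots,p_s\}$ be the primes dividing $abc$, so that $N = \rad(abc) = \prod_i p_i$, and write $a/(-c) = \pm\prod_i p_i^{\alpha_i}$ with $\alpha_i \in \ZZ$ of absolute value $O(\log|c|)$. The Archimedean linear form
\[
  \Lambda_\infty \;=\; \sum_{i=1}^s \alpha_i \log p_i \;=\; \log\bigl|a/(-c)\bigr|
\]
together with its $p$-adic analogs $\Lambda_p$ at each $p \mid c$ record, via the identity $a/(-c) = 1 + b/(-c)$ and the coprimality $\gcd(b,c)=1$, that $a/(-c)$ is close to $1$ at every bad place; hence $|\Lambda_\infty|$ and each $|\Lambda_p|_p$ are small in a way quantitatively controlled by $|b|$ and by the $p$-adic valuations $v_p(c)$.

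Second, I would feed these linear forms into sharp lower bounds and combine the resulting inequalities through the product formula to bound $\log|c|$ from above. For the final output to match the shape $\kappa N (\log N/s)^{\kappa' s}$ appearing in \eqref{eq:bakers-refinement}, the required lower bound must essentially take the polynomial-in-$s$ form
\[
  -\log|\Lambda_\infty| \;\ll\; s \,\log(\log N/s) \cdot \log\max_i |\alpha_i|,
\]
together with the analogous $p$-adic version. This is strictly stronger than Matveev's unconditional bound~\eqref{eq:BW:93}, whose dependence is of the form $b^{O(s)}$ and which only reproduces Stewart-Yu-Tijdeman's exponent $\tfrac13 + \varepsilon$; indeed, the qualitative gap between $\poly(s)$ and $C^s$ in the exponent is exactly what separates \cref{conj:refined} from what is currently unconditionally available.

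Third, and this is where the decisive work lies, I would attempt to prove the polynomial-in-$s$ lower bound by refining the analytic core of Baker's method. The plan is to construct an auxiliary multivariate analytic function $\Phi(z_1,\dots,z_s)$ of controlled height vanishing to high order on a set of lattice multiples of $(\log p_1,\dots,\log p_s)$, apply a Philippon-type zero estimate to derive a nontrivial algebraic relation, and close the argument using Feldman-Bombieri-Vaaler interpolation determinants. The crux is to \emph{replace Matveev's pigeonhole step}, which covers an $s$-dimensional cube of parameters and thereby incurs an unavoidable $C^s$ loss, by a sparser selection that exploits the specific multiplicative structure of the primes $p_i$ --- for instance through an Arakelov-geometric arithmetic B\'ezout input, or through an averaging construction of $\Phi$ that bypasses pointwise interpolation altogether. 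This is the main obstacle: every known construction of the auxiliary function pays the exponential-in-$s$ price, which is why \cref{conj:refined} has resisted attack for three decades, and I expect any successful proof to hinge on a genuinely new analytic or geometric ingredient at exactly this step. Without such an ingredient, the proposed pipeline recovers only the Stewart-Yu-Tijdeman bound rather than \eqref{eq:bakers-refinement}.
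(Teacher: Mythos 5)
You have not proved anything here, and indeed nothing in the paper proves this statement either: it is stated as a \emph{conjecture} (cited from Baker), and it is a well-known open problem in number theory. Your proposal is a research programme, not a proof, and you concede this yourself in the third step. Concretely, the gap is the following. Your second step requires a lower bound for the Archimedean and $p$-adic linear forms in logarithms whose exponent depends polynomially on the number of primes $s=\omega(abc)$ (roughly $-\log|\Lambda| \ll s\log(\log N/s)\cdot\log\max_i|\alpha_i|$, plus $p$-adic analogues). But, as the paper itself recounts (and as Baker observed), such a bound on the product $\Xi$ of linear forms at all places is essentially \emph{equivalent} to \cref{conj:refined}; assuming it in step two is therefore circular. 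Everything that is unconditionally available -- Baker--W\"ustholz, Matveev, van der Poorten -- carries an exponential $C^s$ (equivalently $b^{O(n)}$, as in \eqref{eq:BW:93}) loss coming from the pigeonhole/interpolation step in the construction of the auxiliary function, and feeding those bounds through the product formula recovers only the Stewart--Yu--Tijdeman inequality $\log c < \kappa_\varepsilon\,\rad(abc)^{1/3+\varepsilon}$, not \eqref{eq:bakers-refinement}.

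Your third step, where the actual work would have to happen (replacing the pigeonhole step by a sparser selection, an Arakelov-geometric B\'ezout input, or an averaging construction), is stated as an aspiration with no construction, no zero estimate, and no height bookkeeping; you explicitly acknowledge that without a genuinely new ingredient the pipeline fails to reach the conjectured shape $\kappa N(\log N/\omega(ab))^{\kappa'\omega(ab)}$. So the proposal identifies the right bottleneck but does not close it; as a proof of the stated conjecture it has a fatal gap, namely the entire content of the conjecture. For the purposes of this paper no such proof is needed: \cref{conj:refined} is only ever used as a hypothesis (e.g.\ in \cref{thm:abc-bound} and \cref{cor:abc-bound-log}), never as an established result.
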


Baker's refinement of the $abc$-conjecture is closely related to lower bounds for linear forms in logarithms.
The following was observed in~\cite{baker98}. We include a proof for the sake of completeness.

\begin{theorem}
\label{thm:abc-bound}
Suppose Baker's refinement of the $abc$ conjecture (\cref{conj:refined}) is true.
Then there exists a constant $\kappa>0$ such that for any list of positive integers $v_1,v_2,\dots,v_n$
and any list of integers $e_1,e_2,\dots,e_n$ satisfying $v^e\coloneqq\prod_{i=1}^n v_i^{e_i}\neq 1$, we have
\[
 \log|v^e-1| \geq -\kappa \log(\max_i|e_i|)\sum_{i=1}^n \log v_i.
\]
\end{theorem}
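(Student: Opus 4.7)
The plan is to realize $v^e - 1$ as a fraction of coprime integers and then exploit Baker's refinement of the \abc-conjecture. First, after discarding indices with $v_i = 1$ or $e_i = 0$ (which affect neither $v^e$ nor the right-hand side), assume $v_i \geq 2$ and $e_i \neq 0$. Let $a := \prod_{e_i > 0} v_i^{e_i}$, $b := \prod_{e_i < 0} v_i^{-e_i}$, $d := \gcd(a,b)$, and set $a^* := a/d$, $b^* := b/d$, $c^* := a^* - b^*$. Then $a^*$, $b^*$, $c^*$ are pairwise coprime nonzero integers (the nonzeroness of $c^*$ comes from $v^e \neq 1$) satisfying $a^* = b^* + c^*$, and
\[
|v^e - 1|\, =\, \frac{|a-b|}{b}\, =\, \frac{|c^*|}{b^*}.
\]
Writing $V := \sum_i \log v_i$ and $E := \max_i |e_i|$, the task reduces to proving $\log(b^*/|c^*|) \le \kappa V \log E$.

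Set $P := \prod_i v_i$, $N := \rad(a^* b^* c^*)$, and $\omega := \omega(a^* b^*)$. Since every prime divisor of $a^* b^*$ divides some $v_i$, we have $\rad(a^*b^*) \leq P$, and together with $\rad(c^*) \le |c^*|$ this yields $N \leq P\,|c^*|$. Applying Baker's refinement (\cref{conj:refined}) to the triple $(a^*, -b^*, -c^*)$ gives
\[
b^*\, \le\, \max(a^*,b^*,|c^*|)\, \le\, \kappa_0\, N\, (\log N/\omega)^{\kappa' \omega}.
\]
Dividing by $|c^*| \ge 1$ and taking logarithms,
\[
\log(b^*/|c^*|)\, \le\, \log \kappa_0 + V + \kappa' \omega \log(\log N/\omega).
\]

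The main obstacle is to show that the last term is $O(V \log E)$ rather than the cruder $O(EV)$. Two bookkeeping estimates set the stage: (i) from $|c^*| \leq a + b \leq 2 P^E$ we get $\log N \leq (E+1) V + O(1)$; and (ii) each $v_i \geq 2$ has at most $\log_2 v_i$ distinct prime factors, so $\omega \leq \sum_i \omega(v_i) \leq V/\log 2$. The real-variable function $g(\omega) := \omega \log(\log N/\omega)$ is concave with global maximum $\log N / e$ attained at $\omega = \log N / e$. When $\log N \le e V/\log 2$ (the ``small $E$'' regime), the global maximum $\log N/e$ is itself $O(V)$, which lies in $O(V \log E)$ for any $E \ge 2$. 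Otherwise, the feasible range $[1, V/\log 2]$ is contained strictly to the left of the maximizer, so $g$ is increasing there and evaluating at the right endpoint yields
\[
g(V/\log 2)\, \le\, (V/\log 2)\, \log\bigl((E+1)\log 2 + O(1)\bigr)\, =\, O(V \log E).
\]
The trivial edge case $E = 1$ (where $v^e$ is a ratio of products of distinct primes and $|v^e-1|\ge 1/P$ by elementary means) is absorbed into the constant, and the main statement is understood for $E \ge 2$ with $\log E > 0$. Combining these ingredients gives $\log(b^*/|c^*|) = O(V \log E)$, hence $\log|v^e - 1| \ge -\kappa V \log E$ for a suitable absolute $\kappa$.
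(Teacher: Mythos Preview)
Your proof is correct and follows the same overall strategy as the paper: the same reduction to coprime integers $a^*,b^*,c^*$ with $|v^e-1|=|c^*|/b^*$, the same bound $N\le P|c^*|$, and the same application of Baker's refinement. The one substantive difference lies in how you control the term $\omega\log(\log N/\omega)$. The paper bounds it by invoking the analytic-number-theory fact $\omega(u)\log\log u=O(\log u)$ from Hardy--Wright, whereas you use only the elementary estimate $\omega\le V/\log 2$ and then optimize the concave function $g(\omega)=\omega\log(\log N/\omega)$ directly via a case split on whether the feasible range reaches the maximizer $\log N/e$. Your route is more self-contained and avoids the external citation; the paper's route is shorter once Hardy--Wright is granted. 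Both arrive at the same $O(V\log E)$ bound, and both treat the degenerate case $E=1$ (where the stated inequality with right-hand side $0$ fails as written) by declaring it outside the main regime.
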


\begin{proof}
W.l.o.g.\  $e_i\neq 0$ for all $i$,
$E := \max |e_i| \ge 2$, and $u := v_1v_2\dots v_n \ge 3$.
We write
$v^e= \frac{a'}{b'} = \frac{a}{b}$,
where
\[
   a' := \prod_{e_i> 0} v_i^{e_i}, \quad
   b' := \prod_{e_i<0} v_i^{-e_i} , \quad
 d:= \gcd(a', b'), \quad
 a := a'/d, \quad
b:= b'/d .
\]
We have
$a'b' = \prod v_i^{|e_i|}$, hence
$\rad(a'b') = \rad(u)$.
On the other hand,
$a'b'=d^2 ab$, which gives
$\rad(a'b') = d \rad(ab)$.
Hence $\rad(ab)$ is a divisor of $\rad(u)$,
which implies
$\omega(ab) \le \omega(u)$.

Setting $c\coloneqq a-b$, we have
$(-a)+b+c=0$ and $\gcd(a,b,c)=1$.
Since $a,b$ are positive, we have $\max(a,b,|c|)=\max(a,b)$.
We claim that $|c| \le u^E$,
where $ E := \max |e_i|$.
Indeed,
$c \le a \le a' \le u^E$,
and similarly,
$-c \le b \le b' \le u^E$.

With the above estimates, we obtain
\begin{equation}\label{eq:N-bound}
 N = \rad(abc) = \rad(c) \rad(ab) \le |c| \rad(u) \le |c|\, u \le u^{E +1} , \quad \omega(ab) \le \omega(u) .
\end{equation}
\Cref{conj:refined} implies
\[
 \max(a,b)\leq \kappa N \Big(\frac{\log N}{\omega(ab)}\Big)^{\kappa'\omega(ab)}\leq
  \kappa\, N (\log N)^{\kappa'\omega(ab)}.
\]
Using \cref{eq:N-bound},
we can bound this as
\[
 \max(a,b)\leq \kappa\, |c|\, u \big((E+1)\log u \big)^{2\kappa' \omega(u)}.
\]
Since $|v^e-1|=|c/b|$ we get
\[
 |v^e-1|\geq |c|/\max(a,b) \geq  \kappa^{-1} u^{-1} \big((E+1)\log u \big)^{-2\kappa' \omega(u)}.
\]
Taking logarithms of both sides, we obtain
\[
 \log|v^e-1| \geq -\log\kappa -\log u - 2\kappa' \, \omega(u)\big(\log(E+1) + \log \log u \big) .
\]
It is known that $\omega(u)\log\log u = O(\log u)$, see~\cite[\S 22.10]{hardy-wright:08}.
Using this, we conclude that indeed
$\log|v^e-1| \geq -\kappa''\, \log E \, \log u$
for a suitable constant $\kappa''>0$.
\end{proof}

Given $v_1,v_2,\dots,v_n\in\ZZ_{>0}$ and $e_1,e_2,\dots,e_n\in\ZZ$,
we denote by $\Lambda(v,e)$ the \textit{linear form in logarithms}
\begin{equation}\label{eq:DEF-Lambda}
 \Lambda(v,e)\coloneqq \log v^e = e_1 \log v_1 + e_2 \log v_2 +\dots +e_n \log v_n.
\end{equation}
A bound similar to the one in the previous theorem can also be given in terms of the quantity
$\log|\Lambda(v,e)|$, which relates the $abc$-conjecture to \cref{hyp:linearforms}.

\begin{corollary}\label{cor:abc-bound-log}
Assume Baker's refinement of the $abc$-conjecture is true.
Then there exists a constant $\kappa>0$ such that for all
$v_1,v_2,\dots,v_n\in\ZZ_{>0}$ and $e_1,e_2,\dots,e_n\in\ZZ$ with $\Lambda(v,e)\neq 0$,
we have
\[
 \log|\Lambda(v,e)|\geq -\kappa\log(\max_i |e_i|)\sum_{i=1}^n \log v_i.
\]
In particular, \cref{hyp:linearforms} is true if $v_1,\ldots, v_n\in\QQ_{>0}$.
\end{corollary}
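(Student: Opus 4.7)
The plan is to derive the bound on $\log|\Lambda(v,e)|$ directly from the bound on $\log|v^e-1|$ already established in \cref{thm:abc-bound}. The key observation is that, since each $v_i\in\ZZ_{>0}$, the linear form $\Lambda(v,e)$ is real and $v^e-1 = e^{\Lambda(v,e)}-1$, so the two quantities are comparable up to a universal factor whenever $\Lambda(v,e)$ is small.

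I would split into two cases. If $|\Lambda(v,e)|\geq 1/2$, then $\log|\Lambda(v,e)|\geq -\log 2$ is bounded below by an absolute constant, which trivially dominates the target bound after adjusting $\kappa$ (and noting that, because $\Lambda(v,e)\neq 0$, at least one $v_i\geq 2$, so $\sum_i\log v_i\geq\log 2$). If $|\Lambda(v,e)|<1/2$, the elementary inequality $|e^x-1|\leq e^{|x|}|x|\leq 2|x|$ for $|x|\leq 1/2$ gives
\[
 |v^e-1| \,\leq\, 2\,|\Lambda(v,e)|,
\]
so $\log|\Lambda(v,e)| \geq \log|v^e-1|-\log 2$. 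Combining this with \cref{thm:abc-bound} and absorbing the additive $-\log 2$ into a slightly larger constant yields the stated inequality.

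For the ``in particular'' claim, I would reduce the rational case to the integer case by clearing denominators. Writing $v_i=p_i/q_i$ with $p_i,q_i\in\ZZ_{>0}$, we have
\[
 \Lambda(v,e) \,=\, \sum_{i=1}^n e_i\log p_i \,-\, \sum_{i=1}^n e_i\log q_i,
\]
which is a linear form in $2n$ logarithms of positive integers with exponents $\pm e_i$. Applying the integer bound above and using $p_i,q_i\leq 2^{O(b)}$ and $\max_i|e_i|\leq 2^B$, the right-hand side becomes $-O(n\, b\, B) = -\poly(n,b,B)$, which is exactly the lower bound that \cref{hyp:linearforms} demands in the restricted setting $v_1,\dots,v_n\in\QQ_{>0}$.

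The only genuinely non-routine part is the case split on $|\Lambda(v,e)|$; the integer-to-rational reduction and the bookkeeping of bit-lengths are straightforward. The real content of the corollary is already captured by \cref{thm:abc-bound}, which is where Baker's refinement of the \abc-conjecture is actually used; once the lower bound on $|v^e-1|$ is in hand, the passage to $|\Lambda(v,e)|$ is an exercise in comparing $x$ and $e^x-1$.
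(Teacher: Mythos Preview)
Your proof is correct and follows essentially the same approach as the paper's: reduce to \cref{thm:abc-bound} via the elementary comparison between $x$ and $e^x-1$, and handle rationals by splitting numerators and denominators. The only cosmetic difference is that you split cases on $|\Lambda(v,e)|$ while the paper splits on $|v^e-1|$, and you invoke $|e^x-1|\le 2|x|$ for $|x|\le\tfrac12$ whereas the paper uses the equivalent two-sided bound $\tfrac12|\log x|\le|x-1|\le 2|\log x|$ for $|x-1|\le\tfrac12$; your version is arguably cleaner since it directly gives the needed direction $|\Lambda|\ge\tfrac12|v^e-1|$.
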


\begin{proof}
First note that the function $h(x)\coloneqq \log(x)/(x-1)$ is monotonically decreasing
and satisfies
$\frac12 \le h(x) \le \frac32$
on the interval $[\frac12,\frac32]$.
This implies for $|x-1| \le \frac12$,
\begin{equation}\label{eq:Lambda-inequ}
    \frac{1}{2} |\log(x)| \leq |x-1| \leq 2 |\log(x)|.
\end{equation}
For showing the stated bound with positive integers $v_i$,
we may assume without loss of generality that $|v^e-1|\leq \frac{1}{2}$.
Then \cref{eq:Lambda-inequ} implies
$\frac{1}{2}|\Lambda(v,e)| \, \leq \, |v^e-1|$
and \cref{thm:abc-bound} gives the desired bound.

The assertion for positive rationals $v_i = \frac{p_i}{q_i}$ follows from
the observation that
$\Lambda(v,e) = \Lambda(\tilde v,\tilde e)$, where $\tilde v$ is defined by concatenating~$p$ and~$q$, and $\tilde e$ by concatenating~$e$ and~$-e$.
\end{proof}

\begin{remark}
Conversely, lower bounds similar to the one in \cref{cor:abc-bound-log},
together with their $p$-adic versions imply the $abc$-conjecture.
For a prime $p$, let $|x|_p$ denote the $p$-adic absolute value of~$x\in\QQ$,
e.g., $|p^\nu y|_p = p^{-\nu}$ for $y\in\ZZ$ not divisible by~$p$ and $\nu\in\ZZ$.
Suppose the $v_i$ are nonzero integers, $e_i\in\ZZ$ and write
$v_1^{e_1} v_2^{e_2}\dots v_n^{e_n} = a/b$ for coprime integers $a,b$.
If $p$ is a prime dividing $c\coloneqq a-b$,
then $|b|_p=1$ and $|c/b|_p = |c|_p <1$.
Hence the $p$-adic logarithm $\log(a/b)=\log(1+c/b)$ exists 
and $|\log(1+c/b)|_p = |c|_p$.
In analogy with \cref{eq:DEF-Lambda}, 
we define
\[
 |\Lambda(v,e)|_p := |\log(1+c/b)|_p = |c|_p.
\]
Baker considered the product of linear forms in logarithms
\[
 \Xi \coloneqq \min( 1 , |\Lambda(v,e)| ) \, \prod_{p\text{ prime}} \min( 1, p |\Lambda(v,e)|_p) ,
\]
where the product is over all primes.
He proved that a slightly stronger lower bound for $\Xi$ than the one of \cref{cor:abc-bound-log}
is equivalent to his refinement of the $abc$-conjecture \cref{eq:bakers-refinement};
see~\cite[Section~5]{baker98}.
Moreover, any lower bound for $\Xi$ implies a version of the $abc$-conjecture:
Stewart and Yu~\cite{yu-stewart},
refining an earlier work by Stewart and Tijdemann~\cite{stewart_oesterle-masser_1986},
used this approach to prove the inequality
\[
 \log c < \kappa_\varepsilon\, N^{\frac{1}{3}+\varepsilon},
\]
using the Baker-W\"{u}stholz bound for the linear forms in logarithms~\cite{wustholz1993}
and its $p$-adic versions by van der Poorten~\cite{vdpoorten}.
We refer to~\cite[Chapter~3.7]{baker_wustholz_2008}
(see also~\cite{granvilletucker}) for a summary of these results.
\end{remark}

In \cref{thm:abc-bound} and \cref{cor:abc-bound-log}
we assumed that the vectors $v_i$ to be positive rationals.
More generally, for our purposes, we would like to allow the $v_i$ to
be nonzero Gaussian rationals.
In the following, assume that $\log:\CC^\times\rightarrow\CC$ denotes the principal branch of the logarithm,
so $\Im(\log z)\in (-\pi,\pi]$, e.g., $\log(-1)=\ii\pi$.
In analogy, we define  the linear form in logarithms 
\begin{equation}
\label{eq:form-in-log-def}
	\Lambda(v,e)\coloneqq e_1 \log v_1 + e_2\log v_2 +\dots +e_n\log v_n.
\end{equation}
How small can $|\Lambda(v,e)|$ be, provided it is nonzero?

In the introduction, we formulated \cref{hyp:linearforms} and noted that it follows
from famous conjectures in number theory, such as
Waldschmidt's conjectures~\cite[Conjecture~14.25]{waldschmidt-2},~\cite[Conjecture~4.14]{waldschmidt-1}
or the Lang-Waldschmidt conjecture
for Gaussian rationals~\cite[Introduction to Chapters X and XI]{lang-elliptic-curves}.

We also note that there are analogues of the $abc$-conjecture in different number fields~\cite{bombieri_gubler_2006}.
Recall that $\ZZ[\ii]$ is a unique factorization domain,
so for an element $x\in\ZZ[\ii]$
there exist distinct prime elements $p_1,p_2,\dots,p_k$,
exponents $\mu_1,\mu_2,\dots,\mu_k\in\ZZ_{>0}$,
and a unit $u\in\ZZ[\ii]^\times$ such that
$x = u \prod_{i=1}^k p_i^{\mu_i}$.
So we can define the radical of $x$ as $\rad(x)\coloneqq \prod_{i=1}^m p_i$.
We may conjecture the Gaussian integer analogue of Baker's $abc$-conjecture\footnote{We have not seen this conjecture in the literature.}:
There exist constants $\kappa,\kappa'>0$ such that for all nonzero
$a,b,c\in\ZZ[\ii]$ with $a+b+c=0$ and $\gcd(a,b,c)=1$,
the radical $N\coloneqq\rad(abc)$ satisfies
\begin{equation}\label{eq:gaussian-baker-abc}
	\max (|a|, |b|, |c|) \leq \kappa |N| \, \left(\frac{\log |N|}{\omega(ab)}\right)^{\kappa' \omega(ab)}.
\end{equation}
With essentially the same arguments as in the proof of \cref{thm:abc-bound}
and \cref{cor:abc-bound-log},
one shows that \cref{eq:gaussian-baker-abc}
implies \cref{hyp:linearforms}.

\Cref{hyp:linearforms} lower bounds the Euclidean distance of $\Lambda(v,e)$ to $0$.
For our purposes, we need to lower bound the distance
$\Delta( \Lambda(v,e) , 0 )\coloneqq \dist( \Lambda(v,e), 2\pi\ii\ZZ )$
in the $\Delta$-metric, see \cref{eq:delta}.
This follows easily.

\begin{proposition}
\label{cor:hyp-in-delta-metric}
Assume \cref{hyp:linearforms}.
Suppose $v_1,\dots,v_n\in\QQ(\ii)^\times$
and $e_1,\dots,e_n\in\ZZ$.
If $\Delta( \Lambda(v,e),0)$ is nonzero, then
$\Delta( \Lambda(v,e) , 0 ) \geq \exp(-\poly(n,B,b))$,
where $B$ (resp.~$b$) is a bound for the bit-length of $e_i$ (resp.~$v_i$).
\end{proposition}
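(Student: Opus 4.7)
The plan is to reduce the $\Delta$-bound to the Euclidean bound that \cref{hyp:linearforms} directly provides, by absorbing the optimal lattice offset $2\pi\ii k$ into a slightly enlarged linear form in logarithms. The key observation is that, with $\log$ the principal branch, $\log(-1)=\ii\pi$, so for any $k\in\ZZ$ we have the identity
\[
 2\pi\ii k \;=\; (-2k)\,\log(-1),
\]
which expresses any element of $2\pi\ii\ZZ$ as an integer linear combination of a logarithm of a Gaussian rational.

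First I would pick $k\in\ZZ$ realizing $\Delta(\Lambda(v,e),0)=|\Lambda(v,e)-2\pi\ii k|$, and bound $|k|$. Since $|\Im\log v_j|\le\pi$, we have $|\Im\Lambda(v,e)|\le\pi\sum_j|e_j|\le\pi n\,2^{B}$, hence $|k|\le n\,2^{B}+1$, so $\langle 2k\rangle = O(B+\log n)$. Next, define the extended data $\tilde v\coloneqq(v_1,\dots,v_n,-1)\in(\QQ(\ii)^\times)^{n+1}$ and $\tilde e\coloneqq(e_1,\dots,e_n,-2k)\in\ZZ^{n+1}$. Then
\[
 \Lambda(\tilde v,\tilde e) \;=\; \Lambda(v,e)-2k\log(-1) \;=\; \Lambda(v,e)-2\pi\ii k,
\]
so that $|\Lambda(\tilde v,\tilde e)|=\Delta(\Lambda(v,e),0)$.

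Now I would check the hypotheses of \cref{hyp:linearforms} for $(\tilde v,\tilde e)$. The bit-length of the new components of $\tilde v$ and $\tilde e$ is $O(1)$ and $O(B+\log n)$ respectively, so the associated parameters $(n+1,\tilde B,\tilde b)$ remain $\poly(n,B,b)$. The assumption $\Delta(\Lambda(v,e),0)\ne 0$ exactly says $\Lambda(v,e)\notin 2\pi\ii\ZZ$, which gives $\Lambda(\tilde v,\tilde e)\ne 0$. Applying \cref{hyp:linearforms} to $(\tilde v,\tilde e)$ then yields
\[
 \Delta(\Lambda(v,e),0)\;=\;|\Lambda(\tilde v,\tilde e)|\;\ge\;\exp(-\poly(n+1,\tilde B,\tilde b))\;=\;\exp(-\poly(n,B,b)),
\]
which is the claim.

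No step poses a real obstacle; the only thing to verify carefully is the elementary bound $|k|\le n\,2^{B}+1$ (so that appending $-2k$ to the exponent vector does not blow up the bit-length), together with the observation that $\log(-1)$ is a logarithm of a Gaussian rational so that the extended tuple legitimately feeds into \cref{hyp:linearforms}.
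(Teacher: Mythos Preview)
Your proof is correct and follows essentially the same route as the paper: absorb the optimal offset in $2\pi\ii\ZZ$ into an enlarged linear form by appending $-1$ to the $v$'s and the appropriate integer to the $e$'s, then invoke \cref{hyp:linearforms}. Your bound $|k|\le n\,2^{B}+1$ is in fact slightly sharper than the paper's $|d|=O(n\,2^{B}b)$, but this makes no difference at the $\poly(n,B,b)$ level.
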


\begin{proof}
We have $\dist(\Lambda(v,e),2\pi\ii\ZZ) = |\Lambda(v,e)- d \ii\pi|$
for some $d\in\ZZ$ satisfying $|d| \leq O(n\, 2^B \, b)$.
Note that
$\Lambda(-1,-d) = -d \Log(-1) = -d\ii\pi$.
Therefore
$\Lambda(v,e)- d\ii\pi = \Lambda(\tilde{v},\tilde{e})$,
where $\tilde{v}$ is obtained from $v$ by appending $-1$
and $\tilde{e}$ is obtained from $e$ by appending $-d$.
Now we apply \cref{hyp:linearforms}.
\end{proof}

\subsection{Equivalence of Separation Hypotheses with the Number-Theoretic Hypothesis~\ref{hyp:linearforms}}
\label{sec:proof-of-abc-and-sep}
We prove here \cref{thm:abc-and-sep} based on \cref{cor:delta-ld-linear-forms},
which relates the logarithmic distance between the orbits $\OO_v$ and $\OO_w$
to $\Delta(H\Log v,H\Log w)$.
The essential observation is that $\Delta(H\Log v,H\Log w)$
is determined by linear forms in logarithms.

\begin{proof}[Proof of \cref{thm:abc-and-sep}]
We will only prove the first equivalence; the second is shown in the same way,
but using \cref{cor:K-orbits-approximation}
instead of \cref{cor:delta-ld-linear-forms}.

($\Rightarrow$)
We first assume \cref{hyp:linearforms} and deduce \cref{hyp:sep}.
Suppose $v,w\in(\QQ(\ii)^\times)^n$ have bit-lengths bounded by $b$
and that the bit-length of the weight matrix~$M$ is bounded by $B$.
By  \cref{prop:allaboutH}, there is a matrix $H$ of rational invariants
of bit-length $\poly(d,n,B)$.
We denote by $\alpha_1^T,\dots,\alpha_k^T$ the rows of $H\in\ZZ^{k\times n}$.

Assume that $\distLD(\OO_v,\OO_w)\neq 0$.
By \cref{cor:delta-ld-linear-forms},
$\Delta(H\Log v,H\Log w)=\Delta(H(\Log v-\Log w), 0)$ is nonzero.
Hence at least one of the one-dimensional distances
$\Delta(\alpha_j^T (\Log v-\Log w),0)$ 
must be nonzero.
By \cref{cor:hyp-in-delta-metric} it
is lower bounded by $\exp(-\poly(d,n,B,b))$.
\Cref{cor:delta-ld-linear-forms} also implies
\begin{equation}\label{eq:form-lower-bound}
	\distLD(\OO_v,\OO_w) \geq \frac{1}{\sigma_{\max}(H)} \Delta(H\Log v, H\Log w)
    \ge \frac{1}{\sigma_{\max}(H)} \Delta(\alpha_j^T (\Log v-\Log w),0) .
\end{equation}
Moreover, by \cref{lem:singular},
$\sigma^{-1}_{\max}(H)$ is lower bounded by $\exp(-\poly(d,n,B))$.
This
finishes the proof of the first implication.

($\Leftarrow$) Now we assume \cref{hyp:sep} and deduce \cref{hyp:linearforms}.
Suppose $v_i\in(\QQ(\ii)^\times)^n$ have bit-length bounded by~$b$,
and $e_i\in\ZZ$ have bit-lengths bounded by $B$ such that $\Lambda(v,e)\neq 0$.
We assume without loss of generality that $\gcd(e_1,e_2,\dots,e_n)=1$.
Put $e\coloneqq(e_1,e_2,\dots,e_n)\in\ZZ^n$ and consider
the rational hyperplane $e^{\perp}$ of $\RR^n$.
Note $\Lambda(v,e)\neq 0$ expresses that $\Log v \not\in e^{\perp}$.
There is a lattice basis $\alpha_1,\alpha_2,\dots,\alpha_{n-1}\in\ZZ^n$ of $e^{\perp}$
having bit-length $\poly(n,B)$;
see \cref{thm:basis-and-equality}~(\ref{it:basis 1}).
Let $M\in\ZZ^{d\times n}$ denote the matrix with rows $\alpha_i$,
where $d\coloneqq n-1$.

We let $(\CC^\times)^d$ act on $\CC^n$ via the weight matrix $M$.
The vector $e$ generates the lattice of invariant Laurent monomials
(recall $\gcd(e_1,e_2,\dots,e_n)=1$).
In this case the matrix $H\in\ZZ^{1\times n}$ of rational invariants is just $H=e^T$.

Consider $v\coloneqq (v_1,v_2,\dots,v_n)$ and
$w\coloneqq (1,\ldots,1) \in\CC^n$, so $\Log w =0$.
Since $H$ has only one row, we have $\sigma_{\max}(H)=\sigma_{\min}(H)=\Vert e\Vert$,
which implies that the inequality from \cref{cor:delta-ld-linear-forms}
is actually an equality:
\begin{equation}\label{eq:Delta-delta}
  \Delta(e^T\Log v, 0) = \Vert e\Vert\,\distLD(\OO_v,\OO_w).
\end{equation}
If $\Delta(e^T\Log v, 0) \ne 0$, then by
\cref{hyp:sep},
$\distLD(\OO_v,\OO_w) \ge \exp(-\poly(n,b,B))$.
Hence
$|\Lambda(v,e)| \geq \Delta(e^T\Log v,0)\ge \exp(-\poly(n,b,B))$.

When $\Delta(e^T\Log v, 0) =0$, we have
$0 \ne \Lambda(v,e) = e^T\Log v \in 2\pi\ZZ$ so $|\Lambda(v,e)| \ge 2\pi$.
\end{proof}

\begin{remark}
Matveev's lower bound~\eqref{eq:BW:93}
combined with inequality \cref{eq:form-lower-bound}
shows that unconditionally
\[
   \sep_T(d,n,B,b)\geq \exp( - B^{O(1)}\, b^{O(n)}) .
\]
The same lower bound holds for $\sep_K(d,n,B,b)$.
\end{remark}

\section{Approximations of orbit distances}\label{sec:main-approx}

In this section, we provide polynomial time approximation algorithms
for the orbit distance approximation problems.
We solve \cref{prob:k-approx-dist} in polynomial time and prove
\cref{thm:intro-approximation-algorithm}.
The main theorem of this section is \cref{thm:gap-algorithms-main}.

\subsection{Approximate orbit distance problems}\label{se:ADP}
So far, we discussed four {\em group actions}:
the actions of $T=(\CC^\times)^d$ or $K=(S^1)^d$ on the spaces
$V=\CC^n$ and $\CC^n/2\pi\ii\ZZ^n$ given by a weight matrix $M\in\ZZ^{d\times n}$.
We also defined a {\em metric} $\delta$ on the resulting spaces of orbits.
Let us list all these cases:
\[
\begin{split}
&(T,\distLD) \qquad \qquad \distLD(\OO_v,\OO_w) \coloneqq \Delta(\Log(\OO_v),\Log(\OO_w))\\
  \vspace{1.5mm}\\
  &(K,\dist) \qquad\qquad \dist(\KK_v,\KK_w) \coloneqq \inf_{k\in K}\Vert k\cdot v-w\Vert\\
 \vspace{1.5mm}\\
  &(T,\Delta) \qquad\qquad \Delta( T\ast \eta, T\ast\zeta ) \coloneqq \inf_{t\in T}\Delta(t\ast\eta, \zeta)\\
 \vspace{1.5mm}\\
 &(K,\Delta) \qquad\qquad \Delta(K\ast\eta,K\ast\zeta)\coloneqq \inf_{k\in K}\Delta(k\ast\eta, \zeta),
\end{split}
\]
see \cref{def:ast} for the $\ast$-action and \cref{eq:delta} for the $\Delta$-metric.
We define an orbit distance approximation problem in each setting.

\begin{definition}
\label{def:gaprop}
In each of above case, abbreviated $(G,\delta)$,
the \textit{orbit distance approximation problem} $\textsc{ROP}(G,\delta)_\gamma$
with approximation factor~$\gamma\geq1$ is the following:
on input a weight matrix $M\in\ZZ^{d\times n}$ and
$v,w \in (\QQ(\ii)^\times)^n$, resp.\ $\eta,\zeta\in\QQ^n+2\pi\ii\QQ^n$,
compute a number $D\in\QQ_{\geq 0}$ such that
\[
  \delta( G\cdot v, G\cdot w ) \, \leq \, D \, \leq  \, \gamma \, \delta(G\cdot v, G\cdot w).
\]
We think of $\gamma$ as a function of the input bit-length, determined by the usual
parameters $d,n,B,b$.
\end{definition}

We can now precisely state the main algorithmic result of this section.
Recall from \cref{eq:DrR} the regions $D_{r, R}$ defined for $0<r\leq R$.
Note that \cref{thm:intro-approximation-algorithm} follows from
the second and third part of the next result.

\begin{theorem}
\label{thm:gap-algorithms-main}
There are approximation factors $\gamma_i=\gamma_i(d,n,B)$
bounded $\exp(\poly(d,n,B))$ such that:
\begin{alphaenumerate}
\item\label{it:gap 3}
$\textsc{ROP}(G,\delta)_{\gamma_1}$ admits a polynomial time algorithm
in the cases $(T, \Delta)$ and $(K, \Delta)$.

\item\label{it:gap 1}
$\textsc{ROP}(T,\distLD)_{\gamma_2}$
admits a polynomial time algorithm,
if \cref{hyp:sep} holds.

\item\label{it:gap 2}
$\textsc{ROP}(K,\dist)_{\gamma}$, when restricted to inputs in $D_{r,R}$,
admits a polynomial time algorithm with
approximation factor $\gamma= \frac{R}{r}\gamma_3(d,n,B)$,
if \cref{hyp:compact-separation} holds.

\end{alphaenumerate}
\end{theorem}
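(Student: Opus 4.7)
The plan is to address part (\ref{it:gap 3}) first --- the approximation in the quotient metric $\Delta$ --- and then reduce parts (\ref{it:gap 1}) and (\ref{it:gap 2}) to it via numerical logarithm computation combined with the respective separation hypotheses.

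For part (\ref{it:gap 3}), I exploit the orthogonal decomposition given by \cref{prop:delta-orbit-distance} and \cref{prop:Kdelta-orbit-distance}: writing $\eta-\zeta=(\rho-\tau)+2\pi\ii(\theta-\phi)$ with $\rho-\tau,\theta-\phi\in\QQ^n$, the ``real part'' contribution $\dist^2(\rho-\tau,U)$ (respectively $\|\rho-\tau\|^2$ for $K$) is exactly computable by projecting onto $U^\perp$ using standard linear algebra. The ``imaginary part'' contribution $4\pi^2\dist^2((\theta-\phi)+U,\ZZ^n)$ is the CVP-flavoured quantity which I handle via the matrix $H\in\ZZ^{k\times n}$ of rational invariants, computable in polynomial time by \cref{thm:basis-and-equality}\ref{it:basis 1}. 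The crucial properties $H(\ZZ^n)=\ZZ^k$ from \cref{prop:allaboutH} and $\ker H\cap\RR^n=U$, combined with \cref{lem:sing-value-inequality} (applied to the real matrix $H$ with $A=\{\theta-\phi\}$ and $B=\ZZ^n$), yield
\begin{equation*}
\sigma_{\min}(H)\,\dist((\theta-\phi)+U,\ZZ^n)\,\leq\,\dist(H(\theta-\phi),\ZZ^k)\,\leq\,\sigma_{\max}(H)\,\dist((\theta-\phi)+U,\ZZ^n),
\end{equation*}
and the middle quantity is trivially computable by rounding $H(\theta-\phi)\in\QQ^k$ coordinatewise. By \cref{lem:singular}, $\kappa(H)=\sigma_{\max}(H)/\sigma_{\min}(H)\leq\exp(\poly(d,n,B))$, and a $2$-approximation of $\sigma_{\min}(H)$ is computable in polynomial time. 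Combining both contributions with the appropriate rescaling and numerically approximating the resulting square root, I obtain $D\in\QQ_{\geq 0}$ with $\Delta\leq D\leq O(\kappa(H))\,\Delta$, giving $\gamma_1=\exp(\poly(d,n,B))$.

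For part (\ref{it:gap 1}), given $v,w\in(\QQ(\ii)^\times)^n$ I first decide $\OO_v=\OO_w$ by \cref{thm:basis-and-equality}\ref{it:basis 2} (outputting $0$ if equal); otherwise \cref{hyp:sep} yields $\distLD(\OO_v,\OO_w)\geq e^{-p(d,n,b,B)}$ for some polynomial $p$. I then use \cref{lem:log} to compute $\tilde\eta\approx\Log v$ and $\tilde\zeta\approx\Log w$ to accuracy $\varepsilon=\tfrac13 e^{-p(d,n,b,B)}$ in polynomial time (since $\log\varepsilon^{-1}$ is polynomial), and by the triangle inequality for $\Delta$ together with \cref{eq:delta-ld},
$|\Delta(T\ast\tilde\eta,T\ast\tilde\zeta)-\distLD(\OO_v,\OO_w)|\leq 2\varepsilon\leq \tfrac23\,\distLD(\OO_v,\OO_w)$,
so the two quantities agree within a constant factor. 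Applying part (\ref{it:gap 3}) to $\tilde\eta,\tilde\zeta$ then yields $\gamma_2=O(\gamma_1)=\exp(\poly(d,n,B))$. Part (\ref{it:gap 2}) is analogous: \cref{thm:logorbits-equivalence} relates $\dist(\KK_v,\KK_w)$ to $\Delta(K\ast\Log v,K\ast\Log w)$ within the factor $R\pi/(2r)$; equality $\KK_v=\KK_w$ is decided in polynomial time via \cref{prop:k-invariants} (combining the $\OO_v=\OO_w$ test with coordinatewise comparisons of $|v_i|^2=|w_i|^2\in\QQ$); otherwise \cref{hyp:compact-separation} supplies the needed lower bound, which transfers to $\Delta(K\ast\Log v,K\ast\Log w)$ through the same inequality. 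Approximating the logarithms by \cref{lem:log} and applying the $(K,\Delta)$-version of part (\ref{it:gap 3}) yields a $\gamma_3(d,n,B)=\exp(\poly(d,n,B))$-approximation of $\Delta$; multiplying by the conversion factor $R\pi/(2r)$ produces the advertised $\frac{R}{r}\gamma_3(d,n,B)$-approximation.

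The main conceptual step --- and the only place where the exponential blow-up $\gamma_i=\exp(\poly(d,n,B))$ is incurred --- is the use of the rational invariants $H$ in part (\ref{it:gap 3}): the property $H(\ZZ^n)=\ZZ^k$ converts the hard CVP-like problem $\dist(\cdot+U,\ZZ^n)$ into trivial coordinatewise rounding on $\ZZ^k$, at the unavoidable cost of the condition-number factor $\kappa(H)$. This is consistent with the NP-hardness of \cref{prob:g-approx-dist-LD} at near-polynomial approximation factors shown in \cref{subsec:intro hardness}. The role of the separation hypotheses in (\ref{it:gap 1}) and (\ref{it:gap 2}) is purely to certify that polynomially many bits of precision suffice for the transcendental $\Log$ computations; once this is granted, the chain of reductions fits inside polynomial time.
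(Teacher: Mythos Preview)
Your proposal is correct and follows essentially the same approach as the paper. The paper organizes the argument by introducing an intermediate problem \textsc{SLDP} (approximating $\dist(t+U,\ZZ^n)$) and proving (i) a polynomial-time algorithm for $\textsc{SLDP}_\gamma$ via the matrix~$H$ of rational invariants, $H(\ZZ^n)=\ZZ^k$, and the singular-value bounds, and (ii) reductions $\textsc{ROP}(G,\Delta)\to\textsc{SLDP}$ and then $\textsc{ROP}(T,\distLD),\textsc{ROP}(K,\dist)\to\textsc{ROP}(G,\Delta)$ via the separation hypotheses and \cref{lem:log}; you perform the same computations inline without naming \textsc{SLDP}, but the ingredients (the real/imaginary decomposition of \cref{prop:delta-orbit-distance,prop:Kdelta-orbit-distance}, the rounding trick enabled by $H(\ZZ^n)=\ZZ^k$, the condition-number loss $\kappa(H)$, and the log-approximation controlled by $\sep_T$ resp.\ $\sep_K$) are identical.
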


One may wonder why the first part of \cref{thm:gap-algorithms-main}
holds unconditionally. This is explained by the next result, which
gives an unconditional separation of orbits with respect to
the $\Delta$-metric.

\begin{proposition}
\label{prop:logarithmic-separation}
If $\eta,\zeta\in\QQ^n+2\pi\ii\QQ^n$ with
$T\ast\eta\neq T\ast\zeta$, then
\[
 \Delta(T\ast\eta, T\ast\zeta) \ge \exp(-\poly(d,n,b,B)) .
\]
The analogous statement holds for the action of $K$.
\end{proposition}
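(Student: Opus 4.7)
My plan is to combine \cref{cor:delta-ld-linear-forms} with elementary bit-length bounds. Using the matrix $H\in\ZZ^{k\times n}$ of rational invariants provided by \cref{prop:allaboutH}, whose entries have bit-length $\poly(d,n,B)$, the two-sided estimate yields
\[
  \Delta(T\ast\eta,T\ast\zeta)\,\ge\,\frac{1}{\sigma_{\max}(H)}\,\Delta(H\eta,H\zeta),
\]
and by \cref{lem:singular} the prefactor is at least $\exp(-\poly(d,n,B))$. Hence the task reduces to lower bounding $\Delta(H\eta,H\zeta)=\Delta(H(\eta-\zeta),0)$.

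Next, I would write $H(\eta-\zeta)=\alpha+2\pi\ii\beta$ with $\alpha,\beta\in\QQ^k$. A routine bit-length calculation shows that the entries of $\alpha$ and $\beta$ have numerators and denominators of bit-length $\poly(d,n,B,b)$. By the definition of $\Delta$,
\[
  \Delta(H(\eta-\zeta),0)=\sqrt{\|\alpha\|^2+4\pi^2\dist(\beta,\ZZ^k)^2}.
\]
The hypothesis $T\ast\eta\neq T\ast\zeta$, combined with the companion lower bound in \cref{cor:delta-ld-linear-forms} (which uses $\sigma_{\min}(H)>0$), forces the right-hand side to be strictly positive. Thus some $\alpha_i$ is a nonzero rational or some $\beta_i$ is a non-integer rational; in either case the expression is at least $\exp(-\poly(d,n,b,B))$ by standard denominator bounds, which is what is needed.

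For the $K$-action I would first split $\eta=\rho+2\pi\ii\theta$ and $\zeta=\tau+2\pi\ii\phi$ with $\rho,\tau,\theta,\phi\in\QQ^n$, and then use the identity from \cref{prop:Kdelta-orbit-distance},
\[
  \Delta^2(K\ast\eta,K\ast\zeta)=\|\rho-\tau\|^2+4\pi^2\dist^2(\theta-\phi+U,\ZZ^n).
\]
If $\rho\neq\tau$, the bit-length of $\rho,\tau$ immediately gives $\|\rho-\tau\|\ge\exp(-\poly(n,b))$. Otherwise $K\ast\eta\neq K\ast\zeta$ forces $K\ast(2\pi\ii\theta)\neq K\ast(2\pi\ii\phi)$, which is exactly the purely imaginary case to which the $K$-part of \cref{cor:delta-ld-linear-forms} applies, and the $T$-argument then carries over verbatim.

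There is no genuine obstacle in this plan: everything reduces to a bit-length calculation once we have passed to the invariants via $H$. The only subtlety is the identification of $T\ast\eta=T\ast\zeta$ with the vanishing of $\Delta(H\eta,H\zeta)$, which is immediate from the two-sided estimate of \cref{cor:delta-ld-linear-forms}.
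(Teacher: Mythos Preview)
Your proposal is correct and follows essentially the same approach as the paper: both arguments reduce to \cref{cor:delta-ld-linear-forms}, the singular value bound of \cref{lem:singular}, and a denominator bound for a nonzero rational vector $H(\theta-\phi)$ (or $H(\rho-\tau)$). The only cosmetic difference is that the paper first splits $\Delta(T\ast\eta,T\ast\zeta)$ into its real and imaginary contributions via \cref{prop:delta-orbit-distance} and treats $\dist(\rho-\tau,U)$ separately by Gram--Schmidt, whereas you apply the full $T$-estimate of \cref{cor:delta-ld-linear-forms} first and then split $H(\eta-\zeta)$; your version is marginally more uniform.
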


\begin{proof}
We write $\eta=\rho+2\pi\ii\theta,\; \zeta=\tau+2\pi\ii\phi$
and recall from  \cref{prop:delta-orbit-distance} that
\[
 \Delta^2(T\ast\eta,T\ast\zeta) =
 \dist^2( \rho-\tau, U) + 4\pi^2 \Delta^2(K\ast \ii\theta, K\ast \ii\phi) .
\]
The first contribution $\dist^2( \rho-\tau, U)$,
if nonzero, is easily lower bounded 
using the Gram-Schmidt orthogonalization.
So we may assume $K\ast \ii\theta\neq K\ast \ii\phi$.
In this case, \cref{cor:delta-ld-linear-forms}
implies
$\dist(H(\theta - \phi),\, \ZZ^k)\ne 0$ and
\[
  \Delta(K\ast\ii\theta,K\ast\ii\phi)
     \geq \frac{1}{\sigma_{\max}(H)}\, \dist(H(\theta - \phi),\, \ZZ^k) ,
\]
where $H$ is a matrix of rational invariants whose bit-length is bounded by \cref{prop:allaboutH}.
We can upper bound $\sigma_{\max}(H)$
by \cref{prop:allaboutH} and \cref{lem:singular},
which provides the desired lower bound for the $T$-action.
The distance of $H(\theta-\phi)$ to $\ZZ^k$ is also at most exponentially small since it is lower bounded by the reciprocal of the largest denominator of the entries of the rational vector $H(\theta-\phi)$.
The proof for the $K$-action is analogous.
\end{proof}

We also show that our algorithms for approximating the distance between orbits
can be modified to produce a group element witnessing the $\gamma$-proximity of orbits.
To avoid repetition, we only state this for the case $(T,\distLD)$, even though
this can also be achieved in the other settings in a similar way.

\begin{theorem}\label{thm:rop-witness}
Assume \cref{hyp:sep}.
On input $M,v,w$,
one can compute in polynomial time a vector $x\in\CC^d$ that satisfies
\begin{equation}
\label{eq:rop-witness}
  \distLD( e^x\cdot v, w ) \, \leq \, \gamma \, \distLD(\OO_v,\OO_w)
\end{equation}
if $\OO_v\neq \OO_w$. If $\OO_v=\OO_w$,
then the algorithm correctly identifies this case instead of returning~$x$.
The approximation parameter is exponentially bounded in the
bit-length of the input.
\end{theorem}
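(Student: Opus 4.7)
The plan is to augment the algorithm of \cref{thm:gap-algorithms-main}(\ref{it:gap 1}) for $\textsc{ROP}(T, \distLD)_\gamma$ so that it produces a group element witnessing the approximate distance. First, invoke the orbit equality algorithm of~\cite{torus} (cf.\ \cref{prop:t-invariants} together with \cref{thm:basis-and-equality}) to decide whether $\OO_v = \OO_w$ in polynomial time; if the orbits coincide, halt and report this. Otherwise, compute rational approximations $\tilde\eta \approx \Log v$ and $\tilde\zeta \approx \Log w$ via \cref{lem:log}, and write $\tilde\eta - \tilde\zeta = \rho + 2\pi \ii\, \theta$ with $\rho, \theta \in \QQ^n$.

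Seek the witness in the form $x = y + 2\pi \ii\, z$ with $y, z \in \QQ^d$. A direct computation (or \cref{prop:delta-orbit-distance}) shows
\[
 \distLD(e^x \cdot v, w)^2 \,=\, \|\rho + M^T y\|^2 + 4\pi^2 \dist^2(\theta + M^T z, \ZZ^n) ,
\]
while $\distLD(\OO_v, \OO_w)^2 = \dist^2(\rho, U) + 4\pi^2 \dist^2(\theta + U, \ZZ^n)$, where $U \subset \RR^n$ denotes the real row space of $M$. It therefore suffices to approximate each contribution individually within an exponential factor.

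For the real contribution, Gram-Schmidt the columns of $M^T$ using exact rational arithmetic, compute the orthogonal projection $P_U(\rho) \in \QQ^n$, and solve the consistent linear system $M^T y = -P_U(\rho)$ via Gaussian elimination; this attains the optimum $\|\rho + M^T y\| = \dist(\rho, U)$ exactly. For the imaginary contribution, \cref{cor:ort-proj-dist} gives
\[
 \inf_{z \in \RR^d,\, \alpha \in \ZZ^n} \|\theta + M^T z - \alpha\| \,=\, \dist\bigl(P_{U^\perp}(\theta),\, P_{U^\perp}(\ZZ^n)\bigr),
\]
and $P_{U^\perp}(\ZZ^n)$ is a lattice by \cref{lem:ort-proj-lattice}. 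Solve this CVP instance approximately (within factor $2^{O(n)}$) using LLL together with Babai's nearest-plane algorithm; the output is an explicit $\alpha \in \ZZ^n$. Recover $z \in \QQ^d$ by solving $M^T z = P_U(\alpha - \theta)$ via Gaussian elimination, so that $\theta + M^T z - \alpha = P_{U^\perp}(\theta - \alpha) \in U^\perp$.

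The main obstacle is controlling the propagation of errors from the rational approximations of the logarithms. A direct accounting would require precision on the order of the true distance $\distLD(\OO_v, \OO_w)$, which may be exponentially small. Crucially, \cref{hyp:sep} guarantees $\distLD(\OO_v, \OO_w) \geq \exp(-\poly(d, n, b, B))$, so polynomially many bits of precision in $\tilde\eta$ and $\tilde\zeta$ suffice to preserve an exponential approximation factor throughout both the least-squares computation and the CVP solver. Assembling $x = y + 2\pi \ii\, z \in \QQ(\ii)^d$ of polynomial bit-length then yields \cref{eq:rop-witness} with $\gamma = \exp(\poly(d, n, B))$, as required.
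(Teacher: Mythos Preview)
Your proposal is correct and follows essentially the same approach as the paper: test orbit equality first, then split into real and imaginary parts, solve the real part exactly by orthogonal projection onto $U$, solve the imaginary part via an approximate CVP/SLDP witness, and invoke \cref{hyp:sep} to bound the precision needed in the logarithm approximations. The only cosmetic difference is that the paper obtains the SLDP witness via \cref{thm:sldp-is-in-p} (rounding $Ht$ and pulling back through the matrix $H$ of rational invariants) whereas you project onto $U^\perp$ and run LLL/Babai; both yield an exponential-factor witness in polynomial time, so the arguments are interchangeable.
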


In the next section, we introduce the distance computation problem \textsc{SLDP}
and exhibit a polynomial time algorithm for it.
\Cref{thm:gap-algorithms-main,thm:rop-witness} are then proved in \cref{se:pf-gap-algs} by
reducing the  orbit distance approximation problems to \textsc{SLDP}.

\subsection{The subspace-to-cubic-lattice distance problem}\label{sec:sldp-and-rop}
The orbit distance approximation problems are closely related to the following problem.

\begin{definition}
\label{def:gapsldp}
The {\em subspace-to-cubic-lattice distance approximation problem} $\textsc{SLDP}_\gamma$
with approximation factor~$\gamma$
is the task of computing on input
\begin{itemize}
\item a target vector $t\in\QQ^n$,
\item a subspace $U\subset\RR^n$, spanned by given linearly independent input vectors
   $u_1,\dots,u_{n-k}\in\ZZ^n$,
\end{itemize}
a number $D\in\QQ_{\geq 0}$ such that
\[
 \dist(t+U,\ZZ^n)\, \leq \, D \, \leq \, \gamma \, \dist(t+U,\ZZ^n).
\]
We think of $\gamma$ a function of the input bit-length.
\end{definition}

\begin{remark}
The problem $\textsc{SLDP}_\gamma$ is easy if $U=0$.
Indeed, $\dist^2(t,\ZZ^n)$ can be exactly and efficiently computed by rounding
the coordinates of $t$ to integers.
At the other extreme, the problem is trivial if $U=\RR^n$.
However, we will show in \cref{sec:hardness}, that $\textsc{SLDP}_\gamma$ is
\textsc{NP}-hard for constant $\gamma$ (or more generally, an almost polynomial~$\gamma$),
by providing a reduction from the \textit{closest vector problem} to it.
\end{remark}

\begin{proposition}\label{thm:sldp-is-in-p}
There is an approximation factor~$\gamma$ bounded exponentially in the input bit-length such that $\textsc{SLDP}_\gamma$ admits a polynomial time algorithm.
This algorithm can be modified to compute
on input $(t,U)$ a witness of proximity
$(u,\alpha)\in U\times\ZZ^n$ such that
\begin{equation*}
  \dist(t+u,\alpha) \, \leq \, \gamma\, \dist(t+U,\ZZ^n).
\end{equation*}
\end{proposition}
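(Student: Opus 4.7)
The plan is to reduce $\textsc{SLDP}_\gamma$ to an approximate closest vector problem in the projected lattice $P(\ZZ^n)$, where $P\colon\RR^n\to U^\perp$ denotes the orthogonal projection onto $U^\perp$, and then apply the Lenstra-Lenstra-Lov\'asz basis reduction followed by Babai's nearest plane algorithm (both of which run in polynomial time and together yield a $2^{O(n)}$-approximation to CVP).

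First, from the rational generators $u_1,\dots,u_{n-k}$ of $U$, I would compute a rational matrix representation of $P$ in polynomial time via Gram-Schmidt orthogonalization. Since $\ker P = U$, \cref{cor:ort-proj-dist} gives the identity
\[
 \dist(t+U,\ZZ^n) \;=\; \dist(P(t),\,P(\ZZ^n)),
\]
so it suffices to approximate the right-hand side. By \cref{lem:ort-proj-lattice}, $P(\ZZ^n)\subset U^\perp$ is a lattice of rank $k$, and it is generated (as a lattice) by $P(e_1),\dots,P(e_n)$, where $e_j$ is the $j$-th standard basis vector of $\RR^n$. Clearing denominators, I can compute a basis $b_1,\dots,b_k$ for $P(\ZZ^n)$ in polynomial time via the Hermite Normal Form; at the same time, the associated unimodular transformation yields integer vectors $\alpha_1,\dots,\alpha_k\in\ZZ^n$ with $P(\alpha_j)=b_j$.

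Second, apply LLL reduction to $b_1,\dots,b_k$ and then Babai's nearest plane algorithm with target $P(t)$. This produces integer coefficients $c_1,\dots,c_k\in\ZZ$ such that the vector $v \coloneqq \sum_{j} c_j b_j \in P(\ZZ^n)$ satisfies $\norm{P(t)-v}\,\le\, 2^{O(k)} \dist(P(t),P(\ZZ^n))$, and I would output $D\coloneqq\norm{P(t)-v}$. Then $\dist(t+U,\ZZ^n)\le D\le\gamma\dist(t+U,\ZZ^n)$ for $\gamma=2^{O(n)}$, which is exponentially bounded in the input bit-length as required. For the witness, set $\alpha \coloneqq \sum_j c_j \alpha_j \in \ZZ^n$, so that $P(\alpha)=v$, and $u \coloneqq -(I-P)(t-\alpha)=P(t)-v-(t-\alpha)$, which lies in $U=\ran(I-P)$. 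A direct computation gives $t+u-\alpha = P(t)-v$, hence $\dist(t+u,\alpha) = \norm{P(t)-v} \le \gamma\dist(t+U,\ZZ^n)$.

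The main technical point to verify is polynomial bit-length control along the entire pipeline, namely for the Gram-Schmidt step, the Hermite Normal Form of the generator matrix of $NP(\ZZ^n)$ (where $N$ clears denominators), and the LLL/Babai invocation. All these are standard and well documented in the algorithmic-lattice literature, so the main intellectual content of the argument is the reduction via \cref{cor:ort-proj-dist} together with the observation that $P(\ZZ^n)$ is a rational lattice of polynomially bounded description; everything else is a black-box application of LLL and Babai's nearest plane algorithm.
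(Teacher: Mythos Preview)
Your proof is correct and essentially reproduces the alternative LLL-based approach that the paper itself acknowledges in the remark following this proposition (worked out later as \cref{cor:sldp-in-p-for-two-to-n}). The paper's own proof here, however, takes a different and more elementary route: instead of the orthogonal projection $P$ and LLL/Babai, it computes an integer matrix $H\in\ZZ^{k\times n}$ whose rows form a lattice basis of $U^\perp\cap\ZZ^n$ (the matrix of rational invariants, cf.~\cref{prop:allaboutH}), so that $\ker H=U$ and, crucially, $H(\ZZ^n)=\ZZ^k$. Applying \cref{lem:sing-value-inequality} with $A=\{t\}$ and $B=\ZZ^n$ then gives
\[
\dist(t+U,\ZZ^n)\;\le\;\frac{\dist(Ht,\ZZ^k)}{\sigma_{\min}(H)}\;\le\;\kappa(H)\,\dist(t+U,\ZZ^n),
\]
and $\dist(Ht,\ZZ^k)$ is computed \emph{exactly} by rounding each coordinate of $Ht$ to the nearest integer---no lattice reduction needed. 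The witness $\alpha$ is then any preimage in $\ZZ^n$ of this rounded vector under $H$. Thus the paper trades the LLL black box for the condition number $\kappa(H)$ as the approximation factor: your route yields the sharper $\gamma=2^{O(n)}$ (independent of the bit-length of the generators of $U$), whereas the paper's gives $\gamma=\kappa(H)\le n^n 2^{Bn}$ but avoids invoking any lattice-reduction machinery whatsoever.
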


\begin{proof}
For given $M\in\ZZ^{(n-k)\times n}$ we compute
$H\in\ZZ^{k\times n}$ such that $U : =\ran M^T =\ker H$
in polynomial time,
see \cref{prop:allaboutH}.
When applying \cref{lem:sing-value-inequality}
to $A =\{t\}$ and $B =\ZZ^n$, we get
\begin{equation}
\label{eq:nice-ineq}
\dist(t+U,\ZZ^n)\, \leq\, D_1 := \frac{\dist(Ht, \ZZ^k)}{\sigma_{\min}(H)}\,
 \leq\, \frac{\sigma_{\max}(H)}{\sigma_{\min}(H)} \, \dist(t+U,\ZZ^n) .
\end{equation}
Hence, $D_1$
approximates $\dist(t+U,\ZZ^n)$ within a factor of
$\sigma_{\max}(H)/\sigma_{\min}(H)$.
By \cref{lem:singular} we can compute a rational approximation $D$ of $D_1$
within a factor of~$2$ in polynomial time.
Moreover $\sigma_{\max}(H)/\sigma_{\min}(H)$
is exponentially upper bounded in the input bit-length.

We show now how to compute the witness.
By rounding the entries of $Ht$ to nearest integers, we can compute an integral vector
$\beta\in\ZZ^k$ such that $\dist(Ht,\ZZ^k)=\dist(Ht,\beta)$.
Recall that $H(\ZZ^n) = \ZZ^k$.
We can further efficiently compute $\alpha\in\ZZ^n$ such that $H\alpha=\beta$.
By projecting $t-\alpha$ orthogonally onto $U$,
we can compute in polynomial time $u\in U$ such that
$\dist(t+u,\alpha)=\dist(t+U,\alpha)$.
\end{proof}

\begin{remark}
In \cref{sec:sldp-to-cvp}, we analyse an alternative algorithm
(based on the LLL-algorithm),
which solves $\textsc{SLDP}_\gamma$ for $\gamma =2^{O(n)}$
in polynomial time (\cref{cor:sldp-in-p-for-two-to-n}). In comparison, the algorithm given in Proposition~\ref{thm:sldp-is-in-p} provides an approximation factor of $\gamma=\kappa(H)\coloneqq\sigma_{\max}(H)/\sigma_{\min}(H)$ which is bounded by $2^{n(B+\log_2 n)}$ in the worst case.
\end{remark}

\subsection{Proof of Theorem~\ref{thm:gap-algorithms-main}}
\label{se:pf-gap-algs}
The first part of this theorem follows from
the following reduction, jointly with \cref{thm:sldp-is-in-p}.

\begin{lemma}\label{le:ROP-SLP}
Both $\textsc{ROP}(T,\Delta)_{2\gamma}$ and
$\textsc{ROP}(K,\Delta)_{2\gamma}$
reduce to $\textsc{SLDP}_{\gamma}$ in polynomial time,
for any $\gamma \ge 1$.
Hereby the ambient dimension does not change.
\end{lemma}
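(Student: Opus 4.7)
My plan is to use the orbit-distance decomposition formulas of \cref{prop:delta-orbit-distance,prop:Kdelta-orbit-distance} to split the computation of $\Delta$ into a ``real part'', handled by exact linear algebra, and an ``imaginary part'' that is (up to the scalar $4\pi^2$) exactly an instance of $\textsc{SLDP}$. Given input $M$ and $\eta=\rho+2\pi\ii\theta,\,\zeta=\tau+2\pi\ii\phi$ with $\rho,\tau,\theta,\phi\in\QQ^n$, set $U\coloneqq\ran M^T\subseteq\RR^n$. In the $T$-case
\[
\Delta^2(T\ast\eta,T\ast\zeta) \;=\; \dist^2(\rho-\tau,U) \,+\, 4\pi^2\,\dist^2(\theta-\phi+U,\ZZ^n),
\]
and in the $K$-case the first summand is replaced by $\|\rho-\tau\|^2$.

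First, I would compute the first summand exactly as a rational number~$a$: the norm $\|\rho-\tau\|^2$ is trivially rational, while $\dist^2(\rho-\tau,U)$ is obtained by orthogonally projecting the rational vector $\rho-\tau$ onto the rational subspace $U$, using e.g.\ Gram--Schmidt on a $\ZZ$-basis of the row space of~$M$ extracted in polynomial time. Second, I would feed the $\textsc{SLDP}_\gamma$ solver the target $t\coloneqq\theta-\phi\in\QQ^n$ together with the subspace $U$ presented by linearly independent integer rows of $M$; crucially the ambient dimension stays equal to~$n$. This returns $D\in\QQ_{\ge 0}$ with $\dist(t+U,\ZZ^n)\le D\le \gamma\,\dist(t+U,\ZZ^n)$.

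Third, I would combine the pieces. Compute a rational upper approximation $c$ of $4\pi^2$ with $4\pi^2\le c\le(1+\varepsilon_1)\,4\pi^2$ and set $R\coloneqq a+cD^2$; then output $\tilde D\in\QQ_{\ge 0}$ satisfying $\sqrt R\le\tilde D\le(1+\varepsilon_2)\sqrt R$, both approximations being routine in polynomial time. Writing $b_0\coloneqq\dist^2(\theta-\phi+U,\ZZ^n)$ and $\Delta^2=a+4\pi^2 b_0$, the inequalities $b_0\le D^2$ and $4\pi^2\le c$ give $R\ge\Delta^2$, while $D^2\le\gamma^2 b_0$ and $c\le(1+\varepsilon_1)4\pi^2$ give $R\le(1+\varepsilon_1)\gamma^2\Delta^2$; hence $\Delta\le\tilde D\le (1+\varepsilon_2)\sqrt{1+\varepsilon_1}\,\gamma\,\Delta$. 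Choosing $\varepsilon_1,\varepsilon_2$ so that $(1+\varepsilon_2)\sqrt{1+\varepsilon_1}\le 2$ (which is trivial) delivers the approximation factor $2\gamma$ as required.

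I do not foresee a substantial obstacle. The structural work --- that $\Delta$ decomposes cleanly and that its nontrivial part lives against the \emph{standard} lattice $\ZZ^n$ in the \emph{same} ambient $\RR^n$ as the subspace $U$ --- is entirely provided by the decomposition formulas of \cref{sec:log}. The only care needed is numerical bookkeeping: approximating the irrational constant $\pi^2$ and the final square root from above, which is the sole reason the approximation factor in the conclusion is $2\gamma$ rather than $\gamma$.
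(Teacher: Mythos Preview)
Your proposal is correct and follows essentially the same approach as the paper's proof: decompose $\Delta^2$ via \cref{prop:delta-orbit-distance,prop:Kdelta-orbit-distance}, compute the real-part summand exactly, hand the imaginary part $(\theta-\phi,U)$ to the $\textsc{SLDP}_\gamma$ oracle in the same ambient $\RR^n$, and absorb the rounding of $\pi^2$ and the square root into the extra factor~$2$. The paper is slightly terser (it folds the $\pi^2$ and square-root approximation into a single ``compute $D''$ with $D'\le D''\le 2D'$'' step), but the argument is the same.
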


\begin{proof}
We will only prove the reduction for $G=K$ since the $G=T$ case is analogous.
Suppose that $M,\eta,\zeta$ are given as input, write
$\eta=\rho+2\pi\ii\theta, \zeta=\tau+2\pi\ii\phi$,
set $t\coloneqq \theta-\phi$, and
$U\coloneqq \ran M^T$.
We can compute an integral basis of $U$ in polynomial time.
 \cref{eq:Delta=dist} and \cref{prop:Kdelta-orbit-distance} imply
\[
\Delta^2(K\ast \eta,K\ast\zeta)
   = \Vert\, \rho-\tau\,\Vert^2 \, + 4\pi^2 \, \dist^2(\,t + U, \, \ZZ^n) .
\]
If $\dist(t+U,\ZZ^n)\leq D\leq \gamma \dist(t+U,\ZZ^n)$, then
\[
 \Delta^2(K\ast\eta,K\ast\zeta)\leq \Vert \rho-\tau\Vert^2 + 4\pi^2 D^2
   \leq \gamma^2 \Vert\rho-\tau\Vert^2 + 4\pi^2 D^2 \leq \gamma^2 \Delta^2(K\ast\eta,K\ast\zeta),
\]
hence $D'\coloneqq \sqrt{\Vert \rho-\tau\Vert^2 + 4\pi^2 D^2}$ is a $\gamma$-approximate solution of
$\textsc{ROP}(K,\Delta)$.
We compute a rational number $D''$ such that $D'\leq D''\leq 2 D'$. Then
$\Delta(K\ast\eta,K\ast\zeta) \leq D''\leq 2 \gamma \, \Delta(K\ast\eta,K\ast\zeta)$,
\end{proof}

A reduction in the reverse direction will be needed later for the hardness proof,
see \cref{thm:sldp-is-hard-and-reductions}.
The next lemma studies that happens with \textsc{ROP}, when the
metric $\distLD$, resp.~$\dist$, are replaced by $\Delta$.
This lemma proves the second and third part of \cref{thm:gap-algorithms-main}
by reducing it to its first part.

\begin{lemma}
\label{prop:delta-to-Delta-reduction}
Let $\gamma \ge 1$ be any function of the inputs, let $0 <r \le R$
and put $\gamma_1\coloneqq \frac{2R}{r}\,\gamma$.
\begin{alphaenumerate}
\item\label{it:dD 1}
There is a polynomial time reduction from
$\textsc{ROP}(T,\distLD)_{2\gamma}$ to $\textsc{ROP}(T,\Delta)_{\gamma}$,
provided  \cref{hyp:sep} holds.

\item\label{it:dD 2}
There is a polynomial time reduction from
$\textsc{ROP}(K,\dist)_{\gamma_1}$, when restricted to inputs in $D_{r,R}$,
to $\textsc{ROP}(K,\Delta)_{\gamma}$,
provided \cref{hyp:compact-separation} holds.
\end{alphaenumerate}
\end{lemma}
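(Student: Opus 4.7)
The strategy for both parts is a Turing reduction built from three ingredients: a polynomial-time orbit-equality test (to handle the boundary case of equal orbits), a rational approximation of the componentwise logarithms, and a single oracle call. The identity $\distLD(\OO_v,\OO_w) = \Delta(T\ast\Log v, T\ast\Log w)$ from \cref{eq:delta-ld} makes part~(a) essentially a precision-controlled redirect to the oracle, while the sandwich
\[
 \tfrac{2r}{\pi}\,\Delta(K\ast\Log v,K\ast\Log w) \, \le\, \dist(\KK_v,\KK_w) \,\le\, R\,\Delta(K\ast\Log v,K\ast\Log w)
\]
of \cref{thm:logorbits-equivalence} handles part~(b), producing precisely the factor $R/r$ hidden inside $\gamma_1$.

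Concretely, for part~(a) I first test $\OO_v = \OO_w$ using \cref{thm:basis-and-equality} and output $D=0$ in that case. Otherwise, \cref{hyp:sep} ensures $\distLD(\OO_v,\OO_w) \ge \sep_T(d,n,B,b) \ge \exp(-\poly(d,n,B,b))$, so by \cref{lem:log} I can compute $\tilde\eta,\tilde\zeta \in \QQ^n+2\pi\ii\QQ^n$ approximating $\Log v,\Log w$ with Euclidean error $\varepsilon = 2^{-\poly}$ small enough that $(2\gamma+2)\varepsilon \le \gamma\,\sep_T$. The triangle inequality for $\Delta$ yields $|\Delta(T\ast\tilde\eta,T\ast\tilde\zeta) - \distLD(\OO_v,\OO_w)| \le 2\varepsilon$; hence, if the oracle returns $D$ on input $(M,\tilde\eta,\tilde\zeta)$, a rational rounding of $D' \coloneqq D + 2\varepsilon$ satisfies $\distLD(\OO_v,\OO_w) \le D' \le \gamma\,\distLD(\OO_v,\OO_w) + (2\gamma+2)\varepsilon \le 2\gamma\,\distLD(\OO_v,\OO_w)$. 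Part~(b) is analogous: I check $\KK_v = \KK_w$ via \cref{prop:k-invariants} combined with the norm tests $|v_i|=|w_i|$, approximate the logs to precision $\varepsilon$ chosen small enough relative to $\sep_K/R$ (possible under \cref{hyp:compact-separation}), and return $R(D+2\varepsilon)$. The lower bound in \cref{thm:logorbits-equivalence} gives $R(D+2\varepsilon) \ge \dist(\KK_v,\KK_w)$, while the upper bound combined with the oracle's $\gamma$-slack yields $R(D+2\varepsilon) \le \tfrac{\pi R\gamma}{2r}\dist(\KK_v,\KK_w) + 2R(\gamma+1)\varepsilon$, which is at most $\gamma_1\,\dist(\KK_v,\KK_w) = \tfrac{2R\gamma}{r}\dist(\KK_v,\KK_w)$ once $\varepsilon$ is sufficiently small, since $\pi/2 < 2$.

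The main obstacle is the precision analysis: $\varepsilon$ must simultaneously beat the oracle's slack factor $\gamma$ and, in part~(b), the distortion $R/r$ between $\dist$ and $\Delta$, yet $\log(1/\varepsilon)$ must remain polynomial in the input size. The two separation hypotheses are precisely what guarantees that any nonzero orbit distance is at least $\exp(-\poly)$, so an $\varepsilon = 2^{-\poly}$ satisfying all the required inequalities can be chosen explicitly, keeping the whole reduction in polynomial time.
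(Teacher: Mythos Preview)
Your proposal is correct and follows essentially the same approach as the paper: test orbit equality first, then approximate $\Log v,\Log w$ with precision governed by the separation parameter, make one oracle call, and post-process. The only cosmetic difference is that the paper tracks the perturbation multiplicatively (bounding the ratio $\tilde\Delta/\Delta$ and outputting $\frac{9R}{8}D$), whereas you track it additively (outputting $R(D+2\varepsilon)$); both bookkeeping schemes arrive at the same approximation factor $\gamma_1=\frac{2R}{r}\gamma$ via \cref{thm:logorbits-equivalence} and the slack $\pi/2<2$.
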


\begin{proof}
We only give the proof for part two since the proof of part one
is analogous. The only difference between the two cases is that in the second we need to use \cref{thm:logorbits-equivalence} which gives an equivalence between the metrics $\dist$ and $\distLD$ within an $O(R/r)$-factor, and in the first we do not need it. 

Given $v,w\in D_{r,R}$, we test in polynomial time whether
$\KK_v=\KK_w$ (see \cref{se:invar-th-torus}). 
If~$\KK_v=\KK_w$ we return~$D=0$.\footnote{To get a Karp reduction, we can simply return the $\textsc{ROP}(K,\Delta)$ instance $(\eta,\zeta)=(0,0)$.}
Otherwise, the reduction just consists of replacing the exact values
$\eta:=\Log v$ and $\zeta:=\Log w$
by approximations
\begin{equation}\label{eq:eta-zeta-approx}
  \Vert \tilde{\eta} - \eta\Vert< \kappa/2, \quad \Vert \tilde{\zeta} -\zeta\Vert < \kappa/2 ,
\end{equation}
where the accuracy is prescribed by the separation parameter.
For $\kappa\coloneqq \frac{1}{9R}\sep_K(d,n,B,b)$,
this can be achieved in polynomial time
assuming \cref{hyp:compact-separation}.
For convenience, we denote the orbit distances of the exact vectors
and the approximations computed by
$$
 \Delta:= \Delta(K\ast\eta, K\ast\zeta), \quad
 \tilde{\Delta} := \Delta(K\ast\tilde{\eta}, K\ast\tilde{\zeta}) .
$$
It suffices to prove that
$\tilde{\Delta} \leq D\leq \gamma \tilde{\Delta}$
implies
\[
  \dist(\KK_v,\KK_w) \, \leq \, \frac{9RD}8 \, \leq \, \gamma_1\, \dist(\KK_v,\KK_w),
\]
where $\gamma_1= 2\,R\,\gamma/r$.
We now prove this implication.

By \cref{thm:logorbits-equivalence},
we have 
\begin{equation}\label{eq:logorbits-equiv}
  \frac{2r}{\pi} \, \Delta \leq \dist(\KK_v,\KK_w)
    \leq R \, \Delta ,
\end{equation}
which implies $\Delta \ge 9 \kappa$ by our choice of $\kappa$.
With \cref{eq:eta-zeta-approx} and
the triangle inequality, we get
\begin{equation*} 
	| \tilde{\Delta} - \Delta | < \kappa, \quad \frac{\kappa}{\Delta} \le \frac{1}{9} .
\end{equation*}
By dividing through $\Delta$,
this implies
$\frac{8}{9} <\frac{\tilde{\Delta}}{\Delta} < \frac{10}{9}$.
From the assumption
$\tilde{\Delta} \leq D\leq \gamma \tilde{\Delta}$,
we infer
\[
  \Delta \, \leq \, \frac{9}{8} \, D \, \leq \, \frac{10}{8}\, \gamma \, \Delta .
\]
\cref{eq:logorbits-equiv} implies
\[
 \dist(\KK_v,\KK_w)\,  \leq \, \frac{ 9\, R  \, D}{8} \, \leq \,
 \frac{10\, \pi \, R \, \gamma}{16 \, r} \,\dist(\KK_v,\KK_w) \, <\, \frac{2 R\, \gamma}{r} \dist(\KK_v,\KK_w),
\]
which completes the proof.
\end{proof}

We can prove \cref{thm:rop-witness} in a similar way.

\begin{proof}[Proof of \cref{thm:rop-witness}]
On input $v,w\in (\CC^\times)^n$,
we first test in polynomial time whether $\OO_v=\OO_w$
(see \cref{se:invar-th-torus}). 
If this is the case, we return $D=0$.
Otherwise, as in \cref{eq:eta-zeta-approx},
we compute approximations
$\tilde{\eta}$ and $\tilde{\zeta}$ of the exact values
$\eta:=\Log v$ and $\zeta:=\Log w$
with accuracy $\kappa=\sep_T(d,n,B,b)/2$:
\cref{hyp:sep}
guarantees that this can be done in
polynomial time.
We have for all $x\in\CC^d$
\[
 \Delta(e^x\ast\eta,\zeta)\geq \Delta(T\ast\eta,T\ast\zeta)=\distLD(\OO_v,\OO_w)\geq\sep_T(d,n,B,b)=2\kappa.
\]
Hence, using the triangle inequality, we get  for all $x\in\CC^d$
\begin{equation*}
  \Big| \Delta(e^x\ast \tilde{\eta}, \tilde{\zeta})  - \Delta(e^x\ast \eta, \zeta) \Big| < \kappa \leq \frac12\Delta(e^x\ast \eta, \zeta) ,
\end{equation*}
therefore
\begin{equation}\label{eq:required-ineqs}
 \frac{1}{2}\, \Delta(e^x\ast \eta, \zeta)  \ <\  \Delta(e^x\ast\tilde{\eta},\tilde{\zeta})\   <\  \frac{3}{2}\, \Delta(e^x\ast \eta, \zeta) .
\end{equation}
Write $\tilde{\eta}=\tilde{\rho}+2\pi\ii\tilde{\theta}$ and $\tilde{\zeta}=\tilde{\tau}+2\pi\ii\tilde{\phi}$ and
recall from \cref{prop:delta-orbit-distance} that
$$
 \Delta^2(T\ast\tilde{\eta},T\ast\tilde{\zeta})
      =\dist^2(\tilde{\rho} -\tilde{\tau}, U)
     + 4\pi^2\dist^2 (\tilde{\theta} -\tilde{\phi} + U,\ZZ^n) .
$$
For the first summand,
we compute the orthogonal projection $u_1$ of $\tilde{\rho} -\tilde{\tau}$ onto $U$.
For the second summand, we use \cref{thm:sldp-is-in-p}
to compute a witness $(u_2,\alpha)$ such that
\begin{equation}\label{eq:dada}
  \dist(\tilde{\theta} -\tilde{\phi} + u_2, \alpha) \,
   \leq \, \gamma\, \dist(\tilde{\theta} -\tilde{\phi}+ U, \ZZ^n)
\end{equation}
for some $\gamma$ satisfying $\gamma=\exp(\poly(B,n))$.
We then compute $y,z\in\RR^d$ such that $M^Ty=u_1$ and $M^Tz=u_2$
and compute the group element $x\coloneqq y + 2\pi\ii z$.
Then we have by \cref{eq:delta}
\[
\begin{split}
 \Delta^2(e^x\ast\tilde{\eta},\tilde{\zeta}) &= \dist^2(\tilde{\rho} -\tilde{\tau},M^Ty)
  + 4\pi^2 \dist^2( \tilde{\theta} -\tilde{\phi} + u_2,\alpha)\\
   &\leq \dist^2(\tilde{\rho} -\tilde{\tau}, U)
    + 4\pi^2\, \gamma^2 \, \dist^2( \tilde\theta -\tilde\phi + U, \ZZ^n)\\
   &\leq \gamma^2 \Delta^2(T\ast\tilde{\eta},T\ast\tilde{\zeta}) ,
\end{split}
\]
where we used \cref{eq:dada} for the second inequality.
Combining with \cref{eq:required-ineqs}, we obtain
\[
 \Delta(e^x\ast\eta,\zeta) \, \leq \, 2 \Delta(e^x\ast\tilde{\eta},\tilde{\zeta})
  \, \leq \, 2\gamma \Delta(T\ast\eta,T\ast\zeta) ,
\]
which finishes the proof.
\end{proof}

\begin{remark}\label{rem:poly-time-iff-sep-hyp}
The algorithm underlying the reduction
in the proof of \cref{prop:delta-to-Delta-reduction}
runs in time polynomial in the input bit-length and $\log\sep_T^{-1}(d,n,B,b)$.
Thus it is a polynomial time algorithm if and only if the separation hypothesis is true.
\end{remark}

\subsection{Reductions from \texorpdfstring{$\textsc{SLDP}$}{SLDP} to \texorpdfstring{$\textsc{ROP}$}{ROP}}
The following reductions are needed in the next section.
While the proofs are straightforward perturbation arguments,
dealing with the relative errors requires some care,
so that we write down the detailed arguments at least in one case.

\begin{lemma}\label{thm:sldp-is-hard-and-reductions}
Let $(G,\delta)$ denote any of the four cases in \cref{def:gaprop}.
Then there is a polynomial-time reduction from
$\textsc{SLDP}_{2\gamma}$ to $\textsc{ROP}(G,\delta)_\gamma$,
for any $\gamma \ge 1$.
Hereby the ambient dimension is preserved.
\end{lemma}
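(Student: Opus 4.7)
The plan is to construct, for each of the four cases $(G,\delta)$, an ROP instance on the same ambient dimension~$n$ whose orbit distance equals $2\pi S$, where $S\coloneqq\dist(t+U,\ZZ^n)$ is the SLDP value. In every case I take the weight matrix $M\in\ZZ^{(n-k)\times n}$ to have rows $u_1^T,\dots,u_{n-k}^T$, so that $U=\ran M^T$, and I choose the ``second'' argument of the ROP distance to be trivial.

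For the two quotient-metric cases $(T,\Delta)$ and $(K,\Delta)$ the reduction is direct. I set $\eta\coloneqq 2\pi\ii t\in\QQ^n+2\pi\ii\QQ^n$ and $\zeta\coloneqq 0$. Substituting $\rho=\tau=0$, $\theta=t$, $\phi=0$ into \cref{prop:delta-orbit-distance} and \cref{prop:Kdelta-orbit-distance} yields $\Delta(G\ast\eta,G\ast\zeta)=2\pi S$ exactly. Given the ROP output $D$ satisfying $2\pi S\le D\le 2\pi\gamma S$, I return $D'\coloneqq aD$ for any rational $a\in[1/(2\pi),\, 1/\pi]$ (for instance $a=1/6$), which gives $S\le D'\le 2\gamma S$.

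The remaining two cases need a perturbation argument, because $v^{\ast}\coloneqq(e^{2\pi\ii t_j})_j$ is generally not Gaussian rational. First I test whether $S=0$ via the Smith-normal-form procedure of \cref{re:equ-lattice-dist}, returning $D'=0$ in that case; otherwise the denominators of $t$ and of $P(\ZZ^n)$, where $P$ is the orthogonal projection onto $U^\perp$, give an explicit a-priori lower bound $S\ge 2^{-\poly}$. I then use \cref{lem:exp} to compute a Gaussian-rational approximation $\tilde v$ with $\|\tilde v-v^{\ast}\|\le\varepsilon$ for some $\varepsilon=2^{-\poly}$ chosen much smaller than $S$, and I set $w\coloneqq(1,\dots,1)$. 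In the $(T,\distLD)$ case a triangle-inequality argument in $\CC^n/2\pi\ii\ZZ^n$, combined with $|\log(1+\rho)|\le 2|\rho|$ for $|\rho|\le 1/2$, yields $|\distLD(\OO_{\tilde v},\OO_w)-2\pi S|\le 2\sqrt n\,\varepsilon$. In the $(K,\dist)$ case the unit-circle identity $|e^{\ii\alpha}-1|=2|\sin(\alpha/2)|$ together with the two-sided bound $2|\alpha|/\pi\le|e^{\ii\alpha}-1|\le|\alpha|$ on $[-\pi,\pi]$ gives $4S\le\dist(\KK_{v^{\ast}},\KK_w)\le 2\pi S$, while the $K$-invariance of the Euclidean norm gives $|\dist(\KK_{\tilde v},\KK_w)-\dist(\KK_{v^{\ast}},\KK_w)|\le\sqrt n\,\varepsilon$.

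In both perturbed cases I return $D'\coloneqq a(D+\varepsilon')$ where $\varepsilon'\in\QQ$ is a conservative upper bound on the perturbation and $a\in\QQ$ is a rational approximation from above of $1/(2\pi)$ in case $(T,\distLD)$ or of $1/4$ in case $(K,\dist)$. A short calculation shows $S\le D'\le 2\gamma S$ as soon as $\sqrt n\,\varepsilon$ is a small constant fraction of $S$ (e.g.\ $\sqrt n\,\varepsilon\le(4-\pi)S/4$), which is achievable in polynomial time thanks to the bound $S\ge 2^{-\poly}$. The chief difficulty, as the authors flag, is not in any single step but in threading the additive errors through the division by $2\pi$ (or by $4$) so that the approximation factor degrades from $\gamma$ to at most $2\gamma$ rather than to some worse constant multiple; the fact that $\varepsilon$ must be small \emph{relative} to $S$ itself, not merely relative to the input bit-length, is what makes this a nontrivial exercise in error bookkeeping and is precisely why the a-priori lower bound on $S$ is indispensable.
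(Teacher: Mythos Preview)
Your proposal is correct and follows essentially the same approach as the paper. The paper's proof also takes $M$ with $\ran M^T=U$, uses the identity $\dist(t+U,\ZZ^n)=\Delta(K\ast\ii t,K\ast 0)=\Delta(T\ast\ii t,T\ast 0)$ to handle the two $\Delta$-cases trivially, and for $(K,\dist)$ approximates $\Exp(2\pi\ii t)$ by a Gaussian-rational~$v$, invokes the a-priori separation bound (their \cref{prop:logarithmic-separation}, which is exactly your denominator argument), passes through the Lipschitz bound on $\Log$ (their \cref{le:Lipsch-log}, equivalent to your $|\log(1+\rho)|\le 2|\rho|$), and uses \cref{thm:logorbits-equivalence} (equivalent to your $2|\alpha|/\pi\le|e^{\ii\alpha}-1|\le|\alpha|$) to pass between $\dist$ and $\Delta$.

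The only cosmetic differences are that the paper perturbs first and then applies \cref{thm:logorbits-equivalence} on $D_{17/18,\,19/18}$, whereas you apply it on $D_{1,1}$ for the exact $v^\ast$ and perturb afterwards; and the paper tracks the explicit constants $\kappa=\epsilon/18$, $\tfrac{8}{9}$, $\tfrac{10}{9}$, $\tfrac{9}{8}$ rather than your $a(D+\varepsilon')$ formula. Both yield the factor-$2$ loss in~$\gamma$ for the same reason you identify: the fixed constant $\pi/2$ relating $\dist$ and $\distLD$ on the unit torus, plus an additive perturbation that is negligible once $\varepsilon$ is chosen small relative to the separation lower bound.
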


\begin{proof}
We only provide the argument for $(G,\delta)=(K,\dist)$, the case $(T,\distLD)$ being similar
and the remaining two cases being trivial.
Consider an instance $U,t$ of \textsc{SLDP}, where $U\subset\RR^n$ is a subspace
of dimension $d:=n-k$ given as the row span of $M\in\ZZ^{(n-k)\times n}$, and $t\in\QQ^n$.
The matrix $M$ defines actions of $T=(\CC^\times)^d$ and $K=(S^1)^d$ on $\CC^n$.
The essential connection is \cref{eq:Delta=dist},
which implies that
\begin{equation}\label{eq:d=Delta}
  \dist(t+U,\ZZ^n) = \Delta( K\ast \ii t , K\ast 0 ) = \Delta(T\ast \ii t, T\ast 0).
\end{equation}
By \cref{prop:logarithmic-separation},
there is a separation function $0<\epsilon \le 1$ such that
$\log\epsilon^{-1}$ polynomially bounded in the parameters $d,n,B,b$ and
$\Delta(K\ast 2\pi\ii t, K\ast 0)\geq \epsilon$ when the orbits are different.

By \cref{lem:exp} we compute
$v\in(\QQ(\ii)^\times)^n$ such that
$ \Vert v-\Exp(2\pi\ii t)\Vert< \kappa\coloneqq\epsilon/18$.
Hence $v\in D_{r,R}$ for $r=17/18$ and $R=19/18$.
\Cref{le:Lipsch-log} below expresses a Lipschitz property of $\Log$
and gives
$$
 \Vert \Log v -2\pi\ii t\Vert_\infty < \frac{\kappa}{1-\kappa} < \frac{18}{17}\, \kappa .
$$
Thus the distances
$\Delta_1 := \Delta(K\ast\Log v, 0)$ and
$\Delta_2 :=\Delta(K\ast 2\pi\ii t, 0)$
of the corresponding $K$-orbits to the zero orbit are close:
$|\Delta_1 - \Delta_2| <  \frac{18}{17}\, \kappa$.
Using $\Delta_2 \ge \epsilon = 18\kappa$,
this bounds the relative errors as
$\frac{|\Delta_1 - \Delta_2|}{\Delta_2} \le \frac{1}{17}$,
and hence
$$
 \Delta_2 \ \le\  \frac{17}{16}\Delta_1 \ \le\  \frac{18}{17} \Delta_2 .
$$
Note that
$\dist(t+U,\ZZ^n) = \frac{1}{2\pi} \Delta_2$
by \cref{eq:d=Delta}.
Consider the vector $w\coloneqq (1,\ldots,1)$.
\Cref{thm:logorbits-equivalence} implies,
using $v\in D_{r,R}$ with $r=17/18$ and $R=19/18$, that
\begin{equation}\label{eq:ineq}
 \frac{2}{\pi}\, \frac{17}{18}\, \Delta_1 \, \leq\,\dist(\KK_v,\KK_w)\, \leq \, \frac{19}{18} \; \Delta_1 .
\end{equation}
Now assume that
$\dist(\KK_v,\KK_w) \leq \, D\, \leq \, \gamma\, \dist(\KK_v,\KK_w)$.
Then, 
$$
 \Delta_2 \ \le\ \frac{17}{16}\Delta_1 \ \le\ \frac{17}{16} \frac{\pi}{2} \frac{18}{17} \, D
  \ \le\
 \frac{17}{16} \frac{\pi}{2} \frac{18}{17}\; \gamma\; \frac{19}{18} \frac{18}{17} \Delta_2
  \ \le\  2\; \frac{\Delta_2}{2\pi} .
$$
This means that
$$
   \dist(t+U,\ZZ^n) \ \le\  \frac{9}{32}\, D \ \le\ 2 \gamma \ \dist(t+U,\ZZ^n) ,
$$
which completes the proof of the reduction.
\end{proof}


\begin{lemma}\label{le:Lipsch-log}
For any $t\in\RR^n$ and $z\in\CC^n$,
$\|z- \Exp(2\pi\ii t)\| < \kappa<1$ implies that
$\|\Log(z)- 2\pi\ii t \| < \kappa/(1-\kappa)$.
\end{lemma}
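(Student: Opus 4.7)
The plan is to reduce to a coordinatewise estimate on the principal branch of the complex logarithm and then recombine via the Euclidean norm. First I would set $y \coloneqq z \cdot \Exp(-2\pi\ii t) \in \CC^n$, so that by the unimodularity of each component of $\Exp(2\pi\ii t)$ we have
\[
 \|y - \mathbf{1}\| \;=\; \|z - \Exp(2\pi\ii t)\| \;<\; \kappa,
\]
where $\mathbf{1}=(1,\dots,1)$. In the quotient $\CC^n/2\pi\ii\ZZ^n$, we have $\Log(y) = \Log(z) - 2\pi\ii t$, so it suffices to exhibit a representative of $\Log(y)$ with Euclidean norm below $\kappa/(1-\kappa)$.

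Next I would work coordinate by coordinate. For each $j$, $|y_j - 1| \le \|y - \mathbf{1}\| < \kappa < 1$, which keeps $y_j$ strictly inside the open unit disk around~$1$ and thus well away from the branch cut of the principal logarithm. Choosing the principal branch gives the convergent Taylor expansion $\log(y_j) = \sum_{k\ge 1} (-1)^{k-1}(y_j-1)^k/k$, and the triangle inequality together with $1/k \le 1$ yields the scalar bound
\[
 |\log(y_j)| \;\le\; \sum_{k\ge 1} \frac{|y_j-1|^k}{k} \;\le\; \sum_{k\ge 1} |y_j-1|^k \;=\; \frac{|y_j-1|}{1-|y_j-1|} \;\le\; \frac{|y_j-1|}{1-\kappa},
\]
using $|y_j-1|\le\kappa$ in the last step.

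Finally, squaring and summing over the coordinates gives
\[
 \|\log y\|^2 \;\le\; \frac{1}{(1-\kappa)^2} \sum_{j=1}^n |y_j-1|^2 \;=\; \frac{\|y-\mathbf{1}\|^2}{(1-\kappa)^2} \;<\; \frac{\kappa^2}{(1-\kappa)^2}.
\]
Since this particular value of $\log y$ is a representative of $\Log(y) = \Log(z) - 2\pi\ii t$ in $\CC^n/2\pi\ii\ZZ^n$, the Euclidean quotient norm is bounded by the same quantity, proving the claim. There is no real obstacle here: the only point to be careful about is the branch choice, but the hypothesis $\kappa<1$ forces each $y_j$ into a region where the principal branch is holomorphic and admits the Taylor bound above.
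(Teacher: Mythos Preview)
Your proof is correct and follows essentially the same strategy as the paper: reduce to a coordinatewise scalar estimate on the complex logarithm and then recombine via the Euclidean norm. The only cosmetic difference is that the paper obtains the scalar bound $|\log z_j - 2\pi\ii t_j| \le |z_j - e^{2\pi\ii t_j}|/(1-\kappa)$ directly from the derivative bound $|\log'(z)| = 1/|z| \le 1/(1-\kappa)$ along a straight segment, whereas you first shift multiplicatively to $y_j$ near $1$ and use the Taylor series of $\log(1+w)$; both yield the same inequality.
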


\begin{proof}
It is sufficent to verify the claim for $n=1$, in which case the claim follows using $|\Log' z| = |\frac{1}{z}| \le \frac{1}{1-\kappa}$.
\end{proof}

\section{Hardness of robust orbit problems} \label{sec:hardness}

The main goal of this section is to prove \cref{thm:intro-orbit-hardness}.
Actually, we prove the following, slightly more general result,
that covers all four settings introduced in \cref{se:ADP}.

\begin{theorem}
\label{thm:rop-is-hard}
There is $c>0$ such that the robust orbit distance approximation problem
$\textsc{ROP}(G,\delta)_{\gamma}$ is \textsc{NP}-hard for the approximation
factor $\gamma(n)= n^{c/\log\log n}$,
for each of the four combinations of group actions and metrics
considered in \cref{def:gaprop}.
\end{theorem}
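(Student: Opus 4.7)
The plan is to combine three ingredients: the \textsc{NP}-hardness of approximating the Closest Vector Problem (CVP) within a factor $m^{c/\log\log m}$ due to Dinur, Kindler, Raz, and Safra; the lattice lifting result (\cref{thm:intro lattice lifting}) to reduce CVP to $\textsc{SLDP}$; and \cref{thm:sldp-is-hard-and-reductions} to reduce $\textsc{SLDP}$ to $\textsc{ROP}(G,\delta)$. Since the latter preserves the ambient dimension, the main geometric work lies in the first reduction.

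Given a CVP instance $(\mathcal{L}, t)$ with lattice $\mathcal{L} = G\ZZ^m \subset \RR^m$ generated by $G \in \ZZ^{m \times m}$ and target $t \in \QQ^m$, I would apply \cref{thm:intro lattice lifting} to compute, in polynomial time, a scale $s \in \ZZ_{>0}$, a dimension $n = O(m\log m + m\log\langle G\rangle)$, and an orthonormal basis $v_1, \ldots, v_n \in \QQ^n$ with $\mathcal{L} = sP(\mathcal{L}')$, where $\mathcal{L}' = \ZZ v_1 + \cdots + \ZZ v_n$ and $P$ is the orthogonal projection onto the first $m$ coordinates. Assembling the $v_i$ as columns of an orthogonal matrix $V \in \QQ^{n \times n}$ (so that $\mathcal{L}' = V\ZZ^n$) and setting $\tilde t \coloneqq (t/s,\, 0) \in \QQ^n$, \cref{cor:ort-proj-dist} together with the isometry property of $V^T$ yields
\begin{equation*}
    \dist(t, \mathcal{L}) \;=\; s\,\dist(\tilde t + \ker P,\, V\ZZ^n) \;=\; s\,\dist(V^T \tilde t + V^T\ker P,\, \ZZ^n).
\end{equation*}
A rational basis of $V^T\ker P$ is read off the last $n-m$ columns of $V^T$; clearing denominators produces an integer basis, so this is a bona fide $\textsc{SLDP}$ instance in dimension $n$, and any $\gamma$-approximation of its distance yields, upon multiplying by $s$, a $\gamma$-approximation of $\dist(t, \mathcal{L})$.

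Composing with \cref{thm:sldp-is-hard-and-reductions}, a polynomial-time algorithm for $\textsc{ROP}(G,\delta)_\gamma$ in ambient dimension $n$ yields a polynomial-time algorithm for CVP with approximation factor $O(\gamma)$ in dimension $m$. For CVP instances of polynomial bit-length (as in the DKR hardness), $\langle G\rangle = \poly(m)$, so $n = O(m\log m)$, giving $\log n \le 2\log m$ and $\log\log n \ge \log\log m$ for large~$m$. Hence $n^{c'/\log\log n} \le m^{c/\log\log m}$ for some constant $c' > 0$, and the DKR lower bound transfers to $\textsc{ROP}(G,\delta)_{n^{c'/\log\log n}}$ in all four settings.

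The main obstacle---which \cref{thm:intro lattice lifting} is specifically designed to overcome---is that an arbitrary lattice must be realized as an orthogonal projection of a cubic lattice without super-logarithmic dimension blowup. Were $n$ allowed to grow as $m^{\omega(1)}$, the approximation factor $n^{c/\log\log n}$ would degrade below the DKR threshold $m^{c/\log\log m}$ upon translation back to CVP, and the hardness transfer would fail. A minor technicality is to ensure that $V$, $s$, and the integer generators of $V^T\ker P$ have polynomial bit-length, which follows directly from the polynomial-time guarantee of the lifting algorithm.
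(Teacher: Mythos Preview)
Your proof is correct and follows the same route as the paper: reduce \textsc{CVP} (hard by Dinur et al.) to \textsc{SLDP} via the lattice lifting theorem, and then to $\textsc{ROP}(G,\delta)$ via \cref{thm:sldp-is-hard-and-reductions}, tracking that the dimension blowup is mild enough to preserve the $n^{c/\log\log n}$ factor. The one difference is in how you bound $n$ in terms of $m$: you invoke the (true) fact that the hard DKR instances have $\langle G\rangle=\poly(m)$, giving $n=O(m\log m)$ directly, whereas the paper avoids inspecting the DKR construction by a case distinction---when the CVP bit-length $\ell$ satisfies $\ell\ge 2^{\Omega(m\log m)}$ it solves CVP exactly via Kannan's $\ell\cdot 2^{O(m\log m)}$-time algorithm, and otherwise $\ell\le 2^{O(m\log m)}$ yields $n=O(m^2\log m)$; either way $m\le n\le m^3$ suffices for the factor transfer.
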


Of course, this implies the hardness of the decisional version of the considered orbit distance approximation problems, namely:

\begin{corollary}
There is no polynomial time algorithm that on input $M,v,w$ and $\varepsilon > 0$ decides
\[
	\dist(\KK_v,\KK_w)\leq \varepsilon,
\]
unless $\textsc{P}=\textsc{NP}$. The analogous result also holds
for the logarithmic distance
$\distLD(\OO_v,\OO_w)$ of the orbits of the $T$-action.
\end{corollary}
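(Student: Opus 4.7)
The plan is to derive the corollary as an almost immediate consequence of \cref{thm:rop-is-hard}, by observing that an exact polynomial-time decision procedure for the predicate $\dist(\KK_v,\KK_w)\le\varepsilon$ automatically solves the gap promise problem underlying the hardness of the associated approximation problem. The guiding intuition is that an oracle deciding this predicate for \emph{every} rational threshold $\varepsilon$ is at least as informative as a $\gamma$-approximation oracle for $\dist(\KK_v,\KK_w)$, so hardness of the latter must transfer to the former.

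First I would make explicit the gap promise problem hidden in \cref{thm:rop-is-hard}. The hardness there is established by a reduction from the Closest Vector Problem, which is known to be \textsc{NP}-hard already in its gap form $\mathrm{GapCVP}_\gamma$ with $\gamma(n)=n^{c/\log\log n}$. Inspection of the reduction (to be carried out in \cref{sec:hardness}) shows that each $\mathrm{GapCVP}_\gamma$ instance is mapped to a triple $(M,v,w)$ together with a polynomial-time computable threshold $\varepsilon_0>0$ satisfying the promise that either $\dist(\KK_v,\KK_w)\le\varepsilon_0$ or $\dist(\KK_v,\KK_w)>\gamma\varepsilon_0$. Distinguishing these two promised cases is therefore itself \textsc{NP}-hard.

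Now suppose for contradiction that a polynomial-time algorithm $\mathcal A$ decides the exact predicate $\dist(\KK_v,\KK_w)\le\varepsilon$. Invoking $\mathcal A$ on input $(M,v,w,\varepsilon_0)$ produced from an arbitrary $\mathrm{GapCVP}_\gamma$ instance returns \textsc{yes} in the first case of the promise and \textsc{no} in the second, thereby solving the \textsc{NP}-hard gap problem in polynomial time and forcing $\P=\NP$. The analogous statement for the logarithmic distance $\distLD(\OO_v,\OO_w)$ under the $T$-action follows from exactly the same argument with $\textsc{ROP}(K,\dist)_\gamma$ replaced by $\textsc{ROP}(T,\distLD)_\gamma$; both are covered uniformly by \cref{thm:rop-is-hard}.

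The only subtle point, and arguably the main obstacle, is verifying that the hardness of the \emph{search/approximation} problem $\textsc{ROP}(G,\delta)_\gamma$ recorded in \cref{thm:rop-is-hard} genuinely arises from a gap-style reduction, rather than from a weaker source that leaves the corresponding decision promise problem tractable. This will be transparent from the reduction in \cref{sec:hardness}, which preserves the multiplicative gap inherent in $\mathrm{GapCVP}_\gamma$; I would simply flag this observation at the relevant step of the reduction, after which the corollary follows immediately.
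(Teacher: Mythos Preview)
Your argument is correct, but it takes a different route from the paper's. The paper proceeds by a black-box binary-search reduction: given a hypothetical polynomial-time decision oracle for $\dist(\KK_v,\KK_w)\le\varepsilon$, one repeatedly halves the threshold to obtain a $2$-approximation of the distance, i.e., a polynomial-time solver for $\textsc{ROP}(K,\dist)_2$, and then invokes \cref{thm:rop-is-hard} directly. No inspection of the reduction chain is needed.

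Your approach instead unpacks the reduction $\mathrm{GapCVP}_\gamma \to \textsc{SLDP} \to \textsc{ROP}$ to extract an explicit threshold~$\varepsilon_0$ and solves the resulting gap promise problem with a \emph{single} oracle call. This is more explicit about where the hardness comes from and sidesteps the (minor) termination issue that the paper's binary search tacitly relies on---namely, that on the instances produced by the reduction the distance is bounded away from zero by a quantity polynomially related to the CVP threshold. The price you pay is exactly what you flag: you must verify that each step of the reduction in \cref{sec:hardness} is gap-preserving and produces a computable threshold, which is true (the CVP$\to$SLDP step is an exact rescaling by \cref{eq:dist-equality}, and the SLDP$\to$ROP step relates the distances up to explicit constants), but requires looking inside the proofs of \cref{thm:red-cvp-2-sldp} and \cref{thm:sldp-is-hard-and-reductions}. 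The paper's version avoids this at the cost of a slightly informal appeal to binary search.
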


\begin{proof}
By binary search, making oracle calls to a hypothetical polynomial time algorithm for the above
decision problem, we can compute an approximation to $\dist(\KK_v,\KK_w)$ within any constant factor, say two.
This provides a polynomial time algorithm for
the robust orbit distance approximation problem
with approximation factor $\gamma=2$.
By \cref{thm:rop-is-hard}, this implies $\textsc{P}=\textsc{NP}$.
\end{proof}

The proof of \cref{thm:rop-is-hard} relies on the \textsc{NP}-hardness of
the closest vector problem $\textsc{CVP}_\gamma$
of algorithmic lattice theory due to Dinur et al.~\cite{dinurcvp}.
We discuss the closest vector problem in \cref{sec:sldp-to-cvp}.
The proof of \cref{thm:rop-is-hard} then proceeds by
exhibiting a polynomial time reduction from
$\textsc{CVP}_{2\gamma}$ to $\textsc{ROP}(G,\delta)_{\gamma}$.
This goes in two steps: from \cref{thm:sldp-is-hard-and-reductions}
we already know that
$\textsc{SLDP}_{2\gamma}$ can be reduced to $\textsc{ROP}(G,\delta)_\gamma$
in polynomial time.
The main ingredient of the proof is a polynomial time reduction of
$\textsc{CVP}_\gamma$ to $\textsc{SLDP}_\gamma$,
which is presented in \cref{sec:liftings}.
This relies on \cref{thm:intro lattice lifting}
on lattice lifting, whose proof is given in \cref{sec:eutactic}.

\subsection{The closest vector problem}\label{sec:sldp-to-cvp}
This problem attracted a lot of research due to lattice based cryptography, such as GGH, NTRU and
homomorphic encryption~\cite{ggh-crypto, ntru-crytpo-1, ntru-crypto-2, homomorphic-crypto}.
These cryptosystems are conjectured to be secure agains quantum computers and rely on a (conjectured) hardness of \textsc{CVP}.

Throughout the section, $m$ will always denote the dimension of a $\textsc{CVP}$ instance and $n$ will always denote the dimension of a $\textsc{SLDP}$ instance.

\begin{definition}
\label{def:gapcvp}
{\em The closest vector problem} with approximation factor $\gamma\geq1$,
denoted $\textsc{CVP}_\gamma$,
is the task of computing on input
\begin{itemize}
\item a target vector $t\in\QQ^m$,
\item a lattice $\mathcal{L}$ spanned by the columns of generator matrix $G\in\ZZ^{m\times m}$
 with $\det G\neq 0$,
\end{itemize}
a number $D\in\QQ_{\geq 0}$
such that
$\dist(t,\mathcal{L}) \leq  D \leq \gamma \, \dist( t , \mathcal{L} )$.
\end{definition}

The first observation is that $\textsc{SLDP}_\gamma$ can be easily reduced to $\textsc{CVP}_\gamma$.
Indeed, if $P:\RR^n\rightarrow U^\perp$ denotes the orthogonal projection along $U$, then
$\dist(t+U,\ZZ^n)=\dist(P(t),\mathcal{L})$ by \cref{cor:ort-proj-dist}.
Here, $\mathcal{L}\coloneqq P(\ZZ^n)$ is a lattice by \cref{lem:ort-proj-lattice},
and we can compute a lattice basis of $\mathcal{L}$ in polynomial time. 
This reduces $\textsc{SLDP}_\gamma$ to  $\textsc{CVP}_\gamma$.
We note that this reduction preserves~$\gamma$.
Moreover, the ambient dimension~$n$ of the \textsc{SLDP} instance
upper bounds the lattice dimension $m$ of the constructed \textsc{CVP} instance.

The key contribution in the proof of \cref{thm:rop-is-hard}
is a polynomial time reduction in the reverse direction,
see \cref{thm:red-cvp-2-sldp}.

$\textsc{CVP}_\gamma$ is known to be \textsc{NP}-hard for a nearly polynomial approximation ratio,
$\gamma=m^{c/\log\log m}$ for some $c>0$, see~\cite{dinurcvp}.
On the other hand, it is well known that
$\textsc{CVP}_{\gamma}$ can be solved in polynomial time
with approximation factor $\gamma(m) = 2^{O(m)}$
by the well-known LLL-basis reduction algorithm of~\cite{lll}.
This implies the following.

\begin{corollary}\label{cor:sldp-exact}
\label{cor:sldp-in-p-for-two-to-n}
 $\textsc{SLDP}_\gamma$ admits a polynomial time algorithm for the
 approximation factor $\gamma(n) = 2^{O(n)}$.
\end{corollary}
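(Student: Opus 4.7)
The plan is simply to chain the polynomial-time reduction from $\textsc{SLDP}_\gamma$ to $\textsc{CVP}_\gamma$ described in the paragraph immediately preceding the statement with the LLL-based polynomial-time algorithm solving $\textsc{CVP}_\gamma$ for $\gamma(m) = 2^{O(m)}$, using that the reduction preserves the approximation factor and bounds the lattice rank by the ambient dimension.

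Concretely, given an $\textsc{SLDP}$ instance $(t,U)$ with $U\subset\RR^n$ of dimension $n-k$ spanned by integer vectors $u_1,\ldots,u_{n-k}$, I would first run Gram--Schmidt orthogonalisation over $\QQ$ to produce a rational orthogonal basis of $U$ and from it the matrix of the orthogonal projection $P\colon\RR^n\to U^\perp$, controlling the bit-lengths in the standard way. The rational vectors $P(e_1),\ldots,P(e_n)$ generate $\mathcal{L}\coloneqq P(\ZZ^n)$ as an abelian group, and from this generating set one extracts in polynomial time a $\ZZ$-basis of $\mathcal{L}$, which has rank $k\leq n$, via a Hermite normal form computation. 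By \cref{cor:ort-proj-dist} we have $\dist(t+U,\ZZ^n) = \dist(P(t),\mathcal{L})$, so this transformation is in fact distance-preserving (and hence factor-preserving) and produces a \textsc{CVP} instance whose lattice rank $m=k$ satisfies $m\leq n$ and whose input bit-length is polynomial in that of the original \textsc{SLDP} instance.

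Applying the LLL-based $2^{O(m)}$-approximation algorithm for $\textsc{CVP}$~\cite{lll} to this instance then yields in polynomial time a number $D$ with
\[
 \dist(t+U,\ZZ^n) \,\leq\, D \,\leq\, 2^{O(n)}\,\dist(t+U,\ZZ^n),
\]
as required. There is no essential obstacle beyond the routine bit-length bookkeeping in the projection step and the passage from a generating set to a lattice basis, both of which are handled by standard polynomial-time subroutines (Gram--Schmidt with common-denominator clearing and Hermite normal form, respectively).
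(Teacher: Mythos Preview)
Your proposal is correct and follows essentially the same approach as the paper: reduce $\textsc{SLDP}_\gamma$ to $\textsc{CVP}_\gamma$ via the orthogonal projection $P$ onto $U^\perp$ (using \cref{cor:ort-proj-dist} and \cref{lem:ort-proj-lattice}), note that the reduction preserves $\gamma$ and satisfies $m\le n$, and then invoke the LLL-based $2^{O(m)}$-approximation for \textsc{CVP}. Your added detail on extracting a lattice basis via Hermite normal form is a reasonable way to make explicit the step the paper glosses over as ``we can compute a lattice basis of $\mathcal{L}$ in polynomial time.''
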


If we do not insist on polynomial time algorithms,
then the lattice element $\alpha\in\mathcal{L}$ closest to the target vector~$t$
can be computed exactly.
Kannan~\cite{kannan1987} showed that there is an algorithm for doing so,
that runs in time $2^{O(m\log m)}$ times the input size.
Combining this with the above reduction, we obtain:

\begin{corollary}\label{cor:sldp-fixed-dimension}
$\textsc{SLDP}_1$ can be solved by a polynomial time algorithm
when $n$ is fixed, if we allow exact computation of square roots.
\end{corollary}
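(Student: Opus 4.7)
The plan is to chain the reduction from $\textsc{SLDP}$ to $\textsc{CVP}$ described immediately before the corollary with Kannan's exact algorithm for $\textsc{CVP}$, exploiting that the lattice dimension of the produced $\textsc{CVP}$ instance is bounded by the fixed ambient dimension $n$.

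First, given an input $(t,U)$ of $\textsc{SLDP}_1$ with $U\subset \RR^n$ the span of integer vectors $u_1,\dots,u_{n-k}$, I compute in polynomial time (via Gram-Schmidt on the $u_i$) a rational orthogonal basis of~$U$ and hence the orthogonal projection $P\colon\RR^n\to U^\perp$ as a matrix with rational entries. By \cref{cor:ort-proj-dist},
\[
\dist(t+U,\ZZ^n)=\dist(P(t),P(\ZZ^n)),
\]
and by \cref{lem:ort-proj-lattice} the image $\mathcal{L}:=P(\ZZ^n)$ is a lattice of rank $m\le n$. A set of rational generators of $\mathcal{L}$ is obtained by projecting the standard basis $e_1,\dots,e_n$, and a rational lattice basis of $\mathcal{L}$ can be extracted from these generators in polynomial time. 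This turns the $\textsc{SLDP}_1$ instance into an equivalent $\textsc{CVP}_1$ instance $(P(t),\mathcal{L})$ with the same optimal value.

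Next, I invoke Kannan's algorithm~\cite{kannan1987}, which computes a vector $\alpha\in\mathcal{L}$ exactly attaining $\dist(P(t),\mathcal{L})$ in time $2^{O(m\log m)}$ times the bit-length of the input. To match Kannan's integer-lattice input format, I first clear denominators by multiplying $\mathcal{L}$ and $P(t)$ with a common integer $N$ of polynomially-bounded bit-length; solving the scaled instance and dividing the resulting minimiser by $N$ yields~$\alpha$. Since $n$ is fixed, so is $m\le n$, and the whole computation is polynomial in the input bit-length. Finally, I output
\[
D \;:=\; \|P(t)-\alpha\|,
\]
which is exactly the desired distance. Because $P(t)-\alpha\in\QQ^n$, the quantity $D^2$ is an explicitly computable rational number and $D$ is obtained from it by a single exact square root extraction, invoking the hypothesis of the corollary.

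The only subtlety worth flagging is correctness of the reduction when $\mathcal{L}$ is rational rather than integral; this is handled by the denominator-clearing step above, which preserves both the optimal $\textsc{CVP}$ value (up to the known scale factor $N$) and the identity of the closest vector. No step is genuinely difficult: the work is essentially bookkeeping around the two ingredients (\cref{cor:ort-proj-dist,lem:ort-proj-lattice} on the one hand, and Kannan's enumeration algorithm on the other), with the fixed-$n$ assumption converting Kannan's $m^{O(m)}$ factor into a constant.
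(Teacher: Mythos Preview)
Your proof is correct and follows exactly the approach the paper takes: reduce $\textsc{SLDP}$ to $\textsc{CVP}$ via the orthogonal projection onto $U^\perp$ (using \cref{cor:ort-proj-dist} and \cref{lem:ort-proj-lattice}), then apply Kannan's $2^{O(m\log m)}$-time exact algorithm, which becomes polynomial when $m\le n$ is fixed. The additional bookkeeping you spell out (Gram--Schmidt, clearing denominators, the final square root) is sound and merely makes explicit what the paper leaves implicit.
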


We use this to show polynomial time feasibility of
\cref{prob:k-approx-dist} and \cref{prob:g-approx-dist-LD}
when $n$ is fixed.

\begin{proof}[Proof of \cref{thm:compact-fixed-dimension} and \cref{thm:fixed-dimension}]
Matveev's bound~\eqref{eq:BW:93} implies \cref{hyp:linearforms} if $n$ is fixed.
Then also \cref{hyp:sep} holds if $n$ is fixed, as shown by the proof of \cref{thm:abc-and-sep}.
Note that the reductions in \cref{prop:delta-to-Delta-reduction} preserve the ambient dimension.
The assertion follows now with \cref{cor:sldp-fixed-dimension}.
\end{proof}

\begin{remark}
The interesting regime of $\textsc{CVP}_\gamma$ is when $\gamma$ is polynomial in $m$.
Lattice based cryptosystems rely on the hardness of
$\textsc{CVP}_\gamma$ in this regime, but as of now,
proving this is a major open problem in the area.
In fact it is known that  $\textsc{CVP}_\gamma$
is in $\textsc{NP}\cap\text{co}\textsc{NP}$~\cite{aharonov-regev},
for $\gamma=100\sqrt{m}$,
which implies that $\textsc{CVP}_\gamma$
cannot be \textsc{NP}-hard in this regime,
unless $\textsc{NP}\subset\text{co}\textsc{NP}$ and the polynomial hierarchy collapses.
\end{remark}

\subsection{The reduction from \textsc{CVP} to \textsc{SLDP} via lattice liftings}\label{sec:liftings}
We now establish the key reduction from
\textsc{CVP} to \textsc{SLDP}
by relying on the lattice lifting result \cref{thm:intro lattice lifting}.

\begin{theorem}\label{thm:red-cvp-2-sldp}
There is a polynomial time (Turing) reduction from
$\textsc{CVP}_{\gamma}$ to $\textsc{SLDP}_{\gamma}$
(with the same~$\gamma$).
In more detail, given as input a \textsc{CVP} instance $(t,\mathcal{L})$ with ambient dimension~$m$, the reduction either solves \textsc{CVP} exactly or it computes a scaling factor $s\in\ZZ_{>0}$ and an \textsc{SLDP} instance $(\tilde{t},U)$ of ambient dimension~$n$ where $\dist(t,\mathcal{L})=s\dist(\tilde{t}+U,\ZZ^n)$ and $m\le n\le O(m^2\log m)$.
\end{theorem}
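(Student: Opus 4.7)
The plan is to use the lattice-lifting theorem \cref{thm:intro lattice lifting} to present $\mathcal{L}$ as a scaled orthogonal projection of a rotated cubic lattice, and then transport the closest-vector question from $\RR^m$ into the larger ambient space $\RR^n$, where it becomes exactly an \textsc{SLDP} instance. Concretely, I apply \cref{thm:intro lattice lifting} to the generator matrix $G$ to obtain, in polynomial time, an integer $n$ with $m \leq n \leq O(m\log m + m\log\langle G\rangle)$, a positive integer scaling factor $s$, and an orthonormal basis $v_1,\dots,v_n \in \QQ^n$. Let $R \in \QQ^{n\times n}$ be the orthogonal matrix whose columns are the $v_i$, so that $\mathcal{L} = s\, P(R(\ZZ^n))$, where $P\colon\RR^n\to\RR^m$ denotes the projection onto the first $m$ coordinates.

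From this data I build the \textsc{SLDP} instance $(\tilde t, U)$ defined by
\[
\tilde t \;:=\; R^{T}\iota(t/s) \in \QQ^n, \qquad U \;:=\; R^{T}(\ker P),
\]
where $\iota\colon\RR^m\hookrightarrow\RR^n$ is the inclusion as the first $m$ coordinates. Since $\ker P$ is the coordinate subspace spanned by $e_{m+1},\dots,e_n$, the subspace $U$ is the rational span of $R^{T}e_{m+1},\dots,R^{T}e_n$, and clearing denominators yields an integer generating set for $U$; the entire construction runs in polynomial time. As a preprocessing step the reduction detects the trivial case $t\in\mathcal{L}$ and returns $D=0$ directly, which accounts for the ``either solves \textsc{CVP} exactly or'' clause in the statement.

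The heart of the argument is the identity $\dist(t,\mathcal{L}) = s\cdot\dist(\tilde t + U, \ZZ^n)$. By scaling, $\dist(t,\mathcal{L}) = s\cdot\dist(t/s,\,P(R(\ZZ^n)))$. Next, for any $y\in\RR^n$, Pythagoras in $\RR^n$ gives
\[
\|\iota(t/s) - y\|^2 \;=\; \|t/s - Py\|^2 \,+\, \|(I-P)y\|^2,
\]
so minimizing the second summand by shifting $y$ by vectors in $\ker P$ (which leaves $Py$ unchanged) yields
\[
\dist\bigl(t/s,\,P(R(\ZZ^n))\bigr) \;=\; \dist\bigl(\iota(t/s),\,R(\ZZ^n) + \ker P\bigr).
\]
Applying the isometry $R^{T}$ and using $R^{T}R(\ZZ^n) = \ZZ^n$, together with the elementary identity $\dist(\tilde t,\,\ZZ^n + U) = \dist(\tilde t + U,\,\ZZ^n)$ (valid because $U$ is a linear subspace, so $-u$ ranges over $U$ together with $u$), completes the verification. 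Because this identity is \emph{exact}, the reduction preserves the approximation factor $\gamma$ on the nose and yields a genuine Karp reduction.

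The main obstacle is not in the reduction itself---which, once the lift is in hand, is essentially bookkeeping with an orthogonal change of basis---but in establishing the lattice-lifting result \cref{thm:intro lattice lifting}: that every full-rank integer lattice is, up to scaling, an orthogonal projection of a rotated cubic lattice in a not-much-larger dimension, and that such a lift is efficiently computable. The sketch in \cref{subsec:proof ideas}, to be developed in \cref{sec:eutactic}, reduces this to writing a certain positive definite integer quadratic form as a small sum of squares of rational linear forms, which is then handled via the sum-of-squares tool \cref{lem:sos}.
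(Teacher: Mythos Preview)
Your construction of the \textsc{SLDP} instance and the verification of the distance identity are essentially identical to the paper's: your $R$ is the paper's $Q$, your $\tilde t = R^T\iota(t/s)$ is the paper's $t' = s^{-1}Q^T t$, your $U = R^T(\ker P)$ is the paper's span of the last $n-m$ rows of $Q$, and your Pythagoras argument is what the paper packages as \cref{cor:ort-proj-dist}. So that part is fine.

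The gap is in the dimension bound and your interpretation of the ``solves \textsc{CVP} exactly'' clause. From \cref{thm:intro lattice lifting} you only get $n = O(m\log m + m\log\langle G\rangle)$, which depends on the bit-length of the generator matrix and is \emph{not} $O(m^2\log m)$ in general. Detecting $t\in\mathcal{L}$ does nothing to fix this. The paper's ``either solves \textsc{CVP} exactly or'' is not about the trivial case $t\in\mathcal{L}$; it is a case distinction on the input bit-length~$l$. If $l \ge 2^{\Omega(m\log m)}$, then Kannan's exact $2^{O(m\log m)}$-time CVP algorithm runs in time polynomial in~$l$, so the reduction simply solves the instance outright. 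Otherwise $l \le 2^{O(m\log m)}$, hence $\log\langle G\rangle = O(m\log m)$, and only then does the lattice-lifting bound collapse to $n = O(m(\log m + \log l)) = O(m^2\log m)$. Without this dichotomy you have not established the claimed bound on~$n$, and hence not the theorem as stated.
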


\begin{proof}[Proof of \cref{thm:red-cvp-2-sldp}]
An instance of $\textsc{CVP}_\gamma$
consists of a rank~$m$ lattice $\mathcal{L}$ in $\RR^m$, given by a generator matrix,
and a target vector $t\in\QQ^m$.
We apply \cref{thm:intro lattice lifting}.
Thus, for a given such instance,
we can compute in polynomial time
an orthonormal basis $v_1,v_2,\dots,v_n\in\QQ^n$, where
$n\geq m$, and a scaling factor $s\in\ZZ_{>0}$, such that
the lattice $\mathcal{L}'$ generated by $v_1,v_2,\dots,v_n$
satisfies
$\mathcal{L} = s P(\mathcal{L}')$,
where $P\colon\RR^n\to\RR^m$ denotes the orthogonal projection
onto the first $m$ coordinates. We view $\RR^m$ as the subspace of $\RR^n$ whose last $n-m$ coordinates are zero.

Consider the orthogonal matrix $Q\in\QQ^{n\times n}$ with columns $v_1,v_2,\dots,v_n$,
thus $Q e_i = v_i$ if the $e_i$~denote the standard basis vectors.
This implies that $Q(\ZZ^n)=\mathcal{L}'$.
Let $w_i$ be the rows of $Q$, thus $w_i = Q^T e_i$ and $Qw_i= e_i$.
We define the subspace $U$ as the span of $w_{m+1},\ldots,w_n$,
thus $Q(U)= \langle e_{m+1},\ldots, e_{n} \rangle = \ker P$.
Moreover, we define $t'\coloneqq s^{-1} Q^T t$.
Then $Qt' = s^{-1} t$. Summarizing,
$$
  s P(\mathcal{L}') = \mathcal{L}, \quad
 Q(\ZZ^n)=\mathcal{L}', \quad Q(U) =  \ker P, \quad Qt' = s^{-1} t .
$$
Let us verify that
\begin{equation}\label{eq:dist-equality}
	\dist(t,\mathcal{L}) = s\, \dist(t'+U,\ZZ^n) .
\end{equation}
Indeed,
$\dist(t'+U,\ZZ^n) = \dist(Q(t'+U),Q(\ZZ^n))$
by the orthogonality of $Q$. Moreover,
\begin{equation*}
 \dist(Q(t') + Q(U),Q(\ZZ^n)) = \dist \big( s^{-1} t + \ker P, \mathcal{L}' \big)
 = \dist \big(  s^{-1} P( t),  P(\mathcal{L}') \big)
= s^{-1}  \dist \big( t , \mathcal{L} \big)
\end{equation*}
where we used \cref{cor:ort-proj-dist} in the second and $P(t)=t$ in the last equality.

Clearly, this defines a polynomial time reduction of
$\textsc{CVP}_{\gamma}$ to $\textsc{SLDP}_{\gamma}$
that does not change the approximation factor $\gamma$.

Suppose the given instance of \textsc{CVP} has bit-length $l$.
\Cref{thm:intro lattice lifting} implies that we can assume
$n=O(m (\log m +\log l))$.
We can bound $l$ in terms of $m$ by making
a case distinction.
If $l$ is so large that
$l \ge 2^{\Omega(m\log m)}$,
then we apply the algorithm in~\cite{kannan1987},
which solves \textsc{CVP} exactly, in time
$l2^{O(m\log m)}$ which is polynomially bounded in $l$ in this case.
Otherwise, if $l \le 2^{O(m\log m)}$, we perform the above described
reduction to \textsc{SLDP}. Then we have
$m(\log m +\log l) \le O(m^2 \log m)$ and hence
we can upper bound $n= O(m^2 \log m)$.
This completes the proof.
\end{proof}

\begin{proof}[Proof of \cref{thm:rop-is-hard}]
We compose the polynomial time reduction
$\textsc{CVP}_{2\gamma}$ to $\textsc{SLDP}_{2\gamma}$
of \cref{thm:red-cvp-2-sldp}
with the  polynomial time  reductions
$\textsc{SLDP}_{2\gamma}$ to $\textsc{ROP}(G,\delta)_{\gamma}$
in \cref{thm:sldp-is-hard-and-reductions}.
If $m$ denotes the dimension of the given instance of \textsc{CVP},
then the dimension $n$ of the constructed instance of \textsc{ROP}
satisfies $m\le n \le O(m^2 \log m)$.

Using~\cite{dinurcvp}, we choose $c$ such that
$\textsc{CVP}_{2\gamma}$ is $\textsc{NP}$-hard
with the approximation factor
$2\gamma(m) = m^{c/\log\log m}$.
Since $m\le n \le m^3$ for sufficiently large~$m$,
we have
\[ \gamma(m)=\frac{1}{2}m^{c/\log\log m} \geq \frac{1}{2}n^{c/3\log\log n} \geq n^{c'/\log\log n}
\] for a suitably chosen $c'$.
Thus we see that $\textsc{ROP}(G,\delta)_{\gamma}$ is
$\textsc{NP}$-hard for the approximation factor
$n^{c'/\log\log n}$,
which completes the proof.
\end{proof}

\subsection{Proof of Theorem~\ref{thm:intro lattice lifting}}\label{sec:eutactic}
Recall that $L\in\ZZ^{m\times n}$ 
is called \textit{right-invertible} over $\ZZ$
if there exists $R\in\ZZ^{n\times m}$ such that $LR=I_m$.
This means that the lattice generated by the columns of~$L$ equals $\ZZ^m$, that is, $L(\ZZ^n) = \ZZ^m$.

\begin{proposition}\label{prop:eutactic}
Suppose $\mathcal{L}\subset\RR^m$ is a lattice generated by the columns of $G\in\ZZ^{m\times m}$ with $\det G\neq 0$.
Then the following are equivalent for integers $n\geq m$ and $s\in\ZZ_{>0}$:
\begin{enumerate}[(1)]
\item\label{it:eutactic 1} There exists a lattice $\mathcal{L}'\subset\RR^n$ generated by an orthonormal basis such that
\[
  \mathcal{L} = s\, P(\mathcal{L}')
\]
where $P:\RR^n\rightarrow\RR^m$ denotes the orthogonal projection onto the first $m$-coordinates.

\item\label{it:eutactic 2} There exists a matrix $L\in\ZZ^{m\times n}$ that is right-invertible over $\ZZ$, such that
\[
(GL)(GL)^T = s^2 I_m.
\]
\end{enumerate}
\end{proposition}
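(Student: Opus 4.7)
The plan is to pivot the entire argument on the single matrix
$X \coloneqq \frac{1}{s}GL \in \RR^{m \times n}$ and observe that the identity $(GL)(GL)^T = s^2 I_m$ is equivalent to $XX^T = I_m$, i.e., to the statement that the $m$ rows of $X$ form an orthonormal system in $\RR^n$. The proposition then reduces to the routine linear-algebra fact that any such $m \times n$ orthonormal block extends to an orthogonal $n \times n$ matrix, together with the standard equivalence between right-invertibility of $L$ over $\ZZ$ and the identity $L(\ZZ^n) = \ZZ^m$ (the forward direction uses $\ZZ^m = LR(\ZZ^m) \subseteq L(\ZZ^n) \subseteq \ZZ^m$; the converse lifts a preimage of each standard basis vector of $\ZZ^m$ to obtain a right inverse $R$).

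For the implication (1)~$\Rightarrow$~(2): Starting from an orthonormal basis $v_1, \ldots, v_n$ generating $\mathcal{L}'$, I would assemble it into an orthogonal matrix $A \coloneqq [v_1 \mid \cdots \mid v_n] \in \RR^{n \times n}$ and let $X \in \RR^{m \times n}$ be its top $m$ rows. Then $AA^T = I_n$ yields $XX^T = I_m$, and $P(\mathcal{L}') = X(\ZZ^n)$. The hypothesis $\mathcal{L} = sP(\mathcal{L}')$ rewrites as $G(\ZZ^m) = sX(\ZZ^n)$. Defining $L \coloneqq sG^{-1}X$ gives $GL = sX$, so $(GL)(GL)^T = s^2 I_m$. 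Integrality of $L$ follows because $sXe_i \in sX(\ZZ^n) = G(\ZZ^m)$ for each standard basis vector $e_i$, hence $Le_i = G^{-1}(sXe_i) \in \ZZ^m$. Right-invertibility follows from $L(\ZZ^n) = G^{-1}(sX(\ZZ^n)) = G^{-1}(G(\ZZ^m)) = \ZZ^m$.

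For the reverse direction (2)~$\Rightarrow$~(1): Setting $X \coloneqq \frac{1}{s}GL$, the hypothesis forces $XX^T = I_m$, so the rows of $X$ are orthonormal vectors in $\RR^n$. I would extend them to an orthonormal basis of $\RR^n$ by choosing any orthonormal basis of the orthogonal complement of their span and appending it, producing an orthogonal matrix $A \in \RR^{n \times n}$ whose top $m$ rows equal $X$. Let $v_1, \ldots, v_n$ be the columns of $A$ and $\mathcal{L}' \coloneqq \ZZ v_1 + \cdots + \ZZ v_n$; by construction this is generated by an orthonormal basis. Then
\[
 P(\mathcal{L}') = X(\ZZ^n) = \tfrac{1}{s} G\bigl(L(\ZZ^n)\bigr) = \tfrac{1}{s} G(\ZZ^m) = \tfrac{1}{s}\mathcal{L},
\]
using $L(\ZZ^n) = \ZZ^m$, so $\mathcal{L} = sP(\mathcal{L}')$ as required.

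The entire argument is bookkeeping and no step is substantively difficult. The only place to be careful is to consistently separate the $n \times n$ orthogonal matrix $A$ from its $m \times n$ block $X$, and to track row/column conventions for the projection $P$. The genuine substance of the lattice lifting program, namely algorithmically constructing such an $L$ and $s$ for a given $G$, is deferred to the Waring-type sum-of-squares decomposition in the remainder of \cref{sec:eutactic}; the present proposition merely provides the dictionary between the lattice-projection formulation~(1) and the integer-matrix identity~(2) on which that construction can operate.
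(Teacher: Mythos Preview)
Your proof is correct and follows essentially the same approach as the paper's: both directions hinge on the matrix $X=\frac{1}{s}GL$ with $XX^T=I_m$, the extension of its rows to an orthogonal $n\times n$ matrix (the paper isolates this as \cref{lem:ort-extension}), and the equivalence of right-invertibility of $L$ with $L(\ZZ^n)=\ZZ^m$. Your write-up is a slightly more streamlined packaging of the same argument.
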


We need the following observation, which easily follows with
Gram-Schmidt orthogonalization.

\begin{lemma}\label{lem:ort-extension}
Suppose $m\leq n$ and $X\in\RR^{m\times n}$.
Then $X$ can be completed to an orthogonal matrix if and only if $XX^T=I_m$.
\end{lemma}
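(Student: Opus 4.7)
The plan is to prove both directions by examining the block structure of the product $QQ^T$ for an orthogonal matrix~$Q$ whose first $m$ rows are the rows of~$X$, and to produce such a $Q$ via Gram-Schmidt in the nontrivial direction.

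For the forward direction, suppose $X$ has been completed to an orthogonal matrix $Q \in \RR^{n \times n}$, so that we can write
\[
  Q = \begin{pmatrix} X \\ Y \end{pmatrix}
\]
for some $Y \in \RR^{(n-m) \times n}$. Then $QQ^T = I_n$, and inspecting the top-left $m \times m$ block of this product immediately yields $XX^T = I_m$. This direction requires no further work.

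For the reverse direction, suppose $XX^T = I_m$. Denote by $x_1,\dots,x_m \in \RR^n$ the rows of~$X$. The condition $XX^T = I_m$ is precisely the statement that $x_1,\dots,x_m$ is an orthonormal family in $\RR^n$; in particular it is linearly independent. The plan is to extend this family to an orthonormal basis of $\RR^n$. Since $m \le n$, one can choose vectors $z_{m+1},\dots,z_n \in \RR^n$ so that $x_1,\dots,x_m,z_{m+1},\dots,z_n$ is a basis of $\RR^n$ (for example, add appropriate standard basis vectors and check linear independence, or simply pick any basis of the orthogonal complement of $\mathrm{span}(x_1,\dots,x_m)$). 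Then apply the Gram-Schmidt procedure to this basis. Because $x_1,\dots,x_m$ is already orthonormal, Gram-Schmidt leaves these vectors unchanged and produces new orthonormal vectors $y_{m+1},\dots,y_n$ spanning the orthogonal complement. Assembling them as rows of a matrix~$Y$, the matrix
\[
  Q := \begin{pmatrix} X \\ Y \end{pmatrix}
\]
has orthonormal rows, hence satisfies $QQ^T = I_n$ and is therefore orthogonal. This completes the extension of~$X$ to an orthogonal matrix.

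There is no real obstacle here; the only point to mention is that the existence of an orthonormal basis of the orthogonal complement of a subspace of $\RR^n$ is guaranteed by Gram-Schmidt, which is the tool the excerpt explicitly allows us to invoke. The proof is therefore short and essentially a direct application of Gram-Schmidt orthogonalization together with a block-matrix computation.
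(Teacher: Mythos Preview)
Your proof is correct and follows exactly the approach the paper indicates: the paper states the lemma ``easily follows with Gram-Schmidt orthogonalization'' without further detail, and your argument spells this out precisely. The forward direction via the block structure of $QQ^T$ and the reverse direction via extending the orthonormal rows using Gram-Schmidt are both standard and match the intended proof.
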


\begin{proof}[Proof of \cref{prop:eutactic}]
$(\ref{it:eutactic 1})\Rightarrow(\ref{it:eutactic 2})$:
Suppose $v_1,v_2,\dots,v_n\in\RR^n$ is an orthonormal basis
generating~$\mathcal{L}'$ and put $u_i := P(v_i)$.
The matrix $U\in\RR^{m\times n}$ with columns  $u_1,\ldots, u_n$ has orthogonal rows, that is, $UU^T = I_m$.
By assumption, the vectors $s u_i$ are in $\mathcal{L}$ and hence
can be written as integer linear combinations of the columns of~$G$.
This means that there exists an integer matrix $L\in\ZZ^{m\times n}$ such that
$G L =s U$.
The rows of this matrix are orthogonal, hence $(GL)(GL)^T = s^2 I_m$.
It remains to show that $L$ is right-invertible over $\ZZ$.
Also by assumption, since the $s u_i$ generate $\mathcal{L}$,
the columns of~$G$ can be written as integer linear combinations of $u_i$.
Thus there exists an integer matrix $R\in\ZZ^{n\times m}$ such that $(GL)R=sUR=G$.
Hence $LR=I_m$, since $\det G\neq 0$, so that $L$~is indeed right-invertible.

$(\ref{it:eutactic 2})\Rightarrow(\ref{it:eutactic 1})$:
Suppose $s^2 I_m=(GL)(GL)^T$. Thus the matrix
$X := \frac{1}{s}GL$ satisfies $XX^T=I_m$.
By \cref{lem:ort-extension} we can complete $X$ to an orthogonal matrix:
thus there exists an orthogonal matrix~$Y\in\mathrm{O}_n$
whose first $m$ rows constitute the matrix $\frac{1}{s}GL$.
Therefore $s PY=GL$. Hence, if $\mathcal{L}'$ denotes the lattice generated by the columns of $Y$,
we have $s P(\mathcal{L}')\subset\mathcal{L}$.
Equality holds since $L$ is right-invertible over $\ZZ$.
\end{proof}

\cref{prop:eutactic} indicates a strategy for proving \cref{thm:intro lattice lifting}.
We need two ingredients:

\begin{itemize}
\item A polynomial time algorithm for computing on input $G$
 an integer $n\geq m$ and a right invertible matrix
 $L\in\ZZ^{m\times n}$ such that $(GL)(GL)^T=s^2 I_m$.

\item An efficient version of \cref{lem:ort-extension}.
\end{itemize}

We prove now that both requirements can be satisfied.
For the first step we will use the following result.

\begin{theorem}[Efficient Waring decomposition for quadratic forms]\label{thm:effectivemordell}
Assume $A\in\QQ^{m\times m}$ is a positive definite, symmetric matrix of bit-length~$b$.
Then, in $\poly(m,b)$-time,
we can compute
$N$~vectors $l_1,l_2,\dots,l_N\in\QQ^m$, where $m\leq N\leq O(m(\log m + \log b))$,
such that
\begin{equation*}
A = \sum_{i=1}^N l_i\, l_i^T.
\end{equation*}
\end{theorem}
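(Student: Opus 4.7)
My approach will reduce the rational Waring decomposition to writing positive integers as sums of integer squares via a rational $LDL^T$ factorization. Concretely, I first compute $A = LDL^T$, where $L \in \QQ^{m\times m}$ is unit lower triangular and $D = \diag(d_1,\dots,d_m)$ with each $d_i \in \QQ_{>0}$. Since $A$ is positive definite, every leading principal minor is positive, so symmetric Gaussian elimination succeeds without pivoting in $\poly(m)$ arithmetic operations; and by standard bit-length bounds on Gaussian elimination (either via Bareiss's fraction-free scheme, or by writing the $d_i$ and $L_{ij}$ as ratios of principal minors of $A$ and invoking Hadamard's inequality), the entries of $L$ and $D$ have bit-length $\poly(m,b)$. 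Denoting by $l_1,\dots,l_m$ the columns of $L$, I obtain
\[
  A \;=\; \sum_{i=1}^{m} d_i\, l_i\, l_i^T.
\]

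Next I would rewrite each positive rational $d_i$ as a short sum of rational squares. Writing $d_i = a_i/b_i$ with $a_i,b_i \in \ZZ_{>0}$ of bit-length $\poly(m,b)$, I observe that $d_i = (a_ib_i)/b_i^2$, so $p_i \coloneqq a_ib_i$ is a positive integer of bit-length $\poly(m,b)$. Applying \cref{lem:sos} to $p_i$ yields in $\poly(m,b)$-time integers $c_{i,1},\dots,c_{i,k_i}$ with $k_i = O(\log\log p_i) = O(\log m + \log b)$ such that $p_i = \sum_{j=1}^{k_i} c_{i,j}^2$. Setting $l_{i,j} \coloneqq (c_{i,j}/b_i)\, l_i \in \QQ^m$, I obtain
\[
  d_i\, l_i\, l_i^T \;=\; \frac{p_i}{b_i^2}\, l_i\, l_i^T \;=\; \sum_{j=1}^{k_i} l_{i,j}\, l_{i,j}^T.
\]

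Summing over $i$ yields the desired identity $A = \sum_{i=1}^{m}\sum_{j=1}^{k_i} l_{i,j}\, l_{i,j}^T$, a rational decomposition of total length $N = \sum_i k_i$ with $m \le N \le O(m(\log m + \log b))$, where the lower bound follows from $k_i \ge 1$ (equivalently, from $\rk A = m$). All intermediate quantities are rationals of bit-length $\poly(m,b)$, so the full procedure runs in $\poly(m,b)$-time. The main technical point I expect to verify carefully is the polynomial bit-length bound on the $LDL^T$ factorization; \cref{lem:sos} is precisely what efficiently replaces Lagrange's four-square theorem (whose known deterministic implementations do not run in $\poly(m,b)$-time, as the authors note), at the modest cost of an extra $\log\log$ factor in the number of summands compared to Mordell's optimal bound of~$m+3$.
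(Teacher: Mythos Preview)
Your proof is correct and follows essentially the same approach as the paper: a rational congruence diagonalization (you use the $LDL^T$ factorization; the paper first clears denominators and then invokes Lagrange's method, \cref{lem:lagrange-method}, to get $QAQ^T$ diagonal over~$\ZZ$) followed by \cref{lem:sos} applied to each diagonal entry. The only cosmetic difference is that you handle the rational diagonal entries via $d_i = (a_ib_i)/b_i^2$ before applying \cref{lem:sos}, whereas the paper makes $A$ integral at the outset; both routes yield the same $N = O(m(\log m + \log b))$ bound.
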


\begin{remark}
In 1932, Mordell~\cite{mordell} considered what he called the \textit{Waring's problem for quadratic forms},
the problem of writing a given positive definite quadratic form $f(x)\coloneqq x^T Ax$ as sum of squares of linear forms over $\QQ$.
Note that this equivalent to writing the matrix $A$ as in in \cref{thm:effectivemordell}. 
Mordell proved that a Waring decomposition with $N=m+3$ is always possible,
and described a method for computing the linear forms.
Unfortunately, his method relies on Lagrange's four squares theorem
that every positive integer $D$ can be written as the sum of four squares $D=a^2+b^2+c^2+d^2$.
Randomized polynomial time algorithms to compute $a,b,c,d$ are available~\cite{rabin-shallit},
but to the best of our knowledge, it is still an open problem whether a polynomial time deterministic algorithm exists.
Instead, we will use the lemma below, 
which uses more squares but can easily be shown to run in deterministic polynomial time.
\end{remark}

The proof of \cref{thm:effectivemordell} requires some preparations.

\begin{lemma}
\label{lem:sos}
Given a positive integer $D$ ,
we can in polynomial time compute
$k= O(\log \log D)$ integers $a_1,a_2,\dots,a_k$ such that
\[
 D = a_1^2 + a_2^2 + \dots + a_k^2.
\]
\end{lemma}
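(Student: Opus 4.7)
The plan is to analyze the natural greedy algorithm: starting from $D_0 := D$, iteratively set $a_i := \lfloor \sqrt{D_i} \rfloor$ and $D_{i+1} := D_i - a_i^2$, and halt once $D_i = 0$. By construction, this produces integers $a_0, \ldots, a_{k-1}$ with $D = \sum_{i=0}^{k-1} a_i^2$. Each iteration is computable in time polynomial in $\log D_i \leq \log D$, since the integer square root $\lfloor \sqrt{D_i}\rfloor$ can be computed by Newton iteration or binary search in time polynomial in the bit-length.

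The crux of the argument is to bound the number of iterations by $k = O(\log\log D)$. The key estimate is that subtracting the largest square shrinks the remainder quadratically: since $(a_i+1)^2 > D_i$,
\[
  D_{i+1} \;=\; D_i - a_i^2 \;<\; (a_i+1)^2 - a_i^2 \;=\; 2a_i + 1 \;\leq\; 2\sqrt{D_i} + 1 .
\]
Setting $g_i := \log_2(D_i + 1)$, a routine manipulation yields a recursion of the form $g_{i+1} \leq \tfrac{1}{2} g_i + O(1)$, whose fixed point is an absolute constant. A simple induction then gives $g_i \leq C + (g_0 - C)/2^i$ for an absolute constant $C$, so after $i = O(\log\log D)$ steps we reach $D_i$ bounded by a constant.

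The main point requiring care is the transition from the asymptotic regime (where the doubly-logarithmic decay drives the analysis) to the base case of small $D_i$; here one simply observes that once $D_i$ is bounded by an absolute constant, the greedy algorithm terminates within a further constant number of iterations, adding only $O(1)$ to the square count. Combining the two regimes gives $k = O(\log\log D)$ overall, produced by an algorithm whose total runtime is polynomial in $\log D$, as required.
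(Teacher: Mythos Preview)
Your proposal is correct and follows essentially the same approach as the paper: the greedy algorithm subtracting $\lfloor\sqrt{D_i}\rfloor^2$, the key estimate $D_{i+1} \le 2\sqrt{D_i}+O(1)$, and the resulting doubly-exponential shrinkage yielding $O(\log\log D)$ iterations plus $O(1)$ steps once $D_i$ is below an absolute constant. The paper unrolls the recurrence directly to get $D_j \le 4D^{2^{-j}}$ rather than passing to $g_i=\log_2(D_i+1)$, but this is a cosmetic difference in presentation.
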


\begin{proof}
We first compute 
$a_1\coloneqq\lfloor\sqrt{D}\rfloor$.
Replacing $D$ with $D'\coloneqq D-a_1^2$ and repeating the process,
we compute integers $a_1,a_2,\dots,a_k$ such that
$D=a_1^2+\dots+a_k^2$. Let us bound $k$:
since $a_1^2 \leq D < (a_1+1)^2$, we have $D'=D-a_1^2 \leq (a_1+1)^2-1 -a_1^2 \leq 2 \sqrt{D}$.
This implies that if $D_j$ is computed in the $j$-th step, then
$D_j \leq 2\sqrt{D_{j-1}}\leq\dots\leq 2^{1+\frac{1}{2}+\dots+\frac{1}{2^{j-1}}}D^{2^{-j}}\leq 4 D^{2^{-j}}$.
We deduce that after $k\coloneqq\log_2\log_2 D$ steps we have $D_{k}\leq 8$.
The algorithm terminates after at most $4$ more steps.
\end{proof}

The following result can be found in~\cite[Algorithm~12.1]{schaum-lin-alg}.

\begin{lemma}[Lagrange's method for congruence diagonalization]\label{lem:lagrange-method}
Suppose $X\in\ZZ^{m\times m}$ is a symmetric matrix.
Then 
we can compute  in polynomial time
a matrix $Q\in\ZZ^{m\times m}$ with $\det Q\neq 0$ such that
\[
	Q X Q^T = \diag(d_1,d_2,\dots,d_m)\in\ZZ^{m\times m}
\] is diagonal with integer entries.
\end{lemma}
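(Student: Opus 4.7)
The plan is to proceed by induction on~$m$, following the classical Lagrange reduction but performing all arithmetic over~$\ZZ$. For $m=1$ there is nothing to do. In the induction step the goal is to construct $Q_1 \in \ZZ^{m\times m}$ with $\det Q_1 \neq 0$ such that $Q_1 X Q_1^T = (d_1) \oplus X''$ for some $d_1 \in \ZZ$ and some symmetric integer matrix $X'' \in \ZZ^{(m-1)\times(m-1)}$. Applying the inductive hypothesis to $X''$ then yields $Q_2 \in \ZZ^{(m-1)\times(m-1)}$, and the block matrix $Q \coloneqq (1 \oplus Q_2)\, Q_1 \in \ZZ^{m\times m}$ congruence-diagonalizes~$X$ as required.

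To build~$Q_1$, I first ensure that $X_{11} \neq 0$: if some diagonal entry $X_{ii}$ is nonzero, a permutation brings it into position $(1,1)$; otherwise, if $X \neq 0$ but every $X_{ii} = 0$, I pick some $X_{ij} \neq 0$ with $i\neq j$ and perform the elementary congruence $X \mapsto (I+E_{ij})\, X\, (I+E_{ij})^T$, where $E_{ij}$ has a~$1$ in position $(i,j)$ and zeros elsewhere; this creates the nonzero $(i,i)$-entry $2X_{ij}$, and we permute. Write the resulting matrix in block form corresponding to the $1+(m-1)$ split, with top-left scalar $a = X_{11} \neq 0$, off-diagonal column $b \in \ZZ^{m-1}$, and symmetric lower-right block $X' \in \ZZ^{(m-1)\times(m-1)}$. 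Define
\[
  Q_1 \coloneqq \begin{pmatrix} 1 & 0 \\ -b & a\, I_{m-1} \end{pmatrix} \in \ZZ^{m\times m}, \qquad \det Q_1 = a^{m-1} \neq 0.
\]
A direct block computation then gives $Q_1 X Q_1^T = \diag\bigl(a,\, a(aX' - bb^T)\bigr)$, and I may recurse on the symmetric integer matrix $X'' \coloneqq a(aX' - bb^T) \in \ZZ^{(m-1)\times(m-1)}$.

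The main obstacle is controlling the bit-length of the intermediate matrices, since iterating the above update naively squares the magnitudes of the entries at every step and leads to exponential blow-up. To avoid this, after each reduction I will factor out the greatest common divisor~$g$ of the entries of~$X''$ and recurse on $X''/g$ instead; any diagonalizer $Q_2$ of $X''/g$ still produces $Q_2 X'' Q_2^T = g\,\diag(d_2,\dots,d_m)$, which is integer and diagonal, so the final~$Q$ remains integral. A standard fraction-free (Bareiss-style) analysis, or equivalently running the reduction symbolically over~$\QQ$ and clearing denominators via Hadamard's inequality, identifies the entries arising throughout the algorithm with subdeterminants of the original matrix~$X$ and thereby bounds their bit-length by $O(m\log m + m\,\langle X\rangle)$. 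Hence each of the $m$ recursion levels runs in $\poly(m, \langle X\rangle)$ time, and the output~$Q$, being a product of elementary integer matrices each of nonzero determinant, is integral and satisfies $\det Q \neq 0$, as required.
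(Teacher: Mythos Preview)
The paper does not actually give a proof of this lemma; it simply attributes the result to \cite[Algorithm~12.1]{schaum-lin-alg} and moves on. Your write-up therefore supplies what the paper omits, and the inductive Lagrange reduction you describe is exactly the classical algorithm that reference records: arrange a nonzero pivot in the $(1,1)$ position by a permutation or by the congruence $X\mapsto (I+E_{ij})X(I+E_{ij})^T$, clear the first row and column with the integer matrix $Q_1=\begin{psmallmatrix}1&0\\-b&aI\end{psmallmatrix}$, and recurse.

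The only place your argument is thin is the polynomial bit-length claim. The naive recursion $X''=a(aX'-bb^T)$ indeed squares magnitudes, and ``divide by the gcd'' alone does not obviously prevent blow-up, since that gcd could be~$1$. What actually makes the argument go through is the observation that $a\mid X''$ always, so dividing out $a$ at every step (rather than merely the gcd) yields precisely the Bareiss fraction-free Schur complement $X''/a=aX'-bb^T$, whose entries are $2\times2$ minors of $X$; iterating, the entries after $k$ steps are $(k{+}1)\times(k{+}1)$ minors of the original $X$, hence bounded via Hadamard by $m^{m/2}\|X\|_{\max}^m$. Your parenthetical ``Bareiss-style analysis'' is pointing at exactly this, but it would be cleaner to state the exact division by the pivot explicitly rather than appeal to a gcd. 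With that adjustment the proof is complete and matches the standard argument underlying the cited reference.
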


\begin{proof}[Proof of \cref{thm:effectivemordell}]
By multiplying $A$ with the square of the least common multiple of the denominators of the entries of $A$, we may assume without loss of generality that $A$ is integral. We compute a matrix $Q$ such that $QAQ^T = \diag(d_1,d_2,\dots,d_m)$
by \cref{lem:lagrange-method}.
Since $A$ is positive definite, all $d_i$ are positive.
Using \cref{lem:sos}, for each $i=1,2,\dots,m$,
we compute $e_{ij}$ such that $d_i = \sum_{j=1}^{k_i} e_{ij}^2$.
Denote by $L$ the matrix
\[
L \coloneqq \begin{bmatrix}
e_{11} & \dots & e_{1 k_1} &          &         &          &    & & &   \\
       &       &           & e_{21}   & \dots   & e_{2k_2} &    & & & \\
  	   &       &           &          &         &          &\ddots & & &\\
       &       &           &          &         &          &       & e_{m1}& \dots & e_{mk_m}
	\end{bmatrix}\quad\in\quad \ZZ^{m\times \sum k_i}.
\]
Then $LL^T=\diag(d_1,d_2,\dots,d_m)=QAQ^T$, so $A = (Q^{-1}L) (Q^{-1}L)^{T}$.
Defining $l_i\in\QQ^m$ to be the $i$-th column of $Q^{-1}L$,
we have $A=\sum_{i=1}^N l_i l_i^T$ where $N\coloneqq \sum k_i$ and this proves the claim. For the bound on $N$,
we first note that since the algorithm in \cref{lem:lagrange-method} runs in polynomial time,
we have 
$\log d_i = (bm)^{O(1)}$.
\Cref{lem:sos} then implies that~$k_i = O(\log\log d_i) = O(\log(bm))$ and~$N = O(m\log(bm))$.
\end{proof}

The efficient Waring decomposition is the first ingredient in our proof of \cref{thm:intro lattice lifting}.
The second ingredient is an efficient version of \cref{lem:ort-extension}.

\begin{lemma}\label{lem:completion-problem}
Suppose $m\leq n$ and $X\in\QQ^{m\times n}$ satisfies $X X^T = I_m$.
Then, we can compute in polynomial time an orthogonal matrix $Y\in\QQ^{n\times n}$ such that the first $m$ rows of $Y$ are the rows of $X$.
\end{lemma}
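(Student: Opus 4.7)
The plan is to proceed by induction on $m$ using rational Householder reflections; these remain in $\QQ$ because $\|x_1\|=1$ is already rational and so no irrational square root appears.

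For the base case $m=0$, set $Y:=I_n$. For the inductive step with $m\geq 1$, let $x_1,\ldots,x_m$ denote the rows of $X$. If $x_1=e_j$ for some $j$ already, set $R:=I_n$; otherwise, pick any index $j$ with $(x_1)_j\neq 1$ (such a $j$ exists since $\|x_1\|=1$ forces $x_1$ to equal at most one standard basis vector) and form
\[
  R \;:=\; I_n - \frac{2\,v\,v^{T}}{\|v\|^{2}}, \qquad v \;:=\; x_1 - e_j,
\]
so that $\|v\|^{2}=2(1-(x_1)_j)$ is a nonzero rational and $R\in\QQ^{n\times n}$ is a rational orthogonal involution with $Rx_1=e_j$. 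Next, apply $R$ to the remaining rows: define $x'_i:=Rx_i$ for $2\leq i\leq m$. The family $x'_2,\ldots,x'_m$ is orthonormal, rational, and orthogonal to $Rx_1=e_j$, so every $x'_i$ has vanishing $j$-th coordinate. Identifying the $j$-complement of $\QQ^n$ with $\QQ^{n-1}$, invoke the inductive hypothesis on the resulting $(m-1)\times(n-1)$ matrix to obtain an orthogonal matrix $Y'\in\QQ^{(n-1)\times(n-1)}$ whose first $m-1$ rows are $x'_2,\ldots,x'_m$.

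To combine, embed $Y'$ into $\QQ^{n\times n}$ by inserting $e_j$ as the top row (placing a $1$ in the $(1,j)$-entry and zeros elsewhere in the first row and $j$-th column), producing an orthogonal matrix $\widehat Y\in\QQ^{n\times n}$. Setting $Y:=\widehat Y R$, one checks that $Y$ is orthogonal (product of two orthogonal matrices) and, since $R$ is an involution, its first row equals $e_j^{T}R=x_1^{T}$ while its $i$-th row equals $(x'_i)^{T}R=x_i^{T}$ for $2\leq i\leq m$, as required.

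The main obstacle is controlling the bit-length growth through the $m$ recursive steps. A single Householder step, implemented over a common denominator with gcd reductions, blows up bit-length by only a constant factor, but an unstructured iteration over $m\leq n$ levels would give a bound that is exponential in $m$. To establish polynomial time, one must give an amortized bit-length analysis, in the spirit of Bareiss's bit-complexity analysis of rational Gaussian elimination, additionally exploiting the freedom to choose the pivot index $j$ (for instance, taking $j$ to maximize $|(x_1)_j|$ guarantees $\|v\|^{2}\geq 2(1-1/\sqrt n)$ and keeps the intermediate denominators bounded away from zero).
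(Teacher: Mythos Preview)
Your approach---iterated rational Householder reflections---is exactly what the paper does. The paper also observes that for unit vectors $v,w\in\QQ^n$ with $v\neq -w$ the reflection in the hyperplane $(v+w)^\perp$ is rational and swaps $v$ and $w$, then simply says ``we construct $Y$ by iterating this process,'' with no further detail.

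You go beyond the paper in explicitly flagging the bit-length issue, and you are right to worry: a naive bound shows the common denominator can roughly square at each of the $m$ steps, so bit-length grows like $2^m$ times the input bit-length. The paper's proof is completely silent on this, so you are not missing anything the paper supplies. That said, your final paragraph does not actually carry out the analysis you say is needed; you only name Bareiss as a template and propose a pivoting heuristic. Two remarks on that heuristic. First, choosing $j$ to maximize $|(x_1)_j|$ does not by itself make $\|v\|^2$ large: if $(x_1)_j$ is close to $+1$ then $\|v\|^2=2(1-(x_1)_j)$ is close to $0$; the standard remedy is to reflect to $-\operatorname{sign}((x_1)_j)\,e_j$ instead. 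Second, and more importantly, bounding the \emph{value} of $\|v\|^2$ away from zero controls numerical stability but not the \emph{arithmetic denominator} of the entries of $R$, which is governed by the numerator of $\|v\|^2$ when written over the current common denominator. So your heuristic does not obviously bound bit growth, and the Bareiss-style statement you allude to (that intermediate entries are ratios of fixed-size subdeterminants of the original data) would need to be formulated and proved for this Householder recursion. As written, the polynomial-time claim is not established---though, to be fair, the paper's own proof leaves the same gap.
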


\begin{proof}
We construct $Y$ as a product of reflections at rational hyperplanes.

First note that for any $v,w\in\QQ^n$ of norm one,
there is an orthogonal matrix $Y\in\QQ^{n\times n}$
such that $Yv = w$. Indeed, w.l.o.g.\ $v\ne-w$,
and take for $Y$ the reflection at the hyperplane orthogonal to $v+w$,
which is given by the linear map
\begin{equation*}
 Y x  := 2 \, \frac{\langle x, v+w\rangle}{\Vert v+w\Vert^2} \, (v+w)  - x .
\end{equation*}
Note that $Y$ defines a rational orthogonal matrix and indeed $Yv=w$.
Denote by $e_i$ the standard basis vectors.

Given $X$ as in the lemma,
we compute an orthogonal matrix $Y_1\in\QQ^{n\times n}$
which maps~$e_1$ to the first row of~$X$.
Since $Y$ is orthogonal and $Y^2=I_n$, the first row of $Y$ equals the first row of $X$.
We construct $Y$ by iterating this process.
\end{proof}

\begin{remark}
M. Hall~\cite{hall-completion-2,hall-completion}
gave necessary conditions for an integer matrix to be completable
to a scalar multiple of an orthogonal matrix
in the sense of \cref{lem:completion-problem}.
\end{remark}

\begin{proof}[Proof of \cref{thm:intro lattice lifting}]
Given $G\in\ZZ^{m\times m}$ with $\det G\neq 0$, we compute
$s := m \|G\|_{\max} + 1 $.
By \cref{lem:singular} we have $\sigma_{\max}(G)< s$.
The minimum eigenvalue of $G^{-1}G^{-T}$ satisfies
\[
\lambda_{\min}(G^{-1}G^{-T}) = \sigma^2_{\min}(G^{-1}) = \frac{1}{\sigma^2_{\max}(G)} > \frac{1}{s^2}.
\]
Consequently, the matrix $A\coloneqq s^2 (G^{-1}G^{-T})-I_m$ is positive definite.

Using \cref{thm:effectivemordell}, we compute in polynomial time a number $N$
and a matrix $L\in\QQ^{m\times N}$ such that $A=LL^T$.
Let $f\in\ZZ_{>0}$ denote the least common multiple of the denominators of the entries of~$L$,
so that $L':=fL$ is integral. Then
$(sf)^2 G^{-1}G^{-T} - f^2 I_m= L'  (L')^T$.

Using \cref{lem:sos}, we compute integers $b_1,\ldots,b_p$ such that $f^2-1=b_1^2+b_2^2+\dots+b_p^2$.
Since $L$ is computed from $A$ in polynomial time, the bit-length of $f$ is $\poly(b,m)$ so \cref{lem:sos} implies $p=O(\log(bm))$.
Considering the integer matrix
$L'' = \begin{bmatrix}
		b_1 I_m \;\;\vline\;\; b_2 I_m \;\;\vline\;\; \dots  \;\;\vline\;\; b_p I_m  \;\;\vline\;\; L'
	\end{bmatrix}$
 of format $m\times (pm+N)$, we get
\begin{equation}\label{eq:l-prime}
 (sf)^2 G^{-1}G^{-T}  - I_m = (b_1^2+b_2^2+\dots+b_p^2)I_m+ L'  (L')^T = L'' (L'')^T.
\end{equation} 
We now consider the matrix $X\coloneqq \begin{bmatrix}
		G  \;\;\vline\;\; GL''
\end{bmatrix}$.
By \cref{eq:l-prime}, we have $XX^T = (sf)^2 I_m$.
Using \cref{lem:completion-problem} with input $(sf)^{-1} X$,
we compute a rational, orthogonal matrix $Y\in\QQ^{n\times n}$
such that the first $m$ rows of $Y$ are the rows of $(sf)^{-1} X$ where $n\coloneqq m+(pm+N)$.

The columns $v_1,\dots,v_n$ of $Y$ form an orthonormal basis. 
Moreover, the orthogonal projections~$P(v_i)$ onto the first $m$ coordinates
equal the columns of~$(sf)^{-1} X$.
Thus if we denote by $\mathcal{L}'$ the lattice generated by $v_1,\dots,v_n$, then $sf P(\mathcal{L}')$ is the lattice spanned by the columns of~$X = \begin{bmatrix}
    G  \;\;\vline\;\; GL''
\end{bmatrix}$, which is equal to $\mathcal L = G(\ZZ^m)$, because $L''$ is an integer matrix.
We conclude that~$sf P(\mathcal{L}')=\mathcal{L}$.
\end{proof}

\section{Orbit Problems and the Kempf-Ness Approach}\label{sec:kempf-ness}
In \cref{subsec:intro conclusion and outlook} we outlined a general approach for
the Orbit Equality Problem,
based on the Kempf-Ness theorem~\cite{kempfness}.
Here we analyze this approach for torus actions.
We formulate a conjecture on the complexity of computing approximate solutions to unconstrained geometric programs.
A positive answer to the conjecture leads to a numerical algorithm for the Orbit Equality Problem  
that runs in polynomial time, provided \cref{hyp:sep} is true.

\subsection{The general Kempf-Ness theorem}
This theorem holds for any reductive group~$T$
(not just a torus) with a rational action on a finite dimensional vector space~$V$.
We denote by $K$ a maximal compact subgroup of $T$ and assume that $V$
is endowed with a Hermitian inner product such that $K$ acts by isometries.
The inner product defines a norm and Euclidean metric on $V$.
In the special case~$T=(\CC^\times)^d$ of a torus we have $K=(S^1)^d$.

It is well known~\cite[Thm.~2.3.6]{derksen-kemper}
that each orbit closure $\overline{\OO_v}$ contains a unique closed orbit.
Therefore, we have
$\overline{\OO_v}\cap\overline{\OO_w}\neq\varnothing$
iff the orbit closures $\overline{\OO_v}$ and $\overline{\OO_w}$ share the same closed orbit.
We therefore focus on closed orbits:
let us call a vector $v\in V$ \textit{polystable} if its orbit $\OO_v$ is closed.
We denote by $V^{ps}$ the set of polystable vectors in~$V$.
As in \cref{sec:quotients}, we can endow the space $V^{ps}/T$
of closed orbits with the quotient topology, which is Hausdorff.
The Kempf-Ness Theorem~\cite{kempfness}
states that the space $V^{ps}/T$ is homeomorphic to a ``smaller object'',
which is defined in analytic terms, namely the space $\Crit(V)/K$
of $K$-orbits of the closed and $K$-invariant subset $\Crit(V)$ of critical points of $V$.

The critical points are defined in terms of the \textit{Kempf-Ness function}:
\begin{equation}\label{eq:kn-function}
    F_v\colon T \rightarrow \RR, \quad F_v(g) := \log\Vert g \cdot v\Vert = \frac12 \log\Vert g \cdot v\Vert^2.
\end{equation}
Let us denote the induced action of $\Lie(T)$ on $V$
by  $x\cdot v \coloneqq \frac{d}{dt}\big |_{t=0} \, (e^{tx}\cdot v)$,
for $x\in\Lie(T)$ and $v\in V$.
Note that $x\cdot v = v + M^T x$
for the action \cref{eq:def-action} of a torus.
One checks that the derivative of $F_v$ at~$I$ is given by
$D_I F_v x = \|v\|^{-2}\, \langle x\cdot v, v\rangle$.
Thus we define $v\in V$ to be \textit{critical} iff
$\langle x\cdot v, v\rangle = 0$ for all $x\in\Lie(T)$.
By definition, the set $\Crit(V)$ of critical vectors in $V$ is a closed real algebraic set in~$V$,
cut out by real quadratic polynomials.
Moreover, $\Crit(V)$ is $K$-invariant,
since we assume the Hermitian inner product to be $K$-invariant.

It is easy to see that closed orbits contain a critical point.
The celebrated Kempf-Ness theorem~\cite{kempfness} states
the converse: the orbits of critical points are closed, thus
$\Crit(V)\subset V^{ps}$.
Even more is known:

\begin{theorem}[The Kempf-Ness theorem]\label{thm:kempf-ness}
\begin{alphaenumerate}
\item The orbit $\OO_v$ is closed if and only if \\$\OO_v\cap \Crit(V)\neq\varnothing$.
\item The intersection $\KK_{v^\star}:=\overline{\OO_v}\cap\Crit(V)$ is a single $K$-orbit,
 that we call the \textit{Kempf-Ness orbit} of $v$.
\item 
 We have $\overline{\OO_v}\cap\overline{\OO_w}\neq\varnothing$ if and only if $\KK_{v^\star}=\KK_{w^\star}$.
\end{alphaenumerate}
\end{theorem}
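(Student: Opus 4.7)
Although \cref{thm:kempf-ness} is stated for arbitrary reductive groups, the paper's focus on tori makes that setting especially transparent, so I restrict to $T=(\CC^\times)^d$; the plan is to leverage the convex-analytic structure of the Kempf-Ness function $F_v$. Writing $t=e^{x+\ii y}$ with $x,y\in\RR^d$ and using~\eqref{eq:def-action}, one obtains
\[
  \|t\cdot v\|^2 \ =\ \sum_{i=1}^n |v_i|^2\, e^{2\omega_i^T x},
\]
so $F_v(t)=\tfrac12\log\sum_i |v_i|^2 e^{2\omega_i^T x}$ is a real-analytic, convex log-sum-exp function of $x$ alone. Hence every critical point of $F_v$ is a global minimum, and since $F_v(gh)=F_{h\cdot v}(g)$, the element $h\cdot v$ is critical (in the sense of the definition following~\eqref{eq:kn-function}) if and only if $F_v$ attains its infimum at $h$. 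All three parts of the theorem will flow from this correspondence.

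For part~(a), if $\OO_v$ is closed then $\OO_v\cap\{w:\|w\|\le\|v\|\}$ is a closed subset of a compact ball, hence compact and nonempty, so the norm attains its minimum on $\OO_v$ and the correspondence produces a critical point there. Conversely, a critical $w\in\OO_v$ is a norm-minimizer in its orbit, and I will derive a contradiction assuming $\OO_v$ is not closed: the Hilbert--Mumford criterion yields a one-parameter subgroup direction $x\in\ZZ^d$ with $\omega_i^T x\ge 0$ on the support of $w$ and strict inequality for at least one such index (else the Hilbert--Mumford limit would already lie in $\OO_w$). But then $\|e^{-sx}\cdot w\|^2=\sum_i|w_i|^2 e^{-2s\omega_i^T x}$ is strictly decreasing in $s>0$, contradicting the minimality of $\|w\|$.

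For part~(b), uniqueness of the closed orbit inside $\overline{\OO_v}$ (cited at the start of \cref{sec:kempf-ness}) combined with~(a) already forces $\overline{\OO_v}\cap\Crit(V)$ to lie in a single $T$-orbit, so it suffices to show that whenever $w$ and $t\cdot w$ are both critical, $t$ acts on $w$ through~$K$. Writing $t=e^{x+\ii y}$, both $s=0$ and $s=1$ minimize the convex function $s\mapsto\|e^{sx}\cdot w\|^2$, which is therefore constant on $[0,1]$. Differentiating twice gives
\[
  0 \ =\ \sum_{i=1}^n |w_i|^2\,(2\omega_i^T x)^2\, e^{2s\omega_i^T x},
\]
and nonnegativity of the summands forces $\omega_i^T x=0$ on the support of $w$. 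Hence $t^{\omega_i}=e^{\ii y^T\omega_i}=k^{\omega_i}$ with $k:=e^{\ii y}\in K$ on every active coordinate, so $t\cdot w=k\cdot w$. Part~(c) is then essentially formal: $\overline{\OO_v}\cap\overline{\OO_w}\neq\varnothing$ forces the unique closed orbits of the two closures to coincide, and~(b) identifies both Kempf-Ness orbits with the critical locus inside that common closed orbit; the converse is immediate from $\KK_{v^\star}\subseteq\overline{\OO_v}$ and $\KK_{w^\star}\subseteq\overline{\OO_w}$.

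The hard part will be the non-closed direction of~(a): it is the only step that draws on genuine input from geometric invariant theory beyond the log-sum-exp formalism, via the Hilbert--Mumford criterion (equivalently, for tori, the weight-polytope characterization of closed orbits noted in Section~3). With that external input in place, the rest of the argument is driven entirely by the convexity of~$F_v$.
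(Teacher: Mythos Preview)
The paper does not actually prove \cref{thm:kempf-ness}: it is stated with a citation to~\cite{kempfness}, and the paragraph following the statement only indicates that the proof rests on the (geodesic) convexity of $F_v$ along curves $t\mapsto F_v(\Exp(tx)\cdot v)$, referring to~\cite{BFGOWW2} for a conceptual account. So there is no ``paper's own proof'' to compare against in detail.

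That said, your argument is correct for the torus case and is precisely the convexity-based line the paper gestures at. A couple of minor remarks: in part~(b) you implicitly use that $\overline{\OO_v}\cap\Crit(V)$ is nonempty, which does follow from~(a) applied to the unique closed orbit in $\overline{\OO_v}$; and in the second-derivative step you are differentiating $\sum_i |w_i|^2 e^{2s\omega_i^T x}$ rather than its logarithm, which is fine since the logarithm of a constant is constant. Your identification of the one genuinely nontrivial input---the Hilbert--Mumford criterion (equivalently, for tori, the weight-polytope characterization in \cref{sec:log})---matches the paper's own view that everything else is driven by the log-sum-exp structure of~$F_v$.
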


The proof of \cref{thm:kempf-ness} relies on convexity properties of the function~$F_v$.
One observes that $t\mapsto F_v(\gamma(t))$ is convex for all all curves of the form
$\gamma(t):= \Exp(tx) \cdot v$.
A more conceptual view of the situation is as follows (see~\cite{BFGOWW2}).
One can view $F_v$ as a function on $T/K =\{ Kg \mid g \in T \}$,
since $F_v$ is constant on the cosets $Kg$.
Moreover, $T/K$ is a symmetric space with respect to a $K$-invariant
Riemannian metric, whose geodesics are exactly the curves arising from the $\gamma$.
For this reason, one calls $F_v$ a \textit{geodesically convex} function on~$T/K$.
In the special case where $T=(\CC^\times)^d$ is the torus, we have the isometry
\begin{equation}\label{eq:exp-par}
 \RR^d \stackrel{\sim}{\longrightarrow} T/K, \ x\mapsto K \Exp(x/2)
\end{equation}
given by the exponential map.
In general, the convexity implies that $v^\star\in\overline{\OO_v}$ is critical iff
\[
\Vert v^\star\Vert = \inf_{v'\in\OO_v} \, \Vert v'\Vert.
\]

\begin{remark}
The second property in \cref{thm:kempf-ness} expresses that the continuous map $\Crit(V)/K \to V^{ps}/T$,
induced by the inclusion, is a bijection.
Equivalently,  this map is a homeomorphism.\footnote{Pass to projective spaces to see this.}
\end{remark}

\subsection{Efficient approximation of the Kempf-Ness orbit}\label{sec:EA-KN0}
We return to the situation of the action of a torus $T$ on $V=\CC^n$.
Let $\omega_1,\ldots,\omega_n\in\ZZ^d$ denote the corresponding weights,
i.e., the columns of the weight matrix $M$.
We will always assume that the affine span of these weights is $\RR^d$.
We fix a vector $v\in V$ and assume for simplicity that
$q_i := |v_i|^2 > 0$ for all~$i$.
Using the parametrization~\eqref{eq:exp-par} and multiplying by $2$ to simplify the presentation,
the Kempf-Ness function~\eqref{eq:kn-function} becomes
\begin{equation}\label{eq:def-kn-func}
	f \colon \RR^d \, \rightarrow \, \RR, \, f(x) := 2\log \Vert e^{x/2}\cdot v\Vert
   = \log\,\Big(\,\sum_{i=1}^n q_i e^{\omega_i^T x} \,\Big) .
\end{equation}
The gradient $\nabla f(x)$ and the Hessian of $f$ at $x$ are given by
\begin{equation}
\label{eq:grad-of-kn-func}
  \nabla f(x) = \frac{\sum_{i=1}^n q_i   e^{\omega_i^T x}\, \omega_i }{\sum_{i=1}^n q_i  e^{\omega_i^T x}}, \quad
 \nabla^2 f(x) = \frac{\sum_{i=1}^n q_i   e^{\omega_i^T x}\, \omega_i\omega_i^T }{\sum_{i=1}^n q_i  e^{\omega_i^T x} }
          - \nabla f(x) \nabla f(x)^T .
\end{equation}
This shows that $\nabla f(x)$ lies in the interior of the \textit{weight polytope}
\[
 P \coloneqq  \mathrm{conv}\big(\omega_1,\ldots,  \omega_n\big) .
\]
Moreover, the Cauchy-Schwarz inequality implies for all nonzero $y\in\RR^d$,
\begin{equation}\label{eq:yHyT}
 y^T\nabla^2 f(x) y= \frac{\sum_{i=1}^n q_i   e^{\omega_i^T x}\, (y^T\omega_i)^2 }{\sum_{i=1}^n q_i  e^{\omega_i^T x} }
          -  (y^T\nabla f(x))^2 \ >\  0 ,
\end{equation}
which directly proves that $f$ is a strictly convex function.%
\footnote{To see this, we split $\sum_i q_i e^{\omega_i^T x} y^T \omega_i$ as $\sum_i q_i^{\frac{1}{2}}e^{\frac{1}{2}\omega_i^T x}y^T \omega_i \, \cdot \, q_i^{\frac{1}{2}}e^{\frac{1}{2}\omega_i^T x}$ and apply the Cauchy-Schwarz inequality. Since the vectors $\omega_i$ affinely span~$\RR^d$ we have $y^T\omega_i\neq y^T\omega_j$ for some $i\neq j$.
This shows that the vectors $(q_i^{\frac{1}{2}}e^{\frac{1}{2}\omega_i^T x}y^T\omega_i\mid i\in [n])$ and $(q_i^{\frac{1}{2}}e^{\frac{1}{2}\omega_i^T x}\mid i\in [n])$ cannot be proportional and hence the Cauchy-Schwarz inequality must be strict.
}

For the following equivalences, see~\cite{BLNW:20,torus}.
The function $f$~has a finite infimum iff $0\in P$.
This is equivalent to $0 \not\in\overline{\OO_v}$.
Moreover, the infimum $f_\star$ is attained iff $0$ lies in the interior of~$P$,
which is equivalent to the orbit $\OO_v$ being closed.
By the strict convexity, the minimum is attained at a unique point $x^\star\in\RR^d$.
By \cref{thm:kempf-ness},
$v^\star := e^{x^\star}\cdot v$
defines the $K$-orbit $\KK_{v^\star}$.

There is a vast amount of literature on the problem of minimizing~$f(x)$
using the ellipsoid or the interior point methods,
see~\cite{BLNW:20, geomet-tut,nemirovski-ellipsoid, vishnoi-singh, vishnoi-straszak, leake-vishnoi} and the references therein.
On input $\varepsilon>0$, these algorithms return a point $x\in\RR^d$ with $f(x)-f_\star<\varepsilon$ in polynomial time.
However, surprisingly, we are not aware of any result on computing a certified approximation to $x^\star$ in polynomial time, and currently this seems to be an open problem.
We conjecture that this can be achieved in polynomial time:

\begin{openproblem}
\label{conj:optimal}
	There exists an algorithm that on input $M\in\ZZ^{d\times n}$ such that $0\in\mathrm{int}(P)$, a vector $v\in(\QQ(\ii)^\times)^n$, and $\varepsilon\in\QQ_{>0}$, computes $x\in\QQ^d$ such that \[
	\Vert x-x^\star\Vert < \varepsilon,
	\] in time $\poly(d,n,B,b,\log\frac{1}{\varepsilon})$, where $B$ is the bit-length of $M$ and $b$ is the bit-length of $v$.
\end{openproblem}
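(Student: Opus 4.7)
My plan is to convert a polynomial-time value approximation of $f_\star:=\inf_x f(x)$
into a polynomial-time point approximation of $x^\star$ by exploiting strong convexity
of $f$ near $x^\star$. First I would bound $\|x^\star\|$ a priori: since $0\in\inter(P)$
and every facet of $P$ is cut out by an integer hyperplane of bit-length $\poly(d,B)$,
the distance from $0$ to $\partial P$ is at least $r\ge\exp(-\poly(d,B))$, and hence
along any unit direction $y$ the bound $f(ty)\ge rt+\log q_{\min}$ combined with
$f_\star\le f(0)\le \log n+2b$ gives $\|x^\star\|\le R$ for a computable
$R=\exp(\poly(d,n,B,b))$. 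Second, I would apply a polynomial-time convex optimizer
to $f$ restricted to the ball $B(0,R)$---either the ellipsoid method or an interior-point
method exploiting the self-concordance of the log-sum-exp function
(see, e.g.,~\cite{vishnoi-singh,leake-vishnoi,BLNW:20})---producing
$\tilde x\in\QQ^d$ with $f(\tilde x)-f_\star<\delta$ in time
$\poly(d,n,B,b,\log(1/\delta))$ for a $\delta$ to be chosen.

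The key third step is a quantitative lower bound on the strong convexity of $f$ near $x^\star$.
Setting $p_i(x):=q_i e^{\omega_i^T x}/\sum_j q_j e^{\omega_j^T x}$, the Hessian rewrites in covariance form as
\[
\nabla^2 f(x)=\frac{1}{2}\sum_{i,j}p_i(x)p_j(x)(\omega_i-\omega_j)(\omega_i-\omega_j)^T.
\]
Picking $d+1$ affinely independent weights $\omega_{i_0},\dots,\omega_{i_d}$ (which exist
by our standing assumption) and writing $D\in\ZZ^{d\times d}$ for the matrix whose columns
are the differences $\omega_{i_k}-\omega_{i_0}$, dropping all but the terms indexed by
the pairs $(i_0,i_k)$ yields
\[
\nabla^2 f(x)\succeq \frac{(\min_k p_{i_k}(x))^2}{2}\,DD^T\succeq \mu\,I_d,
\]
where $\sigma_{\min}(D)\ge\exp(-\poly(d,B))$ by an application of \cref{lem:singular}. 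Provided
$\min_i p_i(x)\ge\exp(-\poly(d,n,B,b))$ on some ball around $x^\star$, this gives a uniform
bound $\mu\ge\exp(-\poly(d,n,B,b))$ on that ball. Finally, strong convexity yields
$f(\tilde x)-f_\star\ge (\mu/4)\|\tilde x-x^\star\|^2$, so setting $\delta:=\mu\varepsilon^2/4$
forces $\|\tilde x-x^\star\|<\varepsilon$ in total time $\poly(d,n,B,b,\log(1/\varepsilon))$.

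The main obstacle is the quantitative bound $\min_i p_i(x)\ge\exp(-\poly)$ in the third step.
This is essentially a statement about the moment map of the exponential family defined
by $M$ and $q$: it asserts that the probabilities $p^\star$ producing the barycenter
$0\in\inter(P)$ via $\sum_i p_i^\star\omega_i=0$ cannot place doubly exponentially small
mass on any weight. Small examples such as $\omega_1=-2^{-B},\omega_2=1$ (for which
$x^\star\approx -B$ and $\min_i p_i^\star\approx 2^{-B}$) strongly suggest this is true,
and one may view it as a discrete-geometric analogue of root-separation bounds for integer
polynomials, but a rigorous proof appears to require a careful study of the geometry
of the moment map that does not follow from the value-approximation analyses
in~\cite{vishnoi-singh,leake-vishnoi,BLNW:20} alone. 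A secondary technical point is
ensuring that $\tilde x$ actually lands inside the (possibly small) strong-convexity
neighborhood of $x^\star$; this can be arranged by first running the optimizer to a
sharper accuracy $\delta_0$ and then switching to Newton's method, whose quadratic
convergence under the self-concordance of $f$ adds only $\log\log(1/\varepsilon)$
additional iterations.
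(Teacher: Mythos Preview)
The statement you are attempting to prove is not a theorem of the paper but a \emph{conjecture} (it is typeset as ``Conjecture'' via the \texttt{openproblem} environment). The paper explicitly says: ``we are not aware of any result on computing a certified approximation to $x^\star$ in polynomial time, and currently this seems to be an open problem.'' So there is no proof in the paper to compare against; the relevant question is whether your argument actually closes the conjecture.

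It does not, and the obstacle you flag is not merely unproven --- it is \emph{false}. Your third step requires a strong-convexity constant $\mu\ge\exp(-\poly(d,n,B,b))$ on a neighborhood of $x^\star$, which you derive from the covariance formula together with the claim $\min_i p_i(x^\star)\ge\exp(-\poly(d,n,B,b))$. The paper's \cref{ex:small-hessian} is an explicit counterexample. There $d=2$, $n=4$, the weights are $(1,0),(-2,0),(-N,1),(-N,-1)$, so the bit-length of $M$ is $B=O(\log N)$, and one computes $x^\star=(x_1^\star,0)$ with $e^{x_1^\star}>\sqrt[3]{2}$. Hence
\[
p_3(x^\star)=p_4(x^\star)=\frac{e^{-Nx_1^\star}}{\sum_i q_i e^{\omega_i^T x^\star}}<2^{-N/3}=2^{-\Omega(2^{B})},
\]
which is \emph{doubly} exponentially small in $B$. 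Equivalently, along $e_2$ the Hessian at $x^\star$ equals $p_3+p_4-(p_3-p_4)^2=2p_3<2\cdot 2^{-N/3}$, so $\lambda_{\min}(\nabla^2 f(x^\star))$ is doubly exponentially small. The paper draws exactly the consequence you would need to exclude: a point $x=(x_1^\star,1)$ with $\|x-x^\star\|=1$ but $f(x)-f_\star<2^{-N/3}(e+e^{-1}-2)$. Thus no choice of $\delta$ of the form $\exp(-\poly(d,n,B,b))$ in your value oracle can force $\|\tilde x-x^\star\|<\varepsilon$ via strong convexity.

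Two smaller remarks. First, your heuristic example ``$\omega_1=-2^{-B},\ \omega_2=1$'' is not admissible here since the weights must be integers; the phenomenon you are probing is precisely what \cref{ex:small-hessian} exhibits with integer weights. Second, switching to Newton's method does not rescue the argument: self-concordance of the log-sum-exp gives local quadratic convergence, but the size of the Dikin/Newton region at $x^\star$ is governed by the same Hessian eigenvalue and can therefore also be doubly exponentially small, so you have no polynomial-time guarantee of ever entering it. In short, converting a value approximation into a point approximation in the black-box way you propose is exactly what the paper's example shows cannot work; any proof of \cref{conj:optimal} would have to exploit additional structure.
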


We will now explain why computing an approximation to $x^\star$ seems to be more challenging than approximately minimizing $f(x)$.
In \cref{ex:small-hessian} below, we provide a family of Kempf-Ness functions $f$ and points~$x$ such that $f(x)-f_\star$ becomes doubly-exponentially small in the input bit-length, while the distance to the true minimizer~$x^\star$ remains constant: $\Vert x-x^\star\Vert=1$.
This implies that using known minimization algorithms in a black box will not be sufficient to solve \cref{conj:optimal} and more specific algorithms appear to be required.

\begin{example}
\label{ex:small-hessian}
We consider the action of $T\coloneqq (\CC^\times)^2$ on $V\coloneqq\CC^4$ with the following weights: $\omega_1\coloneqq (1,0),\;\omega_2\coloneqq (-2,0),\; \omega_3\coloneqq(-N,1),\; \omega_4\coloneqq(-N,-1)$ where $N>2$ is a positive integer.
See \cref{fig:kempf ness example} for illustration.
The Kempf-Ness function~\eqref{eq:def-kn-func} for $v\coloneqq (1,1,1,1)$ reads
\[
f(x_1,x_2) = \log\Big( e^{x_1} + e^{-2x_1} + e^{-N x_1 + x_2} + e^{-N x_1 - x_2}\Big).
\]
We note that $f(x)\geq 0$ for every $x\in\RR^d$ since at least one of the exponents $x_1, -2x_1$ is non-negative.

We now focus on the unique minimizer~$x^\star=(x^\star_1,x^\star_2)$ of the convex program $f_\star\coloneqq\min_x f(x)$, which is characterized by the property that the gradient vanishes: $\nabla f(x^\star)=0$.
By \cref{eq:grad-of-kn-func} this is equivalent to
\begin{equation}\label{eq:grad-for-ex}
	e^{x^\star_1} - 2e^{-2x^\star_1} - N e^{-Nx^\star_1+x^\star_2}  - N e^{-Nx^\star_1-x^\star_2} = 0
\quad\text{and}\quad
  e^{-Nx^\star_1+x^\star_2} - e^{-Nx^\star_1-x^\star_2} = 0.
\end{equation}
The second equation implies $x^\star_2=0$.
The first equation then reduces to
\begin{equation}\label{eq:grad-for-ex-simplified}
  e^{x^\star_1} - 2e^{-2x^\star_1} - 2 N e^{-Nx^\star_1} = 0,
\end{equation}
which implies that $x^\star_1$ is positive and decreases with~$N$.
In the limit $N\rightarrow\infty$, the first equation degenerates to $e^{x^\star_1}-2 e^{-2x^\star_1}=0$.
We deduce that $e^{x^{\star}_1}>\sqrt[3]{2}$ for every $N>2$.
In fact, we have:
\begin{equation}\label{eq:quantify}
  \sqrt[3]{2} < e^{x^{\star}_1} < \left( 1+\frac N{2^{N/3}} \right)\, \sqrt[3]{2}.
\end{equation}
To see this, we write $e^{x^\star_1}=(1+\varepsilon)\sqrt[3]{2}$ where $\varepsilon>0$.
Then \cref{eq:grad-for-ex-simplified} implies
\[
  (1+\varepsilon) \sqrt[3]{2} - \frac{\sqrt[3]{2}}{(1+\varepsilon)^2} = \frac{2N}{(1+\varepsilon)^N\, 2^{N/3}}.
\]
Multiplying this equation by $(1+\varepsilon)^N / \sqrt[3]{2}$, we get $(1+\varepsilon)^{N-2}\,(\varepsilon^2 + 3\varepsilon+3)\, \varepsilon = 2N \times 2^{-(N+1)/3}$.
Since~$\varepsilon>0$, we have $\varepsilon^2+3\varepsilon+3>3$ and $3\varepsilon < 2N \times 2^{-(N+1)/3}$, which implies \cref{eq:quantify}.

We now consider the point $x\coloneqq (x^\star_1, 1)$, whose first coordinate agrees with the first coordinate of $x^\star=(x^\star_1,0)$ but has a constant distance away from $x^\star$ in the second coordinate.
We will show that $f(x)-f_\star$ is exponentially small in $N$:
\[
	e^{f(x)-f_\star} = 1 + \frac{e^{-N x^\star_1}(e+e^{-1}-2)}{e^{f_\star}} \leq 1 + e^{-Nx^\star_1}(e+e^{-1}-2)
\]
where the inequality holds since $f_\star\geq 0$.
Taking logarithms we get
\[
	f(x)-f_\star \leq \log(1 + e^{-Nx^\star_1}(e+e^{-1}-2)) \leq e^{-Nx^\star_1}(e+e^{-1}-2)<2^{-N/3}(e+e^{-1}-2)
\]
since $e^{x^\star_1}>\sqrt[3]{2}$.
This shows that $f(x)-f_\star$ is exponentially small in $N$ and hence doubly-exponentially small in the input bit-length, and yet $\Vert x-x^\star\Vert = 1$.
\end{example}

\begin{figure}
	\centering
	\begin{tikzpicture}[scale = 0.6]
	\pgfdeclarelayer{nodelayer}
    \pgfdeclarelayer{edgelayer}
    \pgfsetlayers{main,nodelayer,edgelayer}
	\begin{pgfonlayer}{nodelayer}
		\node  (0) at (0, 5) {};
		\node  (1) at (0, -5) {};
		\node  (2) at (5, 0) {};
		\node  (3) at (-10, 0) {};
		\node  [fill = black, scale=0.3, circle, label={above:$\omega_1$}](4) at (2, 0) {};
		\node  [fill = black, scale=0.3, circle, label={below:$\omega_2$}](5) at (-4, 0) {};
		\node  [fill = black, scale=0.3, circle, label={above:$\omega_3$}](6) at (-10, 2) {};
		\node  [fill = black, scale=0.3, circle, label={below:$\omega_4$}](7) at (-10, -2) {};
		\node  (9) at (0, 0) {};
	\end{pgfonlayer}
	\begin{pgfonlayer}{edgelayer}
		\draw [fill =gray!50, dashed](1,0) ellipse (0.2cm and 3cm);
		\draw [dashed, gray](3.center) to (2.center);
		\draw [dashed, gray](0.center) to (1.center);
		\draw [line width = 1pt](7.center) to (4.center);
		\draw [line width = 1pt](4.center) to (6.center);
		\draw [line width = 1pt](6.center) to (7.center);
	\end{pgfonlayer}
\end{tikzpicture}
\caption{\small
An example of a torus action with weights $(1,0), (-2,0), (-N,1), (-N,-1)$. The points in the shaded region all have small $f(x)-f_\star$ but they can be far away from the optimal solution $x^\star$.
}
\end{figure}
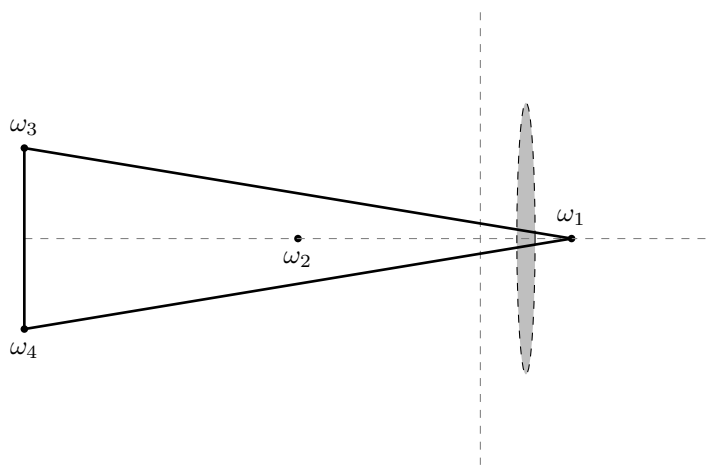

\begin{remark}
In the setting of \cref{ex:small-hessian}, consider the vectors~$v\coloneqq (1,1,1,1)$ and $w\coloneqq (1,1,2,2)$, and denote by $x^\star,y^\star$ the minimizers of the Kempf-Ness functions of~$v,w$, respectively, i.e., $e^{x^\star}=v^\star$ and $e^{y^\star}=w^\star$.
By \cref{ex:small-hessian}, we have $x^\star=(x^\star_1,0)$, where $x^\star_1$ satisfies \cref{eq:quantify}.
By a similar reasoning, we also have $y^\star=(y^\star_1,0)$, where $y^\star_1$ satisfies
\[
  \sqrt[3]{2} < e^{y^\star_1} < \left( 1 + \frac{2N}{2^{N/3}} \right)\, \sqrt[3]{2}.
\]
We will now show that the coordinates of $v^\star, w^\star$ are exponentially close in $N$.
To this end, denote $\varepsilon_N\coloneqq 2N \times 2^{-N/3}$.
We have $|(v^\star)_1-(w^\star)_1| = |e^{x^\star_1}-e^{y^\star_1}| < 2^{1/3}\,\varepsilon_N $.
For the second coordinate we have $|(v^\star)_2-(w^\star)_2|=|e^{-2x^\star_1}-e^{-2y^\star_1}|<2^{-2/3}\big((1+\varepsilon_N)^2-1\big)< 2^{4/3}\varepsilon_N$, using the crude bound $(1+\varepsilon_N)^2-1<4\varepsilon_N$, which holds for $\varepsilon_N<2$ and hence for $N$ large enough.
For the third and fourth coordinate we have $|(v^\star)_3-(w^\star)_3|=|(v^\star)_4-(w^\star)_4|=|e^{-Nx^\star_1}-2e^{-Ny^\star_1}|<|e^{-Nx^\star_1}|+2|e^{-Ny^\star_1}|<3\times 2^{-N/3}<\varepsilon_N$.
Hence $\Vert v^\star-w^\star\Vert_{\infty}<2^{4/3}\varepsilon_N$ and we conclude
\[
  \dist(\KK_{v^\star},\KK_{w^\star}) \leq \Vert v^\star-w^\star\Vert \leq 2\Vert v^\star-w^\star\Vert_{\infty} < 2^{7/3}\, \varepsilon_N
\]
where the second inequality follows from the inequality $\Vert\cdot\Vert_2\leq \sqrt{n}\Vert\cdot\Vert_\infty$ which holds in the $n$-dimensional Euclidean space.
This shows that the Euclidean distance between $\KK_{v^\star}$ and $\KK_{w^\star}$ can be exponentially small in $N$ and doubly exponentially small in the bit-length of $N$.
On the other hand, it is easy to lower bound the logarithmic distance: It is easy to verify that \[
H=\begin{bmatrix}
    2 & 1 & 0 & 0\\
    2N & 0 & 1 & 1
\end{bmatrix}
\] is the matrix of rational invariants and $H\Log v = (0,0)$ and $H\Log w = (0, 2\log 2)$. \cref{cor:delta-ld-linear-forms} then implies \[
\distLD(\KK_{v^\star},\KK_{w^\star}) \, \geq \, \distLD(\OO_v,\OO_w) \, \geq \, \frac{1}{\sigma_{\max}(H)} \, \Delta(H \Log v, H\Log w) \geq \frac{\log 2}{2N}
\] where in the last inequality we use \cref{lem:singular} to upper bound $\sigma_{\max}(H)\leq 4N$.

More generally, if \cref{hyp:sep} holds then it is true for general torus actions that $\distLD(\KK_{v^\star},\KK_{w^\star})$ is at most singly exponentially small:
If $\KK_{v^\star}\neq\KK_{w^\star}$, then, in general, $\OO_v\neq\OO_w$, hence, $\distLD(\KK_{v^\star},\KK_{w^\star})\geq \distLD(\OO_v,\OO_w)\geq \sep_T(d,n,B,b)$.
This is one of the main reasons why we work with the logarithmic distance instead of the Euclidean distance.
\end{remark}

\subsection{Deciding the equality of the Kempf-Ness orbits}\label{sec:kn-approach-in-poly-time}
We now show that, assuming \cref{hyp:sep} and \cref{conj:optimal}, that for given $v,w\in (\CC^\times)^n$,
it is possible to decide in polynomial time whether the orbits $\OO_v$ for $\OO_w$ are equal.
By \cref{thm:kempf-ness}, this is equivalent to the equality of the Kempf-Ness orbits
$\KK_{v^\star}$ and $\KK_{w^\star}$. We decide this
by computing the approximate distance of the corresponding orbits
$\KK_{v^\star}$ and $\KK_{w^\star}$
with sufficent accuracy.
This has two ingredients: first we compute $x'$ and $y'$ that are $\varepsilon$-close approximations of $x^\star$ and $y^\star$, respectively.
This is possible by  \cref{conj:optimal}.
In a second step, we compute an approximation $D$ of the distance
$\delta:=\distLD (\KK_{v'} , \KK_{w'})$ between $\KK_{v'}$ and $\KK_{w'}$
such that $\delta \leq D \leq \gamma \delta$.
For this we use \cref{thm:gap-algorithms-main}~(\ref{it:gap 2}),
see \cref{prop:delta-prime-approx} below.
When choosing $\varepsilon=2^{-\ell}$ sufficiently small, we show that
$\KK_{v^\star} = \KK_{w^\star}$ iff $D < \gamma\varepsilon$,
which allows to decide orbit equality.
\Cref{hyp:sep} is needed to make sure that it is sufficient to
choose $\ell$ as a polynomial in the input bit-length
in order to obtain a polynomial time algorithm.

We now give the technical details.
We recall that we restrict ourselves to torus actions where $0$ lies in the interior of the convex hull $P$ of the set of weights $\Omega$.
This condition guarantees that {\em all} orbits of vectors in $(\CC^\times)^n$ are closed in~$\CC^n$;
on the other hand, when $0\not\in\inter(P)$, then no such orbit is closed, see~\cite{torus}.

We first state the following straightforward consequence of \cref{thm:gap-algorithms-main}~(\ref{it:gap 2}).

\begin{corollary}\label{prop:delta-prime-approx}
We assume the \cref{hyp:sep}. 
There is an exponentially bounded approximation factor $\gamma=\gamma(d,n,B,b)$
such that,
given a $T$-action by a weight matrix $M\in\ZZ^{d\times n}$, and
given vectors  $v,w\in(\QQ(\ii)^\times)^n$ and $x,y\in\QQ^d$,
we can compute in polynomial time a number $D\in\QQ_{\geq 0}$ such that
\[
    \distLD ( \KK_{v'}, \KK_{w'} )\, \leq\, D \,\leq\, \gamma\, \distLD (\KK_{v'} , \KK_{w'}) ,
\]
where $v'\coloneqq e^x\cdot v$ and $w'\coloneqq e^y\cdot w$.
\end{corollary}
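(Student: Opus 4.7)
The plan is to mirror the proof strategy of \cref{thm:gap-algorithms-main}~(\ref{it:gap 2}) (via \cref{prop:delta-to-Delta-reduction}) but to work entirely in log-space and to invoke the unconditional algorithm for $\textsc{ROP}(K,\Delta)$ from \cref{thm:gap-algorithms-main}~(\ref{it:gap 3}). The subtlety is that $v'=e^x\cdot v$ and $w'=e^y\cdot w$ are in general not Gaussian-rational (their magnitudes involve $e^{\omega_j^T x}$), so part~(\ref{it:gap 2}) cannot be invoked on them as a black box. First I would decide $\KK_{v'}=\KK_{w'}$ exactly: by \cref{prop:k-invariants}, this holds iff $\OO_v=\OO_w$ (decidable in polynomial time via \cref{thm:basis-and-equality}) and $|v'_j|=|w'_j|$ for all $j$; the latter, $|v_j|^2/|w_j|^2=e^{2\omega_j^T(y-x)}$, equates a positive rational to a real exponential, and the Lindemann--Weierstrass theorem forces both sides to equal $1$, reducing to the rational conditions $|v_j|^2=|w_j|^2$ and $\omega_j^T x=\omega_j^T y$. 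If all these tests succeed, return $D=0$.

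Otherwise, using \cref{lem:log} I would compute approximations $\tilde\eta_v,\tilde\eta_w\in\QQ^n+2\pi\ii\QQ^n$ of $\Log v,\Log w$ to precision $\epsilon$ and form $\eta_{v'}:=\tilde\eta_v+M^T x$ and $\eta_{w'}:=\tilde\eta_w+M^T y$, which approximate $\Log v'$ and $\Log w'$ to the same precision since $M^T x$ and $M^T y$ are exact. Applying \cref{thm:gap-algorithms-main}~(\ref{it:gap 3}) to the instance $(M,\eta_{v'},\eta_{w'})$ produces in polynomial time a number $D'$ with $\Delta(K\ast\eta_{v'},K\ast\eta_{w'})\leq D'\leq\gamma_1\,\Delta(K\ast\eta_{v'},K\ast\eta_{w'})$, and by the triangle inequality for the quotient metric $\Delta$,
\[
|\Delta(K\ast\eta_{v'},K\ast\eta_{w'})-\distLD(\KK_{v'},\KK_{w'})|\leq 2\epsilon.
\]
Choosing $\epsilon$ below a constant fraction (say $1/18$) of a separation lower bound $\sigma$ for $\distLD(\KK_{v'},\KK_{w'})$ and copying the relative-error bookkeeping from the proof of \cref{prop:delta-to-Delta-reduction}, the output $D:=\tfrac{9}{8}D'$ will then satisfy $\distLD(\KK_{v'},\KK_{w'})\leq D\leq\tfrac{5}{4}\gamma_1\,\distLD(\KK_{v'},\KK_{w'})$, giving the required exponentially bounded approximation factor $\gamma=\tfrac{5}{4}\gamma_1=\exp(\poly(d,n,B))$.

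The main obstacle will be establishing the separation $\distLD(\KK_{v'},\KK_{w'})\geq\exp(-\poly(d,n,B,b,\langle x\rangle,\langle y\rangle))$ whenever $\KK_{v'}\neq\KK_{w'}$, which is what ensures that $\epsilon$ has polynomial bit-length and the overall algorithm runs in polynomial time. Using \cref{prop:Kdelta-orbit-distance} together with $M^T(x-y)\in U$, I would first note the orthogonal decomposition $\|\rho_v-\rho_w+M^T(x-y)\|^2\geq\dist^2(\rho_v-\rho_w,U)$, which yields $\distLD(\KK_{v'},\KK_{w'})\geq\distLD(\OO_v,\OO_w)$; when $\OO_v\neq\OO_w$, \cref{hyp:sep} directly supplies the desired bound. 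The hard subcase is $\OO_v=\OO_w$ but $\KK_{v'}\neq\KK_{w'}$: here the imaginary-part contribution vanishes and one is left with $\|\rho_v-\rho_w+M^T(x-y)\|$, whose coordinates $\tfrac{1}{2}\log(|v_j|^2/|w_j|^2)+M_j^T(x-y)$ are linear forms in logarithms of positive rationals augmented by a rational shift. Although this is not literally of the shape covered by \cref{hyp:linearforms}, after clearing denominators it reduces to lower bounds of the form $|\log A+c|\geq\exp(-\poly)$ for rational $A>0$ and integer $c$, which are supplied by classical effective transcendence measures for $e^c$ (a consequence of the same Baker--W\"ustholz--Matveev theory behind~\eqref{eq:BW:93}); this will complete the required polynomial bound on $\log\sigma^{-1}$.
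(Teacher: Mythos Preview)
The paper offers no proof beyond the one-line attribution ``straightforward consequence of \cref{thm:gap-algorithms-main}~(\ref{it:gap 2}),'' so your write-up is in fact more careful than what the paper supplies.  Your overall strategy is sound and is the natural way to make the claim precise: since $v'=e^x\cdot v$ is not Gaussian-rational you cannot invoke \cref{thm:gap-algorithms-main}~(\ref{it:gap 2}) as a black box, but $\Log v'=\Log v+M^Tx$ lets you build a rational input for the unconditional $\textsc{ROP}(K,\Delta)$ algorithm of \cref{thm:gap-algorithms-main}~(\ref{it:gap 3}) and then run the additive-to-multiplicative bookkeeping of \cref{prop:delta-to-Delta-reduction}.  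The exact equality test via \cref{prop:k-invariants} and Lindemann--Weierstrass is correct.

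You are also right to isolate the ``hard subcase'' $\OO_v=\OO_w$ but $\KK_{v'}\neq\KK_{w'}$: here the imaginary contribution vanishes and $\distLD(\KK_{v'},\KK_{w'})=\lVert\rho_v-\rho_w+M^T(x-y)\rVert$, whose nonzero coordinates are inhomogeneous forms $\tfrac12\log(|v_j|^2/|w_j|^2)+\omega_j^T(x-y)$.  This case is \emph{not} covered by \cref{hyp:sep} or \cref{hyp:linearforms} (both are homogeneous), so the paper's attribution really does gloss over it.  Your resolution is correct in substance, with one small correction: clearing the denominator of $r=\omega_j^T(x-y)=p/q$ gives $|q\log A+2p|$, i.e.\ an integer multiple of a single logarithm plus an integer, not literally $|\log A+c|$.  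The required bound $|q\log A+2p|\ge\exp\bigl(-\poly(b,\langle x\rangle,\langle y\rangle,d,n,B)\bigr)$ for rational $A\neq1$ is the $n=1$ case of Baker's \emph{inhomogeneous} theorem on linear forms in logarithms (equivalently, an effective irrationality measure for $\log A$), which is classical and unconditional; citing that directly is cleaner than ``transcendence measures for $e^c$.''  With this adjustment your argument goes through and yields the stated $\gamma=\exp(\poly(d,n,B))$.
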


Using the approximation factor~$\gamma$ of \cref{prop:delta-prime-approx}, we define the parameter
$$
 \varepsilon :=\frac{1}{2\gamma} \sep_T(d,n,B,b) .
$$
Assuming \cref{conj:optimal}, we can compute in polynomial time approximations
\[
    \Vert x - x^\star \Vert <  \tfrac12\|M\|^{-1} \varepsilon, \qquad
    \Vert y-y^\star\Vert < \tfrac12\|M\|^{-1} \varepsilon . 
\]
Recall that  $v^\ast = e^{x^\star}\cdot v$. We set $v'\coloneqq e^x\cdot v$. Then
$\Log v^\ast = \Log v + M^Tx^\star$ and $\Log v' = \Log v + M^Tx$, hence
$$
 \distLD(v',v^\star) = \Delta(\Log v + M^T x , \Log v + M^T x^\star) =\Vert M^T(x-x^\star)\Vert
     \leq \tfrac12 \varepsilon ,
$$
which implies
$\distLD(\KK_{v'},\KK_{v^\star}) \le \tfrac12 \varepsilon$.
Analogously, we get
$\distLD(\KK_{w'},\KK_{w^\star}) \le \tfrac12 \varepsilon$
for $w'\coloneqq e^y\cdot w$.
Using the triangle inequality, we obtain
$$
 \distLD(\KK_{v^\star},\KK_{w^\star}) \ \le\ \distLD(\KK_{v^\star},\KK_{v'}) + \distLD(\KK_{v'},\KK_{w'}) + \distLD(\KK_{w'},\KK_{w^\star})
  \ \le\ \distLD(\KK_{v'},\KK_{w'}) + \varepsilon .
$$
By symmetry, this implies
\begin{equation} \label{eq:kn-orbits-approx}
     \big| \distLD(\KK_{v^\star},\KK_{w^\star}) - \distLD(\KK_{v'},\KK_{w'})\big| < \varepsilon .
\end{equation}
We now observe the following two implications:
\begin{equation}\label{eq:iff}
\begin{split}
     \distLD(\KK_{v^\star},\KK_{w^\star}) = 0\quad & \Longrightarrow \quad \distLD(\KK_{v'},\KK_{w'}) < \varepsilon\\
      \distLD(\KK_{v^\star},\KK_{w^\star}) > 0\quad & \Longrightarrow \quad \distLD(\KK_{v'},\KK_{w'}) > \gamma \varepsilon .
\end{split}
\end{equation}
The first implication is an obvious consequence of \cref{eq:kn-orbits-approx}.
For the second implication, note that
$\KK_{v^\star} \ne \KK_{w^\star}$ implies
$\distLD(\KK_{v^\star},\KK_{w^\star}) \ge \distLD(\OO_{v},\OO_{w}) \ge \sep_T$,
since $\KK_{v^\star} \subset \OO_v$ and $\KK_{w^\star} \subset \OO_w$.
Combined with \cref{eq:kn-orbits-approx}, this gives
$$
 \distLD(\KK_{v'},\KK_{w'}) \ >\  \sep_T  - \varepsilon = 2\gamma \varepsilon  - \varepsilon \ge  \gamma \varepsilon ,
$$
which proves \cref{eq:iff}.

We use \cref{prop:delta-prime-approx} to compute in polynomial time the number~$D$ such that
$\delta \leq D \leq \gamma \delta$, where $\delta:=\distLD (\KK_{v'} , \KK_{w'})$.
\Cref{eq:iff} implies now:
\begin{equation*}
\begin{split}
     \KK_{v^\star}= \KK_{w^\star}\quad & \Longrightarrow \quad   D < \gamma\varepsilon\\
     \KK_{v^\star}\ne \KK_{w^\star}\quad & \Longrightarrow \quad D > \gamma\varepsilon .
\end{split}
\end{equation*}
So the knowledge of $D$ allows to decide whether $\KK_{v^\star}= \KK_{w^\star}$
and hence whether $\OO_{v}= \OO_{w}$.

\bibliography{torus-abc-lipics}

\begin{thebibliography}{10}

\bibitem{aks-primality}
Manindra Agrawal, Neeraj Kayal, and Nitin Saxena.
\newblock {PRIMES} is in {P}.
\newblock {\em Annals of Mathematics}, 160(2):781--793, 2004.

\bibitem{aharonov-regev}
Dorit Aharonov and Oded Regev.
\newblock Lattice problems in $\textsc{NP} \cap \textsc{coNP}$.
\newblock {\em J. ACM}, 52(5):749--765, sep 2005.
\newblock \href {https://doi.org/10.1145/1089023.1089025}
  {\path{doi:10.1145/1089023.1089025}}.

\bibitem{AZGLOW}
Zeyuan Allen-Zhu, Ankit Garg, Yuanzhi Li, Rafael Oliveira, and Avi Wigderson.
\newblock Operator scaling via geodesically convex optimization, invariant
  theory and polynomial identity testing.
\newblock In {\em S{TOC}'18---{P}roceedings of the 50th {A}nnual {ACM} {SIGACT}
  {S}ymposium on {T}heory of {C}omputing}, pages 172--181. ACM, New York, 2018.
\newblock \href {https://doi.org/10.1145/3188745.3188942}
  {\path{doi:10.1145/3188745.3188942}}.

\bibitem{AKRS:21b}
Carlos Am\'{e}ndola, Kathl\'{e}n Kohn, Philipp Reichenbach, and Anna Seigal.
\newblock Invariant theory and scaling algorithms for maximum likelihood
  estimation.
\newblock {\em SIAM J. Appl. Algebra Geom.}, 5(2):304--337, 2021.
\newblock \href {https://doi.org/10.1137/20M1328932}
  {\path{doi:10.1137/20M1328932}}.

\bibitem{amendola2021toric}
Carlos Am{\'e}ndola, Kathl{\'e}n Kohn, Philipp Reichenbach, and Anna Seigal.
\newblock Toric invariant theory for maximum likelihood estimation in
  log-linear models.
\newblock {\em Algebraic Statistics}, 12(2):187--211, 2021.

\bibitem{audin2012torus}
Michele Audin.
\newblock {\em Torus actions on symplectic manifolds}, volume~93 of {\em
  Progress in Mathematics}.
\newblock Birkh{\"a}user Basel, 2012.

\bibitem{baker_explicit}
Alan Baker.
\newblock Experiments on the abc-conjecture.
\newblock {\em Publicationes Mathematicae}, 65, 11 2004.

\bibitem{baker98}
Alan Baker.
\newblock Logarithmic forms and the abc-conjecture.
\newblock In {\em Number Theory: Diophantine, Computational and Algebraic
  Aspects. Proceedings of the International Conference held in Eger, Hungary,
  July 29-August 2, 1996}, pages 37--44. De Gruyter, 2011.
\newblock URL: \url{https://doi.org/10.1515/9783110809794.37}, \href
  {https://doi.org/doi:10.1515/9783110809794.37}
  {\path{doi:doi:10.1515/9783110809794.37}}.

\bibitem{baker_wustholz_2008}
Alan Baker and Gisbert W\"{u}stholz.
\newblock {\em Logarithmic Forms and Diophantine Geometry}.
\newblock New Mathematical Monographs. Cambridge University Press, 2008.
\newblock \href {https://doi.org/10.1017/CBO9780511542862}
  {\path{doi:10.1017/CBO9780511542862}}.

\bibitem{BILPS:20}
Markus Bl{\"{a}}ser, Christian Ikenmeyer, Vladimir Lysikov, Anurag Pandey, and
  Frank-Olaf Schreyer.
\newblock Variety membership testing, algebraic natural proofs, and geometric
  complexity theory.
\newblock arXiv:1911.02534, 2020.

\bibitem{bombieri_gubler_2006}
Enrico Bombieri and Walter Gubler.
\newblock {\em Heights in Diophantine Geometry}.
\newblock New Mathematical Monographs. Cambridge University Press, 2006.
\newblock \href {https://doi.org/10.1017/CBO9780511542879}
  {\path{doi:10.1017/CBO9780511542879}}.

\bibitem{agm}
Jonathan~M. Borwein and Peter~B. Borwein.
\newblock {\em Pi and the AGM : a study in analytic number theory and
  computational complexity}.
\newblock Canadian Mathematical Society series of monographs and advanced
  texts. Wiley, New York, 1987.

\bibitem{BorweinBorwein:88}
Jonathan~M. Borwein and Peter~B. Borwein.
\newblock On the complexity of familiar functions and numbers.
\newblock {\em SIAM Rev.}, 30(4):589--601, 1988.
\newblock \href {https://doi.org/10.1137/1030134} {\path{doi:10.1137/1030134}}.

\bibitem{geomet-tut}
Stephen Boyd, Seung-Jean Kim, Lieven Vandenberghe, and Arash Hassibi.
\newblock A tutorial on geometric programming.
\newblock {\em Optimization and Engineering}, 8(1):67--127, March 2007.
\newblock \href {https://doi.org/10.1007/s11081-007-9001-7}
  {\path{doi:10.1007/s11081-007-9001-7}}.

\bibitem{BCMW:17}
Peter B{\"{u}}rgisser, Matthias Christandl, Ketan Mulmuley, and Michael Walter.
\newblock Membership in moment polytopes is in {NP} and {coNP}.
\newblock {\em {SIAM} J. Comput.}, 46(3):972--991, 2017.
\newblock \href {https://doi.org/10.1137/15M1048859}
  {\path{doi:10.1137/15M1048859}}.

\bibitem{torus}
Peter B\"{u}rgisser, Levent Do\u{g}an, Visu Makam, Michael Walter, and Avi
  Wigderson.
\newblock {Polynomial Time Algorithms in Invariant Theory for Torus Actions}.
\newblock In Valentine Kabanets, editor, {\em 36th Computational Complexity
  Conference (CCC 2021)}, volume 200 of {\em Leibniz International Proceedings
  in Informatics (LIPIcs)}, pages 32:1--32:30, Dagstuhl, Germany, 2021. Schloss
  Dagstuhl -- Leibniz-Zentrum f{\"u}r Informatik.
\newblock URL: \url{https://drops.dagstuhl.de/opus/volltexte/2021/14306}, \href
  {https://doi.org/10.4230/LIPIcs.CCC.2021.32}
  {\path{doi:10.4230/LIPIcs.CCC.2021.32}}.

\bibitem{BFGOWW2}
Peter B{\"{u}}rgisser, Cole Franks, Ankit Garg, Rafael~Mendes de~Oliveira,
  Michael Walter, and Avi Wigderson.
\newblock Towards a theory of non-commutative optimization: geodesic first and
  second order methods for moment maps and polytopes.
\newblock In {\em 60th {A}nnual {IEEE} {S}ymposium on {F}oundations of
  {C}omputer {S}cience---{FOCS} 2019}, pages 845--861. IEEE Computer Soc., Los
  Alamitos, CA, 2019.
\newblock URL: \url{http://arxiv.org/abs/1910.12375}.

\bibitem{BFGOWW}
Peter B\"{u}rgisser, Cole Franks, Ankit Garg, Rafael Oliveira, Michael Walter,
  and Avi Wigderson.
\newblock Efficient algorithms for tensor scaling, quantum marginals, and
  moment polytopes.
\newblock In {\em 59th {A}nnual {IEEE} {S}ymposium on {F}oundations of
  {C}omputer {S}cience---{FOCS} 2018}, pages 883--897. IEEE Computer Soc., Los
  Alamitos, CA, 2018.
\newblock \href {https://doi.org/10.1109/FOCS.2018.00088}
  {\path{doi:10.1109/FOCS.2018.00088}}.

\bibitem{BGOWW}
Peter B\"{u}rgisser, Ankit Garg, Rafael Oliveira, Michael Walter, and Avi
  Wigderson.
\newblock Alternating minimization, scaling algorithms, and the null-cone
  problem from invariant theory.
\newblock In {\em 9th {I}nnovations in {T}heoretical {C}omputer {S}cience},
  volume~94 of {\em LIPIcs. Leibniz Int. Proc. Inform.}, pages Art. No. 24, 20.
  Schloss Dagstuhl. Leibniz-Zent. Inform., Wadern, 2018.

\bibitem{BLNW:20}
Peter B\"{u}rgisser, Yinan Li, Harold Nieuwboer, and Michael Walter.
\newblock Interior-point methods for unconstrained geometric programming and
  scaling problems.
\newblock arXiv:2008.12110, 2020.

\bibitem{ml-lse-1}
Giuseppe~C. Calafiore, Stephane Gaubert, and Corrado Possieri.
\newblock Log-sum-exp neural networks and posynomial models for convex and
  log-log-convex data.
\newblock {\em IEEE Transactions on Neural Networks and Learning Systems},
  31(3):827--838, 2020.
\newblock \href {https://doi.org/10.1109/TNNLS.2019.2910417}
  {\path{doi:10.1109/TNNLS.2019.2910417}}.

\bibitem{ml-lse-2}
Giuseppe~C. Calafiore, Stephane Gaubert, and Corrado Possieri.
\newblock A universal approximation result for difference of log-sum-exp neural
  networks.
\newblock {\em IEEE Transactions on Neural Networks and Learning Systems},
  31(12):5603--5612, 2020.
\newblock \href {https://doi.org/10.1109/TNNLS.2020.2975051}
  {\path{doi:10.1109/TNNLS.2020.2975051}}.

\bibitem{conway-sloane}
John~H. Conway and Neil J.~A. Sloane.
\newblock Low-dimensional lattices v. integral coordinates for integral
  lattices.
\newblock {\em Proceedings of the Royal Society of London. Series A,
  Mathematical and Physical Sciences}, 426(1871):211--232, 1989.
\newblock URL: \url{http://www.jstor.org/stable/2398341}.

\bibitem{derksen-kemper}
Harm Derksen and Gregor Kemper.
\newblock {\em Computational Invariant Theory}.
\newblock BV035421342 Encyclopaedia of Mathematical Sciences volume 130.
  Springer, Heidelberg ; New York ; Dordrecht ; London, second enlarged edition
  with two appendices by vladimir l. popov, and an addendum by norbert a. campo
  and vladimir l. popov edition, 2015.

\bibitem{DM-poly}
Harm Derksen and Visu Makam.
\newblock Polynomial degree bounds for matrix semi-invariants.
\newblock {\em Adv. Math.}, 310:44--63, 2017.
\newblock \href {https://doi.org/10.1016/j.aim.2017.01.018}
  {\path{doi:10.1016/j.aim.2017.01.018}}.

\bibitem{DM-oc}
Harm Derksen and Visu Makam.
\newblock Algorithms for orbit closure separation for invariants and
  semi-invariants of matrices.
\newblock {\em Algebra Number Theory}, 14(10):2791--2813, 2020.
\newblock \href {https://doi.org/10.2140/ant.2020.14.2791}
  {\path{doi:10.2140/ant.2020.14.2791}}.

\bibitem{derksen2021maximum}
Harm Derksen and Visu Makam.
\newblock Maximum likelihood estimation for matrix normal models via quiver
  representations.
\newblock {\em SIAM Journal on Applied Algebra and Geometry}, 5(2):338--365,
  2021.

\bibitem{derksen2022maximum}
Harm Derksen, Visu Makam, and Michael Walter.
\newblock Maximum likelihood estimation for tensor normal models via castling
  transforms.
\newblock In {\em Forum of Mathematics, Sigma}, volume~10, page e50. Cambridge
  University Press, 2022.

\bibitem{dinurcvp}
Irit Dinur, Guy Kindler, Ran Raz, and Shmuel Safra.
\newblock Approximating {CVP} to within-almost polynomial factors is {NP}-hard.
\newblock {\em Combinatorica}, 23(2):205--243, April 2003.
\newblock \href {https://doi.org/10.1007/s00493-003-0019-y}
  {\path{doi:10.1007/s00493-003-0019-y}}.

\bibitem{mono-testing-abc}
Kousha Etessami, Alistair Stewart, and Mihalis Yannakakis.
\newblock A note on the complexity of comparing succinctly represented
  integers, with an application to maximum probability parsing.
\newblock {\em ACM Trans. Comput. Theory}, 6(2), may 2014.
\newblock \href {https://doi.org/10.1145/2601327} {\path{doi:10.1145/2601327}}.

\bibitem{FS}
Michael~A. Forbes and Amir Shpilka.
\newblock Explicit {N}oether normalization for simultaneous conjugation via
  polynomial identity testing.
\newblock In {\em Approximation, randomization, and combinatorial
  optimization}, volume 8096 of {\em Lecture Notes in Comput. Sci.}, pages
  527--542. Springer, Heidelberg, 2013.
\newblock \href {https://doi.org/10.1007/978-3-642-40328-6_37}
  {\path{doi:10.1007/978-3-642-40328-6_37}}.

\bibitem{franks2021near}
Cole Franks, Rafael Oliveira, Akshay Ramachandran, and Michael Walter.
\newblock Near optimal sample complexity for matrix and tensor normal models
  via geodesic convexity.
\newblock {\em arXiv preprint arXiv:2110.07583}, 2021.

\bibitem{GGOW16}
Ankit Garg, Leonid Gurvits, Rafael Oliveira, and Avi Wigderson.
\newblock A deterministic polynomial time algorithm for non-commutative
  rational identity testing.
\newblock In {\em 57th {A}nnual {IEEE} {S}ymposium on {F}oundations of
  {C}omputer {S}cience---{FOCS} 2016}, pages 109--117. IEEE Computer Soc., Los
  Alamitos, CA, 2016.
\newblock \href {https://doi.org/10.1109/FOCS.2016.95}
  {\path{doi:10.1109/FOCS.2016.95}}.

\bibitem{GGOW:20}
Ankit Garg, Leonid Gurvits, Rafael Oliveira, and Avi Wigderson.
\newblock Operator scaling: theory and applications.
\newblock {\em Found. Comput. Math.}, 20(2):223--290, 2020.
\newblock \href {https://doi.org/10.1007/s10208-019-09417-z}
  {\path{doi:10.1007/s10208-019-09417-z}}.

\bibitem{homomorphic-crypto}
Craig Gentry.
\newblock Fully homomorphic encryption using ideal lattices.
\newblock In {\em Proceedings of the Forty-First Annual ACM Symposium on Theory
  of Computing}, STOC '09, pages 169--178, New York, NY, USA, 2009. Association
  for Computing Machinery.
\newblock \href {https://doi.org/10.1145/1536414.1536440}
  {\path{doi:10.1145/1536414.1536440}}.

\bibitem{goldfeld}
Dorian Goldfeld.
\newblock Beyond the last theorem.
\newblock {\em Math Horizons}, 4(1):26--34, 1996.
\newblock \href
  {http://arxiv.org/abs/https://doi.org/10.1080/10724117.1996.11974985}
  {\path{arXiv:https://doi.org/10.1080/10724117.1996.11974985}}, \href
  {https://doi.org/10.1080/10724117.1996.11974985}
  {\path{doi:10.1080/10724117.1996.11974985}}.

\bibitem{ggh-crypto}
Oded Goldreich, Shafi Goldwasser, and Shai Halevi.
\newblock Public-key cryptosystems from lattice reduction problems.
\newblock In Burton~S. Kaliski, editor, {\em Advances in Cryptology --- CRYPTO
  '97}, pages 112--131, Berlin, Heidelberg, 1997. Springer Berlin Heidelberg.

\bibitem{granvilletucker}
Andrew Granville and Thomas Tucker.
\newblock It's as easy as abc.
\newblock {\em Notices of the American Mathematical Society}, 49, 01 2002.

\bibitem{symplectic-in-physics}
Victor Guillemin and Shlomo Sternberg.
\newblock {\em Symplectic techniques in physics}.
\newblock Cambridge Univ. Press, Cambridge u.a., 1. publ., reprint. edition,
  1986.

\bibitem{Gur04}
Leonid Gurvits.
\newblock Classical complexity and quantum entanglement.
\newblock {\em J. Comput. Syst. Sci.}, 69(3):448--484, 2004.
\newblock \href {https://doi.org/10.1016/j.jcss.2004.06.003}
  {\path{doi:10.1016/j.jcss.2004.06.003}}.

\bibitem{hall-completion-2}
Marshall Hall.
\newblock Integral matrices $a$ for which $aa^t = mi$.
\newblock {\em Number Theory and Algebra}, pages 119 -- 134, 1977.
\newblock URL:
  \url{https://www.scopus.com/inward/record.uri?eid=2-s2.0-84887022329&partnerID=40&md5=5f91330ec7e4ddf22586cd2717a71b3a}.

\bibitem{hall-completion}
Marshall Hall.
\newblock Combinatorial completions.
\newblock In {\em Advances in Graph Theory}, volume~3 of {\em Annals of
  Discrete Mathematics}, pages 111--123. Elsevier, 1978.
\newblock URL:
  \url{https://www.sciencedirect.com/science/article/pii/S0167506008705016},
  \href {https://doi.org/https://doi.org/10.1016/S0167-5060(08)70501-6}
  {\path{doi:https://doi.org/10.1016/S0167-5060(08)70501-6}}.

\bibitem{hamada-hirai}
Masaki Hamada and Hiroshi Hirai.
\newblock Computing the nc-rank via discrete convex optimization on cat(0)
  spaces.
\newblock {\em SIAM Journal on Applied Algebra and Geometry}, 5(3):455--478,
  2021.
\newblock \href {http://arxiv.org/abs/https://doi.org/10.1137/20M138836X}
  {\path{arXiv:https://doi.org/10.1137/20M138836X}}, \href
  {https://doi.org/10.1137/20M138836X} {\path{doi:10.1137/20M138836X}}.

\bibitem{hardy-wright:08}
Godfrey~H. Hardy and Edward~M. Wright.
\newblock {\em An introduction to the theory of numbers}.
\newblock Oxford University Press, Oxford, sixth edition, 2008.
\newblock Revised by D. R. Heath-Brown and J. H. Silverman, With a foreword by
  Andrew Wiles.

\bibitem{himmelberg}
Charles~J. Himmelberg.
\newblock Pseudo-metrizability of quotient spaces.
\newblock {\em Fund. Math.}, pages 1--6, 1968.

\bibitem{ml-lse-3}
Warren Hoburg, Philippe Kirschen, and Pieter Abbeel.
\newblock Data fitting with geometric-programming-compatible softmax functions.
\newblock {\em Optimization and Engineering}, 17(4):897--918, December 2016.
\newblock \href {https://doi.org/10.1007/s11081-016-9332-3}
  {\path{doi:10.1007/s11081-016-9332-3}}.

\bibitem{ntru-crytpo-1}
Jeffrey Hoffstein, Jill Pipher, and Joseph~H. Silverman.
\newblock Ntru: A ring-based public key cryptosystem.
\newblock In Joe~P. Buhler, editor, {\em Algorithmic Number Theory}, pages
  267--288, Berlin, Heidelberg, 1998. Springer Berlin Heidelberg.

\bibitem{ntru-crypto-2}
Jeffrey Hoffstein and Joseph Silverman.
\newblock Optimizations for ntru.
\newblock In Kazimierz Alster, Jerzy Urbanowicz, and Hugh~C. Williams, editors,
  {\em Proceedings of the International Conference organized by the Stefan
  Banach International Mathematical Center Warsaw, Poland, September 11-15,
  2000}, pages 77--88, Berlin, New York, 2001. De Gruyter.
\newblock URL: \url{https://doi.org/10.1515/9783110881035.77} [cited
  2023-06-20], \href {https://doi.org/doi:10.1515/9783110881035.77}
  {\path{doi:doi:10.1515/9783110881035.77}}.

\bibitem{ivanyos-qiao}
G\'{a}bor Ivanyos and Youming Qiao.
\newblock {\em On the orbit closure intersection problems for matrix tuples
  under conjugation and left-right actions}, pages 4115--4126.
\newblock Society for Industrial and Applied Mathematics, 2023.
\newblock URL:
  \url{https://epubs.siam.org/doi/abs/10.1137/1.9781611977554.ch158}, \href
  {http://arxiv.org/abs/https://epubs.siam.org/doi/pdf/10.1137/1.9781611977554.ch158}
  {\path{arXiv:https://epubs.siam.org/doi/pdf/10.1137/1.9781611977554.ch158}},
  \href {https://doi.org/10.1137/1.9781611977554.ch158}
  {\path{doi:10.1137/1.9781611977554.ch158}}.

\bibitem{IQS}
G\'{a}bor Ivanyos, Youming Qiao, and K.~V. Subrahmanyam.
\newblock Non-commutative {E}dmonds' problem and matrix semi-invariants.
\newblock {\em Comput. Complexity}, 26(3):717--763, 2017.
\newblock \href {https://doi.org/10.1007/s00037-016-0143-x}
  {\path{doi:10.1007/s00037-016-0143-x}}.

\bibitem{IQS2}
G\'{a}bor Ivanyos, Youming Qiao, and K.~V. Subrahmanyam.
\newblock Constructive non-commutative rank computation is in deterministic
  polynomial time.
\newblock {\em Comput. Complexity}, 27(4):561--593, 2018.
\newblock \href {https://doi.org/10.1007/s00037-018-0165-7}
  {\path{doi:10.1007/s00037-018-0165-7}}.

\bibitem{KI04}
Valentine Kabanets and Russell Impagliazzo.
\newblock Derandomizing polynomial identity tests means proving circuit lower
  bounds.
\newblock {\em Computational Complexity}, 13(1-2):1--46, 2004.
\newblock \href {https://doi.org/10.1007/s00037-004-0182-6}
  {\path{doi:10.1007/s00037-004-0182-6}}.

\bibitem{kannan1987}
Ravi Kannan.
\newblock Minkowski's convex body theorem and integer programming.
\newblock {\em Mathematics of Operations Research}, 12(3):415--440, 1987.
\newblock URL: \url{http://www.jstor.org/stable/3689974}.

\bibitem{snf}
Ravindran Kannan and Achim Bachem.
\newblock Polynomial algorithms for computing the {S}mith and {H}ermite normal
  forms of an integer matrix.
\newblock {\em SIAM J. Comput.}, 8(4):499--507, 1979.
\newblock \href {https://doi.org/10.1137/0208040} {\path{doi:10.1137/0208040}}.

\bibitem{kempfness}
George Kempf and Linda Ness.
\newblock The length of vectors in representation spaces.
\newblock In Knud L{\o}nsted, editor, {\em Algebraic Geometry}, pages 233--243,
  Berlin, Heidelberg, 1979. Springer Berlin Heidelberg.

\bibitem{koiran-hn}
Pascal Koiran.
\newblock Hilbert's {N}ullstellensatz is in the {P}olynomial {H}ierarchy.
\newblock {\em Journal of Complexity}, 12(4):273--286, 1996.
\newblock \href {https://doi.org/https://doi.org/10.1006/jcom.1996.0019}
  {\path{doi:https://doi.org/10.1006/jcom.1996.0019}}.

\bibitem{kung-rota:84}
Joseph P.~S. Kung and Gian-Carlo Rota.
\newblock The invariant theory of binary forms.
\newblock {\em Bull. Amer. Math. Soc. (N.S.)}, 10(1):27--85, 1984.
\newblock \href {https://doi.org/10.1090/S0273-0979-1984-15188-7}
  {\path{doi:10.1090/S0273-0979-1984-15188-7}}.

\bibitem{kuperberg-knot}
Greg Kuperberg.
\newblock Knottedness is in {NP}, modulo {GRH}.
\newblock {\em Advances in Mathematics}, 256:493--506, 2014.
\newblock URL:
  \url{https://www.sciencedirect.com/science/article/pii/S0001870814000188},
  \href {https://doi.org/https://doi.org/10.1016/j.aim.2014.01.007}
  {\path{doi:https://doi.org/10.1016/j.aim.2014.01.007}}.

\bibitem{lang-elliptic-curves}
Serge Lang.
\newblock {\em Elliptic Curves: Diophantine Analysis}, volume 231 of {\em
  Grundlehren der mathematischen Wissenschaften}.
\newblock Springer, 2013.

\bibitem{leake-vishnoi}
Jonathan Leake and Nisheeth~K. Vishnoi.
\newblock On the computability of continuous maximum entropy distributions with
  applications.
\newblock In {\em Proceedings of the 52nd Annual ACM SIGACT Symposium on Theory
  of Computing}, STOC 2020, pages 930--943, New York, NY, USA, 2020.
  Association for Computing Machinery.
\newblock \href {https://doi.org/10.1145/3357713.3384302}
  {\path{doi:10.1145/3357713.3384302}}.

\bibitem{lll}
Arjen~K. Lenstra, Hendrik.~W. Lenstra, and L\'{a}szl\'{o} Lov\'{a}sz.
\newblock Factoring polynomials with rational coefficients.
\newblock {\em Mathematische Annalen}, 261(4):515--534, December 1982.
\newblock \href {https://doi.org/10.1007/BF01457454}
  {\path{doi:10.1007/BF01457454}}.

\bibitem{schaum-lin-alg}
Seymour Lipschutz.
\newblock {\em Schaum's outline of theory and problems of linear algebra :
  [including 600 solved problems ; completely solved in detail]}.
\newblock Schaum's outline series. McGraw-Hill, New York u.a., 6th ed. edition,
  1974.

\bibitem{MW19}
Visu Makam and Avi Wigderson.
\newblock Singular tuples of matrices is not a null cone (and the symmetries of
  algebraic varieties).
\newblock {\em J. Reine Angew. Math.}, 780:79--131, 2021.
\newblock \href {https://doi.org/10.1515/crelle-2021-0044}
  {\path{doi:10.1515/crelle-2021-0044}}.

\bibitem{ml-tropical}
Petros Maragos, Vasileios Charisopoulos, and Emmanouil Theodosis.
\newblock Tropical geometry and machine learning.
\newblock {\em Proceedings of the IEEE}, 109(5):728--755, 2021.
\newblock \href {https://doi.org/10.1109/JPROC.2021.3065238}
  {\path{doi:10.1109/JPROC.2021.3065238}}.

\bibitem{marsdenweinstein}
Jerrold Marsden and Alan Weinstein.
\newblock Reduction of symplectic manifolds with symmetry.
\newblock {\em Reports on Mathematical Physics}, 5(1):121--130, 1974.
\newblock URL:
  \url{https://www.sciencedirect.com/science/article/pii/0034487774900214},
  \href {https://doi.org/https://doi.org/10.1016/0034-4877(74)90021-4}
  {\path{doi:https://doi.org/10.1016/0034-4877(74)90021-4}}.

\bibitem{matveev}
E.~M. Matveev.
\newblock An explicit lower bound for a homogeneous rational linear form in
  logarithms of algebraic numbers.
\newblock {\em Izvestiya: Mathematics}, 62(4):723, aug 1998.
\newblock URL: \url{https://dx.doi.org/10.1070/IM1998v062n04ABEH000190}, \href
  {https://doi.org/10.1070/IM1998v062n04ABEH000190}
  {\path{doi:10.1070/IM1998v062n04ABEH000190}}.

\bibitem{micciancio}
Daniele Micciancio and Shafi Goldwasser.
\newblock {\em Complexity of lattice problems : a cryptographic perspective}.
\newblock The Kluwer international series in engineering and computer science
  BV000632170 671. Kluwer Academic, Boston, 2002.

\bibitem{miller-primality}
Gary~L. Miller.
\newblock Riemann's hypothesis and tests for primality.
\newblock {\em Journal of Computer and System Sciences}, 13(3):300--317, 1976.
\newblock URL:
  \url{https://www.sciencedirect.com/science/article/pii/S0022000076800438},
  \href {https://doi.org/https://doi.org/10.1016/S0022-0000(76)80043-8}
  {\path{doi:https://doi.org/10.1016/S0022-0000(76)80043-8}}.

\bibitem{mordell}
Louis~J. Mordell.
\newblock On the representation of a binary quadratic form as a sum of squares
  of linear forms.
\newblock {\em Mathematische Zeitschrift}, 35(1-15):1432--1823, 1932.
\newblock URL: \url{https://doi.org/10.1007/BF01186544}.

\bibitem{GCTV}
Ketan Mulmuley.
\newblock Geometric complexity theory {V}: {E}fficient algorithms for {N}oether
  normalization.
\newblock {\em J. Amer. Math. Soc.}, 30(1):225--309, 2017.
\newblock \href {https://doi.org/10.1090/jams/864}
  {\path{doi:10.1090/jams/864}}.

\bibitem{mulmuley2001geometric}
Ketan Mulmuley and Milind Sohoni.
\newblock Geometric complexity theory {I}: An approach to the {P} vs. {NP} and
  related problems.
\newblock {\em SIAM Journal on Computing}, 31(2):496--526, 2001.

\bibitem{mumford:red}
David Mumford.
\newblock {\em The red book of varieties and schemes}, volume 1358 of {\em
  Lecture Notes in Mathematics}.
\newblock Springer-Verlag, Berlin, 1999.
\newblock \href {https://doi.org/10.1007/b62130} {\path{doi:10.1007/b62130}}.

\bibitem{MFK:94}
David Mumford, John Fogarty, and Frances Kirwan.
\newblock {\em Geometric invariant theory}, volume~34 of {\em Ergebnisse der
  Mathematik und ihrer Grenzgebiete (2) [Results in Mathematics and Related
  Areas (2)]}.
\newblock Springer-Verlag, Berlin, third edition, 1994.
\newblock \href {https://doi.org/10.1007/978-3-642-57916-5}
  {\path{doi:10.1007/978-3-642-57916-5}}.

\bibitem{nemirovski-ellipsoid}
Arkadi Nemirovski and Uriel Rothblum.
\newblock On complexity of matrix scaling.
\newblock {\em Linear Algebra and its Applications}, 302-303:435--460, 1999.
\newblock URL:
  \url{https://www.sciencedirect.com/science/article/pii/S0024379599002128},
  \href {https://doi.org/https://doi.org/10.1016/S0024-3795(99)00212-8}
  {\path{doi:https://doi.org/10.1016/S0024-3795(99)00212-8}}.

\bibitem{oesterle}
Joseph Oesterl\'e.
\newblock Nouvelles approches du th\'eor\`eme de {Fermat}.
\newblock In {\em S\'eminaire Bourbaki : volume 1987/88, expos\'es 686-699},
  number 161-162 in Ast\'erisque. Soci\'et\'e math\'ematique de France, 1988.
\newblock talk:694.
\newblock URL: \url{http://www.numdam.org/item/SB_1987-1988__30__165_0/}.

\bibitem{pan-chen:99}
Victor~Y. Pan and Zhao~Q. Chen.
\newblock The complexity of the matrix eigenproblem.
\newblock In {\em Annual {ACM} {S}ymposium on {T}heory of {C}omputing
  ({A}tlanta, {GA}, 1999)}, pages 507--516. ACM, New York, 1999.
\newblock \href {https://doi.org/10.1145/301250.301389}
  {\path{doi:10.1145/301250.301389}}.

\bibitem{rabin-shallit}
Michael~O. Rabin and Jeffery~O. Shallit.
\newblock Randomized algorithms in number theory.
\newblock {\em Communications on Pure and Applied Mathematics},
  39(S1):S239--S256, 1986.
\newblock \href {https://doi.org/https://doi.org/10.1002/cpa.3160390713}
  {\path{doi:https://doi.org/10.1002/cpa.3160390713}}.

\bibitem{integer-farkas}
Alexander Schrijver.
\newblock {\em Theory of linear and integer programming}.
\newblock Wiley-Interscience series in discrete mathematics and optimization.
  Wiley, Chichester u.a., reprinted edition, 1999.

\bibitem{vishnoi-singh}
Mohit Singh and Nisheeth~K. Vishnoi.
\newblock Entropy, optimization and counting.
\newblock In {\em Proceedings of the Forty-Sixth Annual ACM Symposium on Theory
  of Computing}, STOC '14, pages 50--59, New York, NY, USA, 2014. Association
  for Computing Machinery.
\newblock \href {https://doi.org/10.1145/2591796.2591803}
  {\path{doi:10.1145/2591796.2591803}}.

\bibitem{stewart_oesterle-masser_1986}
Cameron~L. Stewart and R.~Tijdeman.
\newblock On the {Oesterl\'{e}}-{Masser} conjecture.
\newblock {\em Monatshefte f\"{u}r Mathematik}, 102(3):251--257, September
  1986.
\newblock \href {https://doi.org/10.1007/BF01294603}
  {\path{doi:10.1007/BF01294603}}.

\bibitem{yu-stewart}
Cameron~L. Stewart and Kunrui Yu.
\newblock On the abc conjecture.
\newblock {\em Mathematische Annalen}, 291(2):225--230, 1991.
\newblock URL: \url{http://eudml.org/doc/164860}.

\bibitem{vishnoi-straszak}
Damian Straszak and Nisheeth~K. Vishnoi.
\newblock Maximum entropy distributions: Bit complexity and stability.
\newblock In Alina Beygelzimer and Daniel~Hsu 0001, editors, {\em Conference on
  Learning Theory, COLT 2019, 25-28 June 2019, Phoenix, AZ, USA}, volume~99 of
  {\em Proceedings of Machine Learning Research}, pages 2861--2891. PMLR, 2019.
\newblock URL: \url{http://proceedings.mlr.press/v99/straszak19a.html}.

\bibitem{Sturmfels}
Bernd Sturmfels.
\newblock {\em Algorithms in Invariant Theory}.
\newblock Texts {\&} Monographs in Symbolic Computation. Springer, 2008.
\newblock \href {https://doi.org/10.1007/978-3-211-77417-5}
  {\path{doi:10.1007/978-3-211-77417-5}}.

\bibitem{vdpoorten}
Alfred~J. van~der Poorten.
\newblock On {B}aker's inequality for linear forms in logarithms.
\newblock {\em Mathematical Proceedings of the Cambdridge Philosophical
  Society}, 80(2):233--248, 1976.
\newblock \href {https://doi.org/10.1017/S0305004100052877}
  {\path{doi:10.1017/S0305004100052877}}.

\bibitem{waldschmidt-2}
Michel Waldschmidt.
\newblock {\em Diophantine approximation on linear algebraic groups :
  transcendence properties of the exponential function in several variables}.
\newblock Grundlehren der mathematischen Wissenschaften BV000000395 326.
  Springer, Berlin u.a., 2000.

\bibitem{waldschmidt-1}
Michel Waldschmidt.
\newblock Open diophantine problems.
\newblock {\em Mosc. Math.~J.}, 4:245--305, 2004.

\bibitem{abc-consequences}
Michel Waldschmidt.
\newblock Lecture on the abc conjecture and some of its consequences.
\newblock In Pierre Cartier, A.D.R. Choudary, and Michel Waldschmidt, editors,
  {\em Mathematics in the 21st Century}, pages 211--230, Basel, 2015. Springer
  Basel.

\bibitem{wustholz1993}
Gisbert W\"{u}stholz and Alan Baker.
\newblock Logarithmic forms and group varieties.
\newblock {\em Journal f\"{u}r die reine und angewandte Mathematik},
  442:19--62, 1993.
\newblock URL: \url{http://eudml.org/doc/153550}.

\end{thebibliography}

\end{document}